\newcommand{\jmcom}[1]{\todo[color=orange!25!white]{JM: #1}\xspace}
\newcommand{\msnote}[1]{\todo[color=cyan!25!white]{MS: #1}\xspace}
\newcommand{\ama}{\[ \begin{aligned}}
\newcommand{\ema}{\end{aligned} \]}
\newcommand{\set}[1]{\left\{#1\right\}}
\newcommand{\ceil}[1]{\left\lceil#1\right\rceil}
\newcommand{\R}{\mathds{R}}
\newtheorem{theorem}{Theorem}
\newtheorem{definition}[theorem]{Definition}
\newtheorem{lemma}[theorem]{Lemma}
\newtheorem{fact}[theorem]{Fact}
\newtheorem{lemdef}[theorem]{Lemma and Definition}
\newtheorem{observation}[theorem]{Observation}
\newtheorem{corollary}[theorem]{Corollary}
\newtheorem{invariants}[theorem]{Invariants}
\newtheorem{invariant}[theorem]{Invariant}
\newcommand{\eg}{e.g.,\xspace}
\newcommand{\etal}{et~al.\xspace}
\newcommand{\ie}{i.e.,\xspace}
\newcommand{\cset}[2]{\{#1 \mid #2\}}
\newcommand{\srsum}[1]{\smashoperator[r]{\sum_{#1}}}
\newcommand{\slrsum}[1]{\smashoperator[lr]{\sum_{#1}}}
\newcommand{\frakm}{\mathfrak{m}}
\newcommand{\frakn}{\mathfrak{n}}
\newcommand{\lenl}{\mathfrak{l}}
\renewcommand{\lenl}{\ell}
\newcommand{\eqS}{{\mathfrak{S}}}
\newcommand{\terms}{\mathfrak{T}}
\newcommand{\sse}{\subseteq}
\newcommand{\Rp}{\R_{\geq 0}}
\newcommand{\ALG}{\ensuremath{\mathscr{A}}}
\newcommand{\OPT}{\text{OPT}}
\newcommand{\REFF}{\ensuremath{\mathscr{F}}}
\newcommand{\tforall}{\text{for all}}
\DeclareMathOperator{\dist}{dist}
\newcommand{\maxit}{\ensuremath{\mathfrak{index}}}
\newcommand{\innS}{{\text{In}}_{\scriptstyle S}}
\newcommand{\innSp}{{\text{In}}_{\scriptstyle S'}}
\newcommand{\contract}{\diagup}
\newcommand{\indicator}{\mathds{1}}
\newcommand{\bFin}{{\bar F}_{\circlearrowright}}
\newcommand{\bFbetw}{{\bar F}_{\leftrightarrow}}
\newcommand{\Fin}{{F}_{\circlearrowright}}
\newcommand{\Fbetw}{{F}_{\leftrightarrow}}
\newcommand{\initOneLiners}{%
    \setlength{\itemsep}{0pt}
    \setlength{\parsep }{0pt}
    \setlength{\topsep }{0pt}
		\setlength{\leftmargin}{0.2in}
}
\newenvironment{OneLiners}[1][\ensuremath{\bullet}]
    {\begin{list}
        {#1}
        {\initOneLiners}}
    {\end{list}}
\newcommand{\edgeedge}{edge/edge\xspace}
\newcommand{\edgeedgeswap}{edge/edge swap\xspace}
\newcommand{\Edgeedgeswap}{Edge/edge swap\xspace}
\newcommand{\edgeset}{edge/set\xspace}
\newcommand{\edgesetswap}{edge/set swap\xspace}
\newcommand{\pathset}{path/set\xspace}
\newcommand{\pathsetswap}{path/set swap\xspace}
\newcommand{\moveSet}{\mathfrak{M}}
\def\@fnsymbol#1{\ensuremath{\ifcase#1\or *\or \dagger\or \ddagger\or
   \mathsection\or \mathparagraph\or \uparrow\or \|\or \sharp\or \natural\else\@ctrerr\fi}}
\begin{document}

\title{A Local-Search Algorithm for Steiner Forest}  

\author{Martin Gro\ss\thanks{Institut f\"ur Mathematik, Technische Universit\"at Berlin, \url{gross@math.tu-berlin.de}.}~\thanks{Supported by the DFG within project A07 of CRC TRR 154.} 
\and Anupam Gupta\thanks{Department of Computer Science, Carnegie Mellon University, \url{anupamg@cs.cmu.edu}.} 
\and Amit Kumar\thanks{Department of Computer Science and Engineering, Indian Institute of Technology, \url{amitk@cse.iitd.ernet.in}.} 
\and Jannik Matuschke\thanks{ TUM School of Management, Technische Universit\"{a}t M\"{u}nchen, \url{jannik.matuschke@tum.de}.}\hspace*{0.2cm}\thanks{Partly supported by the German Academic Exchange Service (DAAD).} 
\and Daniel R. Schmidt\thanks{Institut f{\"u}r Informatik, Universit{\"a}t zu K{\"o}ln, \url{schmidt@informatik.uni-koeln.de}.}~\footnotemark[6]
\and Melanie Schmidt\thanks{Institut f\"ur Informatik, Universit\"{a}t Bonn, \url{melanieschmidt@uni-bonn.de}.}~\footnotemark[6] 
\and \and Jos\'{e} Verschae\thanks{Facultad de Matem\'{a}ticas \& Escuela de Ingenier\'ia, Pontificia Universidad Cat\'{o}lica de Chile, \url{jverschae@uc.cl}. \hspace*{0.5cm} Partly supported by Nucleo Milenio Informaci\'on y Coordinaci\'on en Redes ICM/FIC P10-024F.}
}
\maketitle
\thispagestyle{empty}

\begin{abstract}
  In the \emph{Steiner Forest} problem, we are given a graph and a collection of 
source-sink pairs, and the goal is to find a subgraph of minimum total length 
such that all pairs are connected. The problem is APX-Hard and can be 
$2$-approximated by, e.g., the elegant primal-dual algorithm of Agrawal, Klein, 
and Ravi from 1995.

\medskip\noindent We give a local-search-based constant-factor approximation for the problem. 
Local search brings in new techniques to an area that has for long not seen any 
improvements and might be a step towards a combinatorial algorithm for the more 
general survivable network design problem. Moreover, local search was an 
essential tool to tackle the dynamic MST/Steiner Tree problem, whereas dynamic 
Steiner Forest is still wide open.

\medskip\noindent It is easy to see that any constant factor local search algorithm requires steps 
that add/drop many edges together. We propose natural local moves which, at each step, 
either (a) add a shortest path in the current graph and then drop a bunch of 
inessential edges, or (b) add a set of edges to the current solution. This 
second type of moves is motivated by the potential function we use to measure 
progress, combining the cost of the solution with a penalty for each connected 
component. Our carefully-chosen local moves and potential function work in 
tandem to eliminate bad local minima that arise when using more traditional 
local moves.

\medskip\noindent Our analysis first considers the case where the local optimum is a single tree, 
and shows optimality w.r.t.\ moves that add a single edge (and drop a set of 
edges) is enough to bound the locality gap. For the general case, we show how to 
``project'' the optimal solution onto the different trees of the local optimum 
without incurring too much cost (and this argument uses optimality w.r.t. both 
kinds of moves), followed by a tree-by-tree argument. We hope both the potential 
function, and our analysis techniques will be useful to develop and analyze 
local-search algorithms in other contexts. 
\end{abstract}

\newpage

\setcounter{page}{1}

\section{Introduction}
\label{sec:intro}

The Steiner Forest problem is the following basic network design
problem: given a graph $G = (V,E)$ with edge-lengths $d_e$, and a
set of source-sink pairs $\{ \{s_i, t_i\} \}_{i = 1}^k$, find a 
subgraph $H$ of minimum total length such that each $\{s_i, t_i\}$
pair lies in the same connected component of $H$. This problem
generalizes the Steiner Tree problem, and hence is APX-hard.  The
Steiner Tree problem has a simple $2$-approximation, namely the
minimum spanning tree on the terminals in the metric completion;
however, the forest version does not have such obvious
algorithms.

Indeed, the first approximation algorithm for this problem was a
sophisticated and elegant primal-dual $2$-approximation due to Agrawal,
Klein, and Ravi~\cite{AKR95}. Subsequently, Goemans and Williamson
streamlined and generalized these ideas to many other constrained
network design problems~\cite{GW95}. These results prove an integrality
gap of $2$ for the natural cut-covering LP. Other proofs of this
integrality gap were given in~\cite{Jain98, CS08}.  No better LP
relaxations are currently known (despite attempts in,
e.g.,~\cite{KLS05,KLSZ08}), and improving the approximation guarantee of
$2$ remains an outstanding open problem. Note that all known
constant-factor approximation algorithms for Steiner Forest were based on linear
programming relaxations, until a recent greedy algorithm~\cite{GK15}. In
this paper, we add to the body of techniques that give constant-factor
approximations for Steiner Forest.
The main result of this paper is the following:
\begin{restatable}{theorem}{maintheorem}
  \label{thm:main}
	There is a (non-oblivious) local search algorithm for Steiner Forest with a constant locality gap. It can be implemented to run in polynomial time.
\end{restatable}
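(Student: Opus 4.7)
The plan is to define a non-oblivious local-search procedure whose potential function penalizes both the total edge length and the number of terminal-containing components, and to show that any local optimum with respect to two carefully chosen move types has cost $O(d(\OPT))$. Following the hints in the abstract, I would use (a) \emph{path-swap} moves, which add a shortest path in the current subgraph $H$ and then delete all resulting \emph{inessential} edges (edges whose removal does not disconnect any source-sink pair), and (b) \emph{set-addition} moves, which add a set of edges (optionally together with a drop of inessential edges). The non-oblivious potential would be $\Phi(H) := d(H) + \alpha \cdot c(H)$, where $c(H)$ counts terminal-containing components and $\alpha$ is a constant to be tuned. The two move types are designed so that single-edge swaps (which admit arbitrarily bad local minima) are enriched with just enough structure to expose $\Phi$-improving directions whenever $d(H)$ is far from optimal.

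For the analysis I would proceed in two stages. \textbf{Stage one} assumes that the local optimum $H^*$ is a single tree spanning all terminals. For each edge $e^* \in \OPT$, local optimality with respect to the single-edge path-swap move gives an inequality of the form $d(H^*) \le d(H^* \setminus D_{e^*}) + d(e^*)$, where $D_{e^*} \subseteq H^*$ is the set of edges that become inessential after closing the tree-cycle of $e^*$ in $H^*$. Aggregating these inequalities via a charging scheme over the tree structure of $H^*$ and the cycles induced by $\OPT$ should yield $d(H^*) = O(d(\OPT))$. \textbf{Stage two} handles the general case $H^* = T_1 \cup \cdots \cup T_r$. The pivotal step here is a \emph{projection lemma} that decomposes $\OPT$ into forests $F_1, \dots, F_r$, each $F_j$ living inside the terminal set of $T_j$, with the properties that (i) $F_j$ reconnects every demand pair in $T_j$ that a candidate swap inside $T_j$ could separate, and (ii) $\sum_j d(F_j) = O(d(\OPT))$. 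The component-penalty term of $\Phi$ is crucial: whenever $\OPT$ uses a path $P$ to merge two trees $T_i, T_j$, the set-addition move $H^* \mapsto H^* \cup P$ forbids $d(P) < \alpha$, which pays for the crossings between the projected pieces. Once the projection is built, applying Stage one tree-by-tree to each $(T_j, F_j)$ closes the loop.

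The main obstacle I expect is precisely the projection lemma. The constant $\alpha$ walks a tightrope: too small and the set-addition inequalities are useless, too large and Stage one breaks. Equally delicate is carving $\OPT$ along the trees of $H^*$ without double-charging edges and without stranding a demand pair across two projected forests, since a single path of $\OPT$ may cross many $T_j$'s. A secondary concern is the polynomial running-time guarantee: both move types can add or drop many edges and $\Phi$ could in principle decrease by a tiny amount per step. I would handle this in the standard way, by rounding edge lengths into a polynomial range and only applying moves that decrease $\Phi$ by at least a $(1+\eps/n)$ factor, paying a constant extra factor in the locality gap.
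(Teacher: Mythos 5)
Your overall architecture (tree case first, then a projection of $\OPT$ onto the components of the local optimum, plus rounding for polynomial time) matches the paper's, but there is a concrete gap in the choice of potential, and it is fatal to Stage two. You penalize each terminal-containing component by a \emph{universal constant} $\alpha$, so local optimality with respect to a connecting/set-addition move only certifies that any path $P$ merging two components of $H^*$ has $d(P) \ge \alpha$ (more generally, that a tree touching $k$ components has length at least $(k-1)\alpha$). But in the projection step you must pay for replacing each crossing of $\OPT$ into a component $T_j$ by a \emph{direct reconnection of a terminal pair inside $T_j$}, and such a reconnection costs up to the diameter-like quantity $\max \dist_d(u,\bar u)$ over pairs in $T_j$ --- which can be arbitrarily large compared to any fixed $\alpha$, and which varies from component to component within a single instance. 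No single constant can be simultaneously large enough to fund these reconnections for wide components and small enough not to force spurious merges of narrow ones; you correctly sense this tightrope but it cannot be walked. The paper resolves it by making the penalty \emph{instance- and component-dependent}: the potential is $\phi(F) = d(F) + w(F)$ where $w$ sums, over components, the largest terminal-pair distance (the ``width''). Then connecting-move optimality yields $\sum_j n_j\, w(A_j) - (\text{largest width}) \le c\cdot d(C)$ for a suitable tree-packing of each crossing circuit $C$, which is exactly the budget needed to insert the direct $\{s,\bar s\}$ connections (each costing at most $w(A_j)$), with \pathsetswap{}-optimality covering the one missing copy of the largest width.

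Two further points are underspecified even granting the right potential. First, in Stage one your per-edge inequality $d(H^*) \le d(H^* \setminus D_{e^*}) + d(e^*)$ cannot simply be ``aggregated'': since $|H^*|$ may exceed $|\OPT|$, you cannot map single $\ALG$-edges to single $\OPT$-edges, and dropping all of $D_{e^*}$ creates many new components whose widths re-enter the potential. The paper needs the compatibility equivalence relation (classes of $\ALG$-edges that behave like single edges with respect to every cycle closed by an $\OPT$-edge), a Hall-type fractional matching giving the $7/2$ factor, and a split of each class into three consecutive sub-swaps so that the two widest inner components are never detached --- none of which is implied by your sketch. Second, finding an improving connecting move is itself NP-hard; the paper reduces it to (weighted) $k$-MST and works with $c$-approximate connecting-move optimality throughout, which your running-time discussion does not address.
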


The Steiner Forest problem is a basic network problem whose
approximability has not seen any improvements in some time. We explore
new techniques to attacking the problem, with the hope
that these will give us more insights into its structure. Moreover,
for many problems solved using the constrained forest approach
of~\cite{GW95}, the only constant factor approximations known are via
the primal-dual/local-ratio approach, and it seems useful to bring in
new possible techniques.
Another motivation for our work is to make progress towards obtaining
combinatorial algorithms for the survivable network design problem.  In
this problem, we are given connectivity requirements between various
source-sink pairs, and we need to find a minimum cost subset of edges
which provide this desired connectivity. Although we know a
2-approximation algorithm for the survivable network design
problem~\cite{Jain98} based on iterative rounding, obtaining a
combinatorial constant-factor approximation algorithm for this problem
remains a central open
problem~\cite{williamson2011design}.
So far, all approaches of extending primal-dual or greedy algorithms to
survivable network design have only had limited success.  Local search
algorithms are more versatile in the sense that one can easily
\emph{propose} algorithms based on local search for various network
design problems. Therefore, it is important to understand the power of
such algorithms in such settings. We take a step towards this goal by
showing that such ideas can give constant-factor approximation
algorithms for the Steiner Forest problem.

Finally, we hope this is a step towards solving the \emph{dynamic
  Steiner Forest} problem. In this problem, terminal pairs arrive online
and we want to maintain a constant-approximate Steiner Forest while
changing the solution by only a few edges in each update. Several of the
approaches used for the Steiner \emph{Tree} case (e.g., in~\cite{MSVW12,
  GuGK13, LOPSZ15}) are based on local-search, and we hope our local-search
algorithm for Steiner Forest in the offline setting will help solve the
dynamic Steiner Forest problem, too.

\subsection{Our Techniques}

One of the challenges with giving a local-search algorithm for Steiner
Forest is to find the right set of moves. Indeed, it is easy to see that
simple-minded approaches like just adding and dropping a constant number of edges at each step is not
enough. E.g., in the example of Figure~\ref{fig:intro}, the improving moves must add an edge
and remove multiple edges. (This holds even if we take the metric
completion of the graph.) 
We therefore consider a natural generalization of simple edge swaps in which we allow to add paths and remove multiple edges from the induced cycle.

\textbf{Local Moves:} Our first task is to find the ``right''
 moves that add/remove many edges in each ``local'' step. At any step of the algorithm, our
algorithm has a feasible forest, and performs one of these local moves
(which are explained in more detail in \S\ref{sec:LS-moves}):
\begin{OneLiners}
\item \textbf{\edgesetswap{}s}: Add an edge to a tree in the forest, and
  remove one or more edges from the cycle created.
\item \textbf{\pathsetswap{}s}: Add a shortest-path between two vertices of
  a tree in the forest, and again remove edges from the cycle created.
\item \textbf{connecting moves}: Connect some trees of the current forest by adding edges between them.
\end{OneLiners}
At the end of the algorithm, we apply the following post-processing step to the local optimum:
\begin{OneLiners}
\item \textbf{clean-up}: Delete all inessential edges. 
  (An edge is \emph{inessential} if dropping it does not alter the
  feasibility of the solution.)
\end{OneLiners}
Given these local moves, the challenge is to bound the locality gap of
the problem: the ratio between the cost of a local optimum and that of
the global optimum.

\textbf{The Potential:}
The connecting moves may seem odd, since they only increase the
length of the solution. However, a crucial insight behind our
algorithm is that we do not perform local search with respect to the
total length of the solution.  Instead we look to improve a different
potential $\phi$. (In the terminology of~\cite{Alimonti94,KhannaMSV98},
our algorithm is a \emph{non-oblivious} local search.)  The potential
$\phi(T)$ of a tree $T$ is the total length of its edges, plus the
distance between the furthest source-sink pair in it, which we call its
\emph{width}. The potential of the forest $\ALG$ is the sum of the
potentials of its trees. We only perform moves that cause the potential
of the resulting solution to decrease. 

In \S\ref{sec:bad-gap} we give an example where performing the
above moves with respect to the total length of the solution gives us
local optima with cost $\Omega(\log n)\cdot \OPT$ --- this example is useful
for intuition for why using this potential helps. Indeed, if we have a
forest where the distance between two trees in the forest is much less
than both their widths, we can merge them and reduce the potential (even
though we increase the total length). So the trees in a local optimum
are ``well-separated'' compared to their widths, an important property for our analysis.

\textbf{The Proof:} We prove the constant locality gap in two conceptual
steps. 

\begin{wrapfigure}{r}{0.7\textwidth}
  \begin{center}
	  \vspace*{-2.2\baselineskip}
    \begin{tikzpicture}
  \tikzstyle{node}=[circle,shade,top color=gray!30,bottom color=gray!70,draw=gray]
  \tikzstyle{optedge}=[very thick,blue,-,dashed]
  \tikzstyle{algedge}=[very thick,black,-,solid]
	\tikzstyle{both}=[postaction={draw,blue!80,dash pattern= on 3pt off 5pt,dash phase=4pt,very thick},black,dash pattern= on 3pt off 5pt,very thick]
  \begin{scope}[yshift=2.5mm,scale=0.6] 
   \matrix (v) [matrix of math nodes, nodes={circle,shade,top color=gray!30,bottom color=gray!70,draw=gray,inner sep=1pt,anchor=base,text depth=.5ex,text height=2ex,minimum width=1.2em,text centered},column sep=5mm,row sep=4.5mm] {
    s_1 & s_2 & t_2 & s_3 & t_3 & |[draw=none,fill=none,shade=none,scale=2.0]| \dots & s_\ell & t_\ell & t_1 \\
   };
   \draw[algedge] (v-1-1) edge node[auto,black] {$\frac{\ell}{k}$} (v-1-2);
   \draw[algedge,transform canvas={yshift=0.6pt}] (v-1-2) -- node[auto] {$1$} (v-1-3);
	 \draw[optedge,transform canvas={yshift=-0.6pt}] (v-1-2) -- (v-1-3);	
	
   \draw[algedge] (v-1-3) edge node[auto,black] {$\frac{\ell}{k}$} (v-1-4);
   \draw[algedge,transform canvas={yshift=0.6pt}] (v-1-4) -- node[auto] {$1$} (v-1-5);
	 \draw[optedge,transform canvas={yshift=-0.6pt}] (v-1-4) -- (v-1-5);	
   \draw[algedge,path fading=east] (v-1-5) -- ++(0.6,0);
   \draw[algedge,path fading=west] (v-1-7) -- ++(-0.6,0);
   \draw[algedge,transform canvas={yshift=0.6pt}] (v-1-7) -- node[auto] {$1$} (v-1-8);
	 \draw[optedge,transform canvas={yshift=-0.6pt}] (v-1-7) -- (v-1-8);	
   \draw[algedge] (v-1-8) edge node[auto,black] {$\frac{\ell}{k}$} (v-1-9);
   \draw[optedge] (v-1-1) -- ++(0,-1) -- node[auto,swap,black,above,pos=0.55] {$\ell$} ($(v-1-9)+(0,-1)$) -- (v-1-9);
  \end{scope}
 \end{tikzpicture}
	  \vspace*{-1.05\baselineskip}
  \end{center}
  \caption{\footnotesize \it The black edges (continuous lines) are the current solution. If $\ell \gg k$, we should move to the blue forest (dashed lines), but any improving move must change $\Omega(k)$ edges. Details can be found in \S\ref{appendix:introexample}\label{fig:intro}.}
	  \vspace*{-0.95\baselineskip}
\end{wrapfigure}
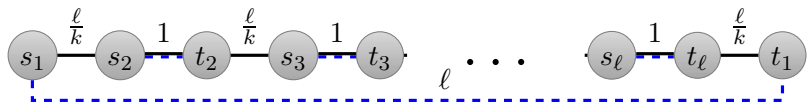
As the first step, we assume that the local optimum happens to be a
single tree. In this case we show that the essential edges of this tree
$T$ have cost at most $\mathcal{O}(\OPT)$---hence the final removal of
inessential edges gives a good solution. To prove this, we need to
charge our edges to $\OPT$'s edges. However, we cannot hope to charge
single edges in our solution to single edges in $\OPT$---we need to
charge multiple edges in our solution to edges of $\OPT$. (We may just
have more edges than $\OPT$ does. More concretely, this happens in the
example from Figure~\ref{fig:intro}, when $\ell = \Theta(k)$ and we are
at the black tree and $\OPT$ is the blue forest.)
So 
we consider \edgesetswap{}s that try to swap some subset $S$ of $T$'s
edges for an edge $f$ of $\OPT$. Since such a swap is non-improving at a
local optimum, the cost of $S$ is no more than that of $f$. Hence, we
would like to partition $T$'s edges into groups and find an $O(1)$-to-$1$
map of groups to edges of $\OPT$ of no less cost. Even if we cannot find
an explicit such map, it turns out that Hall's theorem is the key to
showing its existence.

Indeed, the intuition outlined above works out quite nicely if we
imagine doing the local search with respect to the total length instead
of the potential. The main technical ingredient is a partitioning of our
edges into equivalence classes that behave (for our purposes) ``like
single edges'', allowing us to apply a Hall-type argument.  This idea is
further elaborated in \S\ref{sec-treecase-d} with detailed proofs in
\S\ref{appendix:addproofs}. However, if we go back to considering the
potential, an \edgesetswap adding $f$ and removing $S$ may create
multiple components, and thus increase the width part of the potential.
Hence we give a more delicate argument showing that similar charging
arguments work out: basically we now have to charge to the width of the
globally optimal solution as well. A detailed synopsis is presented in
\S\ref{sec-treecase-phi}, and the proofs are in \S\ref{sec-tree-phi}.

The second conceptual step is to extend this argument to the case where
we can perform all possible local moves, and the local optimum is a
forest $\ALG$. If $\OPT$'s connected components are contained in those of
$\ALG$, we can do the above analysis for each $\ALG$-component
separately. So imagine that $\OPT$ has edges that go between vertices in
different components of $\ALG$. We simply give an algorithm that takes
$\OPT$ and ``projects'' it down to another solution $\OPT'$ of comparable
cost, such that the new projected solution $\OPT'$ has connected
components that are contained in the components of $\ALG$.  We find the
existence of a cheap projected solution quite surprising; our proof
crucially uses the optimality of the algorithm's solution under both
\pathsetswap{}s and connecting moves. Again, a summary of our approach
is in \S\ref{sec:forest}, with proofs in \S\ref{sec:app-forest}.

\medskip \textbf{Polynomial-time Algorithm.} 
The locality gap with respect to the above moves is at most $46$. 
Finally, we show that the swap moves can be implemented in polynomial time, and the connecting moves can be approximated to within constant factors. 
Indeed, a $c$-approximation for \emph{weighted} $k$-MST gives a $23(1+c)+ \varepsilon$-approximation. 
Applying a weighted version of Garg's $2$-approximation~\cite{G05,G16} yields $c=2$. The resulting approximation guarantee is $69$ (compared to $96$ for~\cite{GK15}).
Details on this can be found in \S\ref{sec:polytime}.

\subsection{Related Work}
\label{sec:related-work}

Local search techniques have been very successful for providing good
approximation guarantees for a variety of problems: e.g., network design
problems such as low-degree spanning trees~\cite{FurerR94}, min-leaf
spanning trees~\cite{LR96,SO98}, facility location and related
problems, both uncapacitated~\cite{KPR00,AGKMMP04} and
capacitated~\cite{PalTW01}, geometric $k$-means~\cite{KMNPSW04}, mobile
facility location~\cite{AFS13}, and scheduling problems~\cite{PS15}.
Other examples can be found in, e.g., the book of Williamson and Shmoys~\cite{williamson2011design}. 
More recent are applications to optimization problems on planar and low-dimensional instances~\cite{CM15,CG15}.
In particular, the new PTAS for low dimensional k-means in  is based on local search~\cite{cohen2016local,friggstad2016local}.

Local search algorithms have also been very successful in practice --
e.g., the widely used Lin-Kernighan heuristic~\cite{LK73} for the
travelling salesman problem, which has been experimentally shown to
perform extremely well~\cite{H00}.

Imase and Waxman~\cite{IW91} defined the dynamic Steiner tree problem
where vertices arrive/depart online, and a few edge changes are performed 
 to maintain a near-optimal solution. Their analysis was
improved by~\cite{MSVW12, GuGK13, GK14-steiner, LOPSZ15}, but extending it to
Steiner Forest 
 remains wide open.

\section{Preliminaries}
\label{sec:preliminaries}

Let $G=(V,E)$ be an undirected graph with non-negative edge weights $d_e
\in \Rp$. Let $n:=|V|$. For $W \subseteq V$, let $G[W]=(W,E[W])$ be the
vertex-induced subgraph, and for $F \subseteq E$, $G[F]=(V[F],F)$ the 
edge-induced subgraph, namely the graph consisting of the edges in $F$ and
the vertices contained in them. A forest is a set of edges $F \sse E$
such that $G[F]$ is acyclic. 

For a node set $W \subseteq V$ and an edge set $F \subseteq E$, let
$\delta_F(W)$ denote the edges of $F$ leaving $W$.
Let $\delta_F(A:B) := \delta_F(A) \cap \delta_F(B)$ for two disjoint node sets
$A, B \subseteq V$ be the set of edges that go between $A$ and
$B$. For forests $F_1,F_2 \subseteq E$ we use
$\delta_F(F_1:F_2) := \delta_F(V[F_1]:V[F_2])$. We may drop the
subscript if it is clear from the context.


Let $\terms \subseteq \{ \{v,\bar{v}\} \mid v,\bar v \in V \}$ be a set
of terminal pairs.  Denote the shortest-path distance between $u$ and
$\bar{u}$ in $(G,d)$ by $\dist_d(u,\bar{u})$.  Let $n_t$ be the number
of terminal pairs. We number the pairs according to non-decreasing
shortest path distance (ties broken arbitrarily). Thus, $\terms = \{
\{u_1,\bar{u}_1\}, \ldots, \{u_{n_t},\bar{u}_{n_t}\}\}$ and $i < j$
implies $\dist_d(u_i,\bar{u}_i) \le \dist_d(u_j,\bar{u}_j)$. This
numbering ensures consistent tie-breaking throughout the paper.
We say that $G=(V,E)$, the weights $d$ and $\terms$ form a \emph{Steiner
  Forest instance}.  
We often use $\ALG$ to denote a feasible Steiner forest held by our
algorithm and $\REFF$ to denote an optimal/good feasible solution to
which we compare $\ALG$.



\paragraph{Width.}
Given a connected set of edges $E'$, the \emph{width} $w(E')$ of $E'$ is
the maximum distance (in the original graph) of any terminal pair connected
by~$E'$: \ie $w(E')=\max_{\set{u,\bar{u}}\in\terms, u,\bar{u}\in V[E']}
\dist_d(u,\bar{u})$. Notice that $w(E')$ is the width of the pair
$\{u_i,\bar{u}_i\}$ with the largest $i$ among all pairs in $V[E']$. We
set $\maxit(E') := \max\cset{i}{u_i, \bar{u}_i \in V[E']}$, \ie $w(E') =
\dist_d(u_{\maxit(E')},\bar{u}_{\maxit(E')})$.
	
For a subgraph $G[F]=(V[F],F)$ given by $F\subseteq E$ with connected
components $E_1,\dots,E_l \subseteq F$, we define the \emph{total width}
of $F$ to be the sum $w(F) := \sum_{i=1}^l w(E_i)$ of the widths of
its connected components. Let $d(F) := \sum_{e \in F} d_e$ be the sum of
edge lengths of edges in $F$ and define
\[ \phi(F) := d(F) + w (F). \]
By the definition of the width, it follows that $d(F) \leq \phi(F) \leq 2d(F)$.
  

\section{The Local Search Algorithm} 
\label{sec:LS-moves}

Our local-search algorithm starts with a feasible solution $\ALG$, and
iteratively tries to improve it. Instead of looking at the actual edge
cost $d(\ALG)$, we work with the potential $\phi(\ALG)$ 
and decrease it over time. 

In the rest of the paper, we say a move changing $\ALG$ into $\ALG'$ is
\emph{improving} if $\phi(\ALG') < \phi(\ALG)$. A solution $\ALG$ is
\texttt{<move>}-\emph{optimal} with respect to certain kind of move if no
moves of that kind are improving.

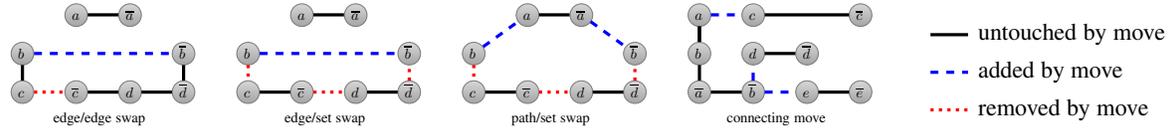
\begin{figure}[htbp]
 \centering
 \begin{tikzpicture}
  \tikzstyle{node}=[circle,shade,top color=gray!30,bottom color=gray!70,draw=gray]
  \tikzstyle{addedge}=[very thick,blue,-,dashed]
	\tikzstyle{removeedge}=[very thick, red, dotted] 
  \tikzstyle{algedge}=[very thick,black,-,solid]
	\tikzstyle{both}=[postaction={draw,blue!80,dash pattern= on 3pt off 5pt,dash phase=4pt,very thick},black,dash pattern= on 3pt off 5pt,very thick]
  \begin{scope}[yshift=2.5mm,scale=0.5,every node/.style={scale=0.5}]
   \matrix (v) [matrix of math nodes, nodes={circle,shade,top color=gray!30,bottom color=gray!70,draw=gray,inner sep=1pt,anchor=base,text depth=.5ex,text height=2ex,text width=0.6em,minimum size=6mm},column sep=4mm,row sep=2mm] {
      & a & \overline{a} &              \\
		b &   &              & \overline{b} \\
		c & \overline{c} & d & \overline{d} \\
   };
   \draw[algedge] (v-1-2) edge (v-1-3);
	 \draw[algedge] (v-2-1) edge (v-3-1);
	 \draw[removeedge] (v-3-1) edge (v-3-2);
	 \draw[algedge] (v-3-2) edge (v-3-3);
	 \draw[algedge] (v-3-3) edge (v-3-4);
	 \draw[algedge] (v-3-4) edge (v-2-4);
	 \draw[addedge] (v-2-1) edge (v-2-4);
	 \node[anchor=north,xshift=4mm,yshift=-2mm] (text) at (v-3-2.south east) {\edgeedgeswap};
  \end{scope}
  \begin{scope}[yshift=2.5mm,xshift=3.0cm,scale=0.5,every node/.style={scale=0.5}]
   \matrix (v) [matrix of math nodes, nodes={circle,shade,top color=gray!30,bottom color=gray!70,draw=gray,inner sep=1pt,anchor=base,text depth=.5ex,text height=2ex,text width=0.6em,minimum size=6mm},column sep=4mm,row sep=2mm] {
      & a & \overline{a} &              \\
		b &   &              & \overline{b} \\
		c & \overline{c} & d & \overline{d} \\
   };
   \draw[algedge] (v-1-2) edge (v-1-3);
	 \draw[removeedge] (v-2-1) edge (v-3-1);
	 \draw[algedge] (v-3-1) edge (v-3-2);
	 \draw[removeedge] (v-3-2) edge (v-3-3);
	 \draw[algedge] (v-3-3) edge (v-3-4);
	 \draw[removeedge] (v-3-4) edge (v-2-4);
	 \draw[addedge] (v-2-1) edge (v-2-4);
	 \node[anchor=north,xshift=4mm,yshift=-2mm] (text) at (v-3-2.south east) {\edgesetswap};
  \end{scope}	
  \begin{scope}[yshift=2.5mm,xshift=6cm,scale=0.5,every node/.style={scale=0.5}]
   \matrix (v) [matrix of math nodes, nodes={circle,shade,top color=gray!30,bottom color=gray!70,draw=gray,inner sep=1pt,anchor=base,text depth=.5ex,text height=2ex,text width=0.6em,minimum size=6mm},column sep=4mm,row sep=2mm] {
      & a & \overline{a} &              \\
		b &   &              & \overline{b} \\
		c & \overline{c} & d & \overline{d} \\
   };
   \draw[algedge] (v-1-2) edge (v-1-3);
	 \draw[removeedge] (v-2-1) edge (v-3-1);
	 \draw[algedge] (v-3-1) edge (v-3-2);
	 \draw[removeedge] (v-3-2) edge (v-3-3);
	 \draw[algedge] (v-3-3) edge (v-3-4);
	 \draw[removeedge] (v-3-4) edge (v-2-4);
	 \draw[addedge] (v-2-1) edge (v-1-2);
	 \draw[addedge] (v-1-3) edge (v-2-4);
	 \node[anchor=north,xshift=4mm,yshift=-2mm] (text) at (v-3-2.south east) {\pathsetswap};
  \end{scope}		
  \begin{scope}[yshift=2.5mm,xshift=9.0cm,scale=0.5,every node/.style={scale=0.5}]
   \matrix (v) [matrix of math nodes, nodes={circle,shade,top color=gray!30,bottom color=gray!70,draw=gray,inner sep=1pt,anchor=base,text depth=.5ex,text height=2ex,text width=0.6em,minimum size=6mm},column sep=4mm,row sep=2mm] {
    a            & c            &              & \overline{c} \\
		b            & d            &  \overline{d}  &          \\
		\overline{a} & \overline{b} &  e & \overline{e}\\
   };
   \draw[algedge] (v-1-1) edge (v-2-1);
	 \draw[algedge] (v-2-1) edge (v-3-1);
	 \draw[algedge] (v-3-1) edge (v-3-2);
	 \draw[algedge] (v-1-1) edge (v-2-1);
	 \draw[algedge] (v-1-2) edge (v-1-4);
	 \draw[algedge] (v-2-2) edge (v-2-3);
	 \draw[algedge] (v-3-3) edge (v-3-4);		
	 \draw[addedge] (v-1-1) edge (v-1-2);
	 \draw[addedge] (v-3-2) edge (v-3-3);
	 \draw[addedge] (v-3-2) edge (v-2-2);
	 \node[anchor=north,xshift=4mm,yshift=-2mm] (text) at (v-3-2.south east) {connecting move};
  \end{scope}			
	\begin{scope}[xshift=11.0cm,yshift=0.5cm]
	 \draw[algedge] (0,0) -- (0.5,0); \node[anchor=west] at (0.5,0) {\footnotesize untouched by move};
	 \draw[addedge] (0,-0.5) -- (0.5,-0.5); \node[anchor=west] at (0.5,-0.5) {\footnotesize added by move};
	 \draw[removeedge] (0,-1) -- (0.5,-1); \node[anchor=west] at (0.5,-1) {\footnotesize removed by move};
	\end{scope}
 \end{tikzpicture}
 \caption{Our different moves.}
 \label{fig:prelimswaps}
\vspace*{-0.5cm}
\end{figure}  

\paragraph{Swaps.}
Swaps are moves that start with a cycle-free feasible solution $\ALG$, add some
edges and remove others to get to another cycle-free feasible solution
$\ALG'$.
\begin{itemize}[leftmargin=0.2in]
\item The most basic swap is: \emph{add an edge $e$ creating a cycle, remove
    an edge $f$ from this cycle}. This is called an \emph{\edgeedgeswap} $(e,f)$. 

\item We can slightly generalize this: \emph{add an edge $e$ creating a
    cycle, and remove a subset $S$ of edges from this cycle
    $C(e)$}. This is called the \emph{\edgesetswap} $(e,S)$. \Edgeedgeswap{}s are a special case of \edgesetswap{}s, so \edgesetswap-optimality implies \edgeedgeswap-optimality.

  There may be many different subsets of $C(e)$ we could remove. A
  useful fact is that if we fix some edge $f \in C(e)$ to remove, this
  uniquely gives a maximal set $R(e,f) \sse C(e)$ of edges that can be
  removed along with $f$ after adding $e$ without violating
  feasibility. Indeed, $R(e,f)$ contains $f$, and also all edges on
  $C(e)$ that can be removed in $\ALG \cup\{e\}\backslash\{f\}$ without
  destroying feasibility.  (See
  Lemma~\ref{lem:comp-class-all-or-nothing} for a formalization.)

  Moreover, given a particular $R(e,f)$, we could remove any subset $S
  \subseteq R(e,f)$. If we were doing local search w.r.t.\ $d(\ALG)$,
  there would be no reason to remove a proper subset. But since the
  local moves try to reduce $\phi(\ALG)$, removing a subset of $R(e,f)$
  may be useful. If $e_1,\ldots,e_{\ell}$ are the edges in $R(e,f)$ in
  the order they appear on $C(e)$, we only need swaps where $S$ consists
  of edges $e_i,\ldots,e_j$ that are consecutive in the above
  order.   There are $\mathcal{O}(n^2)$ sets $S \subseteq R(e,f)$ that are
  consecutive.\footnote{In fact, we only need five different swaps
    $(e,S)$ for the following choices of consecutive sets $S$: The set
    $S=\{f\}$, the complete set $S=R(e,f)$, and three sets of the form
    $S=\{e_1,\ldots,e_{i}\}$, $S=\{e_{i+1},\ldots,e_{j}\}$ and
    $S=\{e_{j+1},\ldots,e_{\ell}\}$ for specific indices $i$ and
    $j$. How to obtain the values for $i$ and $j$ is explained in
    Section~\ref{sec-treecase-d}.} Moreover, there are at most $n-1$
  choices for $e$ and $O(n)$ choices for $f$, so the number of \edgesetswap{}s is polynomial.

\item 
  A further generalization: we can pick two vertices $u, v$ lying in
  some component $T$, add a shortest-path between them (in the current
  solution, where all other components are shrunk down to single points,
  and the vertices/edges in $T\setminus \{u,v\}$ are removed). This
  creates a cycle, and we want to remove some edges. We now imagine that
  we added a ``virtual'' edge $\{u,v\}$, and remove a subset of consecutive edges from some $R(\{u,v\}, f) \sse C(\{u,v\})$, just as if we'd have executed an \edgesetswap with the ``virtual'' edge $\{u,v\}$.
	We call such a swap a \emph{\pathsetswap} $(u,v,S)$.

  Some subtleties: Firstly, the current solution $\ALG$ may already
  contain an edge $\{u,v\}$, but the $uv$-shortest-path we find may be
  shorter because of other components being shrunk. So this move would
  add this shortest-path and remove the direct edge $\{u,v\}$---indeed,
  the cycle $C(uv)$ would consist of two parallel edges, and we'd remove
  the actual edge $\{u,v\}$. 
  Secondly, although the cycle contains edges from many components, only
  edges within $T$ are removed. Finally, there are a polynomial number
  of such moves, since there are $O(n^2)$ choices for $u,v$, $O(n)$
  choices for $f$, and $O(n^2)$ consecutive removal sets $S$.
\end{itemize}

Note that \edgesetswap{}s never decrease the number of connected
components of $\ALG$, but \pathsetswap{}s may increase or decrease the
number of connected components.

\paragraph*{Connecting moves.} Connecting moves reduce the number of
connected components by adding a set of edges that connect some of the
current components. Formally, let $G_{\ALG}^{\text{all}}$ be the
(multi)graph that results from contracting all connected components of
$\ALG$ in $G$, deleting loops and keeping parallel edges.  A
\emph{connecting move} (denoted $conn(T)$) consists of picking a tree in
$G_{\ALG}^{\text{all}}$, and adding the corresponding edges to
$\ALG$. 
The number of possible connecting moves can be large, but
Section~\ref{sec:howto-treemove} discusses how to do this approximately,
using a $k$-MST procedure.
%

Note that connecting moves cause $d(\ALG') > d(\ALG)$, but since our
notion of improvement is with respect to the potential $\phi$, such a
move may still cause the potential to decrease. 

In addition to the above moves, the algorithm runs the following post-processing step at the end.

\paragraph*{Clean-up.} Remove the unique maximal edge set $S \subseteq \ALG$ such that $\ALG \setminus S$ is feasible, i.e., erase all unnecessary edges.  This might increase $\phi(\ALG)$, but it will never increase $d(\ALG)$.



Checking whether an improving move exists is polynomial except for connecting moves, which we can do approximately (see \S\ref{sec:howto-treemove}). Thus, the local search algorithm can be made to run in polynomial time by using standard methods (see \S~\ref{sec:polytime}). \msnote{added this sentence, moved section in appendix}



\section{In Which the Local Optimum is a Single Tree}

We want to bound the cost of a forest that is locally optimal with respect to the moves defined above.
To start, let us consider a simpler case: suppose we were to find a single tree $T$
that 
is optimal with respect to just the \emph{\edgeedge}
and \emph{\edgeset} swaps. (Recall that \edgeset swaps add an edge and
remove a consecutive subset of the edges on the resulting cycle, while
maintaining feasibility. Also, recall that optimality means that no such
moves cause the potential $\phi$ to decrease.) Our main result of this
section is the following:

\begin{restatable}{corollary}{treecaseresult}
  \label{treecaseresult}
  Let $G=(V,E)$ be a graph, let $d_e$ be the cost of edge $e \in E$ and
  let $\terms \subseteq V\times V$ be a set of terminal pairs.  Let
  $\ALG, \REFF \subseteq E$ be two feasible Steiner forests for
  $(G,d,\terms)$ with $V[\ALG]=V[\REFF]$.  Assume that $\ALG$ is a tree
  and that $\ALG$ is swap-optimal with respect to $\REFF$ and $\phi$
  under \edgeedge and \edgeset swaps.  Denote by $\ALG'$ the modified
  solution where all inessential edges have been dropped from $\ALG$.
  Then,
  \begin{align*}
    d(\ALG') \leq 10.5\cdot d(\REFF) + w(\REFF) \le 11.5 \cdot d(\REFF).
  \end{align*}
\end{restatable}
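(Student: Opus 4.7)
The plan is to charge the cost of essential edges of $\ALG$ against the cost of $\REFF$, with a small extra charge against $w(\REFF)$ absorbing the width contribution to $\phi$. Since $\ALG$ is a tree and $V[\ALG] = V[\REFF]$, every edge $f \in \REFF \setminus \ALG$ creates a unique cycle $C(f)$ in $\ALG + f$; for a choice of reference removed edge $e_f \in C(f)$ the removable set is the consecutive $R(f, e_f) \subseteq C(f)$. Swap-optimality applied to the \edgesetswap $(f,S)$ for any consecutive $S \subseteq R(f,e_f)$ gives $\phi(\ALG + f - S) \geq \phi(\ALG)$, equivalently
\[
  d(S) \;\leq\; d_f + \bigl( w(\ALG + f - S) - w(\ALG) \bigr).
\]
When $S$ does not disconnect the tree the width change vanishes and we get $d(S) \le d_f$ directly; when it does, the slack must be accounted for globally against $w(\REFF)$.

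Next I would organize the essential edges of $\ALG$ into atomic units. Two essential edges belong to the same \emph{class} iff removing either of them from $\ALG$ disconnects the same set of terminal pairs. By Lemma~\ref{lem:comp-class-all-or-nothing}, for any swap $(f,S)$ the intersection $S \cap \ALG$ meets each class either completely or not at all, so classes behave like single virtual edges with length equal to their total edge length, and the charging can be carried out at the class level. I would then build a bipartite graph between classes of $\ALG$ and edges of $\REFF$, placing an incidence from class $K$ to $f \in \REFF$ whenever the entire $K$ lies in some $R(f, e_f)$. Using that any collection of classes separates a set of terminal pairs which $\REFF$ has to reconnect, a Hall-type argument should produce an assignment in which every $f \in \REFF$ is charged by at most a constant number of classes, and each class $K$ gets assigned to some $f$ with $d(K) \le d_f + (\text{width slack})$.

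The hard part will be controlling the width slack. A swap $(f, S)$ that splits $\ALG$ into several new components increases $w$ by at most the sum of those components' widths, and each such width is bounded by $\dist_d(u_i,\bar u_i)$ for some terminal pair contained in the new component; since $\REFF$ connects that pair, this distance is at most the length of the corresponding $\REFF$-path. Summing the slacks across the swaps used in the Hall assignment and double-counting only a constant number of times per edge of $\REFF$, the total slack should telescope into $O(w(\REFF)) + O(d(\REFF))$. Tracking constants through the constant-to-one Hall assignment, the width-into-length conversion, and the fact that we bound $\phi$ rather than $d$ directly, should yield the claimed $d(\ALG') \le 10.5\,d(\REFF) + w(\REFF)$; the $w(\REFF) \le d(\REFF)$ inequality from the preliminaries then gives the stated $11.5\,d(\REFF)$ bound. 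The final cleanup step only drops inessential edges and never increases $d$, so the bound on the essential-edge cost of $\ALG$ transfers immediately to $d(\ALG')$.
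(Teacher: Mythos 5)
Your outline matches the paper's overall strategy (a swap inequality, atomic classes of edges that behave like single edges, a Hall-type constant-to-one charging against $\REFF$, and a separate charge of widths to $w(\REFF)$), but two load-bearing steps are not right as stated. First, your equivalence relation --- ``two essential edges are equivalent iff removing either disconnects the same set of terminal pairs'' --- is not the relation for which the structural lemmas you invoke hold. Lemma~\ref{lem:comp-class-all-or-nothing} and the all-or-nothing property of fundamental cycles are proved for the compatibility relation $\sim_{cp}$, which is defined \emph{relative to $\REFF$}: $e \sim_{cp} f$ iff no $\REFF$-edge leaves the middle component $T_{e,f}$. Your relation is coarser in general, since $\REFF$ may route a path through $T_{e,f}$ even when no terminal pair is separated by it; in that case an edge $f \in \REFF$ can close a cycle in $\ALG$ containing some but not all edges of one of your classes, so the class no longer behaves like a single edge. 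The $7/2$ Hall bound also relies on the $\REFF$-based definition (the endpoints of a class force incident $\REFF$-edges precisely because two adjacent non-compatible edges witness an $\REFF$-edge at their junction), so the expansion condition $|N(X)| \geq |X|/c$ is not established for your classes.

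Second, the width slack is exactly where the work is, and ``double-counting only a constant number of times per edge of $\REFF$'' is asserted rather than proved. The paper needs two specific devices here: (i) each class is swapped out in \emph{three} consecutive pieces, split at the two widest inner components, so that every sub-swap's width increase omits the widest component it would otherwise detach --- this is the source of the factor $3$ that turns the Hall constant $7/2$ into $10.5$; and (ii) the leftover term $\sum_{S}\sum_{i\in\innS} w(E_{S,i})$ is charged \emph{injectively} to connected components of $\REFF$ (Lemmas~\ref{lem:s-in-asi-or-not}--\ref{lem:bound-width-term}), using that $\delta_{\REFF}(V_{S,i})=\emptyset$ for inner components and that distinct classes are nested or disjoint; this yields the clean additive $w(\REFF)$ with no constant in front. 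Your proposal instead bounds each new component's width by the length of an $\REFF$-path, which, without an injectivity or bounded-multiplicity argument, gives no control on how often a piece of $\REFF$ is charged and cannot produce the stated constants. The final step (the clean-up never increases $d$, and $w(\REFF)\le d(\REFF)$) is fine.
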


The actual approximation guarantee is~42 for this case: indeed,
Corollary~\ref{treecaseresult} assumes $V[\ALG]=V[\REFF]$, which can be
achieved (by taking the metric completion on the terminals) at the cost of a factor $2$.

The intuition here comes from a proof for the optimality of
\edgeedgeswap{}s for the Minimum Spanning tree problem.  Let $\ALG$ be
the tree produced by the algorithm, and $\REFF$ the reference (i.e.,
optimal or near-optimal) solution, with $V[\ALG]=V[\REFF]$. Suppose we
were looking for a minimum spanning tree instead of a Steiner forest: one way to show that \edgeedgeswap{}s
lead to a global optimum is to build a bipartite graph whose vertices are
the edges of $\ALG$ and $\REFF$, and which contains edge $(e,f)$ when $f \in \REFF$
can be swapped for $e \in \ALG$ and $d_e \leq d_f$. Using the fact that
all edge/edge swaps are non-improving, we can show that there exists a
perfect matching between the edges in $\ALG$ and $\REFF$, and hence the
cost of $\ALG$ is at most that of $\REFF$.

Our analysis is similar in spirit. Of course, we now have to (a)
consider edge/\emph{set} swaps, (b) do the analysis with respect to the
potential $\phi$ instead of just edge-lengths, and (c) we cannot hope to
find a perfect matching because the problem is NP-hard. These issues make the
proofs more complicated, but the analogies still show through.  

\subsection{An approximation guarantee for trees and \texorpdfstring{$d$}{d}}\label{sec-treecase-d}

In this section, we conduct a thought-experiment where we imagine that we get
a connected tree on the terminals which is optimal for \edgesetswap{}s
\emph{with respect to just the edge lengths, not the potential}.  In
very broad strokes, we define an equivalence relation on the edges of
$\ALG$, and show a constant-to-1 cost-increasing map from the resulting equivalence classes
to edges of $\REFF$---again mirroring the MST analysis---and hence
bounding the cost of $\ALG$ by a constant times the cost of $\REFF$.
The analysis of the real algorithm in \S\ref{sec-treecase-phi} builds on the insights we develop here.

\textbf{Some Definitions.}  The crucial equivalence relation is defined as
follows: For edges $e,f \in \ALG$, let $T_{e,f}$ be the connected
component of $\ALG \setminus \{e,f\}$ that contains the unique $e$-$f$-path in
$\ALG$. We say $e$ and $f$ are \emph{compatible w.r.t.\ $\REFF$} if
$e=f$ or if there are no \REFF-edges leaving~$T_{e,f}$, and denote it by
$e \sim_{cp} f$. In Lemma~\ref{lem:compatibility-is-transitive} we show
that $\sim_{cp}$ is an equivalence relation, and denote the set of
equivalence classes by $\eqS$.


An edge is \emph{essential} if dropping it makes the solution infeasible.
If $T_1, T_2$ are the connected components of $\ALG\setminus\{e\}$, then
$e$ is called \emph{safe} if at least one edge from $\REFF$ crosses between
$T_1$ and $T_2$. Observe that any essential edge is safe, but the
converse is not true: safe edges can be essential or
inessential. However, it turns out that the set $S_{u}$ of all unsafe edges
in $\ALG$ forms an equivalence class of $\sim_{cp}$. Hence, all other
equivalence classes in $\eqS$ contain only safe edges. Moreover, these
equivalence classes containing safe edges behave like single edges in
the following sense. (Proof in Lemma~\ref{lem:summary:pathlemmata} in \S\ref{appendix:addproofs}.)

\begin{OneLiners}
  \item[(1)] Each equivalence class $S$ lies on a path in $\ALG$.
  \item[(2)] For any edge $f \in \REFF$, either $S$ is completely contained
    in the fundamental cycle $C_\ALG(f)$ obtained by adding $f$ to $\ALG$, or
    $S \cap C_\ALG(f) = \emptyset$.
  \item[(3)] If $(\ALG\setminus\{e\})\cup\{f\}$ is feasible, and $e$
    belongs to equivalence class $S$, then
    $(\ALG\setminus S)\cup\{f\}$ is feasible. (This last property also trivially holds
    for $S=S_u$.)
\end{OneLiners}

\medskip \textbf{Charging.} We can now give the bipartite-graph-based
charging argument sketched above.

	
\begin{theorem}\label{thm:four-approx-for-trees-hall}
  Let $I=(V,E,\terms,d)$ be a Steiner Forest instance and let $\REFF$ be
  a feasible solution for $I$.  Furthermore, let $\ALG \subseteq E$ be a
  feasible \emph{tree} solution for $I$.  Assume that
  $V[\REFF]=V[\ALG]$.  Let $\Delta: \eqS \to \R$ be a cost function that
  assigns a cost to all $S \in \eqS$.  Suppose that $\Delta(S) \leq d_f$
  for all pairs of $S\in \eqS\setminus \{S_u\}$ and $f \in \REFF$ such
  that the cycle in $\ALG\cup \{f\}$ contains $S$.  Then, 
  \[
  \sum_{S \in \eqS\setminus \{S_u\}} \Delta(S) \leq \frac{7}{2}\cdot \sum_{f\in \REFF} d_f.
  \]
\end{theorem}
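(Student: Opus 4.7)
I would cast the claim as a fractional matching problem on a bipartite \emph{containment graph} $H$ whose vertex sets are $\eqS \setminus \{S_u\}$ and $\REFF$, with an edge $\{S,f\}$ whenever $S \subseteq C_\ALG(f)$. By Property~(2) of the equivalence classes the relation is well-defined at the class level, so $H$ is unambiguous, and by hypothesis every edge of $H$ satisfies $\Delta(S) \leq d_f$. Moreover, each class in $\eqS \setminus \{S_u\}$ consists solely of safe edges, and any safe edge $e$ lies on the fundamental cycle of some $\REFF$-edge (namely any $f \in \REFF$ witnessing the safety of $e$); combined with Property~(2) this shows that every class has at least one neighbor in $H$.

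The theorem follows as soon as I can produce a fractional matching $x : E(H) \to \R_{\ge 0}$ with $\sum_{f} x_{S,f} = 1$ for each class $S$ and $\sum_{S} x_{S,f} \leq \tfrac{7}{2}$ for each $f \in \REFF$. Indeed, a one-line double-counting would then give
\[
\sum_{S \in \eqS \setminus \{S_u\}} \Delta(S) \;=\; \sum_{S} \Delta(S)\sum_{f} x_{S,f} \;\leq\; \sum_{f} d_f \sum_{S} x_{S,f} \;\leq\; \tfrac{7}{2}\sum_{f \in \REFF} d_f.
\]
By LP duality (equivalently, max-flow/min-cut on a transportation network with supplies $1$ on the class side and capacities $\tfrac{7}{2}$ on the $\REFF$-side, or the defect version of Hall's theorem), such an $x$ exists if and only if every subset $\mathcal{S} \subseteq \eqS \setminus \{S_u\}$ satisfies $|N_H(\mathcal{S})| \geq \tfrac{2}{7}|\mathcal{S}|$.

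To establish this neighbourhood inequality I would exploit the path structure of classes (Property~(1)): each class $S$ has two boundary edges in $\ALG$, and both are safe, so each is covered by some $\REFF$-edge, furnishing a ``left'' and a ``right'' witness of $S$ in $N_H(S)$. Charging each class to its two witnesses reduces the problem to bounding how many classes a single $\REFF$-edge $f$ can simultaneously witness. The crux, and main obstacle, is precisely this counting step, and it is where the specific constant $\tfrac{7}{2}$ is produced: one has to exploit how the two endpoints of $f$ distribute along the cycle $C_\ALG(f)$ and how the defining separation condition of compatibility (an $\REFF$-edge must leave $T_{e,f}$) forces distinct classes on the same cycle to consume distinct witness slots. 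A careful case analysis---distinguishing how a class sits inside $C_\ALG(f)$ relative to the two endpoints of $f$ and ruling out the pathological arrangement in which many classes nest along a single cycle and all name $f$ as their extremal witness---should then give the factor $\tfrac{7}{2}$.
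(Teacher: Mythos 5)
Your overall framework is exactly the paper's: the same bipartite containment graph $H$, the same reduction via a fractional Hall/max-flow argument to the neighbourhood condition $|N_H(\mathcal{S})|\ge \tfrac{2}{7}|\mathcal{S}|$, and the same concluding double count. The gap is that you have not proved the neighbourhood condition, and the route you sketch for it --- charge each class to (at most) two witness edges of $\REFF$ and then bound the load on each witness by a per-cycle case analysis --- does not work as stated. The load on a single $\REFF$-edge $f$ cannot be bounded by a constant: every edge of $\ALG$ lying on the fundamental cycle $C_{\ALG}(f)$ has $f$ crossing the cut it defines, so $f$ is a legitimate witness for \emph{every} class contained in $C_{\ALG}(f)$, and a single cycle can contain arbitrarily many classes (consecutive classes are separated only by the requirement that \emph{some} $\REFF$-edge leaves the component between them, which does not prevent all of them from also being witnessed by $f$). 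A cleverer choice of witnesses might avoid this, but that choice is precisely the missing global argument, and no purely local analysis of ``how a class sits inside $C_{\ALG}(f)$ relative to the endpoints of $f$'' can certify it, because the condition must hold simultaneously for every subset $\mathcal{S}$.

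The paper's proof of this step is structurally different from what you propose. For a fixed $\mathcal{S}$ it first \emph{contracts} all edges belonging to classes outside $\mathcal{S}$ (in both $\ALG$ and $\REFF$), notes that every surviving $\REFF$-edge lies in $N_H(\mathcal{S})$, and then compares $|\mathcal{S}|$ to $|\REFF'|$ by a vertex-counting argument on the contracted tree: each class is assigned a distinct representative vertex (an endpoint of its path whose unique incoming edge, after rooting, lies in the class), representatives are split by degree into $R_1$, $R_2$, and degree $\ge 3$ (the last bounded by the number of leaves $L$), and one shows $|\REFF'|\ge (|R_2|+|L|)/2$ using incompatibility of the two edges at a degree-two representative and safety at leaves, plus $|\REFF'|\ge \tfrac{2}{3}|R_1|$ via a connected-component count. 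The constant $\tfrac{7}{2}$ arises as $\tfrac{3}{2}+2$ from these two bounds, not from any bound on how many classes one $\REFF$-edge can witness. To complete your proof you would need to supply an argument of this global, subset-dependent kind; the two-witness charging scheme by itself is not enough.
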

	
\newcommand{\ALGprime}{\ALG'}
\newcommand{\Aprime}{X}
\begin{proof}
  Construct a bipartite graph $H = (A\cup B, E(H))$ with nodes
  $A:=\cset{a_S}{S \in \eqS\setminus\{S_u\}}$ and $B:=\cset{b_f}{f \in
    \REFF}$. Add an edge $\{a_S, b_f\}$ whenever $f$ closes a cycle in
  $\ALG$ that contains $S$.  By our assumption,  if $\{a_S,
  b_f\} \in E(H)$ then $\Delta(S) \leq d_f$.  Suppose that we can show that $\frac{7}{2} \cdot |N(\Aprime)| \geq |\Aprime|$ for all $\Aprime \subseteq A$, where $N(\Aprime) \subseteq B$ is the set of neighbors of nodes in $\Aprime$.  By a generalization of Hall's Theorem (stated as Fact~\ref{fact:halltype} in \S\ref{appendix:addproofs}), this
  condition implies that there is an assignment $\alpha: E \to \mathbb{R}_+$ such that $\sum_{e \in \delta_H(a)} \alpha(e) \geq 1$ for all $a \in A$ and $\sum_{e \in \delta_H(b)} \alpha(e) \leq \frac{7}{2}$ for all $b \in B$. Hence
    \begin{align*}
      \smashoperator[r]{\sum_{S \in \eqS\setminus\{S_u\}}} \Delta(S)
       \leq \sum_{S \in \eqS\setminus\{S_u\}} \smashoperator[r]{\sum_{e \in \delta_H(a_S)}} \alpha(e) \Delta(S)
       =\sum_{f \in \REFF} \smashoperator[r]{\sum_{e \in \delta_H(b_f)}} \alpha(e) \Delta(S)
       \leq \sum_{f \in \REFF} \smashoperator[r]{\sum_{e \in \delta_H(b_f)}} \alpha(e) d_f
       \leq \frac{7}{2} \sum_{f \in \REFF} d_f.
  \end{align*}
  It remains to show that $\frac{7}{2} \cdot |N(\Aprime)| \geq
  |\Aprime|$ for all $\Aprime \subseteq A$. To that aim, fix $\Aprime
  \subseteq A$ and define $\eqS' := \{S \mid a_S \in \Aprime\}$. In a
  first step, contract all $e \in U:=\bigcup_{S \in \eqS \setminus
    \eqS'} S$ in $\ALG$, and denote the resulting tree by $\ALGprime :=
  \ALG\contract U$.\footnote{Formally, we define the graph $G[T]/ e
    =(V[T]/ e, T/ e)$ for a tree $T$ by $V[T]/ e := V[T] \cup \{ uv \}
    \setminus \{u,v\}$ and $T / e := T \setminus \delta(\{u,v\}) \cup
    \{\{w,uv\} \mid \{u,w\} \in T \,\vee\, \{v,w\} \in T \}$ for an edge
    $e=\{u, v\}\in E$, then set $G/ U := G/ e_1 / e_2 / \ldots / e_k$
    for $U=\{e_1,\ldots,e_k\}$ and let $T/ U$ be the edge set of this
    graph. If $U \subseteq T$, then the contraction causes no loops or
    parallel edges, otherwise, we delete all loops or parallel edges.}
  Note that edges in each equivalence class are either all contracted or
  none are contracted. Also note that all unsafe edges are contracted,
  as $S_u \notin \eqS'$. Apply the same contraction to $\REFF$ to obtain
  $\REFF' := \REFF \contract U$, from which we remove all loops and
  parallel edges.
  
  Let $f \in \REFF'$. Since $\ALGprime$ is a tree, $f$ closes a cycle $C$
  in $\ALGprime$ containing at least one edge $e \in \ALGprime$. Denoting
  the equivalence class of $e$ by $S_e$, observing that all edges in
  $\ALGprime$ are safe, and using property (2) given above
  (formally, using Lemma~\ref{lemsum:property:cycle-all-or-nothing}), we
  get that cycle $C$ contains $S_e$. Hence the node $b_f \in B$
  corresponding to $f$ belongs to $N(a_{S_e}) \sse N(\Aprime)$. Thus,
  $|N(\Aprime)| \geq |\REFF'|$ and it remains to show that
  $\frac{7}{2}|\REFF'| \geq |\Aprime|$.
  
  We want to find a unique representative for each $a_S \in \Aprime$. So
  we select an arbitrary root vertex $r \in V[\ALGprime]$ and orient all
  edges in $\ALGprime$ away from $r$. Every non-root vertex now has
  exactly one incoming edge. Every equivalence class $S \in \eqS'$
  consists only of safe edges, so it lies on a path (by
  Lemma~\ref{lemsum:property:compatible-path}). Consider the two
  well-defined \emph{endpoints} which are the outermost vertices of $S$
  on this path. For at least one of them, the unique incoming edge must
  be an edge in $S$. We represent $S$ by one of the endpoints which has
  this property and call this representative $r_S$. Let $R \subseteq
  V[\ALGprime]$ be the set of all representative nodes. Since every vertex
  has an unique incoming edge, $S \neq S'$ implies that $r_S \neq
  r_{S'}$. Hence $|R| = |\eqS'| = |\Aprime|$.
  Moreover, let $R_1$ and $R_2$ be the representatives with degrees $1$
  and $2$ in $\ALGprime$, and $L$ be the set of leaves of $\ALGprime$. As
  the number of vertices of degree at least $3$ in a tree is bounded by
  the number of its leaves, the number of representatives of degree at
  least $3$ in $\ALGprime$ is bounded by $|L|$. So
  $|\Aprime| \leq |R_1| + |R_2| + |L|$.
  
   We now show that every $v \in R_2 \cup L$ is incident to an edge in $\REFF'$. First, consider any $v \in L$ and let $e$ be the only edge in $\ALGprime$ incident to $v$. As $e$ is safe, there must be an edge $f \in \REFF'$ incident to $v$. Now consider any $r_S \in R_2$ and let $e_1, e_2 \in \ALGprime$ be the unique edges incident to $r_S$. Because $r_S$ is the endpoint of the path corresponding to the equivalence class $S$, the edges $e_1$ and $e_2$ are not compatible. Hence there must be an edge $f \in \REFF'$ incident to $r_S$. Because $R_2$ and $L$ are disjoint and every edge is incident to at most two vertices, we conclude that 
	$|\REFF'| \geq (|R_2| + |L|)/2$. 
   
   Finally, we show that $|\REFF'| \geq \frac{2}{3} |R_1|$. Let $\mathcal{C}$ be the set of connected components of $\REFF'$ in $G \contract U$. Let $\mathcal{C}' := \{T \in \mathcal{C} \mid |V[T] \cap R_1| \leq 2\}$ and $\mathcal{C}'' := \{T \in \mathcal{C} \mid |V[T] \cap R_1| > 2\}$. Note that no representative $r_S \in R_1$ is a singleton as every leaf of $\ALGprime$ is incident to an edge of $\REFF'$. 
   We claim that $|T| \geq |V[T] \cap R_1|$ for every $T \in \mathcal{C}'$.
   Assume by contradiction that this was not true and let $T \in \mathcal{C}'$ with $|T| < |V(T) \cap R_1|$. This means that $V[T] \cap R_1$ contains exactly two representatives $r_{S}, r_{S'} \in R_1$ and $T$ contains only the edge $\{r_S, r_{S'}\}$. Let $e \in S$ and $e' \in S'$ be the edges of $\ALGprime$ incident to $r_{S}$ and $r_{S'}$, respectively. As $e$ and $e'$ are not compatible, there must be an edge $f \in \REFF'$ with exactly one endpoint in $\{r_{S}, r_{S'}\}$, a contradiction as this edge would be part of the connected component $T$. We conclude that $|T| \geq |V[T] \cap R_1|$ for every $T \in \mathcal{C}'$. Additionally, we have that $|T| \geq |V[T]| - 1 \geq \frac{2}{3}|V[T]|$ for all $T \in \mathcal{C}''$ as $|V[T]| > 2$. Therefore,
   \begin{align*}
     |\REFF'| & = \sum_{T \in \mathcal{C}} |T| \geq \sum_{T \in \mathcal{C}'} |V[T] \cap R_1| + \sum_{T \in \mathcal{C}''} \frac{2}{3}|V[T] \cap R_1| \geq \frac{2}{3} |R_1|.
   \end{align*}
	The three bounds together imply $|\Aprime| \leq |R_1| + |R_2| + |L| \leq \frac{3}{2}|\REFF'| + 2|\REFF'| = \frac{7}{2} |\REFF'|$. 
  \end{proof}

  We obtain the following corollary of Theorem~\ref{thm:four-approx-for-trees-hall}.
  \begin{corollary}\label{cor:treecase}
    Let $I=(V,E,\terms,d)$ be a Steiner Forest instance and let 
    $\OPT$ be a solution for $I$ that minimizes $d(\OPT) = \sum_{e \in \OPT} d_e$.
    Let $\ALG \subseteq E$ be feasible tree solution for $I$ that does not 
    contain inessential edges. 
		Assume $V[\ALG]=V[\OPT]$.
    If $\ALG$ is \edgeedge and \edgeset swap-optimal with respect to $\OPT$ and $d$, 
    then it holds that $\sum_{e \in \ALG} d_e \leq (7/2)\cdot \sum_{e\in \OPT} d_e$.
  \end{corollary}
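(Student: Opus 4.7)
The plan is to derive Corollary~\ref{cor:treecase} as an immediate consequence of Theorem~\ref{thm:four-approx-for-trees-hall} by choosing $\REFF := \OPT$ and letting $\Delta(S) := d(S) = \sum_{e \in S} d_e$ for every equivalence class $S \in \eqS$. With this choice, the conclusion of the theorem reads $\sum_{S \in \eqS \setminus \{S_u\}} d(S) \leq \tfrac{7}{2}\, d(\OPT)$, so only two things remain to check: (i) the hypothesis $\Delta(S) \leq d_f$ whenever $f \in \OPT$ closes a cycle containing $S$, and (ii) that $S_u = \emptyset$, so the left-hand side equals $d(\ALG)$.

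For (i), the main point is to exhibit a legal \edgesetswap witnessing the inequality. Since $\ALG$ is a tree and $V[\ALG] = V[\OPT]$, any $f \in \OPT$ closes a unique cycle $C_\ALG(f)$, and removing any single edge $e \in C_\ALG(f)$ yields another spanning tree of $V[\ALG]$ and hence a feasible Steiner forest. Picking any $e \in S$, property~(3) of the equivalence classes (Lemma~\ref{lem:summary:pathlemmata}) upgrades this to feasibility of $(\ALG \setminus S) \cup \{f\}$, so $(f,S)$ is a legal \edgesetswap. Its non-improvement with respect to $d$ immediately gives $d(S) \leq d_f$, as required. For (ii), since $\ALG$ contains no inessential edges, every edge is essential and hence safe; consequently no unsafe edges exist, and the equivalence class $S_u$ of all unsafe edges is empty. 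Therefore $\sum_{S \in \eqS \setminus \{S_u\}} d(S) = \sum_{e \in \ALG} d_e$, and the theorem yields the claimed bound $\sum_{e \in \ALG} d_e \leq \tfrac{7}{2} \sum_{e \in \OPT} d_e$.

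The main obstacle, modest as it is, lies in part (i): one has to be careful that property~(3) really applies, which in turn requires that \emph{some} single-edge swap $(\ALG \setminus \{e\}) \cup \{f\}$ with $e \in S$ is feasible. This is where the fact that $\ALG$ is a \emph{tree} on the same vertex set as $\OPT$ is crucial, since it guarantees that removing any cycle edge preserves connectivity on $V[\ALG]$ and therefore feasibility. Once this feasibility step is established, the rest is a direct substitution into Theorem~\ref{thm:four-approx-for-trees-hall}, and no further calculation is needed.
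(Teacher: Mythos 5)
Your proposal is correct and follows essentially the same route as the paper's own proof: set $\Delta(S)=\sum_{e\in S}d_e$, use feasibility of the single-edge swap (since $\ALG$ is a spanning tree on $V[\ALG]=V[\OPT]$) together with Lemma~\ref{lemsum:property:comp-class-all-or-nothing} to legitimize the \edgesetswap{} $(f,S)$, invoke swap-optimality to get $\Delta(S)\le d_f$, note $S_u=\emptyset$, and apply Theorem~\ref{thm:four-approx-for-trees-hall}. No gaps.
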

	\begin{proof}
	Since there are no inessential edges, $S_u=\emptyset$. 
	We set $\Delta(S) := \sum_{e\in S} d_e$ for all $S \in \eqS$.
	Let $f \in \OPT$ be an edge that closes a cycle in $\ALG$ that contains $S$. 
	Then, $(\ALG \setminus \{e\}) \cup \{f\}$ is feasible for any single edge $e \in S$ because it is still a tree.
	By Lemma~\ref{lemsum:property:comp-class-all-or-nothing}, this implies that $(\ALG \setminus S) \cup \{f\}$ is also feasible. Thus, we consider the swap that adds $f$ and deletes $S$. It was not improving with respect to $d$, because $\ALG$ is \edgesetswap-optimal with respect to edges from $\OPT$ and $d$. Thus, $\Delta(S)=\sum_{e\in S} d_e \le d_f$, and we can apply Theorem~\ref{thm:four-approx-for-trees-hall} to obtain the result.	
	\end{proof}

\subsection{An approximation guarantee for trees and \texorpdfstring{$\phi$}{width objective}}
\label{sec-treecase-phi} 

We now consider the case where a connected tree $\ALG$ is output by the
algorithm when considering the \edgesetswap{}s, but now with respect to
the potential $\phi$ (instead of just the total length as in the
previous section). These swaps may increase the number of components,
which may have large widths, and hence \edgesetswap{}s that are improving
purely from the lengths may not be improving any more. This requires a
more nuanced analysis, though relying on ideas we have developed in the
previous section. 

Here is the high-level summary of this section.
Consider some equivalence class $S \in \eqS\backslash\{S_u\}$ of safe edges: these lie on a path (by Lemma~\ref{lemsum:property:compatible-path}), hence look like this:\vspace*{-0.4\baselineskip}
\begin{center}
\begin{tikzpicture}[yscale=0.95]
\useasboundingbox (-7.5,-0.4) rectangle (7.5,0.3);
\node {$w_{0} \xrightarrow{e_{1}} v_{1} \to \dots \to w_{1} \xrightarrow{e_{2}} v_{2} \to \dots \to w_{i-1}  \xrightarrow{e_{i}} v_{i} \to \dots w_{\lenl(S)-1} \xrightarrow{e_{\lenl(S)}} v_{\lenl(S)}$,};
\draw ($(-5.05,-0.4)+(-1.1,0)$) -- ($(-2.45,-0.4)+(-3.3,0)$) -- ($(-2.45,0.3)+(-3.3,0)$) -- ($(-5.05,0.3)+(-1.1,0)$);
\draw (-5.1,-0.4) rectangle (-2.5,0.3);
\draw ($(-2.45,-0.4)+(4.9,0)$) -- ($(-5.05,-0.4)+(6.6,0)$) -- ($(-5.05,0.3)+(6.6,0)$) -- ($(-2.45,0.3)+(4.9,0)$);
\draw ($(-5.05,-0.4)+(8.1,0)$) -- ($(-2.45,-0.4)+(6.75,0)$) -- ($(-2.45,0.3)+(6.75,0)$) -- ($(-5.05,0.3)+(8.1,0)$);
\begin{scope}[xshift=-3.45cm]
\draw ($(-2.45,-0.4)+(4.9,0)$) -- ($(-5.05,-0.4)+(6.6,0)$) -- ($(-5.05,0.3)+(6.6,0)$) -- ($(-2.45,0.3)+(4.9,0)$);
\draw ($(-5.05,-0.4)+(8.2,0)$) -- ($(-2.45,-0.4)+(6.85,0)$) -- ($(-2.45,0.3)+(6.85,0)$) -- ($(-5.05,0.3)+(8.2,0)$);
\end{scope}
\draw ($(-2.45,-0.4)+(8.6,0)$) -- ($(-5.05,-0.4)+(10.3,0)$) -- ($(-5.05,0.3)+(10.3,0)$) -- ($(-2.45,0.3)+(8.6,0)$);
\end{tikzpicture}
\end{center}
\vspace*{-0.5\baselineskip}
where there are $\ell(S)$ edges and hence $\ell(S)+1$ components formed
by deleting $S$.  
We let $\innS$ be set of the ``inner'' components (the ones containing $v_1, \ldots,
v_{\ell(S) - 1}$), and $\innSp$ be the inner components except the
two with the highest widths. Just taking the definition of $\phi$, and
adding and subtracting the widths of these ``not-the-two-largest'' inner
components, we get
\begin{align*}
 \phi(\ALG) = w(\ALG) + \sum\limits_{e\in S_u} d_e
             + \underbrace{\sum_{S\in\eqS\backslash\{S_u\}} \big( \sum_{i=1}^{\lenl(S)} d_{e_i}
               - \sum_{K \in \innSp} w(K)\big)}_{%
               \leq 10.5 \cdot d(\REFF)\ \text{by Corollary~\ref{cor:bound-weird-term}}
             } 
             + \underbrace{\sum_{S\in\eqS\backslash\{S_u\}} \sum_{K \in \innSp} w(K)}_{%
               \leq w(\REFF)\ \text{by Lemma~\ref{lem:bound-width-term}}
             }.
\end{align*}
As indicated above, the argument has two parts. For the first summation,
look at the cycle created by adding edge $f \in \REFF$ to our solution
$\ALG$. Suppose class $S$ is contained in this cycle. We prove that
\edgesetswap optimality implies that $\sum_{i=1}^{\lenl(S)} d_{e_i} -
\sum_{K \in \innSp} w(K)$ is at most $3 d_f$. (Think of this bound as
being a weak version of the facts in the previous section, which did not
have a factor of $3$ but did not consider weights in the analysis.)
Using this bound in Theorem~\ref{thm:four-approx-for-trees-hall} from
the previous section gives us a matching that bounds the first summation
by $3 \cdot (7/2) \cdot d(\REFF)$. (A couple of words about the proofs: the bound above
follows from showing that three different swaps must be non-improving,
hence the factor of $3$. Basically, we break the above path into three
at the positions of the two components of highest width, since for
technical reasons we do not want to disconnect these high-width
components. Details are in \S\ref{sec:tree:phi:middleterm}.)

For the second summation, we want to sum up the widths of all the
``all-but-two-widest'' inner components, over all these equivalence
classes, and argue this is at most $w(\REFF)$. This is where our
notions of safe and compatible edges comes into play. The crucial
observations are that (a) given the inner components corresponding to
some class $S$, the edges of some class $S'$ either avoid all these
inner components, or lie completely within some inner component; (b)~the
notion of compatibility ensures that these inner components correspond
to distinct components of $\REFF$, 
so we can get more width to charge to; and (c)~since we don't charge to
the two largest widths, we don't double-charge these widths. The details
are in \S\ref{sec:tree:phi:lastterm}.

\section{In Which the Local Optimum may be a Forest}
\label{sec:forest}

\paragraph{The main theorem.}
In the general case, both $\ALG$ and $\REFF$ may have multiple connected components. 
We assume that the distance function $d$ is a metric. 
The first thing that comes to mind is to apply the previous analysis to the components individually.
Morally, the main obstacle in doing so is in the proof of Theorem~\ref{thm:four-approx-for-trees-hall}: 
There, we assume implicitly that no edge from $\REFF$ goes between connected components of $\ALG$.\footnote{More precisely, we need the slightly weaker condition that for each node $t \in V[\ALG]$, there is an $\REFF$-edge incident to $t$ that does not leave the connected component of $\ALG$ containing $t$.}
This is vacuously true if $\ALG$ is a single tree, but may be false if $\ALG$ is disconnected.
In the following, our underlying idea is to replace $\REFF$-edges that cross between the components of $\ALG$ by edges that lie within the components of $\ALG$, thereby re-establishing the preconditions of Theorem~\ref{thm:four-approx-for-trees-hall}.
We do this in a way that $\REFF$ stays feasible, and moreover, its cost increases by at most a constant factor.
This allows us to prove that the local search has a constant locality gap.

\paragraph{Reducing to local tree optima.}

Suppose $\REFF$ has no inessential edges to
start. Then we convert $\REFF$ into a collection of cycles (shortcutting
non-terminals), losing a factor of 2 in the cost. 
Now observe that each ``offending'' $\REFF$-edge (i.e., one that goes between different components of $\ALG$)
must be part of a path $P$ in $\REFF$ that connects some $s, \bar s$,
and hence starts and ends in the same
component of $\ALG$. 
This path $P$ may connect several terminal pairs, and for each such pair $s,\bar{s}$, there is a component of $\ALG$ that contains $s$ and $\bar{s}$.
Thus, $P$ could be replaced by direct connections between $s,
  \bar s$ within the components of~$\ALG$. This would get rid of these
  ``offending'' edges, since the new connections would stay within
  components of $\ALG$. 
The worry is, however, that this replacement is too expensive. 
We show how to use connecting tree moves to bound the cost of the
replacement. 

Consider one cycle from $\REFF$, regarded as a circuit $C$ in the graph $G_\ALG$ where the connected components $A_1,\dots,A_p$ of $\ALG$ are shrunk to single nodes, \ie $C$ consists of offending edges. The graph $G_\ALG$ might contain parallel edges and $C$ might have
repeated vertices. So $C$ is a circuit, meaning that it is a potentially non-simple cycle, or, in other words, a Eulerian multigraph. The left and middle of Figure~\ref{fig:charging-example-compact} are an example.

Index the  $A_j$'s  such that $w(A_1) \leq \dots \leq w(A_p)$ and say that node $j$ in $G_\ALG$ corresponds to $A_j$.
Suppose $C$ visits the nodes $v_1,\dots,v_{|C|}, v_1$ (where several of these nodes may correspond to the same component $A_j$) and that component $A_j$ is visited $n_j$ times by $C$.
In the worst case, we may need to insert $n_j$ different $s,\bar s$ connections into component $A_j$ of $\ALG$, for all $j$. 
The key observation is that the total cost of our direct connections is at most $\sum_{i=1}^{|C|} n_i w(A_i)$. 
We show how to pay for this using the length of $C$.\footnote{We also need to take care of the additional width of the modified solution, but this is the easier part.}

To do so, we use optimality with respect to all moves, in particular connecting moves. The idea is simple: We cut $C$ into a set of trees that each define a valid connecting move. For each tree, the connecting move optimality bounds the widths of some components of $\ALG$ by the length of the tree. E.g., $w(A_1) + w(A_4)$ is at most the length of the tree connecting $A_1, A_4, A_5$ in Figure~\ref{fig:charging-example-compact}. Observe that we did not list $w(A_5)$: Optimality against a connecting move with tree $T$ relates the length of $T$ to the width of all the components that $T$ connects, \emph{except} for the component with maximum width. We say a tree \emph{pays} for $A_j$ if it hits $A_j$, and also hits another $A_j$ of higher width. So we need three properties: (a) the trees should collectively pack into the edges of the Eulerian multigraph $C$, (b) each tree hits each component $A_j$ at most once, and (c) the number of trees that pay for $A_j$ is at least $n_j$.

Assume that we found such a tree packing. For circuit $C$, if $A_{j^\star}$ is the component with greatest width hit by $C$, then using connecting move optimality for all the trees shows that 
\[ 
\srsum{j: A_j \text{ hit by } C, j \neq j^\star} n_j\, w(A_j) \leq d(C). 
\]
In fact, even if we have $c$-approximate connection-move optimality, the right-hand side just gets multiplied by $c$. But what about $n_{j^\star} w(A_{j^\star})$? We can cut $C$ into sub-circuits, such that each subcircuit $C'$ hits $A_{j^\star}$ exactly once. To get this one extra copy of $w(A_{j^\star})$, we use \pathsetswap optimality which tells us that the missing connection cannot be more expensive than the length of $C$. Thus, collecting all our bounds (see Lemma~\ref{lem:main-forest:reduction-to-tree}), adding all the extra connections to $\REFF$ increases the cost to at most $2(1+c) d(\REFF)$: the factor $2$ to make $\REFF$ Eulerian, $(1+c)$ to add the direct connections, using $c$-approximate optimality with respect to connecting moves and optimality with respect to \pathsetswap{}s. 
\S\ref{sec:wkmst} discusses that $c \le 2$. 

Now each component $A_j$ of $\ALG$ can be treated separately, \ie we can use Corollary~\ref{treecaseresult} on each $A_j$ and the portion of $\REFF$ that falls into $A_j$. By combining the conclusions for all connected components, we get that 
\[
d(\ALG') \stackrel{\text{Cor.}~\ref{treecaseresult}}{\le} 10.5 d(\REFF') +
w(\REFF') \le  11.5 d(\REFF') \stackrel{\text{Lem.~}\ref{lem:main-forest:reduction-to-tree}}{\le} 23 (1+c) \cdot d(\REFF) \le 69 \cdot d(\REFF) 
\]
for any feasible solution $\REFF$. This proves Theorem~\ref{thm:main}.

\paragraph{Obtaining a decomposition into connecting moves.}
It remains to show how to take $C$ and construct the set of trees. 
If $C$ had no repeated vertices (it is a simple cycle) then taking a spanning subtree would suffice. 
And even if $C$ has some repeated vertices, decomposing it into suitable trees can be easy: E.g., if $C$ is the ``flower'' graph on $n$ vertices, with vertex $1$ having two edges to each vertex $2,\dots,n$. Even though $1$ appears multiple times, we find a good decomposition (see Figure~\ref{flowergraph}). Observe, however, that breaking $C$ into simple cycles and than doing something on each simple cycle would not work, since it would only pay $1$ multiple times and none of the others.

The flower graph has a property that is a generalization of a simple cycle: We say that $C$ is minimally guarded if (a) the largest vertex is visited only once (b) between two occurrences of the same (non-maximal) vertex, there is at least one larger number. 
The flower graph and the circuit at the top of Figure~\ref{fig:charging-algo-example-compact} have this property. We show that every minimally guarded circuit can be decomposed suitably by providing Algorithm~\ref{alg:charging}.
%
%
It iteratively finds trees that pay for all (non maximal) $j$ with $j \le z$ for increasing $z$. Figure~\ref{fig:charging-algo-example-compact} shows how the set of trees $\moveSet_5$ is converted into $\moveSet_6$ in order to pay for all occurrences of $6$. Intuitively, we look where $6$ falls into the trees in $\moveSet_5$. Up to one occurrence can be included in a tree. If there are more occurrences, the tree has to be split into multiple trees appropriately. \S\ref{sec:partitioningalgo} explains the details.

Finally, Lemma~\ref{lem:guardedcycles} shows how to go from minimally guarded circuits to arbitrary $C$ 
in a recursive fashion.
\afterpage{
\begin{figure}[htp!]
  \centering
  \begin{tikzpicture}[
    scale=0.95,thick, 
    mynode/.style={draw, circle,inner sep=0cm, minimum size=0.15cm},
    mybnode/.style={draw, circle,inner sep=0cm, minimum size=0.6cm},
    markededge/.style={draw, very thick, dashed}
  ]
\begin{scope}[scale=0.6]
\draw[rounded corners=5mm] (0,0) rectangle (3,3); 

\draw[rounded corners=4mm] (3,4) rectangle (5.5,6.5); 
\draw[rounded corners=3mm] (3.5,1.5) rectangle (5.5,3.5); 

\draw[rounded corners=2mm] (6,0.5) rectangle (7.5,2); 
\node [mynode] (n1) at (2,2) {};
\node [mynode] (n2) at (4,5) {};
\node [mynode] (n3) at (4,2.5) {};
\node [mynode] (n4) at (5,2.5) {};
\node [mynode] (n5) at (5,5) {};
\node [mynode] (n6) at (6.75,1.5) {};
\node [mynode] (n7) at (6.75,1) {};
\node [mynode] (n8) at (2,1) {};
\draw [markededge] (n1) -- (n3) -- (n2);
\draw [markededge] (n2) -- (n5);
\draw [markededge] (n5) -- (n4) -- (n6);
\draw [markededge] (n6) -- (n7);
\draw [markededge] (n7) -- (n8);
\draw [markededge] (n8) -- (n1);
\node at (6.75,2.5) {$A_1$};
\node at (4.5,1.9) {$A_4$};
\node at (2.5,5) {$A_5$};
\node at (1.5,3.5) {$A_7$};
\end{scope}

\begin{scope}[xshift=6cm,scale=0.6]
\node [mybnode] (m1) at (6.75,1.25) {1};
\node [mybnode] (m2) at (4.25, 2.5)  {4};
\node [mybnode] (m3) at (4.25, 5.75) {5};
\node [mybnode] (m4) at (1.5, 1.25) {7};
\draw [markededge] (m4) to (m2);
\draw [markededge,bend left=15] (m3) to (m2);
\draw [markededge, bend left=15] (m2) to (m3);
\draw [markededge] (m2) to (m1) to (m4);
\node at (2,4) {$C$};
\end{scope}

\begin{scope}[xshift=12cm, scale=0.6]
\draw[rounded corners=5mm] (0,0) rectangle (3,3); 

\draw[rounded corners=4mm] (3,4) rectangle (5.5,6.5); 
\draw[rounded corners=3mm] (3.5,1.5) rectangle (5.5,3.5); 

\draw[rounded corners=2mm] (6,0.5) rectangle (7.5,2); 
\node [mynode] (n1) at (2,2) {};
\node [mynode] (n2) at (4,5) {};
\node [mynode] (n3) at (4,2.5) {};
\node [mynode] (n4) at (5,2.5) {};
\node [mynode] (n5) at (5,5) {};
\node [mynode] (n6) at (6.75,1.5) {};
\node [mynode] (n7) at (6.75,1) {};
\node [mynode] (n8) at (2,1) {};
\draw [markededge] (n1) -- (n3) -- (n2);
\draw [markededge] (n5) -- (n4) -- (n6);
\draw [markededge] (n7) -- (n8);
\node at (6.75,2.5) {$A_1$};
\node at (4.5,1.9) {$A_4$};
\node at (2.5,5) {$A_5$};
\node at (1.5,3.5) {$A_7$};
\end{scope}

\end{tikzpicture}
\caption{\label{fig:charging-example-compact}
 \footnotesize The charging argument with four components $A_1, A_4, A_5$ and $A_7$ of $\ALG$. 
  The area of each component corresponds to its width.
  \textit{On the left.} A cycle in $\REFF$. 
  \textit{In the middle.} The corresponding circuit (non-simple cycle) in $G_\ALG$.
  \textit{On the right.} A suitable decomposition into connecting moves.
}
\end{figure}
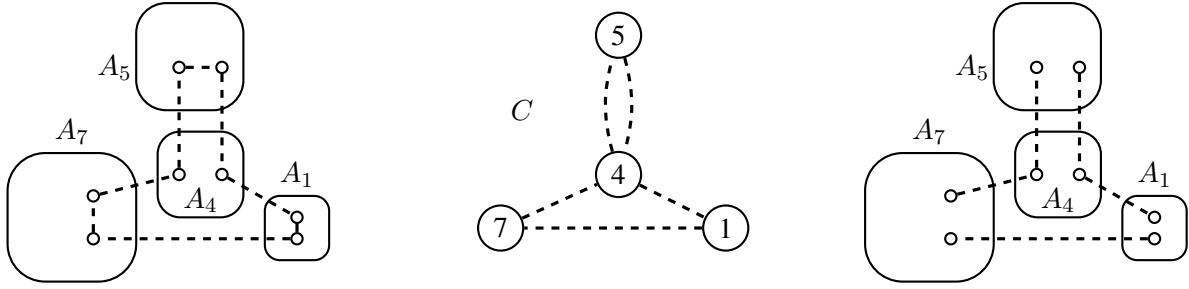
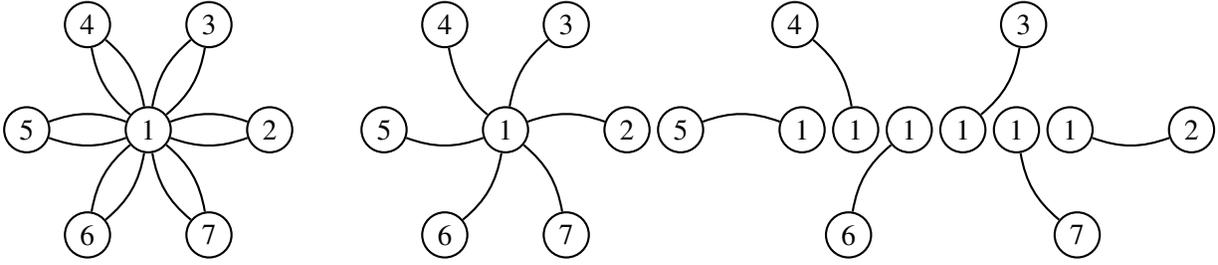
\begin{figure}[htp!]
\begin{center}
\begin{tikzpicture}[
    scale=0.95,thick, 
    mynode/.style={draw, circle,inner sep=0cm, minimum size=0.15cm},
    mybnode/.style={draw, circle,inner sep=0cm, minimum size=0.6cm},
    markededge/.style={draw, very thick, dashed}
  ]
\node [mybnode] (n1) at (0,0) {1};
\foreach \x/\i in {0/2,60/3,120/4,180/5,240/6,300/7} 
{
 \node [mybnode] (n\i) at (\x:1.7) {\i};
 \draw [bend left=20] (n1) to (n\i);
 \draw [bend right=20] (n1) to (n\i);
}
\begin{scope}[xshift=5cm]
 \node [mybnode] (m1) at (0,0) {1};
 \foreach \x/\i in {0/2,60/3,120/4,180/5,240/6,300/7} 
 {
  \node [mybnode] (m\i) at (\x:1.7) {\i};
  \draw [bend left=20] (m1) to (m\i);
 }
\end{scope}

\begin{scope}[xshift=-0.6cm]
\begin{scope}[xshift=13.5cm]
 \node [mybnode] (m1) at (0,0) {1};
 \foreach \x/\i in {0/2} 
 {
  \node [mybnode] (m\i) at (\x:1.7) {\i};
  \draw [bend right=20] (m1) to (m\i);
 }
\end{scope}

\begin{scope}[xshift=12cm,yshift=0cm]
 \node [mybnode] (m1) at (0,0) {1};
 \foreach \x/\i in {60/3} 
 {
  \node [mybnode] (m\i) at (\x:1.7) {\i};
  \draw [bend right=20] (m1) to (m\i);
 }
\end{scope}

\begin{scope}[xshift=10.5cm,yshift=0cm]
 \node [mybnode] (m1) at (0,0) {1};
 \foreach \x/\i in {120/4} 
 {
  \node [mybnode] (m\i) at (\x:1.7) {\i};
  \draw [bend right=20] (m1) to (m\i);
 }
\end{scope}

\begin{scope}[xshift=9.75cm,yshift=0cm]
 \node [mybnode] (m1) at (0,0) {1};
 \foreach \x/\i in {180/5} 
 {
  \node [mybnode] (m\i) at (\x:1.7) {\i};
  \draw [bend right=20] (m1) to (m\i);
 }
\end{scope}

\begin{scope}[xshift=11.25cm,yshift=0cm]
 \node [mybnode] (m1) at (0,0) {1};
 \foreach \x/\i in {240/6} 
 {
  \node [mybnode] (m\i) at (\x:1.7) {\i};
  \draw [bend right=20] (m1) to (m\i);
 }
\end{scope}

\begin{scope}[xshift=12.75cm,yshift=0cm]
 \node [mybnode] (m1) at (0,0) {1};
 \foreach \x/\i in {300/7} 
 {
  \node [mybnode] (m\i) at (\x:1.7) {\i};
  \draw [bend right=20] (m1) to (m\i);
 }
\end{scope}
\end{scope}
\end{tikzpicture}
\caption{A flower graph. Even though the graph is a non-simple cycle, we can easily decompose it into trees that pay for each $j\neq 6$ at least $n_j$ times ($1$ is payed for $7=n_1+1$ times).\label{flowergraph}}
\end{center}
\end{figure}
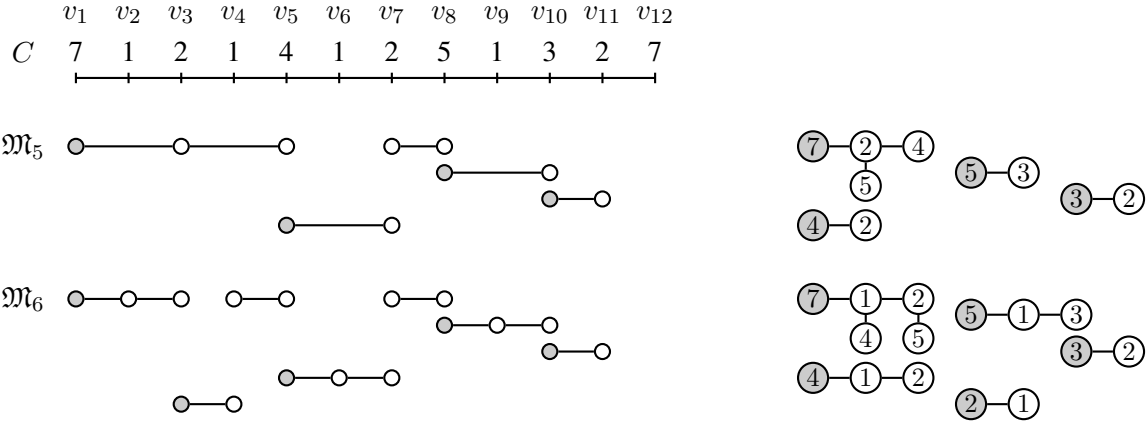
\begin{figure}[htp!]
\begin{tikzpicture}[
  thick, 
  scale=0.7, 
  treenode/.style={draw, circle,inner sep=0cm, minimum size=0.4cm},streenode/.style={draw, circle,inner sep=0cm, minimum size=0.2cm}
]
\foreach \x/\xi in {1/7,2/1,3/2,4/1,5/4,6/1,7/2,8/5,9/1,10/3,11/2,12/7}{
 \draw (\x,-0.1) -- (\x,0.1);
 \node at (\x,0.5) {\xi};
 \node at (\x,1.2) {$v_{\x}$};
}
\draw (0,0.5) node {$C$};
\draw (1,0) -- (12,0);

\begin{scope}[yshift=-0.3cm]
\node at (0,-1) {$\moveSet_5$};
\node [streenode,fill=black!20] (nt41) at (1,-1) {}; \node [streenode] (nt41b) at (3,-1) {}; 
\node [streenode] (nt42) at (5,-1) {}; \node [streenode] (nt42b) at (7,-1) {};  \node [streenode] (nt43) at (8,-1) {};
\node [streenode,fill=black!20] (nt44) at (8,-1.5) {}; \node [streenode] (nt45) at (10,-1.5) {};
\node [streenode,fill=black!20] (nt46) at (10,-2) {}; \node [streenode] (nt47) at (11,-2) {};
\node [streenode,fill=black!20] (nt48) at (5,-2.5) {}; \node [streenode] (nt49) at (7,-2.5) {}; 
\draw [-] (nt41) -- (nt41b); \draw [-] (nt41b) -- (nt42); \draw [-] (nt42b) -- (nt43); \draw [-] (nt44) -- (nt45); \draw [-] (nt46)--(nt47);
\draw [-] (nt48) -- (nt49); 
\node [treenode,fill=black!20] (n41) at (15,-1) {{\small $7$}};
\node [treenode] (n42) at (16,-1) {{\small $2$}};
\node [treenode] (n43) at (17,-1) {{\small $4$}};
\node [treenode] (n43b) at (16,-1.75) {{\small $5$}};
\node [treenode,fill=black!20] (n44) at (18,-1.5) {{\small $5$}};
\node [treenode] (n45) at (19,-1.5) {{\small $3$}};
\node [treenode,fill=black!20] (n46) at (20,-2) {{\small $3$}};
\node [treenode] (n47) at (21,-2) {{\small $2$}};
\node [treenode,fill=black!20] (n48) at (15,-2.5) {{\small $4$}};
\node [treenode] (n49) at (16,-2.5) {{\small $2$}};
\draw [-] (n41)--(n42); \draw [-] (n42) --(n43); \draw [-] (n42) --(n43b); 
\draw [-] (n44) -- (n45); \draw [-](n46)--(n47); \draw [-] (n48) -- (n49);
\end{scope}

\begin{scope}[yshift=-3.2cm]
\node at (0,-1) {$\moveSet_6$};
\node [streenode,fill=black!20] (nt51) at (1,-1) {}; \node [streenode] (nt51a) at (2,-1) {};  \node [streenode] (nt51b) at (3,-1) {}; 
\node [streenode] (nt51c) at (4,-1) {}; \node [streenode] (nt52) at (5,-1) {};  \node [streenode] (nt52b) at (7,-1) {};  \node [streenode] (nt53) at (8,-1) {};
\node [streenode,fill=black!20] (nt54) at (8,-1.5) {}; \node [streenode] (nt54b) at (9,-1.5) {}; \node [streenode] (nt55) at (10,-1.5) {};
\node [streenode,fill=black!20] (nt56) at (10,-2) {}; \node [streenode] (nt57) at (11,-2) {};
\node [streenode,fill=black!20] (nt58) at (5,-2.5) {}; \node [streenode] (nt58b) at (6,-2.5) {};  \node [streenode] (nt59) at (7,-2.5) {}; 
\node [streenode,fill=black!20] (nt59a) at (3,-3) {};  \node [streenode] (nt59b) at (4,-3) {};
\draw [-] (nt51) -- (nt51a); \draw [-] (nt51a) -- (nt51b); 
\draw [-] (nt51c) -- (nt52); \draw [-] (nt52b) -- (nt53); \draw [-] (nt54) -- (nt54b); 
\draw [-] (nt54b) -- (nt55); \draw [-] (nt56)--(nt57);
\draw [-] (nt58) -- (nt58b);  \draw [-] (nt58b) -- (nt59); 
\draw [-] (nt59a) -- (nt59b);
\node [treenode,fill=black!20] (n50) at (15,-1) {{\small $7$}};
\node [treenode] (n51) at (16,-1) {{\small $1$}};
\node [treenode] (n52) at (17,-1) {{\small $2$}};
\node [treenode] (n53) at (16,-1.75) {{\small $4$}};
\node [treenode] (n53b) at (17,-1.75) {{\small $5$}};
\node [treenode,fill=black!20] (n54) at (18,-1.3) {{\small $5$}};
\node [treenode] (n55) at (19,-1.3) {{\small $1$}};
\node [treenode] (n55b) at (20,-1.3) {{\small $3$}};
\node [treenode,fill=black!20] (n56) at (20,-2) {{\small $3$}};
\node [treenode] (n57) at (21,-2) {{\small $2$}};
\node [treenode,fill=black!20] (n58) at (15,-2.5) {{\small $4$}};
\node [treenode] (n58b) at (16,-2.5) {{\small $1$}};
\node [treenode] (n59) at (17,-2.5) {{\small $2$}};
\node [treenode,fill=black!20] (n59b) at (18,-3) {{\small $2$}};
\node [treenode] (n59c) at (19,-3) {{\small $1$}};
\draw [-] (n50)--(n51); \draw [-] (n51)--(n52); \draw [-] (n51) --(n53); \draw [-] (n52) --(n53b); 
\draw [-] (n54) -- (n55); \draw [-] (n55) -- (n55b); \draw [-](n56)--(n57); \draw [-] (n58) -- (n58b); \draw [-] (n58b) -- (n59);
\draw [-] (n59b) -- (n59c);
\end{scope}

\end{tikzpicture}
\caption{\label{fig:charging-algo-example-compact}Two iterations of an example run of the Algorithm~\ref{alg:charging}.}
\end{figure}
}


\section{Proofs: Bounds on \texorpdfstring{$d$}{d} when the Local Optimum is a Tree}\label{appendix:addproofs}

Let $\ALG,\REFF\subseteq E$ be two feasible Steiner forests with respect to a set $\mathfrak{T}$ of terminal pairs.
We think of $\REFF$ as an optimum or near optimum Steiner forest and of $\ALG$ as a feasible solution computed by our algorithm. 
We assume that $V[\ALG]=V[\REFF]$ (this will be true when we use the results from this section later on).
Throughout this section, we assume that $\ALG$ is a tree (thus, it is a spanning tree on $V[\ALG]=V[\REFF]$). 
The following definition is crucial for our analysis.
\begin{definition}
Let $e=\{s,v_1\}, f=\{v_l, t\} \in \ALG$ be two edges.
Consider the unique (undirected) path $P = s \xrightarrow{e} v_1 \to \dots \to v_l \xrightarrow{f} t$ in $\ALG$ that connects $e$ and $f$. Let $T_{e,f}$ be the connected component in $\ALG\setminus\{e,f\}$ that contains $P \setminus \{e, f\}$.
We say that $e$ and $f$ are \emph{compatible with respect to $\REFF$} if $e=f$ or if there are no \REFF-edges leaving~$T_{e,f}$.
In this case, we write $e \sim_{cp} f$.
\end{definition}
Observe that $\sim_{cp}$ is a reflexive and symmetric relation. 
We show that the relation is also transitive, \ie it is an equivalence relation.
\begin{lemdef}\label{lem:compatibility-is-transitive}
Let $\ALG$ be feasible and let $e, f \in \ALG$ and $f, g \in \ALG$ be two pairs of compatible edges. 
Then, $e$ and $g$ are also compatible.
In particular, $\sim_{cp}$ is an equivalence relation.
We denote the set of all equivalence classes of $\sim_{cp}$ by $\eqS$.
\end{lemdef}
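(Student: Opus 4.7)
The plan is as follows. Reflexivity and symmetry of $\sim_{cp}$ are immediate from the definition, so the substantive content of the lemma is transitivity: assuming $e \sim_{cp} f$ and $f \sim_{cp} g$ we must show $e \sim_{cp} g$. If any two of $e, f, g$ coincide, the claim is trivial (e.g.\ if $e = g$ reflexivity gives it, and if $e = f$ then $f \sim_{cp} g$ is already the desired conclusion), so I assume $e, f, g$ are pairwise distinct.

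The crucial observation I would establish first is the following reformulation of ``leaving $T_{a,b}$''. For any two distinct edges $a, b \in \ALG$, removing $\{a, b\}$ from the tree $\ALG$ yields exactly three components, $T_{a,b}$ being the middle one. For any vertices $u, w \in V[\ALG]$, the unique $u$-$w$-path $P_{uw}$ in $\ALG$ uses $0$, $1$, or $2$ of the edges in $\{a, b\}$, according to whether $u, w$ lie in the same component, in adjacent components, or in the two outer components. Consequently, an $\REFF$-edge $\{u, w\}$ leaves $T_{a,b}$ if and only if $P_{uw}$ uses \emph{exactly one} of the two edges $a$ and $b$.

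With this characterization in hand, I prove transitivity by contradiction. Suppose some $\REFF$-edge $h = \{u, w\}$ leaves $T_{e,g}$, so $P_{uw}$ uses exactly one of $\{e, g\}$. By the symmetry of the argument in $e$ and $g$ (both hypotheses are preserved under the swap $e \leftrightarrow g$), I may assume $P_{uw}$ uses $e$ but not $g$. Now I distinguish whether $P_{uw}$ uses $f$. If it does, then $P_{uw}$ uses exactly one of $\{f, g\}$, so $h$ leaves $T_{f,g}$, contradicting $f \sim_{cp} g$. If it does not, then $P_{uw}$ uses exactly one of $\{e, f\}$, so $h$ leaves $T_{e,f}$, contradicting $e \sim_{cp} f$. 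Either alternative yields a contradiction, proving $e \sim_{cp} g$.

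The main potential obstacle is that working directly from the definitions of $T_{e,f}, T_{f,g}, T_{e,g}$ seems to demand a tedious case analysis based on the geometric ``shape'' of the three edges in the tree---either collinear along a single path (in three possible orderings) or meeting as a tripod at a central branching vertex---together with an enumeration over the (up to four) connected components of $\ALG \setminus \{e, f, g\}$. The path-crossing characterization above sidesteps all of this by reducing the question to the simple combinatorics of which of $e, f, g$ appears on a given $u$-$w$-path in $\ALG$.
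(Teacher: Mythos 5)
Your proof is correct, but it takes a genuinely different route from the paper. The paper proves transitivity by a three-way case analysis on the tree-geometric position of $f$ relative to the unique $e$-$g$-path $P$ (namely $f$ lies on $P$; $f$ hangs off an outer node of $P$; $f$ hangs off an inner node of $P$), in each case decomposing $\ALG\setminus\{e,f,g\}$ into its components and checking that the relevant cuts $\delta_{\REFF}(\cdot)$ must be empty. You instead establish a parity characterization: since $\ALG\setminus\{a,b\}$ has exactly three components with $T_{a,b}$ in the middle, an $\REFF$-edge $\{u,w\}$ lies in $\delta_{\REFF}(T_{a,b})$ if and only if the tree path $P_{uw}$ uses exactly one of $a,b$ (I checked the four membership configurations and this is right; note it uses that $V[\REFF]\subseteq V[\ALG]$ and that $\ALG$ is a spanning tree on these vertices, which is the standing assumption of this section). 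Compatibility then becomes ``for every $h\in\REFF$, the path $P_h$ crosses both of $a,b$ or neither,'' i.e.\ $a\sim_{cp} b$ iff the indicator vectors $\bigl(\indicator[a\in P_h]\bigr)_{h\in\REFF}$ and $\bigl(\indicator[b\in P_h]\bigr)_{h\in\REFF}$ coincide, and transitivity is immediate (your two-case argument on whether $P_{uw}$ uses $f$ is exactly this). Your version is shorter, eliminates the geometric case distinction and the accompanying figures, and makes the equivalence-relation structure transparent as the kernel of a map to $\{0,1\}^{\REFF}$; the paper's version stays in the language of components and empty cuts that it reuses in the subsequent lemmas (safe edges, the path structure of classes), so it integrates more directly with the rest of Section~\ref{appendix:addproofs}, but for this lemma in isolation your argument is cleaner.
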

\begin{proof}
Consider the unique path $P$ that connects $e$ and $g$ in $\ALG$.
This path has the form $v_1 \xrightarrow{e} v_2 \to \dots \to v_{s-1} \xrightarrow{g} v_s$.
We distinguish three cases (see Figure~\ref{fig:compatibility-is-transitive}). 
  \begin{figure}
    \begin{minipage}{0.45\textwidth}
    \begin{subfigure}{\textwidth}
        \centering
        \begin{tikzpicture}[
        graphnode/.style={label distance=0.0mm, fill, draw, circle, inner sep=0pt, minimum size=4pt},
        cc/.style={draw, rounded corners, dashed},
        graphedge/.style={thick},
        pathedge/.style={graphedge, dotted},
        font=\small
      ]
        \draw (-3.9, 0) node[graphnode, label={below:$v_1$}] (w1) {};
        \draw (-2.6, 0)  node[graphnode, label={below:$v_2$}] (w2) {};
        \draw (-1.5, 0)  node[graphnode, label={below:$v_i$}] (v1) {};
        \draw (-0.2, 0) node[graphnode, label={below:$\ \ v_{i+1}$}] (v2) {};
        \draw ( 0.8, 0)  node[graphnode, label={below:$v_{s-1}$}] (vss) {};
        \draw ( 2.1, 0) node[graphnode, label={below:$v_s$}] (vs) {};

        \path[cc] (w1) ++(-0.50, -0.6) rectangle ++(0.75, 1.0);
        \path[cc] (w2) ++(-0.3, -0.6) rectangle node[above=0.5cm]{$T_{e,f}$} ++( 1.7, 1.0);        
        \path[cc] (v2) ++(-0.3, -0.6) rectangle node[above=0.5cm]{$T_{f,g}$} ++( 1.7, 1.0);        
        \path[cc] (vs) ++(-0.25, -0.6) rectangle ++(0.75, 1.0);        
        
        \draw[graphedge] (w1) -- node[above]{$e$} (w2);
        \draw[pathedge]  (w2) -- (v1);
        \draw[graphedge] (v1) -- node[above]{$f$} (v2);
        \draw[pathedge]  (v2) -- (vss);
        \draw[graphedge] (vss) -- node[above]{$g$} (vs);
         \end{tikzpicture}
      \caption{First case: Edge $f$ lies on the path from $e$ to $g$.
        If an $\REFF$-edge is cut by $T_{f,g}$, the edges $f$ and $g$ cannot
        be compatible. Likewise, $e$ and $f$ cannot be compatible if an 
        $\REFF$-edge is cut by $T_{e,f}$. Thus, no $\REFF$-edge is cut 
        by $T_{e,f} \cup T_{f,g}$ and $e$ and $g$ are compatible.
      }
    \end{subfigure}\vspace{\baselineskip}
    
    \begin{subfigure}{\textwidth}
      \centering
        \begin{tikzpicture}[
        graphnode/.style={label distance=0.0mm, fill, draw, circle, inner sep=0pt, minimum size=4pt},
        cc/.style={draw, rounded corners, dashed},
        graphedge/.style={thick},
        pathedge/.style={graphedge, dotted},
        font=\small
      ]
        
        \draw ( -3.9, 0) node[graphnode, label={below:$w_1$}] (w1) {};
        \draw ( -2.6, 0) node[graphnode, label={below:$w_2$}] (w2) {};
        \draw ( -1.5, 0) node[graphnode, label={below:$v_1$}] (v1) {};
        \draw ( -0.2, 0) node[graphnode, label={below:$v_2$}] (v2) {};
        \draw (  0.8, 0) node[graphnode, label={below:$v_{s-1}\ \ $}] (vss) {};
        \draw (  2.1, 0) node[graphnode, label={below:$v_s$}] (vs) {};
        
        \path[cc] (w1) ++(-0.50, -0.6) rectangle node[above=0.5cm]{$T_{f}$} ++(0.75, 1.0);
        \path[cc] (w2) ++(-0.3, -0.6) rectangle node[above=0.5cm]{$T_{e,f}$} ++( 1.7, 1.0);        
        \path[cc] (v2) ++(-0.3, -0.6) rectangle node[above=0.5cm]{$T_{f,g}$} ++( 1.7, 1.0);        
        \path[cc] (vs) ++(-0.25, -0.6) rectangle node[above=0.5cm]{$T_{g}$} ++(0.75, 1.0);             
        
        \draw[graphedge] (w1) -- node[above]{$f$} (w2);
        \draw[pathedge]  (w2) -- (v1);
        \draw[graphedge] (v1) -- node[above]{$e$} (v2);
        \draw[pathedge]  (v2) -- (vss);
        \draw[graphedge] (vss) -- node[above]{$g$} (vs);
         \end{tikzpicture}
      \caption{Second case: The path connecting $f$ to the path $P$ from $e$ to $g$ touches an outer node of $P$.
      There can be no $\REFF$-edge between $T_{f,g}$ and $T_f \cup T_{e,f}$ because $e$ and $f$ are compatible. 
			Likewise, no $\REFF$-edge can cross from $T_{f,g}$ to $T_g$ because $f$ and $g$ are compatible. 
			This shows that no $\REFF$-edge is cut by $T_{f,g}$ and thus, $e$ and $g$ are compatible.
      }
    \end{subfigure}    
    \end{minipage}\qquad
    \begin{subfigure}{0.45\textwidth}
      \centering
      \begin{tikzpicture}[
      graphnode/.style={label distance=0.0mm, fill, draw, circle, inner sep=0pt, minimum size=4pt},
      cc/.style={draw, rounded corners, dashed},
      graphedge/.style={thick},
      pathedge/.style={graphedge, dotted},
      font=\small
      ]
  
      \draw (-1.7,  0.0 )  node[graphnode, label={below:$v$\ \ }]              (v)   {};
      \draw (-2.5,  0.5 )  node[graphnode, label={below:$v_2$}]            (vRE) {};
      \draw (-0.7,  0.75)  node[graphnode, label={180:$v_{s-1}$}]          (vRF) {};
      \draw (-4.0,  0.5 )  node[graphnode, label={below:$v_1$}]            (vER) {};
      \draw ( 0.8,  0.75)  node[graphnode, label={below:$v_s$}]            (vFR) {};
      \draw (-1.2, -0.8 )  node[graphnode, label={[rectangle, inner sep=0mm]75:$w_2$}]               (vRG) {};
      \draw ( 0.0, -2.0 )  node[graphnode, label={below:$w_1$}]            (vGR) {};
            
      \path[cc] (vER) ++(-0.5, -0.6) rectangle node[above=0.5cm]{$T_e$} ++(0.75, 1.0);  
      \path[cc] (vRE) ++(-0.3, -1.6) rectangle node[above=1.1cm]{$T_{e,f,g}$} ++(2.4, 2.2);    
      \path[cc] (vGR) ++(-0.5, -0.5) rectangle node[right=0.5cm]{$T_f$} ++(1.0, 0.75);
      \path[cc] (vFR) ++(-0.25, -0.6) rectangle node[above=0.5cm]{$T_g$} ++(0.75, 1.0);
      
      \draw[pathedge]  (v)   -- (vRE);
      \draw[pathedge]  (v)   -- (vRF);
      \draw[pathedge]  (v)   -- (vRG);
      \draw[graphedge] (vRE) -- node[rectangle, inner sep=0mm, above=1mm]{$e$} (vER);
      \draw[graphedge] (vRF) -- node[rectangle, inner sep=0mm, above=1mm]{$g$} (vFR);
      \draw[graphedge] (vRG) -- node[rectangle, inner sep=0mm, left=3mm]{$f$}  (vGR);
    \end{tikzpicture}
    \caption{Third case: The path connecting $f$ to the path $P$ from $e$ to $g$ touches an inner node $v$ of $P$.
    Here, no $\REFF$-edge can cross from $T_{e,f,g}$ to $T_e$ nor from $T_f$ to $T_g$ since $e$ and $f$ are compatible. 
    The edges $f$ and $g$ being compatible, there also cannot be $\REFF$-edges from $T_{e,f,g}$ to $T_g$ nor from  $T_e$ to $T_f$.
    Thus, no $\REFF$-edge is cut by $T_{e,f,g} \cup T_f$ and thus, the edges $e$ and $g$ are compatible.    
    }
    \end{subfigure}
    \caption{\label{fig:compatibility-is-transitive}%
      The situation in Lemma~\ref{lem:compatibility-is-transitive}.}
    \end{figure}

    
For the first case, suppose that $f=\{v_i, v_{i+1}\}$ lies on $P$, for some $i \in \{2,\dots,s-2\}$.
Let $T_{e,f}$ and $T_{f,g}$ be the connected components of $\ALG\setminus\{e,f,g\}$ that contain $v_2,\dots,v_i$ and $v_{i+1},\dots,v_{s-1}$, respectively.
Assume by contradiction that $\delta_{\REFF}(T_{e,g}) \not= \emptyset$. 
Since $\delta_{\REFF}(T_{e,g}) \not= \emptyset$, it follows that $\delta_{\REFF}(T_{e,f}) \not=\emptyset$ or $\delta_{\REFF}(T_{f,g}) \not= \emptyset$.
This is a contradiction to the assumption that $e,f$ and $f,g$ are compatible, respectively.
    
For the second case, suppose that $f$ does not lie on $P$. 
Let $R$ be the unique, possibly empty, path that connects $f$ to $P$ and suppose that $R$ meets $P$ at one of the outer nodes (w.l.o.g. assume that $P$ and $R$ meet in $v_1$). 
Thus, we are in the situation where $R \cup P$ has the form $w_1 \xrightarrow{f} w_2 \to \dots \to w_{t} \to v_1 \xrightarrow{e} v_2 \dots \to v_{s-1} \xrightarrow{g} v_{s}$.
Consider the four connected components that $\ALG\setminus\{e,f,g\}$ decomposes into:
The component $T_{f}$ contains $w_1$, the component $T_{f,e}$ contains $w_2,\dots,w_t,v_1$, 
the component $T_{e,g}$ contains $v_2,\dots,v_{s-1}$ and $T_{g}$ contains $v_s$.
If $R$ is empty, the nodes $w_2$, $w_{t}$ and $v_1$ coincide and $T_{f,e}$ contains only $v_1$.
Both the cuts $\delta_{\REFF}(T_{e,g} : T_g)$ and $\delta_{\REFF}(T_{e,g}:T_f)$ must be empty because $f$ and $g$ are compatible. 
The cut $\delta_{\REFF}(T_{e,g} : T_{f,e})$ must be empty because $e$ and $f$ are compatible. 
It follows that $e$ and $g$ are compatible. 

For the last remaining case, suppose that $f$ does not lie on $P$ and that the unique path $R$ connecting $f$ to $R$ meets $P$ in an inner node $v_i$, $i \in\{2,\dots,s-1\}$. 
Here, $R$ has the form $w_1 \xrightarrow{f} w_2 \to \dots \to w_t \to v_i$ and $\ALG\setminus\{e,f,g\}$ decomposes into four components:
The component $T_e$ contains $v_1$, the component $T_{e,f,g}$ contains $v_2,\dots,v_{s-1}$ and $w_2,\dots,w_t$,
the component $T_{g}$ contains $v_s$ and the component $T_{f}$ contains $w_1$.
The cuts $\delta_{\REFF}(T_f : T_g)$ and $\delta_{\REFF}(T_{e,f,g} : T_e)$ must be empty because $e$ and $f$ are compatible.
The cuts $\delta_{\REFF}(T_f : T_e)$ and $\delta_{\REFF}(T_{e,f,g} : T_g)$ must be empty because $f$ and $g$ are compatible.
Thus, the edges $e$ and $g$ are compatible. This concludes the proof.
\end{proof}

  \begin{figure}
    \centering
      \begin{tikzpicture}[
      graphnode/.style={label distance=0.0mm, fill, draw, circle, inner sep=0pt, minimum size=4pt},
      cc/.style={draw, rounded corners, dashed},
      graphedge/.style={thick},
      pathedge/.style={graphedge, dotted},
      font=\small
      ]
  
      \draw (-1.7,  0.0 )  node[graphnode, label={below:$v_0$\ \ }]              (v)   {};
      \draw (-2.5,  0.5 )  node[graphnode]            (vRE) {};
      \draw (-0.7,  0.75)  node[graphnode]          (vRF) {};
      \draw (-4.0,  0.5 )  node[graphnode, label={below:$v_2$}]            (vER) {};
      \draw ( 0.8,  0.75)  node[graphnode, label={below:$v_3$}]            (vFR) {};
      \draw (-1.2, -0.8 )  node[graphnode]               (vRG) {};
      \draw ( 0.0, -2.0 )  node[graphnode, label={below:$v_1$}]            (vGR) {};
            
      \path[cc] (vER) ++(-0.5, -0.6) rectangle node[above=0.5cm]{$T_2$} ++(0.75, 1.0);  
      \path[cc] (vRE) ++(-0.3, -1.6) rectangle node[above=1.1cm]{$T_0$} ++(2.4, 2.2);    
      \path[cc] (vGR) ++(-0.5, -0.5) rectangle node[right=0.5cm]{$T_1$} ++(1.0, 0.75);
      \path[cc] (vFR) ++(-0.25, -0.6) rectangle node[above=0.5cm]{$T_3$} ++(0.75, 1.0);
      
      \draw[pathedge]  (v)   -- (vRE);
      \draw[pathedge]  (v)   -- (vRF);
      \draw[pathedge]  (v)   -- (vRG);
      \draw[graphedge] (vRE) -- node[rectangle, inner sep=0mm, above=1mm]{$e_2$} (vER);
      \draw[graphedge] (vRF) -- node[rectangle, inner sep=0mm, above=1mm]{$e_3$} (vFR);
      \draw[graphedge] (vRG) -- node[rectangle, inner sep=0mm, left=3mm]{$e_1$}  (vGR);
    \end{tikzpicture}
    \caption{\label{fig:lemma-compatibility-path}
      The situation from the proof of Lemma~\ref{lem:compatibility-path}. Since $e_1$, $e_2$, and $e_3$ 
      are pairwise compatible, no \REFF-edge can leave $T_1$.}
  \end{figure}
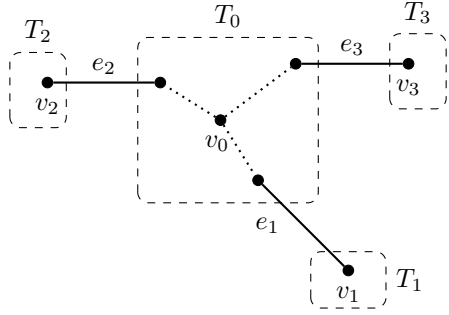
	
We classify the edges in $\ALG$ with respect to $\REFF$ and $\terms$. 
Recall that we assume $V[\ALG]=V[\REFF]$, and that $\ALG$ is a tree, thus, it is a spanning tree on the vertices that we are interested in.
\begin{definition}
An edge $e \in \ALG$ is \emph{essential} if $\ALG$ is feasible with respect to $\terms$, but $\ALG\setminus\{e\}$ is infeasible with respect to $\terms$.
\end{definition}
\begin{definition}
Let $e\in \ALG$ and let $T_1, T_2$ be the connected components of $\ALG\setminus\{e\}$.
We say that $e$ is \emph{safe} if $\delta_{\REFF}(T_1) = \delta_{\REFF}(T_2)\not= \emptyset$, i.e.,
if at least one $\REFF$-edge crosses between $T_1$ and $T_2$. 
\end{definition}
By definition, any essential edge is safe, however, the opposite is not true
in general: Safe edges can be essential or inessential.
We thus have classified the edges of $\ALG$ into three classes: safe essential 
edges, safe inessential edges and unsafe (and ergo, inessential) edges. 
 
  \begin{lemma}\label{lem:safe-all-or-none}
    Let $e,f \in \ALG$ be two compatible edges. Then $e$ is safe if and only if $f$ is safe.
  \end{lemma}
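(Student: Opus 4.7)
The plan is to unpack the definitions and exploit the tree structure of $\ALG$. First I would dispose of the trivial case $e=f$, so from now on assume $e\neq f$. Since $\ALG$ is a tree, removing the two edges $\{e,f\}$ creates exactly three connected components: the ``middle'' component $T_{e,f}$ (the one containing the interior vertices of the unique $e$-$f$-path in $\ALG$), the ``side'' component $T_e$ attached to $e$ that does not contain $f$, and the ``side'' component $T_f$ attached to $f$ that does not contain $e$. Compatibility of $e$ and $f$ w.r.t.\ $\REFF$ translates to $\delta_\REFF(T_{e,f}) = \emptyset$.

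Next I would translate the safety conditions into cuts among these three pieces. Removing only $e$ from $\ALG$ splits it into $T_e$ and its complement $T_{e,f}\cup\{f\}\cup T_f$. Hence $e$ is safe iff there is some $\REFF$-edge from $T_e$ to $T_{e,f}\cup T_f$. The critical step is to note that no such $\REFF$-edge can have an endpoint in $T_{e,f}$: any edge from $T_e$ to $T_{e,f}$ would belong to $\delta_\REFF(T_{e,f})$, contradicting compatibility. Therefore $e$ is safe iff $\delta_\REFF(T_e : T_f) \neq \emptyset$.

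An entirely symmetric argument applied to $f$ shows that $f$ is safe iff $\delta_\REFF(T_e : T_f) \neq \emptyset$. The two conditions coincide, so $e$ is safe iff $f$ is safe, which is what we wanted to prove.

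I do not expect any real obstacle here: the content is purely definitional once one observes that compatibility forces every $\REFF$-edge in the cut separating $T_e$ from the rest of $\ALG$ to actually cross all the way to $T_f$. The only point requiring a little care is making sure the three-component decomposition is set up correctly (in particular, ordering $T_e$, $T_{e,f}$, $T_f$ consistently with the definition of $T_{e,f}$ as the component containing the path interior), so that the compatibility hypothesis can be applied cleanly to rule out $\REFF$-edges incident to $T_{e,f}$.
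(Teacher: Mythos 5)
Your proof is correct and follows essentially the same route as the paper's: decompose $\ALG\setminus\{e,f\}$ into $T_e$, $T_{e,f}$, $T_f$, use compatibility to kill all $\REFF$-edges incident to $T_{e,f}$, and conclude that safety of either edge is equivalent to $\delta_\REFF(T_e:T_f)\neq\emptyset$. The symmetric phrasing you use is a slightly cleaner packaging of the same argument.
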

  \begin{proof}
    Suppose that $e$ is safe and let $P=v_1 \xrightarrow{e} v_2 \to \dots\to v_{s-1}\xrightarrow{f} v_s$
    be the unique path between $e$ and $f$ in $\ALG$. 
    Let $T_e$, $T_{e,f}$ and $T_f$ be the connected components of $\ALG$ that contain the vertex $v_1$, the vertices $v_2,\dots,v_{s-1}$ and the vertex $v_s$, respectively.
    Since $e$ and $f$ are compatible we know that $\delta_{\REFF}(T_f : T_{e,f})$ 
    is empty. 
    Also, since $e$ is safe, we have that $\delta_{\REFF}(T_e : T_f) = \delta_{\REFF}(T_f : T_e)$
    is non-empty. 
    It follows from $\delta_{\REFF}(T_f) = \delta_{\REFF}(T_f : T_e) \cup \delta_{\REFF}(T_f : T_{e,f})$
    that $f$ is safe. 
    The reverse implication follows since $e$ and $f$ are exchangeable.
  \end{proof}
  \begin{lemma}\label{lem:unsafe-implies-compatible}
    Let $e,f \in \ALG$ be two unsafe edges. Then $e$ and $f$ are compatible.
  \end{lemma}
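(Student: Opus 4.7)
The plan is to prove the lemma by a short direct calculation from the definitions of ``unsafe'' and ``compatible.'' If $e=f$, the two edges are compatible by definition, so assume $e\neq f$. Since $\ALG$ is a tree, removing $\{e,f\}$ produces exactly three connected components: the two ``outer'' components $T_e$ and $T_f$ that contain, respectively, the endpoint of $e$ and the endpoint of $f$ lying \emph{off} the $e$-$f$-path, and the ``middle'' component $T_{e,f}$ that contains the internal vertices of that path. Compatibility then amounts to showing $\delta_{\REFF}(T_{e,f})=\emptyset$.

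First I would decompose the cut of $T_{e,f}$ along the partition of its complement. Because $V[\ALG]=V[\REFF]$, we have $V[\REFF]\setminus T_{e,f}=T_e\cup T_f$, and hence
\[
\delta_{\REFF}(T_{e,f}) \;=\; \delta_{\REFF}(T_{e,f}:T_e)\,\cup\,\delta_{\REFF}(T_{e,f}:T_f).
\]
Then I would dispatch the two pieces by applying the unsafeness assumptions one at a time. Removing $e$ from $\ALG$ induces the cut $(T_e,\,T_{e,f}\cup T_f)$; since $e$ is unsafe this cut contains no $\REFF$-edge, so $\delta_{\REFF}(T_{e,f}:T_e)\subseteq \delta_{\REFF}(T_e:T_{e,f}\cup T_f)=\emptyset$. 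The entirely symmetric argument using unsafeness of $f$ and the cut $(T_f,\,T_e\cup T_{e,f})$ gives $\delta_{\REFF}(T_{e,f}:T_f)=\emptyset$. Combining the two yields $\delta_{\REFF}(T_{e,f})=\emptyset$, hence $e\sim_{cp} f$.

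There is no real obstacle here: once the definitions are unrolled, the lemma reduces to the observation that each ``half'' of the cut $\delta_{\REFF}(T_{e,f})$ is contained in one of the two ``unsafeness cuts'' that are empty by hypothesis. The only small point to verify is the three-component decomposition of $\ALG\setminus\{e,f\}$ and the identity $V[\REFF]\setminus T_{e,f}=T_e\cup T_f$, both of which are immediate from $\ALG$ being a spanning tree on $V[\REFF]$.
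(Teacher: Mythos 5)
Your proof is correct and follows essentially the same route as the paper's: both arguments reduce the cut $\delta_{\REFF}(T_{e,f})$ to the two cuts $\delta_{\REFF}(T_e)$ and $\delta_{\REFF}(T_f)$, each of which is empty because $e$ and $f$ are unsafe. Your write-up merely makes the decomposition $\delta_{\REFF}(T_{e,f})=\delta_{\REFF}(T_{e,f}:T_e)\cup\delta_{\REFF}(T_{e,f}:T_f)$ and the trivial case $e=f$ explicit, which the paper leaves implicit.
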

  \begin{proof}
    Let $P=v_1 \xrightarrow{e} v_2 \to \dots \to v_{s-1} \xrightarrow{f} v_s$
    be the unique path between $e$ and $f$ in $\ALG$. 
    Let $T_e$, $T_{e,f}$ and $T_f$ be the connected components of $\ALG$ that 
    contain the vertex $v_1$, the vertices $v_2,\dots,v_{s-1}$ and the vertex $v_s$, respectively.
    We need to show that $\delta_{\REFF}(T_e : T_{e,f}) = \delta_{\REFF}(T_f : T_{e,f}) = \emptyset$.
    Observe that $\delta_{\REFF}(T_e : T_{e,f}) \subseteq \delta_{\REFF}(T_e)$
    and that $\delta_{\REFF}(T_f : T_{e,f}) \subseteq \delta_{\REFF}(T_f)$. 
    The assumption that $e$ and $f$ are both unsafe implies that 
    $\delta_{\REFF}(T_e) = \delta_{\REFF}(T_f) = \emptyset$ and
    we have thus shown the claim.
  \end{proof}

  We have the following summary lemma showing that $\sim_{cp}$ behaves well.
	\begin{lemma}\label{lem:summary:classification}\label{lem:compatibleessential}
    Let $e, f\in \ALG$ be compatible edges. Then 
		\begin{enumerate}
				\item $e$ is essential if and only if $f$ is essential and
				\item $e$ is safe if and only $f$ is safe.
		\end{enumerate}
    Furthermore, if two edges $e,f \in \ALG$ are unsafe, then they are compatible.
		Thus, the set 
    \begin{equation*}
    S_{u} := \cset{e \in \ALG}{\text{$e$ is unsafe}}
    \end{equation*}
    forms an equivalence class of $\sim_{cp}$. 
    If $S \in \eqS$ is an equivalence class of $\sim_{cp}$, then either all edges $e \in S$ are essential or none. Also, if $S \neq S_u$, then all edges in $S$ are safe.
	\end{lemma}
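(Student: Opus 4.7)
The summary lemma bundles several assertions that follow almost mechanically from the already-established Lemmas~\ref{lem:safe-all-or-none}, \ref{lem:unsafe-implies-compatible}, and \ref{lem:compatibility-is-transitive}, together with one new ingredient: that two compatible edges are either both essential or both inessential. My plan is to deduce the statement in the order (safety-uniformity) $\Rightarrow$ (structure of $S_u$) $\Rightarrow$ (essentiality-uniformity), and finally collect the conclusions about equivalence classes.

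The safety assertion is immediate from Lemma~\ref{lem:safe-all-or-none}. For the structure of $S_u$, observe that Lemma~\ref{lem:unsafe-implies-compatible} says any two unsafe edges are compatible, so $S_u$ is contained in a single equivalence class $[e_0]$ of $\sim_{cp}$; conversely, if $f \in [e_0]$ then by Lemma~\ref{lem:safe-all-or-none} applied to the compatible pair $(e_0,f)$ we have that $f$ is unsafe, i.e., $f \in S_u$. Hence $S_u$ is exactly an equivalence class. In particular, every equivalence class $S \neq S_u$ consists entirely of safe edges (since some edge of $S$ is safe, and then all are, again by Lemma~\ref{lem:safe-all-or-none}).

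For the essentiality assertion, I would work directly with the definition. Let $e,f \in \ALG$ be compatible and assume $e$ is essential. Since $\ALG$ is a tree, removing $e$ splits $\ALG$ into two components $T_e$ (the side not containing $f$) and $T_e'$ (which contains $T_{e,f}$ and the $f$-side component), and there exists a terminal pair $\{s,\bar s\}$ with $s \in T_e$ and $\bar s \in T_e'$. The key step is to argue that $\bar s \notin T_{e,f}$: indeed, because $e \sim_{cp} f$, no edge of $\REFF$ leaves $T_{e,f}$, so the $\REFF$-connected component of any vertex in $T_{e,f}$ stays inside $T_{e,f}$; since $\REFF$ is feasible and $s \notin T_{e,f}$, $\bar s$ cannot lie in $T_{e,f}$ either. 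Therefore $\bar s$ lies in the $f$-side component $T_f$ of $\ALG \setminus \{f\}$, while $s$ lies in the other component $T_e \cup T_{e,f}$, so removing $f$ also disconnects the pair $\{s,\bar s\}$, proving $f$ is essential. The reverse direction is symmetric.

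Putting the pieces together: safety and essentiality are both invariants of equivalence classes, so each class is uniformly essential or inessential and uniformly safe or unsafe. Combined with the identification of $S_u$ as an equivalence class, this yields the claim that $S_u$ consists exactly of the unsafe edges and that every other class $S \in \eqS$ contains only safe edges. I do not expect any real obstacle; the only non-formal step is the essentiality argument above, and its subtlety is just in carefully using ``no $\REFF$-edge leaves $T_{e,f}$'' to locate $\bar s$ outside $T_{e,f}$.
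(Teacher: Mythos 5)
Your proposal is correct and follows essentially the same route as the paper: safety-uniformity and the compatibility of unsafe edges are quoted from Lemmas~\ref{lem:safe-all-or-none} and~\ref{lem:unsafe-implies-compatible}, and essentiality-uniformity is proved by using $\delta_{\REFF}(T_{e,f})=\emptyset$ together with feasibility of $\REFF$ to push the partner terminal $\bar s$ out of $T_{e,f}$ and into $T_f$. The remaining claims about equivalence classes then follow exactly as you describe.
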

	\begin{proof}
		We show Statement 1.
		
		Statement 2 and the compatibility of unsafe edges are shown in Lemma~\ref{lem:safe-all-or-none} and Lemma~\ref{lem:unsafe-implies-compatible}. 
		Let $P=v_1 \xrightarrow{e} v_2 \to \dots \to v_{s-1} \xrightarrow{f} v_s$
    be the unique path between $e$ and $f$ in $\ALG$. 
    Let $T_e$, $T_{e,f}$ and $T_f$ be the connected components of $\ALG$ that contain the vertex $v_1$, the vertices $v_2,\dots,v_{s-1}$ and the vertex $v_s$, respectively.
		Suppose that $e$ is essential. 
    We show that there must be a terminal pair $u, \bar{u}$ with $u \in T_{e} \cup T_{e,f}$
    and $\bar{u} \in T_f$. It then follows that $\ALG\setminus\{f\}$ is infeasible.
    
    Since the edge $e$ is essential, there must be a terminal pair $u, \bar{u}$
    with $u \in T_e$ and $\bar{u} \in T_{e,f} \cup T_f$. 
    Since $e$ and $f$ are compatible, however, we know that 
    $\delta_{\REFF}(T_e : T_{e,f}) = \delta_{\REFF}(T_f : T_{e,f}) = \emptyset$
    which implies that $\delta_{\REFF}(T_{e,f}) = \emptyset$. 
    Thus, if $\bar{u} \in T_{e,f}$, the forest $\REFF$ would be infeasible. 
    It follows that $\bar{u} \in T_f$ and thus, that $f$ is essential.
  \end{proof}
	

  We can now show that the compatibility classes in $\mathfrak{S}\setminus S_u$ behave as
  if they were single edges.
  \begin{lemma}\label{lem:compatibility-path}
 Let $K:=\{e_1,\dots,e_l\} \subseteq \ALG$ with $l \geq 2$ be a set of 
 pairwise compatible edges. 
 Furthermore, suppose that $e_i\in K$ is safe for all $i=1,\dots,l$.
 Then there is a path $P \subseteq \ALG$ with $K \subseteq P$.
\end{lemma}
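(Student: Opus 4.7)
I will argue by contradiction. Suppose the edges of $K$ do not all lie on a single path of $\ALG$. Since $\ALG$ is a tree, the minimal subtree $\ALG^\star \subseteq \ALG$ that contains every $e_i \in K$ is well-defined, and the assumption says $\ALG^\star$ is not a path. Hence $\ALG^\star$ has at least one vertex $v$ of degree $\geq 3$ in $\ALG^\star$; from $v$ at least three edge-disjoint branches of $\ALG^\star$ emanate. By the minimality of $\ALG^\star$, each such branch must contain at least one edge of $K$. Pick one such edge from three different branches and call them $e_1, e_2, e_3 \in K$. These three edges lie in pairwise different branches of $\ALG$ rooted at $v$, so $v$ is an internal branching vertex of the minimal subtree containing $\{e_1,e_2,e_3\}$.

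\textbf{The core step.} I remove $e_1, e_2, e_3$ from $\ALG$ and obtain four vertex-disjoint components, which I label $A_v$ (the one containing $v$) and $A_1, A_2, A_3$ (where $A_i$ contains the endpoint of $e_i$ that is ``farther from $v$''). Now I use pairwise compatibility three times. For $\{i,j,k\} = \{1,2,3\}$, the set $T_{e_i,e_j}$, i.e.\ the component of $\ALG\setminus\{e_i,e_j\}$ containing the $e_i$–$e_j$-path, equals $A_v \cup A_k$ (as vertex sets) because removing only $e_i$ and $e_j$ leaves $e_k$ intact, keeping $A_k$ attached to $A_v$. Thus compatibility of $e_i,e_j$ gives
\[
\delta_\REFF(A_v \cup A_k) = \emptyset \qquad\text{for each } \{i,j,k\}=\{1,2,3\}.
\]
Intersecting these three conditions shows that no $\REFF$-edge has its two endpoints in two different members of the partition $\{A_v,A_1,A_2,A_3\}$: every pair of distinct blocks is separated by at least one of the three cuts above.

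\textbf{Deriving the contradiction.} Now I invoke that $e_1$ is safe. Removing only $e_1$ from $\ALG$ produces two components whose vertex sets are exactly $A_1$ and $A_v \cup A_2 \cup A_3$ (the latter being reconnected through $e_2$ and $e_3$). Safety of $e_1$ means there exists an $\REFF$-edge between $A_1$ and $A_v \cup A_2 \cup A_3$. But from the previous step no $\REFF$-edge crosses between $A_1$ and any of $A_v$, $A_2$, $A_3$, a contradiction. Hence the assumption that $K$ does not lie on a single path of $\ALG$ is false, which proves the claim.

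\textbf{Expected difficulties.} The only subtle point is step (2)–(3) of the argument, namely arguing cleanly that, whenever the minimal subtree spanning $K$ in $\ALG$ is not a path, one can actually extract three edges of $K$ in three distinct branches at a single branching vertex. Once that reduction is made, the rest is a purely combinatorial cut argument using compatibility for all three pairs among $\{e_1,e_2,e_3\}$ and safety of a single one of them. No appeal to the more delicate machinery about equivalence classes or about the potential $\phi$ is needed, only the definitions of ``compatible'' and ``safe''.
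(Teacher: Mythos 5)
Your proof is correct and follows essentially the same route as the paper: both take the minimal subtree of $\ALG$ spanning $K$, extract three edges of $K$ on three edge-disjoint branches at a degree-$\geq 3$ vertex, use pairwise compatibility to show the relevant cuts $\delta_\REFF(T_{e_i,e_j})$ are empty, and contradict the safety of one of the three edges. The only cosmetic difference is that the paper picks the three edges at the leaves of that subtree, whereas you pick an arbitrary edge of $K$ from each branch; the cut argument is identical.
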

\begin{proof}
For $e, f \in K$ let $P_{e,f}$ be the unique path in $\ALG$ starting with $e$ and ending with $f$. Define $R := \bigcup_{e,f \in K} P_{e,f}$. Observe that $K \subseteq R \subseteq \ALG$ and that $R$ is a tree whose leaves are all incident to edges in $K$. By contradiction assume $R$ is not a path. Then there exists a vertex $v_0$ with $|\delta_R(v_0)| \geq 3$. In particular, there are three leaves $v_1, v_2, v_3$ of $R$ such that the unique $v_0$-$v_i$-paths in $R$ for $i \in \{1, 2, 3\}$ are pairwise edge-disjoint. Let $e_1, e_2, e_3 \in K$ be the edges incident to $v_1, v_2, v_3$ in $R$, respectively. Let $T_0, T_1, T_2, T_3$ be the four components of $\ALG \setminus \{e_1, e_2, e_3\}$ with $v_i \in T_i$ for $i \in \{0, 1, 2, 3\}$, see Figure~\ref{fig:lemma-compatibility-path}. Observe that $\delta_{\REFF}(T_1 : T_0 \cup T_3) = \delta_{\REFF}(T_1 : T_0 \cup T_2) = \emptyset$ because $e_1 \sim_{cp} e_2$ and $e_1 \sim_{cp} e_3$. We deduce that $\delta_{\REFF}(T_1) = \emptyset$, contradicting the fact that $e_1$ is safe.
  \end{proof}
%

  \begin{lemma}\label{lem:comp-class-all-or-nothing}
    Let $S \in \eqS$ be an equivalence class of $\sim_{cp}$ and let $e \in S$.
    Let $f \in \REFF$ such that $\ALG\setminus\{e\}\cup\{f\}$ is feasible.
    Then, $\ALG\setminus S'\cup\{f\}$ is feasible for all $S'\subseteq S$.
  \end{lemma}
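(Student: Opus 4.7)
The plan is to split into two cases based on whether $\ALG\setminus S$ is itself already a feasible Steiner forest. If yes, then since $S'\subseteq S$ implies $\ALG\setminus S' \supseteq \ALG\setminus S$, the supergraph $\ALG\setminus S' \cup\{f\}$ retains all connectivity of $\ALG\setminus S$ between terminal pairs and is therefore feasible. So the interesting case is when $\ALG\setminus S$ is infeasible; from here on, I assume this.

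In this case I first rule out $S=S_u$. I plan to show directly that $\ALG \setminus S_u$ is always feasible: if some $\REFF$-edge $g$ had endpoints in two different components of $\ALG \setminus S_u$, then the unique $\ALG$-path between those endpoints would have to cross some unsafe edge $e'\in S_u$, but then $g$ would cross the cut of $\ALG\setminus\{e'\}$, contradicting the unsafety of $e'$. So every $\REFF$-component sits inside a single component of $\ALG\setminus S_u$, and feasibility of $\REFF$ transfers to $\ALG\setminus S_u$, contradicting our assumption.

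Hence $S\neq S_u$, and Lemma~\ref{lem:compatibility-path} lets me lay out $S=\{e_1,\ldots,e_\ell\}$ along a path in $\ALG$, inducing components $A_0,A_1,\ldots,A_\ell$ of $\ALG\setminus S$. By the definition of compatibility, each intermediate $A_i = T_{e_i,e_{i+1}}$ (for $1\le i\le \ell-1$) has no $\REFF$-edges leaving it, so by feasibility of $\REFF$, any terminal pair with an endpoint in some intermediate $A_i$ has both endpoints in $A_i$. Since $\ALG\setminus S$ is infeasible, there must then be a separated terminal pair $(u,\bar u)$ with $u\in A_0$ and $\bar u\in A_\ell$. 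Now I invoke the hypothesis: since removing $e$ from $\ALG$ separates $u$ from $\bar u$, the edge $f$ must reconnect the two sides of that cut; combined with the fact that no $\REFF$-edge can leave any intermediate $A_i$, this forces $f$ to have one endpoint $x\in A_0$ and the other endpoint $y\in A_\ell$. Feasibility of $\ALG\setminus S'\cup\{f\}$ then follows for any $S'\subseteq S$: terminal pairs inside a single intermediate $A_i$ are connected via edges internal to $A_i$ (which lie outside $S$ and hence outside $S'$), and terminal pairs with endpoints in $A_0$ and $A_\ell$ are connected by concatenating the $A_0$-internal path to $x$, the edge $f$, and the $A_\ell$-internal path from $y$.

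The main obstacle I foresee is the step that pins down $f$'s endpoints to the boundary components $A_0$ and $A_\ell$: this combines the compatibility structure (which ``closes off'' intermediate components from $\REFF$-edges) with the feasibility of the single-swap hypothesis, and it is exactly what allows the whole subset $S'\subseteq S$ to be deleted at once while $f$ still bridges the one separated pair.
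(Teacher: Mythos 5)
Your proposal is correct and follows essentially the same route as the paper's proof: it uses the facts that a safe class lies on a path whose inner components have no $\REFF$-edges leaving them, deduces that the only separated pairs span $T_0$ and $T_\ell$, and pins $f$ down to $\delta_\REFF(T_0:T_\ell)$ (the paper packages these facts as Lemma~\ref{lem:summary:pathlemmata}, while you re-derive them inline). The only cosmetic difference is that you dispose of the $S_u$ case by a direct argument that $\ALG\setminus S_u$ is always feasible, whereas the paper routes this through the ``all edges of the class are inessential'' claim of Lemma~\ref{lem:summary:classification}.
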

  \begin{proof}
    If $e$ is inessential, then all edges in $S$ are inessential by Lemma~\ref{lem:summary:classification}.
    Thus, $\ALG\setminus S$ is feasible even without~$f$ and we are done.
    Otherwise, $e$ is essential, which implies that all edges in $S$ are essential by Lemma~\ref{lem:summary:classification}. So, they are also safe and we can apply ~\ref{lemsum:property:no-F-edges-between-inner-components} from Lemma~\ref{lem:summary:pathlemmata}.
%
    Let $T_0, \dots, T_l$ be the connected components of $\ALG \setminus S$ in the order they are traversed by the path $P$ containing $S$. Statement~\ref{lemsum:property:no-F-edges-between-inner-components} implies that $\delta_{\REFF}(T_i) = \emptyset$ for all $i \in \{1, \dots, l-1\}$. Therefore, for every terminal pair $\{u, \bar{u}\} \in \terms$ there either is an $i \in \{0, \dots, l\}$ with $u, \bar{u} \in T_i$, or $u \in T_0$ and $\bar{u} \in T_l$ (w.l.o.g.). Hence the only terminal pairs that are disconnected by the removal of $S$ are those with $u \in T_0$ and $\bar{u} \in T_l$.
    Statement~\ref{lemsum:property:no-F-edges-between-inner-components} also implies that either $f \in \delta_{\REFF}(T_0 : T_l)$ or both endpoints of $f$ are contained in one of the $T_i$. In the former case, $\ALG \setminus S \cup \{f\}$ is feasible since $T_0 \cup T_l$ is a connected component of this solution. In the latter case, the connected components of $\ALG \setminus \{e\} \cup \{f\}$ and $\ALG \setminus \{e\}$ are the same, which would imply that $A \setminus \{e\}$ is feasible by our assumption, which is a contradiction to $e$ being essential.
  \end{proof}

\begin{lemma}\label{lem:summary:pathlemmata}
Let $S \in \eqS \backslash \{S_u\}$ be an equivalence class of safe edges. Let $f \in \REFF$ be an edge.
\begin{enumerate}[label=\alph*.,ref={\ref{lem:summary:pathlemmata}(\alph*)}]
\item Let $K \subseteq S$. Then there is a path $P \subseteq \ALG$ with $K \subseteq P$. \label{lemsum:property:compatible-path}
\item Let $P \subseteq \ALG$ be the unique minimal path containing $S$ and let $T_0, \dots, T_l$ be the components of $\ALG \setminus S$ in the order they are traversed by $P$. Then either $f \in \delta_{\REFF}(T_0 : T_l)$, or there is an $i \in \{0, \dots, l\}$ such that $T_i$ contains both endpoints of $f$. \label{lemsum:property:no-F-edges-between-inner-components}
\item Let $C$ be the unique cycle in $\ALG \cup \{f\}$.
    Then $S \subseteq C$ or $S \cap C = \emptyset$. \label{lemsum:property:cycle-all-or-nothing}
\item If $\ALG\setminus\{e\}\cup\{f\}$ is feasible for some edge $e \in S$, then $\ALG\setminus S'\cup\{f\}$ is feasible for all $S'\subseteq S$. 
This also holds for $S=S_u$.\label{lemsum:property:comp-class-all-or-nothing}
\end{enumerate}
\end{lemma}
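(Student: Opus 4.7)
Parts (a) and (d) are essentially repackagings of the earlier Lemma~\ref{lem:compatibility-path} and Lemma~\ref{lem:comp-class-all-or-nothing}, so the real new content is part (b), and (c) will follow from (b) by elementary tree arguments. Throughout, I use that every $S \in \eqS \setminus \{S_u\}$ consists of pairwise compatible safe edges: pairwise compatibility holds because $S$ is an equivalence class of $\sim_{cp}$, and all edges of $S$ are safe by Lemma~\ref{lem:summary:classification} (since $S \neq S_u$). Part (a) then follows: if $|K| \leq 1$ it is trivial, and otherwise Lemma~\ref{lem:compatibility-path} immediately yields a path $P \subseteq \ALG$ with $K \subseteq P$. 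Part (d) follows directly from Lemma~\ref{lem:comp-class-all-or-nothing}, with the $S = S_u$ case handled separately by noting that every edge of $S_u$ is unsafe, hence inessential by Lemma~\ref{lem:summary:classification}, so $\ALG \setminus S_u$ is already feasible and adding $f$ preserves feasibility.

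\textbf{Proof of (b), the main step.} Let $P \subseteq \ALG$ be the unique minimal path containing $S$, which exists by (a). Enumerate $S = \{e_1, \dots, e_l\}$ in the order the edges appear along $P$, so that $T_0, \dots, T_l$ are the components of $\ALG \setminus S$ with $e_i$ being the unique edge of $S$ between $T_{i-1}$ and $T_i$. If $l = 1$ (i.e.\ $|S| = 1$), the claim is trivial: every $f \in \REFF$ either lies entirely in $T_0$, lies entirely in $T_1$, or crosses between the two, and the latter is exactly $\delta_\REFF(T_0 : T_l)$. Now assume $l \geq 2$ and fix an inner index $1 \leq i \leq l-1$. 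Since $e_i, e_{i+1} \in S$ are compatible, by definition $\delta_\REFF(T_{e_i, e_{i+1}}) = \emptyset$, where $T_{e_i, e_{i+1}}$ is the component of $\ALG \setminus \{e_i, e_{i+1}\}$ containing the $\ALG$-path strictly between $e_i$ and $e_{i+1}$. Because $\ALG$ is a tree, removing the additional edges $S \setminus \{e_i, e_{i+1}\}$ only detaches pieces on the far sides of $e_i$ and $e_{i+1}$; hence $T_i = T_{e_i, e_{i+1}}$, and so $\delta_\REFF(T_i) = \emptyset$. Consequently no $\REFF$-edge has an endpoint in any inner $T_i$ unless both of its endpoints lie in the same $T_i$. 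Therefore any $f \in \REFF$ either has both endpoints in one $T_j$, or has both endpoints in $T_0 \cup T_l$; in the latter case, if the endpoints lie in the same one of $T_0, T_l$ we are in the first alternative, and otherwise $f \in \delta_\REFF(T_0 : T_l)$.

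\textbf{Proof of (c).} Let $C$ be the unique cycle in $\ALG \cup \{f\}$, so $C = \{f\} \cup P_f$ where $P_f$ is the unique $\ALG$-path between the endpoints of $f$. By (b), either both endpoints of $f$ lie in some single $T_i$, in which case $P_f$ stays inside $T_i$ and $C \cap S = \emptyset$; or $f \in \delta_\REFF(T_0 : T_l)$, in which case $P_f$ goes from $T_0$ to $T_l$ and, since the only $\ALG$-edges connecting consecutive $T_{i-1}$ and $T_i$ are exactly the $e_i \in S$, the path $P_f$ must use every edge of $S$, giving $S \subseteq C$.

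\textbf{Where the work is.} The only genuinely delicate point is verifying in (b) that the inner component $T_i$ of $\ALG \setminus S$ coincides with $T_{e_i, e_{i+1}}$, since then the whole statement reduces cleanly to compatibility of consecutive edges of $S$ on $P$. Everything else is bookkeeping.
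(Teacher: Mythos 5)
Your proposal is correct and follows essentially the same route as the paper: (a) and (d) are delegated to Lemma~\ref{lem:compatibility-path} and Lemma~\ref{lem:comp-class-all-or-nothing} exactly as the paper does, (b) is deduced from pairwise compatibility of edges within $S$, and (c) is read off from (b) via the structure of the fundamental cycle. The only cosmetic difference is in (b), where you use only consecutive pairs $e_i, e_{i+1}$ together with the identification $T_i = T_{e_i,e_{i+1}}$ to get $\delta_{\REFF}(T_i)=\emptyset$ for every inner component, whereas the paper invokes compatibility of arbitrary pairs; your variant is, if anything, slightly tighter.
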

\begin{proof}
The proof of ~\ref{lemsum:property:compatible-path} is done in Lemma~\ref{lem:compatibility-path}. For ~\ref{lemsum:property:no-F-edges-between-inner-components}, denote the edges in $S$ by $e_1,\ldots,e_l$ such that $e_i$ is the edge between $T_{i-1}$ and $T_i$ for all $i \in \{1,\ldots,l\}$. Since for any $i \neq j \in \{1, \dots, l\}$, the edges $e_i$ and $e_j$ are pairwise compatible, no $\REFF$-edge can cross between $T_i$ and $T_j$, for any $i \not= j \in \{1,\dots,l\}$. Nor can there be $\REFF$-edges from $T_0$ to any $T_i$, for $i \in \{1,\dots,l\}$, since $e_1$ and $e_{i+1}$ are compatible. Therefore either both endpoints of $f$ lie within the same component $T_i$, or $f \in \delta_{\REFF}(T_0 : T_l)$. 

Property \ref{lemsum:property:cycle-all-or-nothing} follows from~\ref{lemsum:property:no-F-edges-between-inner-components} for $S \in \eqS \backslash \{ S_u\}$. Notice that for any $f \in \REFF$, all edges on $C$ are automatically safe. Thus, the statement is true for $S_u$ as well.
Statement~\ref{lemsum:property:comp-class-all-or-nothing} is proven in Lemma~\ref{lem:comp-class-all-or-nothing}.
\end{proof}

The following fact is a straightforward generalization of Hall's theorem and an easy consequence of max flow/min cut (see, \eg \cite{chekuri2009longest}).

 \begin{fact}\label{fact:halltype}
 Let $G=(A \cup B, E)$ be a bipartite graph. For any $A' \subseteq A$, let $N(A') \subseteq B$ be the set of neighbors of the nodes in $A'$. If
 \[
 \tforall\ A' \subseteq A: |N(A')| \ge |A'| / c
 \]
 then there exists an assignment $\alpha: E \to \mathbb{R}_+$ such that $\sum_{e \in \delta(a)} \alpha(e) \geq 1$ for all $a \in A$ and $\sum_{e \in \delta(b)} \alpha(e) \leq c$ for all $b \in B$.
 \end{fact}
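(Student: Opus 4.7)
The plan is to realize $\alpha$ as the edge-flow of a max-flow in a suitable auxiliary network, and to prove that the hypothesis forces the min-cut to be large enough. Concretely, I would build a directed network $N$ with a new source $s$ and sink $t$, arcs $s \to a$ of capacity $1$ for every $a \in A$, arcs $a \to b$ of capacity $+\infty$ for every $\{a,b\} \in E$, and arcs $b \to t$ of capacity $c$ for every $b \in B$. The goal is to show that the maximum $s$-$t$ flow in $N$ has value exactly $|A|$; then I read off $\alpha(\{a,b\})$ as the flow on the middle arc $a \to b$.

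\textbf{Min-cut analysis.} I would prove the flow value is $|A|$ using the max-flow/min-cut theorem. Consider any finite $s$-$t$ cut $(S, \bar{S})$ with $s \in S$ and $t \in \bar{S}$, and set $A_1 := A \cap S$, $A_2 := A \cap \bar{S}$, $B_1 := B \cap S$, $B_2 := B \cap \bar{S}$. Finiteness forces $N(A_1) \subseteq B_1$ (otherwise an infinite $a \to b$ arc crosses the cut). Then the cut capacity equals $|A_2| + c\,|B_1| \geq |A_2| + c\,|N(A_1)|$. Applying the hypothesis $|N(A_1)| \geq |A_1|/c$ gives
\[
|A_2| + c\,|B_1| \;\geq\; |A_2| + c \cdot \tfrac{|A_1|}{c} \;=\; |A_2| + |A_1| \;=\; |A|.
\]
Since the trivial cut $(\{s\}, V(N)\setminus\{s\})$ has capacity $|A|$, the min-cut has value exactly $|A|$, so by max-flow/min-cut there exists an integer (or just real-valued) feasible flow $x$ of value $|A|$ in $N$.

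\textbf{Extracting $\alpha$.} Because the total flow value is $|A|$ while $\sum_{a\in A}\mathrm{cap}(s,a) = |A|$, each arc $s \to a$ must be saturated, so by flow conservation at $a$ we have $\sum_{e \in \delta_G(a)} \alpha(e) = 1$ for every $a \in A$. Similarly, by flow conservation at $b$ we have $\sum_{e \in \delta_G(b)} \alpha(e) = x(b \to t) \leq c$ for every $b \in B$. Setting $\alpha(\{a,b\})$ to be the flow on the corresponding middle arc yields the desired assignment, which completes the proof.

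\textbf{Main obstacle.} There is no real obstacle here: the statement is a standard fractional/weighted version of Hall's theorem, and the only point that requires any care is the cut analysis above, in particular noticing that the hypothesis of the fact is exactly what is needed for the cut bound $|A_2| + c\,|B_1| \geq |A|$ to hold. Should one prefer a purely LP argument, the same conclusion follows from strong LP duality applied to the natural fractional-$b$-matching LP with right-hand side $\mathbf{1}$ on $A$ and $c \cdot \mathbf{1}$ on $B$, its dual's feasibility being equivalent to the stated Hall-type condition; but the max-flow/min-cut presentation above is the most direct.
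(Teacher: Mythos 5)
Your proof is correct and follows exactly the route the paper itself indicates: the paper does not spell out a proof of Fact~\ref{fact:halltype} but simply declares it ``an easy consequence of max flow/min cut,'' and your source--sink network with unit capacities on $s\to a$, infinite capacities on the bipartite edges, and capacity $c$ on $b\to t$, together with the cut analysis $|A_2|+c|B_1|\geq |A|$, is the standard instantiation of that argument. Nothing further is needed.
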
	


\section{Proofs: Bounds on \texorpdfstring{$\phi$}{phi} when the Local Optimum is a Tree}\label{sec-tree-phi}

Let's recall (and formalize) some notation from
\S\ref{sec-treecase-phi}.  Let equivalence class $S \in
\eqS\backslash\{S_u\}$ of safe edges contain $\lenl(S)$ edges.  By
Lemma~\ref{lemsum:property:compatible-path}, the edges of $S = \langle
e_{S,1}, \ldots, e_{S, \ell(S)}\rangle$ lie on a path: let
\label{def:path}
\[
w_{S,0} \xrightarrow{e_{S,1}} v_{S,1} \to \dots \to w_{S,1} \xrightarrow{e_{S,2}} v_{S,2} \to \dots \to w_{S,i-1}  \xrightarrow{e_{S,i}} v_{S,i} \to \dots w_{S,\lenl(S)-1} \xrightarrow{e_{S,\lenl(S)}} v_{S,\lenl(S)}
\]
be this path, where each $e_{S,i}=\{w_{S,i-1},v_{S,i}\}$.  When the
context is clear (as it is in this subsection), we use $e_1,\ldots,
e_{\lenl(S)}$ and $v_1,\ldots,v_{\lenl(S)},w_{0},\ldots,w_{\lenl(S)-1}$
instead.  Notice that $w_{S,i}=v_{S,i}$ is possible, but we always have
$w_{S,i-1}\neq v_{S,i}$.

Removing $S$ decomposes $\ALG$ into $\lenl(S)+1$ connected components.
Let the connected component containing $w_i$ be $(V_{S,i}, E_{S,i})$ and
the connected component that contains $v_{\lenl(S)}$ be
$(V_{S,\lenl(S)}, E_{S,i})$. As in the rest of the paper, we will
associate each components by its edge set.
We think of $E_{S,0}, E_{S,\lenl(S)}$ as
the \emph{outside} of the path that $S$ lies on, and
$E_{S,1},\ldots,E_{S,\lenl(S)-1}$ as the \emph{inner} components.
Note that these components form vertex-disjoint subtrees of $\ALG$.

\begin{observation}\label{obs:AsiAsia-disjoint}
Let $S \in \eqS \backslash\{S_u\}$ and $i, i' \in \{0,\ldots,\lenl(S)\}$, $i\neq i'$. Then $E_{S,i} \cap E_{S,i'} = \emptyset$.
Also, $\ALG$ is the disjoint union of $E_{S,0},\ldots,E_{S,\lenl(S)}$ and $S$.
\end{observation}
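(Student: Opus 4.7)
I would prove Observation~\ref{obs:AsiAsia-disjoint} as a direct consequence of the path structure of $S$ together with elementary tree facts. Since $\ALG$ is a tree (by the standing assumption throughout this section) and $S \subseteq \ALG$ contains exactly $\lenl(S)$ edges, the forest $\ALG \setminus S$ has exactly $\lenl(S)+1$ connected components. By Lemma~\ref{lemsum:property:compatible-path}, all edges of $S$ lie on a single path $P$ in $\ALG$, which we may write as
\[
w_{0} \xrightarrow{e_{1}} v_{1} \to \dots \to w_{i-1} \xrightarrow{e_{i}} v_{i} \to \dots \to w_{\lenl(S)-1} \xrightarrow{e_{\lenl(S)}} v_{\lenl(S)},
\]
where each segment between $v_i$ and $w_i$ (possibly empty, i.e., $v_i = w_i$) consists of edges of $\ALG \setminus S$.

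Next, I would show that the $\lenl(S)+1$ distinguished vertices $w_0, w_1, \ldots, w_{\lenl(S)-1}, v_{\lenl(S)}$ belong to pairwise distinct components of $\ALG \setminus S$. For any two indices $i < i'$ (with $v_{\lenl(S)}$ playing the role of $w_{\lenl(S)}$ for this step), the unique path in the tree $\ALG$ connecting them is a sub-path of $P$, and it contains at least the edge $e_{i+1} \in S$. Hence, after removing $S$, no path in $\ALG \setminus S$ connects these two vertices, so they lie in different components. Consequently $E_{S,0}, \ldots, E_{S,\lenl(S)}$ are genuinely distinct connected components of $\ALG \setminus S$, and since connected components of any forest have pairwise disjoint edge sets, $E_{S,i} \cap E_{S,i'} = \emptyset$ for $i \neq i'$.

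For the second statement, I note that $\ALG = S \cup (\ALG \setminus S)$ with $S \cap (\ALG \setminus S) = \emptyset$, and $\ALG \setminus S$ is precisely the disjoint union of the edge sets of its connected components, which—by the counting argument and the previous paragraph—are exactly $E_{S,0}, \ldots, E_{S,\lenl(S)}$. This yields the claimed disjoint-union decomposition of $\ALG$. There is no real obstacle here; the proof is essentially bookkeeping, and the only point requiring care is matching the indexing $i \in \{0,\ldots,\lenl(S)\}$ with the $\lenl(S)+1$ components produced by removing the $\lenl(S)$ edges of $S$ from the tree $\ALG$.
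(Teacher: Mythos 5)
Your proof is correct and matches the paper's intent: the paper states this as an unproved observation immediately after defining the components $E_{S,i}$, treating it as direct bookkeeping from the path structure guaranteed by Lemma~\ref{lemsum:property:compatible-path} and the fact that removing $\lenl(S)$ edges from a tree yields $\lenl(S)+1$ edge-disjoint components. Your write-up simply makes that implicit reasoning explicit, and every step checks out.
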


Next, we set $\frakm(S)$ and $\frakn(S)$ to be the index of the inner
components with the largest and second-largest widths, respectively.
(We use the indexing $\maxit$ from \S\ref{sec:preliminaries} to break
ties, so $\frakm(S) := \arg\max_{i=1,\ldots,\lenl(S)-1}
\maxit(E_{S,i})$, and $\frakn(S) :=
\arg\max_{i\in\{1,\ldots,\lenl(S)-1\}\backslash\{\frakm(S)\}}\maxit(E_{S,i})$.)
Without loss of generality, we assume that the orientation of the path
is such that $\frakm(S) < \frakn(S)$.
Let $\innS := \{1, 2,\ldots, \lenl(S) - 1\}$ be the indices of the
inner components, and $\innSp := \innS \setminus \{ \frakm(S),
\frakn(S) \}$.  
Now we split the cost of the solution $\ALG$ into three terms. It holds that
\[
\begin{array}{rll}\label{eq:splitted-objective-function}
\phi(\ALG)
=& \multicolumn{2}{l}{ w(\ALG) + \sum\limits_{e \in \ALG} d_f  
    = w(\ALG) + \sum\limits_{e\in S_u} d_e + \sum\limits_{S\in\eqS\backslash\{S_u\}} \sum\limits_{e\in S} d_e}\\
=& \Bigg(w(\ALG) + \sum\limits_{e\in S_u} d_e\Bigg)  
 &  + \Bigg(\sum\limits_{S\in\eqS\backslash\{S_u\}} \bigg( 
\sum\limits_{i=1}^{\lenl(S)} d_{e_i} - \sum\limits_{i\in \innS} w(E_{S,i})\bigg)\Bigg) \\
 &
 & + \Bigg(\sum\limits_{S\in\eqS\backslash\{S_u\}} \sum\limits_{i \in \innS} w(E_{S,i})\Bigg).
\end{array}
\]

\subsection{Bounding the Middle Term}\label{sec:tree:phi:middleterm}

The next lemma and the corollary bound the cost of the middle term.

\begin{lemma}\label{lem:swap-with-three-pieces}
Let $I=(V,E,\terms,d)$ be a Steiner Forest instance and let $\REFF$ be a feasible solution for $I$.
Furthermore, let $\ALG \subseteq E$ be a feasible \emph{tree} solution for $I$ that is 
\edgesetswap-optimal with respect to $\REFF$ and $\phi$.
Set $S \in \eqS\backslash\{S_u\}$ be an equivalence class of safe edges.
Let $f \in \REFF$ be an edge that closes a cycle in $\ALG$ that contains $S$. 
Then
	\[
		d_f \ge \frac{1}{3} \Bigg( \sum_{i=1}^{\lenl(S)} d_{e_i} - \sum_{i\in \innS} w_i\Bigg).
	\] 
\end{lemma}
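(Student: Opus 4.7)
My plan is to exhibit three non-improving \edgesetswap{}s, each adding $f$ and removing a different consecutive block of edges of $S$, and then sum the resulting inequalities. Setup: since $S \subseteq C_\ALG(f)$ (the fundamental cycle closed by $f$), Lemma~\ref{lemsum:property:no-F-edges-between-inner-components} forces $f$ to connect a vertex of $E_{S,0}$ to a vertex of $E_{S,\lenl(S)}$. Also, for any $e \in S$ the graph $(\ALG \setminus \{e\}) \cup \{f\}$ is a spanning tree on $V[\ALG] = V[\REFF]$ and hence feasible, so Lemma~\ref{lemsum:property:comp-class-all-or-nothing} guarantees that $(\ALG \setminus S') \cup \{f\}$ is feasible for every $S' \subseteq S$. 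In particular, all three swaps below are valid moves.

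Defining the swaps: assuming $\lenl(S) \geq 3$ (the cases $\lenl(S) \in \{1,2\}$ follow from a single swap that removes all of $S$), I take the three removal sets $S_1 := \{e_1,\ldots,e_{\frakm}\}$, $S_2 := \{e_{\frakm+1},\ldots,e_{\frakn}\}$ and $S_3 := \{e_{\frakn+1},\ldots,e_{\lenl(S)}\}$, cutting $S$ into three consecutive pieces exactly at the two widest inner components. For each $k$, removing $S_k$ (and adding $f$) splits the path traversed by $S$ into a left chunk and a right chunk that get reconnected through $f$, yielding one ``big'' component $B_k$, while leaving the components between the endpoints of $S_k$ isolated. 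The crucial bookkeeping observation is that both $E_{S,\frakm}$ and $E_{S,\frakn}$ are absorbed into $B_k$ for every $k$: the index sets of isolated components are exactly $I_1 = \{1,\ldots,\frakm-1\}$, $I_2 = \{\frakm+1,\ldots,\frakn-1\}$ and $I_3 = \{\frakn+1,\ldots,\lenl(S)-1\}$, which partition $\innSp = \innS \setminus \{\frakm,\frakn\}$.

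For swap $k$, let $\ALG_k'$ denote the resulting forest. Then $d(\ALG_k') - d(\ALG) = d_f - \sum_{e \in S_k} d_e$ and $w(\ALG_k') = w(B_k) + \sum_{i \in I_k} w(E_{S,i})$. Since $V[B_k] \subseteq V[\ALG]$ and $\ALG$ is a single tree, $w(B_k) \leq w(\ALG)$. The non-improvement condition $\phi(\ALG_k') \geq \phi(\ALG)$ therefore rearranges to
\[
d_f \;\geq\; \sum_{e \in S_k} d_e \;-\; \sum_{i \in I_k} w(E_{S,i}).
\]
Summing this inequality over $k \in \{1,2,3\}$ and using that $I_1, I_2, I_3$ partition $\innSp \subseteq \innS$ gives
\[
3\,d_f \;\geq\; \sum_{i=1}^{\lenl(S)} d_{e_i} \;-\; \sum_{i \in \innSp} w(E_{S,i}) \;\geq\; \sum_{i=1}^{\lenl(S)} d_{e_i} \;-\; \sum_{i \in \innS} w(E_{S,i}),
\]
which is the claim after dividing by $3$.

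The main obstacle will be carefully verifying that each of the three removal sets is a valid consecutive swap set in the sense of \S\ref{sec:LS-moves} (feasibility is the main content, and is handled by Lemma~\ref{lemsum:property:comp-class-all-or-nothing}), together with tracking the width contributions across the three cases. The decisive structural insight that makes the factor-$3$ bound tight is cutting the path exactly at the two widest inner components $E_{S,\frakm}$ and $E_{S,\frakn}$: this guarantees that no large-width inner component is ever isolated in any of the three swaps, so the penalty terms $\sum_{i \in I_k} w(E_{S,i})$ sum cleanly to $\sum_{i \in \innSp} w(E_{S,i})$ rather than triple-counting the biggest widths.
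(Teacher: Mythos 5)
Your proposal is correct and follows essentially the same route as the paper: the same three consecutive removal sets cut at $\frakm(S)$ and $\frakn(S)$, the same three non-improving \edgesetswap{}s (with feasibility via Lemma~\ref{lemsum:property:comp-class-all-or-nothing} and the width increase bounded by the widths of the newly isolated inner components, using $w(B_k)\le w(\ALG)$), summed and divided by $3$. Your bookkeeping is in fact slightly tidier than the paper's (the third index set correctly starts at $\frakn(S)+1$, and you handle $\lenl(S)\le 2$ explicitly), but these are cosmetic differences.
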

\begin{proof}
Since $f$ closes a cycle in $\ALG$ that contains $S$, any edge $e$ in $S$ satisfies that $\ALG \setminus\{e\}\cup\{f\}$ is feasible. By Lemma~\ref{lemsum:property:comp-class-all-or-nothing}, this implies that $\ALG \setminus S \cup\{f\}$ is feasible as well. Thus, adding $f$ and removing any subset $S'\subseteq S$ is a feasible \edgesetswap. 
We consider three subsets $S_1':=\{e_1,\ldots,e_{\frakm(S)}\}$, $S_2' :=\{e_{\frakm(S)+1},\ldots,e_{\frakn(S)}\}$ and $S_3':=\{e_{\frakn(S)+1},\ldots,e_{\lenl(S)}\}$.
Since $\ALG$ is \edgesetswap-optimal with respect to $\REFF$ and $\phi$, we know that adding $f$ and removing $S_1'$, $S_2'$ or $S_3'$ is not an improving swap. When adding $f$, we pay $d_f$. When removing $S_1'$, we gain $\sum_{i=1}^{\frakm(S)} d_{e_i}$, but we have to pay the width of the connected components that we create. Notice that we detach the connected components $E_{S,1}$, \ldots $E_{S,\frakm(S)-1}$ from the tree. Since $E_{S,\frakm(S)}$ has the maximum width on the path, the increase in $\phi$ that originates from widths has to be $\sum_{i=1}^{\frakm(S)-1} w(E_{S,i})$. Thus, the fact that the swap is not improving yields that
\[
d_f + \sum_{i=1}^{\frakm(S)-1} w(E_{S,i}) \ge \sum_{i=1}^{\frakm(S)} d_{e_i}
\ \Leftrightarrow\ d_f \ge \sum_{i=1}^{\frakm(S)} d_{e_i} - \sum_{i=1}^{\frakm(S)-1} w(E_{S,i})
\]
Similarly, we gain that
\[
d_f \ge \sum_{i=\frakm(S)+1}^{\frakn(S)} d_{e_i} - \sum_{i=\frakm(S)+1}^{\frakn(S)-1} w(E_{S,i}) \quad \text{and}\quad
d_f \ge \sum_{i=\frakn(S)+1}^{\lenl(S)} d_{e_i} - \sum_{i=\frakn(S)}^{\lenl(S)-1} w(E_{S,i})
\]
from the fact that $(f,S_2')$ and $(f,S_3')$ are not improving. 
The statement of the lemma follows by adding the inequalities and dividing the resulting inequality by three.
\end{proof}

\begin{corollary}\label{cor:bound-weird-term}
    Let $I=(V,E,\terms,d)$ be a Steiner Forest instance and let 
    $\REFF$ be a feasible solution for $I$.
    Furthermore, let $\ALG \subseteq E$ be a feasible \emph{tree} solution for $I$ that is \edgeedge and \edgeset swap-optimal with respect to $\REFF$ and $\phi$. Then,
		\[
		\sum_{S \in \eqS \backslash\{S_u\}} \bigg( \sum_{i=1}^{\lenl(S)} d_{e_i} - \sum_{i\in \innS} w(E_{S,i})  \bigg) \le 10.5 \cdot \sum_{e\in \REFF} d_e.
		\]
\end{corollary}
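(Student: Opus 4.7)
The plan is to assemble this corollary from the two main ingredients already developed in the section. Lemma~\ref{lem:swap-with-three-pieces} supplies the per-class bound
\[
d_f \ \ge\ \tfrac{1}{3}\Bigl(\sum_{i=1}^{\lenl(S)} d_{e_i} - \sum_{i\in\innS} w(E_{S,i})\Bigr)
\]
for every safe equivalence class $S \in \eqS\setminus\{S_u\}$ and every $\REFF$-edge $f$ whose $\ALG$-fundamental cycle contains $S$. Theorem~\ref{thm:four-approx-for-trees-hall} then converts such per-class bounds into a global one at a multiplicative cost of $7/2$. Since $10.5 = \tfrac{21}{2} = 3 \cdot \tfrac{7}{2}$, the constant in the target inequality is exactly the product of these two factors, which is a strong hint that the corollary should be nothing more than the composition of the two results.

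Concretely, I would set $\Delta(S_u):=0$ and, for each $S \in \eqS\setminus\{S_u\}$, define
\[
\Delta(S) \ :=\ \tfrac{1}{3}\Bigl(\sum_{i=1}^{\lenl(S)} d_{e_i} - \sum_{i\in\innS} w(E_{S,i})\Bigr).
\]
Lemma~\ref{lem:swap-with-three-pieces} is exactly the statement that $\Delta(S)\le d_f$ whenever the fundamental cycle of $f$ in $\ALG$ contains $S$, i.e.\ precisely the hypothesis required by Theorem~\ref{thm:four-approx-for-trees-hall}. Applying that theorem yields $\sum_{S\in\eqS\setminus\{S_u\}}\Delta(S)\le \tfrac{7}{2}\cdot d(\REFF)$, and multiplying through by $3$ gives the desired inequality.

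The one place I would slow down is an implicit non-negativity assumption inside the proof of Theorem~\ref{thm:four-approx-for-trees-hall}: its first step uses $\Delta(S) \le \sum_{e\in\delta_H(a_S)}\alpha(e)\,\Delta(S)$, which requires $\Delta(S)\ge 0$. For our $\Delta$, the parenthesized expression can a priori be negative when the inner components are wide relative to the class-edge lengths. I would sidestep this by working with $\Delta^+(S):=\max\{0,\Delta(S)\}$ instead: since $d_f\ge 0$, the hypothesis $\Delta^+(S)\le d_f$ still holds for every relevant pair, so the theorem applies verbatim and yields $\sum_S \Delta^+(S)\le \tfrac{7}{2}\,d(\REFF)$. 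The pointwise inequality $\Delta^+(S)\ge \Delta(S)$ then transfers the bound to $\sum_S \Delta(S)$, after which the factor-$3$ rescaling finishes the proof. I expect this to be the only real subtlety; everything else is a direct appeal to the two earlier results.
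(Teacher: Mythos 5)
Your proof is essentially identical to the paper's: the paper also sets $\Delta(S) := \tfrac{1}{3}\bigl(\sum_{i=1}^{\lenl(S)} d_{e_i} - \sum_{i\in\innS} w(E_{S,i})\bigr)$ and concludes directly from Lemma~\ref{lem:swap-with-three-pieces} and Theorem~\ref{thm:four-approx-for-trees-hall}. Your additional observation that the first inequality in the proof of Theorem~\ref{thm:four-approx-for-trees-hall} implicitly needs $\Delta(S)\ge 0$ is a legitimate point the paper does not address, and your $\Delta^+$ workaround handles it correctly.
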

\begin{proof}
We set $\Delta(S) := \frac{1}{3} \left(\sum_{i=1}^{\lenl(S)} d_{e_i} - \sum_{i\in \innS} w(E_{S,i})\right)$. The corollary then follows by Lemma~\ref{lem:swap-with-three-pieces} and Theorem~\ref{thm:four-approx-for-trees-hall}.
\end{proof}

\subsection{Bounding the First Term}

The following lemma bounds the first term of the cost.

\begin{lemma}\label{lem:bound-for-rem-optimal}
Let $I=(V,E,\terms,d)$ be a Steiner Forest instance and let 
    $\REFF$ be a feasible solution for $I$ with $cc(\REFF)$ connected components $F_1,\ldots,F_{cc(\REFF)}$.
    Furthermore, let $\ALG \subseteq E$ be a feasible \emph{tree} solution for $I$ and recall that $S_u$ is the equivalence class of unsafe edges in $\ALG$. Assume that $\ALG$ is removing swap-optimal.
		It holds that
		\[
		w(\ALG) + \sum_{e\in S_{u}} d_e\le \sum_{i=1}^{cc(\REFF)} w(F_i).
		\]
\end{lemma}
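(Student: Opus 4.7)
The plan is to apply removing-swap-optimality to the entire equivalence class $S_u$ at once, which directly yields the desired inequality after relating the widths of the resulting components to those of $\REFF$.

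First, I would show that $\ALG \setminus S_u$ is feasible. Consider any edge $f = \{u,v\} \in \REFF$ and the unique $u$-$v$-path $P$ in the tree $\ALG$. No edge of $P$ can lie in $S_u$: indeed, if $e \in P \cap S_u$, then $u$ and $v$ would be in different components $T_1,T_2$ of $\ALG \setminus \{e\}$, placing $f$ in $\delta_{\REFF}(T_1)$ and contradicting that $e$ is unsafe. Hence both endpoints of every $\REFF$-edge lie in the same component of $\ALG \setminus S_u$, so each $\REFF$-component $F_i$ (being connected) is contained in a single component of $\ALG \setminus S_u$, and consequently each terminal pair (which lies in some $F_i$) is contained in a single such component. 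This proves feasibility of $\ALG \setminus S_u$.

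Next, I would invoke removing-swap-optimality on the whole set $S_u$: since $\ALG \setminus S_u$ is feasible, $\phi(\ALG \setminus S_u) \geq \phi(\ALG)$, and expanding $\phi = d + w$ and cancelling $d(\ALG)$ gives
\[
w(\ALG) + \sum_{e \in S_u} d_e \; \leq \; w(\ALG \setminus S_u) \; = \; \sum_{j=0}^{k} w(C_j),
\]
where $C_0,\dots,C_k$ are the components of the forest $\ALG \setminus S_u$ and the last equality is the definition of total width for a forest.

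The final step is to bound $\sum_j w(C_j)$ by $\sum_i w(F_i)$. Since each terminal pair lies in exactly one $F_i$ and each $F_i$ lies in exactly one $C_j$ by the first step, the width $w(C_j)$ equals the maximum of $w(F_i)$ over those $F_i \subseteq C_j$ (or $0$ if no $F_i$ is contained in $C_j$), hence is at most their sum. Summing over $j$ counts each $F_i$ exactly once, giving $\sum_j w(C_j) \leq \sum_{i=1}^{cc(\REFF)} w(F_i)$, which combined with the inequality above proves the lemma. The main obstacle is that the argument genuinely needs removing-swap-optimality applied to the whole set $S_u$ at once rather than one edge at a time: a single-edge interpretation of the hypothesis is too weak, since a pathological $\ALG$ can satisfy $\phi(\ALG \setminus \{e\}) \geq \phi(\ALG)$ for every $e \in S_u$ individually while $\phi(\ALG \setminus S_u) < \phi(\ALG)$ as a whole.
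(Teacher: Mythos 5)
Your proof is correct and follows essentially the same route as the paper: both apply the removing swap that deletes all of $S_u$ at once, use non-improvement of $\phi$ to get $w(\ALG) + \sum_{e\in S_u} d_e \le w(\ALG\setminus S_u)$, and then charge the widths of the resulting components injectively to widths of components of $\REFF$. Your write-up is in fact slightly more careful than the paper's in spelling out why $\ALG\setminus S_u$ is feasible and why each component of $\ALG\setminus S_u$ inherits its width from a distinct $\REFF$-component.
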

\begin{proof}
$\ALG\backslash S_u$ contains $|S_u|+1$ connected components $E_1,\ldots,E_{|S_u|+1}$, number them by decreasing width, \ie $w(E_1) \ge w(E_2) \ge \ldots \ge w(E_{|S_u|+1})$. Since $\REFF$ does not connect them by definition of $S_u$, its number of connected components satisfies $cc(\REFF) \ge |S_u|+1$. Furthermore, any component $E_i$ contains a component $F_j$ with $w(F_j) = w(E_i)$. Assume that the components are numbered such that $w(F_i)=w(E_i)$ for $i=1,\ldots,|S_u|+1$. 

Consider the removing swap that removes all edges in $S_u$. Before the swap, $\ALG$ paid all edges and $w(\ALG)$. After the swap, $\ALG$ would pay all edges except those in $S_u$ plus $\sum_{i=1}^{|S_u|+1} w(E_i)=\sum_{i=1}^{|S_u|+1} w(F_i)$. Since the swap was not improving, we know that
\begin{align*}
& w(\ALG) + \sum_{e\in\ALG}d_e \le \sum_{i=1}^{|S_u|+1} w(F_i) + \sum_{e\in\ALG}d_e - \sum_{e\in S_u}d_e.
\end{align*}
Thus, $w(\ALG) + \sum_{e\in S_u}d_e \le \sum_{i=1}^{|S_u|+1} w(F_i) \le \sum_{i=1}^{cc(\REFF)} w(F_i)$.
\end{proof}

\subsection{Bounding the Third Term: Sum of Widths}\label{sec:tree:phi:lastterm}
 
For the third term, we need a bit more work. 
The main step is to show that $\maxit(E_{S, i})$ is an injective function of $S, i$ when restricting to $i \in \innS$. 
The following lemma helps to prove this statement.

\begin{figure}
    \begin{subfigure}{\textwidth}
        \centering
        \begin{tikzpicture}[
        graphnode/.style={label distance=0.0mm, fill, draw, circle, inner sep=0pt, minimum size=4pt},
        cc/.style={draw, rounded corners, dashed},
        graphedge/.style={thick},
        pathedge/.style={graphedge, dotted},
        font=\small
      ]
        \draw (-3.9, 0) node[graphnode, label={below:$v_i$}] (w1) {};
        \draw (-2.4, 0)  node[graphnode] (w2) {};
        \draw (-1.3, 0)  node[graphnode] (v1) {};
        \draw ( 0.2, 0) node[graphnode] (v2) {};
        \draw ( 1.3, 0)  node[graphnode] (vss) {};
        \draw ( 2.8, 0) node[graphnode, label={below:$w_i$}] (vs) {};

        \path[cc] (w1) ++(-1.50, -0.6) rectangle node[above=0.5cm]{$E^1_{S,<}$} ++(1.8, 1.0);
        \path[cc] (w2) ++(-0.30, -0.6) rectangle node[above=0.5cm]{$E^1_{S,i}$} ++(1.7, 1.0);        
        \path[cc] (v2) ++(-0.30, -0.6) rectangle node[above=0.5cm]{$E^2_{S,i}$} ++(1.7, 1.0);        
        \path[cc] (vs) ++(-0.25, -0.6) rectangle node[above=0.5cm]{$E^1_{S,>}$} ++(1.8, 1.0);        
        
        \draw[graphedge] (w1) -- node[above]{$e_i$} (w2);
        \draw[pathedge]  (w2) -- (v1);
        \draw[graphedge] (v1) -- node[above]{$e'$} (v2);
        \draw[pathedge]  (v2) -- (vss);
        \draw[graphedge] (vss) -- node[above]{$e_{i+1}$} (vs);
        
        \draw (-5.1, -0.3) node[graphnode, gray] (fs1) {};
        \draw (-4.5,  0.1) node[graphnode, gray] (ft1) {};
        \draw[graphedge, gray] (fs1) -- node[above, inner sep=0mm]{$f'$} (ft1);

        \draw (3.2, 0.1)  node[graphnode, gray] (fs2) {};
        \draw (3.9, -0.3) node[graphnode, gray] (ft2) {};
        \draw[graphedge, gray] (fs2) -- node[above right, inner sep=0mm]{$f'$} (ft2);

         \end{tikzpicture}
      \caption{First case: 
        Removing $e'$ disconnects $v_i$ and $w_i$. 
        Thus, $e_i$, $e'$ and $e_{i+1}$ lie on a path. 
        No $\REFF$-edge can leave $E_{S,i}=E_{S,i}^1\cup E_{S,i}^2$. 
        Also, no $\REFF$-edge can go between $E_{S,i}^1$ and $E_{S,i}^2$ since $e'$ 
          and $f'$ are compatible. 
        Thus, $e'$ is compatible to $e_i$ and $e_{i+1}$.}
    \end{subfigure}\vspace{.25\baselineskip}
    \begin{subfigure}{\textwidth}
      \centering
      \begin{tikzpicture}[
      graphnode/.style={label distance=0.0mm, fill, draw, circle, inner sep=0pt, minimum size=4pt},
      cc/.style={draw, rounded corners, dashed},
      graphedge/.style={thick},
      pathedge/.style={graphedge, dotted},
      font=\small
      ]

      \coordinate (v) at (-1.7, 0.0);  
      \draw (-2.5,  0.5 )  node[graphnode]            (vRE) {};
      \draw (-0.7,  0.75)  node[graphnode]          (vRF) {};
      \draw (-4.0,  0.5 )  node[graphnode, label={below:$v_2$}]            (vER) {};
      \draw ( 0.8,  0.75)  node[graphnode, label={below:$v_3$}]            (vFR) {};
      \draw (-1.2, -0.6 )  node[graphnode]               (vRG) {};
			\draw ( 0.8, -0.6 )  node[graphnode, label={below:$v_1$}]            (vGR) {};
            
      \path[cc] (vER) ++(-1.5, -0.6) rectangle node[above=0.5cm]{$E_{S,<}$} ++(1.75, 1.0);  
      \path[cc] (vRE) ++(-0.3, -1.6) rectangle node[above=1.1cm]{$E^1_{S,i}$} ++(2.4, 2.2);    
      \path[cc] (vGR) ++(-0.5, -0.5) rectangle node[right=0.5cm]{$E^2_{S,i}$} ++(1.0, 0.75);
      \path[cc] (vFR) ++(-0.25, -0.6) rectangle node[above=0.5cm]{$E_{S,>}$} ++(1.75, 1.0);
      
      \draw[pathedge]  (v)   -- (vRE);
      \draw[pathedge]  (v)   -- (vRF);
      \draw[pathedge]  (v)   -- (vRG);
      \draw[graphedge] (vRE) -- node[rectangle, inner sep=0mm, above=1mm]{$e_i$} (vER);
      \draw[graphedge] (vRF) -- node[rectangle, inner sep=0mm, above=1mm]{$e_{i+1}$} (vFR);
			\draw[graphedge] (vRG) -- node[auto,xshift=2mm]{$e'$}  (vGR);
      
        \draw (-5.1,  0.2) node[graphnode, gray] (fs1) {};
        \draw (-4.5,  0.6) node[graphnode, gray] (ft1) {};
        \draw[graphedge, gray] (fs1) -- node[above, inner sep=0mm]{$f'$} (ft1);

        \draw (1.4,  0.8)  node[graphnode, gray] (fs2) {};
        \draw (2.1,  0.4) node[graphnode, gray] (ft2) {};
        \draw[graphedge, gray] (fs2) -- node[above right, inner sep=0mm]{$f'$} (ft2);
    \end{tikzpicture}
      \caption{Second case: The nodes $v_i$ and $w_i$ lie in the same connected component when removing $e'$. Again, no $\REFF$-edge can leave $E_{S,i}$. Also, no edge can go between $E_{S,i}^1$ and $E_{S,i}^2$ because $e'$ and $f'$ are compatible. Thus, $e'$ is compatible to $e_i$ and $e_{i+1}$.
      }
    \end{subfigure}
   
    \caption{\label{fig:s-in-asi-or-not}%
		 Two cases that occur in Lemma~\ref{lem:s-in-asi-or-not}.}
    \end{figure}
		
\begin{lemma}\label{lem:s-in-asi-or-not}
Let $S, S' \in \eqS \backslash \{S_u\}$ with $S\neq S'$. Exactly one of the following two cases holds:
\begin{itemize}
\item $S' \subseteq E_{S,0} \cup E_{S,\lenl(S)}$, \ie $S'$ lies in the outside of the path of $S$
\item There exists $i\in\{1,\ldots,\lenl(S)-1\}$ with $S' \subseteq E_{S,i}$.
\end{itemize}
\end{lemma}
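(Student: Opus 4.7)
The plan is to reduce the dichotomy to the following claim: whenever an edge of $S'$ lies in an inner component $E_{S,i}$, the entire class $S'$ is contained in $E_{S,i}$. Granted the claim, the dichotomy follows: since $S\cap S'=\emptyset$ and the pieces $E_{S,0},\dots,E_{S,\lenl(S)},S$ partition $\ALG$ (Observation~\ref{obs:AsiAsia-disjoint}), every edge of $S'$ sits in some $E_{S,k}$. If no such $k$ is inner, case one holds; otherwise case two holds. Pairwise disjointness of the $E_{S,k}$ makes the two alternatives mutually exclusive for the non-empty class $S'$.

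To prove the claim I argue by contradiction: suppose $e'\in S'\cap E_{S,i}$ with $i$ inner and $e''\in S'\cap E_{S,j}$ with $j\neq i$; by symmetry assume $j>i$. Compatibility $e'\sim_{cp}e''$ yields $\delta_{\REFF}(T_{e',e''})=\emptyset$. Removing $e'$ splits $E_{S,i}$ into two subtrees; label them so that $E^2_{S,i}$ is the one containing $w_i$ (the vertex where $e_{i+1}$ attaches), which lies on the $e'$-to-$e''$ path. Then $E^2_{S,i}\sse T_{e',e''}$, whereas the complementary subtree $E^1_{S,i}$ is disconnected from $T_{e',e''}$ in $\ALG\setminus\{e',e''\}$, so $E^1_{S,i}\cap T_{e',e''}=\emptyset$. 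Since $e'$ is safe (because $S'\neq S_u$), Lemma~\ref{lem:summary:pathlemmata}(b) applied to $S$ describes every $\REFF$-edge crossing the $e'$-cut as either of \emph{type~(ii)}---running from $E_{S,0}$ to $E_{S,\lenl(S)}$---or of \emph{type~(i)}---with both endpoints inside a single $E_{S,k}$, which must be $E_{S,i}$ (the only component split by $e'$), so the two endpoints lie in $E^1_{S,i}$ and $E^2_{S,i}$. If any type-(i) edge crosses the $e'$-cut, it lies in $\delta_{\REFF}(T_{e',e''})$, directly contradicting $e'\sim_{cp}e''$.

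The main obstacle is the residual situation in which \emph{every} $\REFF$-edge crossing the $e'$-cut is of type~(ii); note that this can only occur when $e'$ lies on the unique $v_i$-$w_i$ path inside $E_{S,i}$ (in the ``branch'' configuration, type-(ii) edges trivially fail to cross the $e'$-cut, since $E_{S,0}$ and $E_{S,\lenl(S)}$ both sit on the non-branch side). To handle it, I swap the partner of $e'$ from $e''$ to $e_{i+1}\in S$: removing $e'$ together with $e_{i+1}$ isolates $E^2_{S,i}$ as a connected component, so $T_{e',e_{i+1}}=E^2_{S,i}$. Under the standing hypothesis, no type-(i) $\REFF$-edge runs between $E^1_{S,i}$ and $E^2_{S,i}$; and type-(ii) $\REFF$-edges do not touch the inner component $E_{S,i}$ at all by Lemma~\ref{lem:summary:pathlemmata}(b). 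Together these force $\delta_{\REFF}(E^2_{S,i})=\emptyset$, so $e'\sim_{cp}e_{i+1}$. Transitivity of compatibility (Lemma~\ref{lem:compatibility-is-transitive}) then merges $S$ and $S'$ into a single equivalence class, contradicting $S\neq S'$ and completing the proof.
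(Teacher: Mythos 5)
Your proof is correct and follows essentially the same route as the paper's: both arguments take two edges of $S'$ straddling an inner component $E_{S,i}$, split $E_{S,i}$ at $e'$ into $E^1_{S,i}$ and $E^2_{S,i}$, use compatibility within $S$ (via Lemma~\ref{lemsum:property:no-F-edges-between-inner-components}) and within $S'$ (no $\REFF$-edge between $E^1_{S,i}$ and $E^2_{S,i}$) to conclude $\delta_{\REFF}(E^2_{S,i})=\emptyset$, and derive the contradiction $e'\sim_{cp}e_{S,i+1}$. Your reorganization of the case split (type-(i) crossing vs.\ the residual type-(ii)/path configuration, with the branch configuration killed by safeness of $e'$) is only cosmetically different from the paper's path-vs-branch dichotomy.
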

\begin{proof}
Notice that $S \cap S' = \emptyset$ by definition because $S$ and $S'$ are different equivalence classes. 
Let $e',f' \in S'$, \ie $e',f' \notin S$.
Assume that $e' \in E_{S,i}$ for an $i\in\{1,\ldots,\lenl(S)-1\}$, but $f' \notin E_{S,i}$. 
Following the notation on page~\pageref{def:path}, we denote the unique edges in $S$ that touch $E_{S,i}$ by $e_{S,i}$ and $e_{S,i+1}$ and their adjacent vertices in $E_{S,i}$ by $v_i$ and $w_i$. 
Removing $e_{S,i}$ and $e_{S,i+1}$ creates three connected components in $\ALG$. 
The first component contains $E_{S,0},\ldots,E_{S,i-1}$ and the edges $e_{S,1},\ldots,e_{S,i-1}$ and we name it $E_{S,<}$. 
The second component is $E_{S,i}$. 
The third component contains $E_{S,i+1},\ldots,E_{S,\lenl(S)}$ and the edges $e_{S,i+1},\ldots,e_{S,\lenl(S)}$ and we name it $E_{S,>}$. 
Since $f' \notin E_{S,i}$, and $f'\notin S$, we know that $f' \in E_{S,<} \cup E_{S,>}$.
 
Notice that $e'$ and $f'$ are compatible, $e_{S,i}$ and $e_{S,i+1}$ are compatible, but $e'$ is not compatible to either $e_{S,i}$ or $e_{S,i+1}$. 
If we remove $e'$ in addition to $e_{S,i}$ and $e_{S,i+1}$, then $E_{S,i}$ decomposes into two connected components, $E_{S,i}^1$ and $E_{S,i}^2$. 
We assume without loss of generality that $v_{S,i} \in E_{S,i}^1$. 
The vertex $w_{S,i}$ can either be separated from $v_{S,i}$ when removing $e'$, \ie $w_{S,i} \in E_{S,i}^2$, or not, \ie $w_{S,i} \in E_{S,i}^1$. 
The two cases are depicted in Figure~\ref{fig:s-in-asi-or-not}.

If $w_{S,i} \in  E_{S,i}^2$, then the edges $e_{S,i}$, $e'$ and $e_{S,i+1}$ lie on a path. 
No $\REFF$-edge can leave $E_{S,i}=E_{S,i}^1\cup E_{S,i}^2$ because $e_{S,i}$ and $e_{S,i+1}$ are compatible. 
Thus, any $\REFF$-edge leaving $E_{S,i}^1$ has to end in $E_{S,i}^2$. 
On the other hand, consider removing $e'$ and $f'$ from $\ALG$, which creates three components $T_{e'}, T_{e',f'}, T_{f'}$. 
As both $E_{S,i}^1$ and $E_{S,i}^2$ are incident to $e'$, they are contained in two different components, $T_{e'}$ and $T_{e',f'}$.
Since no edge can leave $T_{e',f'}$ because $e'$ and $f'$ are compatible, no edge can cross between $E_{S,i}^1$ and $E_{S,i}^2$.
Thus, no $\REFF$-edges leave $E_{S,i}^1$ or $E_{S,i}^2$, which means that $e'$ is compatible to $e_{S,i}$ and $e_{S,i+1}$.

If $w_{S,i} \in E_{S,i}^1$, the argument is similar. Still, no $\REFF$-edge can leave $E_{S,i}=E_{S,i}^1\cup E_{S,i}^2$ because $e_{S,i}$ and $e_{S,i+1}$ are compatible. 
Also, $E_{S,i}^1 \subseteq T_{e',f'}$ and $E_{S,i}^2 \subseteq T_{e'}$ still are in different connected components when $e'$ and $f'$ are removed. 
Thus, no $\REFF$-edge can connect them, and thus $e'$ is compatible to $e_{S,i}$ and $e_{S,i+1}$.

Both cases end in a contradiction since $e'$ cannot be compatible to any edge in $S$ because $S$ and $S'$ are different equivalence classes. 
Thus, $e' \in E_{S,i}$ implies that $S \subseteq E_{S,i}$. 
Only if no edge in $S'$ lies in any $E_{S,i}$ for $i\in\{1,\ldots,\lenl(S)-1\}$, then it is possible that edges of $S'$ lie in $E_{S,0}$ or $E_{S,\lenl(S)}$. 
However, then $S' \subseteq (E_{S,0} \cup E_{S,\lenl(S)})$.
\end{proof}

We set $\mu(S,i) = \maxit(E_{S,i})$ for all $S \in \eqS\setminus\{S_u\}$ and $i \in \innS$. 

\begin{lemma}\label{lem:mu-is-injective}
Let $S, S' \in \eqS \backslash\{S_u\}$, $i \in \innS$, $i' \in \innSp$. Then 
\[
\mu(S,i) = \mu(S',i') \Rightarrow S=S' \text{ and }i=i',
\]
i.e., the mapping $\mu$ is injective.
\end{lemma}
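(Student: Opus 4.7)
The plan is to argue by contradiction: assume $\mu(S,i)=\mu(S',i')=k$ but $(S,i)\neq(S',i')$. The case $S=S'$ is immediate, since $E_{S,i}$ and $E_{S,i'}$ are distinct (hence vertex-disjoint) components of $\ALG\setminus S$, yet both would need to contain $u_k$ and $\bar u_k$. So assume $S\neq S'$ and apply Lemma~\ref{lem:s-in-asi-or-not} in both directions to classify the position of $S$ relative to the $S'$-path and of $S'$ relative to the $S$-path: each is either contained in the outer components of the other's path or in a single inner component (say $E_{S',j'}$ for $S$ and $E_{S,h}$ for $S'$).

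Three tools drive the case analysis. \emph{(A) Attachment edges.} If $V[E_{S,i}]=V[E_{S',i'}]$, then as the same connected subtree of $\ALG$ they have a unique set of attachment edges, which must simultaneously equal $\{e_{S,i},e_{S,i+1}\}\subseteq S$ and $\{e_{S',i'},e_{S',i'+1}\}\subseteq S'$, contradicting $S\cap S'=\emptyset$. \emph{(B) Boundary vertex.} If $S'\subseteq E_{S,i}$, every $S'$-edge has both endpoints in $V[E_{S,i}]$. The hypothesis $i'\in\innSp$ supplies two witnesses $\frakm(S'),\frakn(S')\neq i'$ with $\maxit$ strictly greater than $k$; at least one of them, call it $l'$, differs from the special index $j'$, so $E_{S',l'}$ contains no $S$-edge and therefore lies in some single component $E_{S,l^\star}$ of $\ALG\setminus S$. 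If $l^\star=i$ then $\maxit(E_{S,i})\geq\maxit(E_{S',l'})>k$, contradicting $\mu(S,i)=k$; otherwise the boundary vertex $v_{S',l'}$ would lie simultaneously in $V[E_{S,i}]$ (being an endpoint of $e_{S',l'}\in S'\subseteq E_{S,i}$) and in $V[E_{S,l^\star}]$, impossible by vertex-disjointness. \emph{(C) Extension.} If $S\subseteq E_{S',i'}$, then $V[S]\subseteq V[E_{S',i'}]$, and by maximality of $E_{S',i'}$ as a connected component of $\ALG\setminus S'$ every inner $V[E_{S,l}]$ that contains no $S'$-edge is absorbed into $V[E_{S',i'}]$, forcing $\maxit(E_{S,l})\leq k$ for those $l$.

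Stepping through the case split on the positions of $S$ and $S'$: whenever neither $E_{S,i}$ nor $E_{S',i'}$ contains an edge of the other's class, one has $V[E_{S,i}]=V[E_{S',i'}]$ and Tool (A) applies; whenever $S'\subseteq E_{S,i}$, Tool (B) applies (using $i'\in\innSp$ to secure the witness $l'$); whenever $S\subseteq E_{S',i'}$, Tool (C) absorbs every inner $V[E_{S,l}]$ into $V[E_{S',i'}]$, so each witness component $E_{S',\frakm(S')}$ or $E_{S',\frakn(S')}$ (which has $\maxit>k$ and is disjoint from $V[E_{S',i'}]$) cannot sit inside any inner $V[E_{S,l}]$ and is therefore forced into an outer component of $\ALG\setminus S$; the boundary $S'$-edges bordering such a witness must then lie in the same outer component, and combined with the fact that the $S$-path endpoints $w_{S,0}, v_{S,\lenl(S)}$ already sit inside $V[E_{S',i'}]$, we obtain a vertex-incidence contradiction.

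The main obstacle is the last sub-case, $S\subseteq E_{S',i'}$ with $S'\subseteq E_{S,0}\cup E_{S,\lenl(S)}$: after Tool (C) fully absorbs the inner $V[E_{S,l}]$'s, both witnesses $E_{S',\frakm(S')}$ and $E_{S',\frakn(S')}$ must lie in only two outer components $V[E_{S,0}]$ and $V[E_{S,\lenl(S)}]$, and one must carefully track how the $S'$-edges bordering these witnesses co-habit these outer components with the bits of $E_{S',i'}$ that extend into them, using vertex-disjointness of inner $S'$-components to close the contradiction.
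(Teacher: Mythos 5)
Your reduction to three cases and your Tools (A) and (B) are sound; Tool (B) is exactly the argument the paper runs for the case $S' \subseteq E_{S,i}$ (pick $j \in \{\frakm(S'),\frakn(S')\}$ with $E_{S',j} \cap S = \emptyset$, deduce $E_{S',j} \subseteq E_{S,i}$, and conclude $\mu(S,i) \geq \maxit(E_{S',j}) > \mu(S',i')$). The gap is in the remaining case $S \subseteq E_{S',i'}$ --- precisely the case the paper compresses into its ``w.l.o.g.''. Your treatment of it does not go through. First, the claim that the witnesses $E_{S',\frakm(S')}$ and $E_{S',\frakn(S')}$ are ``forced into an outer component of $\ALG\setminus S$'' is false: by Lemma~\ref{lem:s-in-asi-or-not}, $S'$ may sit inside an \emph{inner} component $E_{S,h}$ with $h \neq i$, and then both witnesses (attached to $S'$-edges and containing no $S$-edge) land inside $E_{S,h}$, an inner component whose $\maxit$ is completely unconstrained because $h \neq i$; Tool (C) says nothing here, since $E_{S,h}$ is exactly the one inner component it does not absorb into $E_{S',i'}$. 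Second, even in the sub-case you isolate ($S' \subseteq E_{S,0}\cup E_{S,\lenl(S)}$), no contradiction is actually exhibited: an outer component of $\ALG\setminus S$ can perfectly well contain a witness, its two bordering $S'$-edges, \emph{and} vertices of $E_{S',i'}$ such as $w_{S,0}$ simultaneously, and outer components carry no $\maxit$ constraint that could be violated. The promised ``vertex-incidence contradiction'' is never produced, and I do not see one to produce.

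The root cause is that your argument only ever exploits the restriction on $i'$ (through $\frakm(S'),\frakn(S')$) and never uses any property of the index $i$. But the case $S \subseteq E_{S',i'}$ is the mirror image of Tool (B): the way to close it is to produce a high-$\maxit$ inner component \emph{of $S$} sitting inside $E_{S',i'}$, i.e., to pick $j \in \{\frakm(S),\frakn(S)\}$ with $E_{S,j}\cap S' = \emptyset$, deduce $E_{S,j}\subseteq E_{S',i'}$, and conclude $\mu(S',i') \geq \maxit(E_{S,j}) > \maxit(E_{S,i})$ --- which requires knowing that $i$ is likewise not among the two largest indices of $S$. That symmetric use of the hypothesis is what the paper's ``w.l.o.g.\ assume the latter'' silently invokes, and it is the ingredient missing from your write-up; the case cannot be repaired by more careful bookkeeping of which vertices co-habit which outer components.
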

\begin{proof}
Let $\mu(S,i) = \mu(S',i')$ and denote by $u^\ast :=  u_{\mu(S,i)} = u_{\maxit(E_{S,i})}$ and $\bar{u}^* := \bar{u}_{\mu(S,i)} = \bar{u}_{\maxit(E_{S,i})}$ the corresponding terminal pair. 
By the definition of $\maxit$, it follows that $u^\ast, \bar{u}^\ast \in V_{S,i} \cap V_{S',i'}$ and in particular, we know that $V_{S,i} \cap V_{S',i'} \not= \emptyset$.  
If $S=S'$, then either $i=i'$ (and we are done) or $i\not= i'$ implies that $V_{S,i} \cap V_{S',i'} = \emptyset$ which is a contradiction. 
Thus, assume in the following that $S\not=S'$.
  
Let $P$ be the unique $u^*$-$v_{S, i}$-path in $E_{S,i}$. 
If $P \cap S' = \emptyset$, then $P \subseteq E_{S', i'}$ because $u^* \in V_{S',i'}$. 
Thus, $v_{S,i} \in V_{S',i'}$, which also implies that $e_{S,i} \in E_{S', i'}$. By Lemma~\ref{lem:s-in-asi-or-not}, we get $S \subseteq E_{S', i'}$. 
On the other hand, if $P \cap S' \neq \emptyset$, then $S' \subseteq E_{S,i}$ by Lemma~\ref{lem:s-in-asi-or-not}. 
Thus $S \subseteq E_{S', i'}$ or $S' \subseteq E_{S,i}$.
W.l.o.g., we assume the latter. 
Now let $j \in \{\frakm(S'), \frakn(S')\}$ be such that $E_{S', j} \cap S = \emptyset$ (note that $j$ exists due to Lemma~\ref{lem:s-in-asi-or-not}). 
Because $e_{S',j}, e_{S',j+1} \in S' \subseteq E_{S,i}$ and $E_{S', j} \cap S = \emptyset$, we deduce that $E_{S',j} \subseteq E_{S,i}$. 
This implies that $\mu(S, i) \geq \maxit(E_{S',j}) > \mu(S', i')$, again a contradiction.
\end{proof}

We can now bound the third term of our objective function (see page~\pageref{eq:splitted-objective-function}).
\begin{lemma}\label{lem:bound-width-term}
  As previously, let $\eqS$ denote the set of all equivalence classes of the
  compatibility relation $\sim_{cp}$, let $\ALG$ be a feasible tree and 
  denote by $E_{S,i}$ the $i$-th inner connected component on the path 
  that contains $S \in \eqS\setminus\{S_u\}$.
  Let $\REFF$ be a feasible Steiner Forest with $cc(\REFF)$ connected 
  components $F_1,\dots,F_{cc(\REFF)}$.
  Then,
  \begin{align*}
    \sum_{S\in\eqS\setminus\{S_u\}} \sum_{i \in \innS} w(E_{S,i}) \leq 
    \sum_{i=1}^{cc(\REFF)} w(F_i)
  \end{align*}
\end{lemma}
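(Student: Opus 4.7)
For each pair $(S, i)$ with $S \in \eqS \setminus \{S_u\}$, $i \in \innS$, and $w(E_{S,i}) > 0$, let $\pi(S, i)$ denote the index of the $\REFF$-component containing the maximum pair $(u_{\mu(S,i)}, \bar{u}_{\mu(S,i)})$ that realizes $w(E_{S,i})$. By Lemma~\ref{lemsum:property:no-F-edges-between-inner-components}, no edge of $\REFF$ leaves an inner component, so $V[F_{\pi(S,i)}] \subseteq V[E_{S,i}]$; in particular $w(F_{\pi(S,i)}) \geq d(u_{\mu(S,i)}, \bar{u}_{\mu(S,i)}) = w(E_{S,i})$. The plan is to show that the map $\pi$ is injective, which immediately gives
\[
  \sum_{S \in \eqS \setminus \{S_u\}} \sum_{i \in \innS} w(E_{S, i}) \;\leq\; \sum_{(S,i)} w(F_{\pi(S, i)}) \;\leq\; \sum_{j = 1}^{cc(\REFF)} w(F_j).
\]

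For injectivity, I will assume that $\pi(S,i) = \pi(S',i') = j$ for two distinct pairs and derive a contradiction. The hypothesis yields $\emptyset \neq V[F_j] \subseteq V[E_{S,i}] \cap V[E_{S',i'}]$. If $S = S'$ this contradicts Observation~\ref{obs:AsiAsia-disjoint} directly. For $S \neq S'$, the maximum pair of $E_{S,i}$ lies in $V[F_j] \subseteq V[E_{S',i'}]$, which forces $\mu(S',i') \geq \mu(S,i)$; symmetrically $\mu(S,i) \geq \mu(S',i')$, so $\mu(S,i) = \mu(S',i')$. I then invoke Lemma~\ref{lem:mu-is-injective}, after possibly swapping the roles of $(S,i)$ and $(S',i')$ so that the second index lies in $\innSp$, to conclude $(S,i) = (S',i')$, contradicting the choice.

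The technical obstacle I anticipate is the corner case where $i \in \{\frakm(S), \frakn(S)\}$ and $i' \in \{\frakm(S'), \frakn(S')\}$ simultaneously, so that Lemma~\ref{lem:mu-is-injective} does not apply on either side after swapping. To dispose of it, I intend to revisit the argument inside the proof of Lemma~\ref{lem:mu-is-injective}: by Lemma~\ref{lem:s-in-asi-or-not} one may always choose an auxiliary $j^\star \in \{\frakm(S'), \frakn(S')\} \setminus \{i'\}$ with $E_{S', j^\star} \cap S = \emptyset$, which gives $E_{S', j^\star} \subseteq E_{S,i}$ and hence $\mu(S,i) \geq \maxit(E_{S', j^\star})$; comparing this with $\mu(S', i') = \maxit(E_{S', i'})$ via the ordering of $\frakm(S')$ and $\frakn(S')$ produces the strict inequality needed, except in one residual configuration that is handled by reversing the roles of $S$ and $S'$.
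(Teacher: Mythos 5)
Your proof is essentially the paper's: the paper defines the same map $\chi(S,i)$ (the index of the $\REFF$-component containing the widest pair of $E_{S,i}$), derives $V[F_{\chi(S,i)}]\subseteq V_{S,i}\cap V_{S',i'}$ from $\delta_{\REFF}(V_{S,i})=\delta_{\REFF}(V_{S',i'})=\emptyset$ exactly as you do, concludes $\mu(S,i)=\mu(S',i')$, and then invokes Lemma~\ref{lem:mu-is-injective}. The corner case you flag --- both indices lying in $\{\frakm(\cdot),\frakn(\cdot)\}$ of their respective classes --- does fall outside the hypotheses of Lemma~\ref{lem:mu-is-injective}, but the paper's own proof does not treat it either (it silently restricts the injectivity check to $i'\in\innSp$), so your proposal is faithful to, and no less complete than, the published argument.
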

\begin{proof}
  As previously, we set $\mu(S,i) = \maxit(E_{S,i})$ for all $S \in \eqS\setminus\{S_u\}$ and $i \in \innS$.
  We also recall that by definition, $w(E_{S,i}) = d_G(u_{\maxit(S,i)}, \bar{u}_{\maxit(S,i)})
  = d_G(u_{\mu(S,i)}, \bar{u}_{\mu(S,i)})$
  for all $S \in \eqS\setminus\{S_u\}$ and all $i\in \innS$.
  We thus have
  \begin{align}\label{eq:lem-width-bound-1}
       \sum_{S\in\eqS\setminus\{S_u\}} \sum_{i \in \innS} w(E_{S,i})
    =& \sum_{S\in\eqS\setminus\{S_u\}} \sum_{i \in \innS} d_G(u_{\mu(S,i)}, \bar{u}_{\mu(S,i)}).
  \end{align}
  Let $\chi(S,i)$ denote the index of the connected component $F_{\chi(S,i)}$  
  containing the terminal pair $u_{\maxit(S,i)}, \bar{u}_{\maxit(S,i)}$ in $\REFF$. 
  We claim that $\chi$ is injective. To see this, consider $S,S' \in \eqS$ and $i\in\innS$, $i'\in\innSp$ with $\chi(S,i) = \chi(S',i')$. 
  Since $\delta_{\REFF}(V_{S,i}) = \delta_{\REFF}(V_{S',i'}) = \emptyset$ and $F_{\chi(S,i)}$ is connected, we deduce that $V[F_{\chi(S,i)}] \subseteq V_{S,i} \cap V_{S',i'}$. This implies that $u_{\mu(S, i)}, \bar{u}_{\mu(S, i)} \in V_{S', i'}$ and that $u_{\mu(S', i')}, \bar{u}_{\mu(S', i')} \in V_{S, i}$. Hence $\mu(S, i) = \mu(S', i')$, which implies $S = S'$ and $i = i'$ by Lemma~~\ref{lem:mu-is-injective}.

  Since $d_G(u_{\mu(S,i}, \bar{u}_{\mu(S,i)}) \leq w(F_{\chi(S,i)})$, we can now 
  continue \eqref{eq:lem-width-bound-1} to see that
  \begin{align*}
          \sum_{S\in\eqS\setminus\{S_u\}} \sum_{i \in \innS} d_G(u_{\mu(S,i)}, \bar{u}_{\mu(S,i)})
    &\leq \sum_{S\in\eqS\setminus\{S_u\}} \sum_{i \in \innS} w(F_{\chi(S,i)})
    &\leq \sum_{i=1}^{cc(\REFF)} w(F_i).
  \end{align*}
  Here, the last inequality follows from our argument that $\chi$ is injective.
\end{proof}

\subsubsection{Wrapping Things Up}

We can now prove the main result of this section. 
\begin{theorem}\label{thm:approx-guarantee-tree-case}
Let $G=(V,E)$ be a graph, let $d_e$ be the cost of edge $e \in E$ and let $\terms \subseteq V\times V$ be a terminal set. 
Let $\ALG, \REFF \subseteq E$ be two feasible solutions for $(G,d,\terms)$ with $V[\ALG]=V[\REFF]$.
Furthermore, suppose that $\ALG$ is a tree and that $\ALG$ is 
\edgesetswap optimal with respect to $\REFF$ and $\phi$. 
Then,
 \begin{align*}
   \phi(\ALG) = \sum_{e\in\ALG} d_e + w(\ALG) & \le 10.5 \cdot d(\REFF) + w(\REFF) + \sum_{e \in S_u} d_e + w(\ALG). 
 \end{align*}
In particular, $d(\ALG) \le 10.5 \cdot d(\REFF) + w(\REFF) + \sum_{e \in S_u} d_e$.
\end{theorem}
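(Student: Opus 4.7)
The plan is to piece together the three-term decomposition of $\phi(\ALG)$ that the paper explicitly wrote out on page~\pageref{eq:splitted-objective-function} and then apply the two bounds established in \S\ref{sec:tree:phi:middleterm} and \S\ref{sec:tree:phi:lastterm}. Concretely, I would start by writing
\[
\phi(\ALG) = w(\ALG) + \sum_{e \in S_u} d_e + \sum_{S \in \eqS\setminus\{S_u\}} \sum_{e \in S} d_e,
\]
using that $\{S_u\} \cup (\eqS\setminus\{S_u\})$ is a partition of $\ALG$ by Lemma~\ref{lem:compatibility-is-transitive}. Then, for each $S \in \eqS\setminus\{S_u\}$, I would add and subtract $\sum_{i \in \innS} w(E_{S,i})$ inside the inner sum, producing the same three-term decomposition as in \S\ref{sec-treecase-phi}: the ``first term'' $w(\ALG) + \sum_{e \in S_u} d_e$; the ``middle term'' $\sum_{S \in \eqS\setminus\{S_u\}}\bigl(\sum_{i=1}^{\lenl(S)} d_{e_{S,i}} - \sum_{i \in \innS} w(E_{S,i})\bigr)$; and the ``third term'' $\sum_{S \in \eqS\setminus\{S_u\}} \sum_{i \in \innS} w(E_{S,i})$.

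Next, I would invoke the two main bounds from the earlier subsections. The middle term is bounded by $10.5 \cdot d(\REFF)$ by Corollary~\ref{cor:bound-weird-term}, whose hypothesis is exactly \edgesetswap-optimality of $\ALG$ with respect to $\REFF$ and $\phi$ (which is what we are assuming). The third term is bounded by $\sum_{i=1}^{cc(\REFF)} w(F_i) = w(\REFF)$ by Lemma~\ref{lem:bound-width-term}, which requires no additional hypothesis beyond feasibility of $\ALG$ and $\REFF$ and $V[\ALG]=V[\REFF]$. The first term is left untouched, since the theorem statement keeps $\sum_{e \in S_u} d_e + w(\ALG)$ on the right-hand side (i.e., we are not assuming removal-swap optimality, so Lemma~\ref{lem:bound-for-rem-optimal} is not needed here).

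Assembling the three estimates gives
\[
\phi(\ALG) \le \bigl(w(\ALG) + \sum_{e \in S_u} d_e\bigr) + 10.5\, d(\REFF) + w(\REFF),
\]
which is the first displayed inequality in the theorem. For the ``In particular'' part, I would use $\phi(\ALG) = d(\ALG) + w(\ALG)$ and subtract $w(\ALG)$ from both sides to obtain $d(\ALG) \le 10.5\, d(\REFF) + w(\REFF) + \sum_{e \in S_u} d_e$. There is no real obstacle here since the heavy lifting is done in Corollary~\ref{cor:bound-weird-term} and Lemma~\ref{lem:bound-width-term}; the only step that requires care is verifying that the algebraic rearrangement into the three terms is valid, which amounts to observing that Observation~\ref{obs:AsiAsia-disjoint} and Lemma~\ref{lemsum:property:compatible-path} justify the identification of each class $S$ with the sequence of edges $e_{S,1},\ldots,e_{S,\lenl(S)}$ along a path.
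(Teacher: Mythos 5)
Your proposal is correct and is essentially identical to the paper's own proof: the paper likewise rewrites $\phi(\ALG)$ into the same three-term decomposition from \S\ref{sec-treecase-phi} and bounds the middle and third terms by Corollary~\ref{cor:bound-weird-term} and Lemma~\ref{lem:bound-width-term}, respectively, leaving the first term on the right-hand side. Your additional remarks---that \edgesetswap-optimality suffices for Corollary~\ref{cor:bound-weird-term} (since \edgeedgeswap{}s are special cases) and that the rearrangement is justified by Observation~\ref{obs:AsiAsia-disjoint} and Lemma~\ref{lemsum:property:compatible-path}---are accurate and only make explicit what the paper leaves implicit.
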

\begin{proof}
Rewrite $\phi(\ALG)$ as on page~\pageref{eq:splitted-objective-function} to 
  \begin{align*}
  \phi(\ALG) = w(\ALG) + \sum\limits_{e\in S_u} d_e
             + \underbrace{\sum_{S\in\eqS\backslash\{S_u\}} \big( \sum_{i=1}^{\lenl(S)} d_{e_i}
               - \sum_{i\in \innS} w(E_{S,i})\big)}_{%
               \leq 10.5 \cdot d(\REFF)\ \text{by Corollary~\ref{cor:bound-weird-term}}
             } 
             + \underbrace{\sum_{S\in\eqS\backslash\{S_u\}} \sum_{i \in \innS} w(E_{S,i})}_{%
               \leq w(\REFF)\ \text{by Lemma~\ref{lem:bound-width-term}}
             }.
  \end{align*}
	This proves the theorem.
\end{proof}
Additionally applying Lemma~\ref{lem:bound-for-rem-optimal} yields the following reformulation of Theorem~\ref{thm:approx-guarantee-tree-case}.
\begin{corollary}
  Let $G=(V,E)$ be a graph, let $d_e$ be the cost of edge $e \in E$ and 
  let $\terms \subseteq V\times V$ be a terminal set. 
  Let $\ALG, \REFF \subseteq E$ be two feasible solutions for $(G,d,\terms)$ with $V[\ALG]=V[\REFF]$.
  Furthermore, suppose that $\ALG$ is a tree and that $\ALG$ is optimal 
  with respect to $\REFF$ and $\phi$ under \edgeedge, \edgeset and removing swaps.
  Then,
	\[\pushQED{\qed}
	\phi(\ALG) \le 10.5 \cdot d(\REFF) + w(\REFF) + \sum_{e \in S_u} d_e + w(\ALG)
		\le 10.5 \cdot d(\REFF) + 2 w(\REFF) \le 10.5 \cdot \phi(\REFF) \qedhere
	\popQED\]
\end{corollary}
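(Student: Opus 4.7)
The plan is to observe that this corollary is essentially a bookkeeping combination of the preceding theorem and lemma, so I would not expect any new ideas beyond combining previously established bounds.

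First I would invoke Theorem~\ref{thm:approx-guarantee-tree-case} directly: its hypotheses are met since \edgesetswap-optimality (which subsumes \edgeedgeswap-optimality) holds and $V[\ALG] = V[\REFF]$. This gives the first inequality
\[
\phi(\ALG) \le 10.5 \cdot d(\REFF) + w(\REFF) + \sum_{e \in S_u} d_e + w(\ALG)
\]
for free.

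Next, for the second inequality I would use the hypothesis that $\ALG$ is optimal under removing swaps, which lets me apply Lemma~\ref{lem:bound-for-rem-optimal}. That lemma states $w(\ALG) + \sum_{e \in S_u} d_e \le \sum_{i=1}^{cc(\REFF)} w(F_i)$, and the right-hand side is exactly the total width $w(\REFF)$ by the definition of total width in Section~\ref{sec:preliminaries}. Substituting into the first inequality replaces $\sum_{e \in S_u} d_e + w(\ALG)$ by $w(\REFF)$, yielding
\[
\phi(\ALG) \le 10.5 \cdot d(\REFF) + 2 \, w(\REFF).
\]

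Finally, the third inequality is purely arithmetic: since $\phi(\REFF) = d(\REFF) + w(\REFF)$ and since $2 \le 10.5$, one gets $10.5 \, d(\REFF) + 2\, w(\REFF) \le 10.5 \, d(\REFF) + 10.5 \, w(\REFF) = 10.5 \, \phi(\REFF)$. There is no real obstacle here; the only thing to be careful about is to make sure the total width $w(\REFF) = \sum_i w(F_i)$ is indeed the quantity appearing in Lemma~\ref{lem:bound-for-rem-optimal}, which follows directly from its definition.
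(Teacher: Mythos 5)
Your proposal is correct and matches the paper's own argument exactly: the corollary is stated there as an immediate consequence of Theorem~\ref{thm:approx-guarantee-tree-case} combined with Lemma~\ref{lem:bound-for-rem-optimal}, and your identification of $\sum_{i=1}^{cc(\REFF)} w(F_i)$ with the total width $w(\REFF)$ plus the final arithmetic step are all that is needed.
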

If we want to bound the original objective function of the Steiner Forest problem, we do not need removing swaps.

\treecaseresult*
\begin{proof}
  Let $S_{is}$ be the set of safe but inessential edges. 
  We have $d(\ALG) \leq 10.5 \cdot d(\REFF) + w(\REFF) + \sum_{e \in S_u} d_e$ by Theorem~\ref{thm:approx-guarantee-tree-case}. That implies 
  \begin{align*}
    \sum_{e \in \ALG'} d_e = d(\ALG) - \sum_{e\in S_u} d_e - \sum_{e \in S_{is}} d_e
		\le 10.5 \cdot d(\REFF) + w(\REFF) - \sum_{e \in S_{is}} d_e.\\[-3\baselineskip]
  \end{align*}
\end{proof}

\section{Proofs: Bounds when the Local Optimum is a Forest}
\label{sec:app-forest}

The aim of this section is to transform (a.k.a.\ ``project'') a pair
$(\ALG, \REFF)$ of arbitrary solutions into a pair $(\ALG, \REFF')$ of
solutions to which the results from the previous section apply. I.e.,
each connected component of $\REFF'$ is contained within a connected
component of $\ALG$.
The main lemma is
\begin{restatable*}{lemma}{reductionforesttotree}
\label{lem:main-forest:reduction-to-tree}
  Let $G=(V,E)$ be a complete graph, let $d: E \to \Rp$ be a metric that assigns a cost $d_e$ to every edge $e \in E$ and 
  let $\terms \subseteq V\times V$ be a set of terminal pairs.
  Let $\ALG, \REFF \subseteq E$ be two feasible Steiner Forest solutions for $(G,\terms)$.
  Furthermore, suppose that $\ALG$ is \edgeedge, \edgeset and \pathset swap-optimal with respect to $E$ and $\phi$, that $\ALG$ is $c$-approximate connecting move optimal and that $\ALG$ only uses edges between terminals. Then there exists a feasible solution $\REFF'$ with $d(\REFF') \le 2(1+c)\cdot d(\REFF)$ that satisfies $\REFF'={\REFF'}_{\circlearrowright}$ such that $\ALG$ is \edgeedge and \edgeset swap-optimal with respect to $\REFF'$.
	\end{restatable*}
	
We use the notation $\REFF_{\circlearrowright}$ to denote the set of edges in $\REFF$ that go within components of $\ALG$ and $\REFF_{\leftrightarrow}$ to denote the set of all edges between different components, and $c$ is the approximation guarantee of the approximate connecting moves. Formally, $c$-approximate tree move optimality is defined as follows:
\begin{definition}
A \emph{$c$-approximate connecting move} for some constant $c \geq 1$ is a connecting move $conn(T)$ applied to the current solution $\ALG$ using a tree $T$ in $G_{\ALG}^{\text{all}}$ such that $c \cdot d(T) \le \bar{w}(\ALG) - \bar{w}(\ALG \cup T)$. A solution is \emph{$c$-approximate connecting move optimal}, if there are no $c$-approximate connecting moves.
\end{definition}

Lemma~\ref{lem:main-forest:reduction-to-tree} shows that for any solution $\REFF$, we can find a solution $\REFF'$ with 
\[
d(\REFF') \le 2(1+c)\cdot d(\REFF)
\]
which does not contain edges between different components of $\ALG$, \ie $\REFF'_{\circlearrowright}=\REFF'$.
With Section~\ref{sec:wkmst}, we know that $c$ is at most $2$,
\msnote{Next time that $c=5$ is used}
\ie we can find $\REFF'$ with $d(\REFF') \le 6 \cdot d(\REFF)$.
Every connected component $A_j$ of $\ALG$ can now be treated separately by using Corollary~\ref{treecaseresult} on $A_j$ and the part of $\REFF'$ that falls into $A_j$. By combining the conclusions for all connected components, we get that \msnote{Again, $138$ would become $69$}
\[
d(\ALG') \le 11.5 \cdot d(\REFF') \le 23(1+c)\cdot d(\REFF) \le 69 \cdot d(\REFF)
\]
for any feasible solution $\REFF$. This proves the main theorem.

\maintheorem*

\paragraph*{Proof outline}
The forest case depends on \pathsetswap{}s and connecting moves. Exploiting connecting move-optimality is the main effort of the section, while \pathsetswap-optimality is only used to handle one specific situation.

The goal is to replace $\REFF_{\leftrightarrow}$ by edges that go within components of $\ALG$.

We first convert $\REFF$ into a collection of disjoint cycles at the expense of a factor of 2 in the edge costs. 
Let $F_i$ be one of the cycles. We want to replace $(F_i)_{\leftrightarrow}$. 
To do so, we look at $F_i$ in $G_\ALG$ -- the graph where the edges in $\REFF_{\circlearrowright}$ are contracted and loops are removed. In this graph $G_\ALG$, the set $(F_i)_{\leftrightarrow}$ is a circuit (\ie a possibly non-simple cycle). 
In a first step, we use Algorithm~\ref{alg:charging} to cope with the case that $(F_i)_{\leftrightarrow}$ has a special structure that we call \emph{minimally guarded}.
The second step inductively ensures that this structure is present. 

An example run of Algorithm~\ref{alg:charging} is visualized in Figure~\ref{fig:charging-algo-example}. 
The algorithm partitions $(F_i)_{\leftrightarrow}$ into trees in $G_\ALG$. These trees define connecting moves, so (approximate) connecting move-optimality gives us a lower bound on the total edge weight of each tree.

Let $n_j$ be the number of times that $F_i$ passes through $A_j$, the $j$-th connected component of $\ALG$. 
Then the lower bound that we get is 
\[
d((F_i)_{\leftrightarrow}) \ge \left(\sum_{j=1}^p n_j w(A_j) \right) - n_{j_m} w(A_{j_m}).
\]
Here, $j_m$ is the component with the largest width among all components that $F_i$ touches. 
In a minimally guarded circuit, this component is only visited once. 
The lower bound (for minimally guarded circuits) results from Lemma~\ref{lem:treedecomp-to-lowerbound} and Corollary~\ref{lem:minimallyguarded} (notice that $n_j$ is defined differently in the actual proof, but we need less detail here). The part between the two statements establishes invariants of Algorithm~\ref{alg:charging} that we need to show that it computes trees with the correct properties.

The lower bound means that we can do the following. Assume that we delete $(F_i)_{\leftrightarrow}$. Now some vertices in $\ALG$ do no longer have adjacent $\REFF$-edges which is a problem for applying Corollary~\ref{treecaseresult}. 
We fix this by inserting a direct connection to their terminal partner, which has to be in the same component of $\ALG$. 
In order to keep the modified solution feasible, we insert the new connections a bit differently, but with the same result.
This connection is paid for by $(F_i)_{\leftrightarrow}$ which -- due to our construction -- can give each of these vertices a budget equivalent to the width of its component.
This enables us to use the results from the tree case. 

This argument does not work for the vertices in the largest-width component.  If $F_i$ is minimally guarded, it visits the largest width component exactly once and there are exactly two problematic vertices.
To reconnect these vertices, we use \pathsetswap-optimality and charge $(F_i)_{\leftrightarrow}$ again to pay for directly connecting them (this argument comes later, in the proof of Lemma~\ref{lem:guardedcycles}, when Corollary~\ref{lem:minimallyguarded} is applied).

If otherwise $F_i$ is not minimally guarded, we need a second step. This is taken care of in Lemma~\ref{lem:guardedcycles} which 
extends Corollary~\ref{lem:minimallyguarded} to \emph{guarded} circuits that are not necessarily minimal. 
This is done by applying Corollary~\ref{lem:minimallyguarded} to subcircuits and removing these until the minimality criterion is met. The proof is by induction. Lemma~\ref{lem:guardedcycles} outputs a broken solution $\REFF'$ which is not feasible, but is equipped with the widths from the lower bound. Figures~\ref{fig:ex:1} up to \ref{fig:ex:last} show the recursive process for an example circuit and visualize the broken solution that comes out of it. 
The solution is then repaired in Lemma~\ref{lem:main-forest:reduction-to-tree}, giving the final reduction result.

\subsection{Details: Getting a Good Tree Packing}\label{sec:happytreepacking} 

While the previous sections operated on arbitrary graphs, we now consider the metric case of the Steiner forest problem. 
Consequently, we assume that $G=(V,E)$ is the complete graph on $V$ and that the cost $d_e = d_{vw}$ of each edge $e=\{v,w\} \in E$ is given by a metric $d: V\times V \to \Rp$. 
Together with a set of terminal pairs $\terms \subseteq V$, the graph $G$ and the metric $d$ define an instance of the metric Steiner Forest problem.

The more important change in our setting is, however, that we no longer assume that our feasible solution $\ALG \subseteq E$ is a tree; rather, $\ALG$ can be an arbitrary feasible forest in $G$. 
We write its connected components as $A_1,\dots,A_p \subseteq \ALG$, where the numbering is fixed and such that $w(A_1) \leq w(A_2) \leq \dots \leq w(A_p)$ holds.
As in the previous sections, we compare $\ALG$ to another solution $\REFF$. 

The connected components of $\ALG$ are trees and it is a natural idea to apply the results from the previous sections to these trees by considering each connected component of $\ALG$ individually. 
Morally, the main obstacle in doing so is the following: 
In the proof of Theorem~\ref{thm:four-approx-for-trees-hall}, we assume implicitly that no $\REFF$-edge crosses between connected components of $\ALG$.\footnote{More precisely, we need the slightly weaker condition that for each node $t \in V[\ALG]$, there is an $\REFF$-edge incident to $t$ that does not leave the connected component of $\ALG$ containing $t$.}
This is vacuously true in the case where $\ALG$ is a tree; however, if $\ALG$ is a forest, this assumption is not justified in general. 
In the following, our underlying idea is to replace $\REFF$-edges that cross between the components of $\ALG$ by edges that lie within the components of $\ALG$, thereby re-establishing the preconditions of Theorem~\ref{thm:four-approx-for-trees-hall}.
We show how to do this such that $\REFF$ stays feasible and such that its cost is increased by at most a constant factor.


\subsubsection{Notation}

We start with some normalizing assumptions on $\REFF$. As before, we
denote the connected components of $\REFF$ by $F_1,\dots,F_q \subseteq
\REFF$. First we can assume that each $F_i$ has no inessential edges.
Then, since we are in the metric case, we can convert each $F_i$ into a
simple cycle---this can be done with at most a factor $2$ loss in the
cost of $\REFF$ by taking an Euler tour and short-cutting over repeated
vertices and non-terminals.
This implies that that $V[\REFF]\subseteq V[\ALG]$, since now $\REFF$
only has terminals, which are all covered by $\ALG$.
Assume that $V[\ALG]$ only contains terminals. Then, $V[\REFF]$ and $V[\ALG]$ are equal.
Recall that $\terms$ is the set of terminal pairs, let $V_\terms$ be the set of all terminals.
Henceforth, we assume that $V= V[\REFF] = V[\ALG] = V_\terms$.

\begin{observation}\label{obs:f-assumptions}
Let $\REFF, \ALG$ be feasible solutions and assume that $V[\ALG]=V_\terms$.
Then there exists a solution $\REFF'$ of cost $2\cdot\sum_{e \in \REFF} d_e + \sum_{i=1}^q w(F_q)$ whose connected components are node disjoint cycles and which satisfies $V[\REFF'] = V[\ALG]= V_\terms$.
\end{observation}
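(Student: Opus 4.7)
The plan is to process each connected component $F_i$ of $\REFF$ separately and convert it into a simple cycle on its terminals via a standard doubling-and-shortcutting procedure. First I would preprocess $\REFF$ by removing all inessential edges; this does not violate feasibility and only decreases $d(\REFF)$, and it ensures that each remaining component $F_i$ is a tree. Since $\REFF$ is feasible and no inessential edges remain, every non-trivial terminal pair has both endpoints lying in some (common) $V[F_i]$.

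For each such tree $F_i$, I would double every edge to obtain an Eulerian multigraph of total length $2d(F_i)$, take an Eulerian circuit, and shortcut over non-terminal Steiner vertices and over repeated visits to terminals. Because $G$ is the complete graph and $d$ is a metric, the triangle inequality guarantees that shortcutting does not increase total length, so the resulting closed walk has cost at most $2d(F_i)$ and visits exactly the terminals in $V[F_i]\cap V_\terms$ once each. The shortcut cycle contains the same terminals as $F_i$ did, so every terminal pair that $F_i$ originally covered is still routed within a single connected component of the output.

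The main subtlety is producing an honest cycle rather than a degenerate walk when $F_i$ contains fewer than three distinct terminals. For instance, if $F_i$ is a single edge between a terminal pair $\{s,\bar s\}$, then $w(F_i)=d_{s\bar s}$ and the doubling-shortcutting step collapses back to a single edge; I would either allow a two-vertex degenerate cycle consisting of two parallel copies of $\{s,\bar s\}$ (of cost exactly $2d(F_i)$) or attach a single nearby terminal to form a triangle, paying at most an additional $w(F_i)$ by the triangle inequality. Similarly, any terminal in $V_\terms\setminus V[\REFF]$ must come from a trivial self-pair $\{v,v\}$ (otherwise its partner would force $v\in V[\REFF]$), and can be absorbed into a neighbouring cycle at cost covered by the same per-component width slack. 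Summing over components, $\sum_{i=1}^q\bigl(2d(F_i)+w(F_i)\bigr) = 2\sum_{e\in\REFF} d_e + \sum_{i=1}^q w(F_i)$, giving the claimed bound, while the construction ensures $V[\REFF']=V_\terms=V[\ALG]$.

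The hard part is not the Eulerian-shortcut argument, which is textbook, but rather the careful treatment of the small components and of the (trivially-paired) terminals not appearing in $\REFF$; this is precisely the reason why the statement contains the $\sum_i w(F_i)$ slack instead of a pure $2d(\REFF)$ bound.
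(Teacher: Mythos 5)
Your argument is essentially the paper's own (the paper only sketches it in the paragraph preceding the observation): drop inessential edges so each component of $\REFF$ is a tree, double its edges, take an Euler tour, and shortcut over non-terminals and repeated terminals, yielding node-disjoint cycles of total length at most $2\,d(\REFF)$, with the $\sum_i w(F_i)$ term serving as slack for degenerate components. One caveat: your two fallback patches are not actually justified as stated --- a ``nearby terminal'' need not lie within $w(F_i)$ of a two-terminal component, and an isolated terminal need not lie within the width slack of any neighbouring cycle --- but neither is needed, since the two-vertex degenerate cycle already fits the $2\,d(F_i)$ budget and, for non-trivial pairs $\{v,\bar v\}$ with $v\neq\bar v$, feasibility of $\REFF$ already forces every terminal into $V[\REFF]$.
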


Next, we define a convenient notation for the connected components of $\ALG$ and the $\REFF$-cycles that pass through them.
For each $v\in V$, we set $\xi(v) = j$ for the unique $j \in\{1,\dots,p\}$ that satisfies that $v \in A_j$, \ie $\xi(v)$ is the index of the connected component of $\ALG$ that contains $v$.
Using this notation, we define a graph $G_\ALG = G \contract \{A_1,\dots,A_p\} = (V_\ALG, E_\ALG)$ which results from contracting the connected components of $\ALG$ in $F$. We set
\begin{align*}  
  V_\ALG := \{1,\dots,p\}\quad\text{and}\quad E_\ALG:=\cset{e_f}{f=\{v,w\} \in \REFF, \xi(v) \not= \xi(w)}.
\end{align*}
It is important that this definition removes all loops induced by the contraction, but it retains possible parallel edges. 
In this way, the edges of $G_\ALG$ correspond to the edges of $\REFF$ that are crossing between the connected components of $\ALG$, while the nodes correspond to the components.
Notice that $G_\ALG$ can be seen as a subgraph of $G_\ALG^{\text{all}}$ defined in the preliminaries for connecting moves. Thus, every tree in $G_\ALG$ induces a connecting move.
We extend $d$ to $G_\ALG$ by setting $d_{e_f} = d_{f}$ for all $e_f \in E_\ALG$.
We also consider the graph $\hat G_\ALG$ on $V_\ALG$ that is the transitive closure of $G_\ALG$. For all pairs $j_1,j_2 \in V_\ALG$, it contains an additional edge $e'_{j_1j_2}$ whose weight $d_{e'_{j_1j_2}}$ is given by the length of a shortest $j_1$-$j_2$-path in $G_\ALG$.

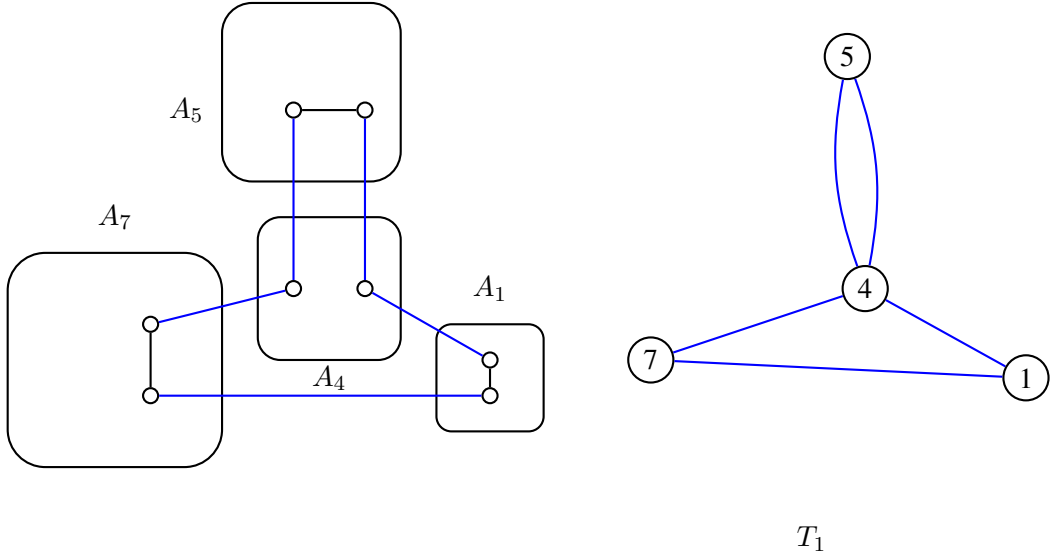
\begin{figure}
\begin{center}
\begin{tikzpicture}[scale=0.95,thick, mynode/.style={draw, circle,inner sep=0cm, minimum size=0.2cm},mybnode/.style={draw, circle,inner sep=0cm, minimum size=0.6cm}]
\draw[rounded corners=5mm] (0,0) rectangle (3,3); 

\draw[rounded corners=4mm] (3,4) rectangle (5.5,6.5); 
\draw[rounded corners=3mm] (3.5,1.5) rectangle (5.5,3.5); 

\draw[rounded corners=2mm] (6,0.5) rectangle (7.5,2); 
\node [mynode] (n1) at (2,2) {};
\node [mynode] (n2) at (4,5) {};
\node [mynode] (n3) at (4,2.5) {};
\node [mynode] (n4) at (5,2.5) {};
\node [mynode] (n5) at (5,5) {};
\node [mynode] (n6) at (6.75,1.5) {};
\node [mynode] (n7) at (6.75,1) {};
\node [mynode] (n8) at (2,1) {};
\draw [blue] (n1) -- (n3) -- (n2);
\draw (n2) -- (n5);
\draw [blue] (n5) -- (n4) -- (n6);
\draw (n6) -- (n7);
\draw [blue] (n7) -- (n8);
\draw (n8) -- (n1);
\node at (6.75,2.5) {$A_1$};
\node at (4.5,1.25) {$A_4$};
\node at (2.5,5) {$A_5$};
\node at (1.5,3.5) {$A_7$};

\begin{scope}[xshift=7.5cm]
\node [mybnode] (m1) at (6.75,1.25) {1};
\node [mybnode] (m2) at (4.5,2.5) {4};
\node [mybnode] (m3) at (4.25,5.75) {5};
\node [mybnode] (m4) at (1.5,1.5) {7};
\draw [blue] (m4) to (m2);
\draw [blue,bend left=15] (m3) to (m2);
\draw [blue,bend left=15] (m2) to (m3);
\draw [blue] (m2) to (m1) to (m4);
\node at (3.75,-1) {$T_1$};
\end{scope}
\end{tikzpicture}
\end{center}
\caption{A simple cycle $F_i$ in $G$ (depicted in blue and black) that induces the blue circuit $C_i$ in $G_\ALG$. Notice that $C_i$ is not simple.
The rounded rectangles represent connected components of $\ALG$, and their size indicates their width.
\label{ex:cyclesinG-GALG}
}
\end{figure}
\begin{figure}
\begin{center}
\begin{tikzpicture}[scale=0.6,thick, mynode/.style={draw, circle,inner sep=0cm, minimum size=0.1cm}]
\draw[rounded corners=5mm] (0,0) rectangle (3,3); 

\draw[rounded corners=4mm] (3,4) rectangle (5.5,6.5); 
\draw[rounded corners=3mm] (3.5,1.5) rectangle (5.5,3.5); 

\draw[rounded corners=2mm] (6,0.5) rectangle (7.5,2); 
\node [mynode] (n1) at (2,2) {};
\node [mynode] (n2) at (4,5) {};
\node [mynode] (n3) at (4,2.5) {};
\node [mynode] (n4) at (5,2.5) {};
\node [mynode] (n5) at (5,5) {};
\node [mynode] (n6) at (6.75,1.5) {};
\node [mynode] (n7) at (6.75,1) {};
\node [mynode] (n8) at (2,1) {};
\draw (n2) -- (n3);
\node at (6.75,2.5) {$A_1$};
\node at (4.5,1) {$A_4$};
\node at (2.5,5) {$A_5$};
\node at (1.5,3.5) {$A_7$};
\node at (3.75,-1) {$T_1$};
\end{tikzpicture}\qquad\quad
\begin{tikzpicture}[scale=0.6,thick, mynode/.style={draw, circle,inner sep=0cm, minimum size=0.1cm}]
\draw[rounded corners=5mm] (0,0) rectangle (3,3); 

\draw[rounded corners=4mm] (3,4) rectangle (5.5,6.5); 
\draw[rounded corners=3mm] (3.5,1.5) rectangle (5.5,3.5); 

\draw[rounded corners=2mm] (6,0.5) rectangle (7.5,2); 
\node [mynode] (n1) at (2,2) {};
\node [mynode] (n2) at (4,5) {};
\node [mynode] (n3) at (4,2.5) {};
\node [mynode] (n4) at (5,2.5) {};
\node [mynode] (n5) at (5,5) {};
\node [mynode] (n6) at (6.75,1.5) {};
\node [mynode] (n7) at (6.75,1) {};
\node [mynode] (n8) at (2,1) {};
\draw (n5) -- (n4) -- (n6);
\node at (7,2.5) {$A_1$};
\node at (6,2.75) {$A_4$};
\node at (2.5,5) {$A_5$};
\node at (1.5,3.5) {$A_7$};
\node at (3.75,-1) {$T_2$};
\end{tikzpicture}\qquad\quad
\begin{tikzpicture}[scale=0.55,thick, mynode/.style={draw, circle,inner sep=0cm, minimum size=0.1cm}]
\draw[rounded corners=5mm] (0,0) rectangle (3,3); 

\draw[rounded corners=4mm] (3,4) rectangle (5.5,6.5); 
\draw[rounded corners=3mm] (3.5,1.5) rectangle (5.5,3.5); 

\draw[rounded corners=2mm] (6,0.5) rectangle (7.5,2); 
\node [mynode] (n1) at (2,2) {};
\node [mynode] (n2) at (4,5) {};
\node [mynode] (n3) at (4,2.5) {};
\node [mynode] (n4) at (5,2.5) {};
\node [mynode] (n5) at (5,5) {};
\node [mynode] (n6) at (6.75,1.5) {};
\node [mynode] (n7) at (6.75,1) {};
\node [mynode] (n8) at (2,1) {};
\draw (n7) -- (n8);
\draw (n1) -- (n3);
\node at (7,2.5) {$A_1$};
\node at (6,2.75) {$A_4$};
\node at (2.5,5) {$A_5$};
\node at (1.5,3.5) {$A_7$};
\node at (3.75,-1) {$T_3$};
\end{tikzpicture}
\end{center}
\caption{A partitioning of the blue circuit in Figure~\ref{ex:cyclesinG-GALG} into trees $T_1,T_2,T_3$ in $G_\ALG$.
If none of the three induced connecting moves is improving, then $d(T_1) \ge w(A_4)$, $d(T_2) \ge w(A_4)+w(A_1)$ and $d(T_3) \ge w(A_4)+w(A_1)$. Thus, we get $d(C_i) \ge 3 w(A_4)+2w(A_1)$.\label{fig:c:baddecomp}}
\end{figure}
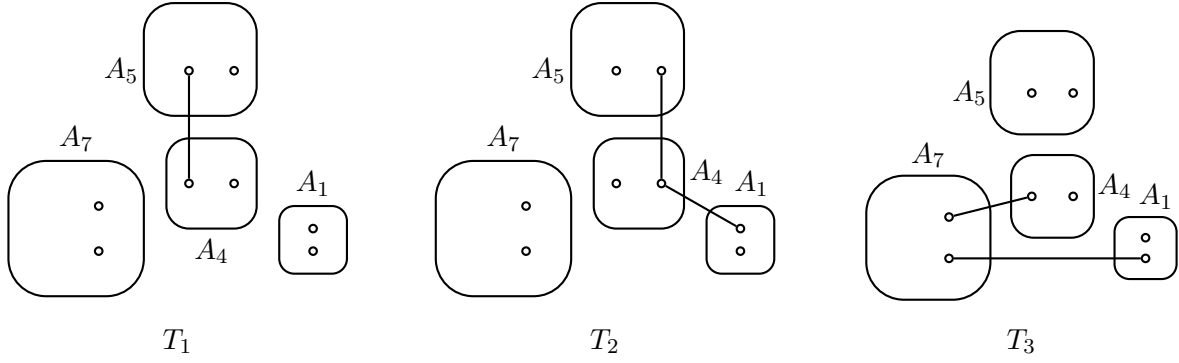

For each $i\in\{1,\dots,q\}$, the simple cycle $F_i$ in $G$ induces a circuit $C_i$ in $G_\ALG$. Figure~\ref{ex:cyclesinG-GALG} shows a cycle $F_i$ and its induced circuit $C_i$. 
The edges of $C_i$ correspond to those edges of $F_i$ that we want to replace. 
Observe that $C_i$ is indeed not necessarily simple: Whenever $F_i$ revisits the connected component $A_j$ of $\ALG$, the induced circuit $C_i$ revisits the same node $j \in V_\ALG$. Assume $C_i$ visits exactly $s$ distinct vertices. Then we name them $\xi_1,\ldots,\xi_s$ and assume without loss of generality that $\xi_1 > \dots > \xi_s$. 
Since the connected components of $\ALG$ are numbered according to their width, we know that $w(A_{\xi_1}) \geq \dots \geq w(A_{\xi_s})$, and thus the $\xi_i$ are ordered according to the widths of the components as well.
Finally, let $n_\ell$ be the number of times that $C_i$ visits $\xi_\ell$. In Figure~\ref{ex:cyclesinG-GALG}, $\xi_1=7, \xi_2=5,\xi_3=4,\xi_4=1$, and $n_1=1$, $n_2=1$, $n_3=2$ and $n_4=1$. 

The crucial idea for the replacement of $C_i$ is to use the connecting move optimality\footnote{We will later use approximate moves, but for simplicity, we forget about approximate optimality during this explanation.} of $\ALG$ to lower bound $d(C_i)$. 
Any subgraph of $C_i$ that is a tree in $G_\ALG$ induces a connecting move. 
For an example, consider Figure~\ref{fig:c:baddecomp}. 
We partitioned the edges in $C_i$ from Figure~\ref{ex:cyclesinG-GALG} into three trees. 
For any $T_i$ among the three trees, connecting move optimality guarantees that the sum of the edges $d(T_i)$ is at least as expensive as the sum of the widths of the components that get connected, except for the largest. 
For example, when adding $T_1$ to the solution, the edge cost increases by $c(T_1)$, but the width cost decreases by $w(A_4)$. Thus, $d(T_1) \ge w(A_4)$ when $\ALG$ is connecting move optimal. 
Since $T_1, T_2$ and $T_3$ are a edge-disjoint partitioning of $C_i$, it holds that $d(C_i) = d(T_1)+d(T_2)+d(T_3)$ and thus we get $d(C_i) \ge 3w(A_4) +2w(A_1)$ by considering all three connecting moves.

Now consider Figure~\ref{fig:c:gooddecomp}. Here, we partitioned the edges of $C_i$ into a different set of trees. 
It turns out that this partitioning provides a better lower bound on $d(C_i)$, namely $w(A_5)+2w(A_4)+2w(A_1)$. 
In fact, this lower bound contains $w(A_{\xi_\ell})$ at least $n_{\ell}$ times for all $\ell \in\{2,3,4\}$.
We observe a sufficient condition for guaranteeing that such a partitioning exists.

\begin{restatable}{definition}{deftreepaysforxi}
We say that a tree \emph{pays} for $\xi_\ell$ (once) if it contains $\xi_\ell$ and at least one vertex $\xi_{\ell'}$ with $\xi_{\ell'} > \xi_{\ell}$.
\end{restatable}
\begin{restatable}{definition}{definitionminguarded}\label{def:minguarded}
  Let $C=(e_1,\dots,e_{|C|})$ be a circuit in~$G_\ALG$ that visits the nodes $v_1,\dots,v_{|C|+1} = v_1$ in this order.
  We say that~$C$ is \emph{guarded} if we have $v_i < v_1$ for all $i\in\{2,\dots,|C|\}$. 
  A circuit~$C$ is \emph{minimally guarded} if it is guarded and no subcircuit $(v_{i_1},\ldots,v_{i_2})$ with $i_1, i_2 \in \{2,\dots,|C|\}$, $i_1 < i_2$ and $v_{i_1} = v_{i_2}$ is guarded.
\end{restatable}
Notice that in any guarded circuit, the highest component number only appears once. 
In Figure~\ref{ex:cyclesinG-GALG}, $C_i$ is minimally guarded because the only component visited between the two visits of $A_4$ is $A_5$, which has a higher index.

\begin{restatable}{lemma}{lemtreedecomptolowerbound}\label{lem:treedecomp-to-lowerbound}
Let $C=(e_1,\dots,e_{|C|})$ be a guarded circuit in $G_\ALG$ that visits the nodes $v_1,\dots,v_{|C|+1}=v_1$ in this order. 
Assume that $v_1=v_{|C|+1} \ge v_i$ for all $j\in\{2,\ldots,|C|\}$ and that $\{v_1,\ldots,v_{|C|}\}$ consists of $s$ disjoint elements $\xi_1 > \xi_2 > \ldots > \xi_s$ (this means that $v_1=\xi_1$).
Furthermore, let $n_\ell$ be the number of times that $C$ visits node $\xi_\ell$, for all $\ell=1,\dots,s$.
  If $\ALG$ is $c$-approximate connecting move optimal and there exists a set of trees  $\moveSet$ in $G_\ALG$ that satisfies that
\begin{enumerate}
\item all trees in $\moveSet$ are edge-disjoint and only contain edges from $C$ and
\item for all $\ell \in \{2,\ldots,s\}$, there are at least $n_\ell$ trees in $\moveSet$ that pay for $\xi_\ell$,
\end{enumerate}
 then it holds that
 \begin{equation*}
    \sum_{i=2}^{|C|} w(A_{v_i})=\sum_{\ell=2}^s n_\ell w(A_{\xi_\ell}) \leq c \cdot \sum_{i=1}^{|C|} d_{e_i} = c \cdot d(C).
  \end{equation*}
  Recall that $A_{v_i}$ is the connected component of $\ALG$ that corresponds to the index $v_i$.
\end{restatable}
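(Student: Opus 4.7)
The plan is to apply $c$-approximate connecting-move optimality to each tree in $\moveSet$ individually, then aggregate the resulting per-tree inequalities using the two combinatorial conditions on $\moveSet$.

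First I would unpack what connecting-move optimality gives us for a single tree $T \in \moveSet$. Let $V[T] \subseteq V_\ALG$ be the vertices of $T$, corresponding to components $\{A_j : j \in V[T]\}$ of $\ALG$. Since $\ALG$ is feasible, every terminal pair lies entirely inside one component of $\ALG$, so contracting components does not create new same-component terminal pairs; in particular, after merging the components in $V[T]$ via the connecting move $conn(T)$, the resulting merged component has width exactly $\max_{j \in V[T]} w(A_j)$. Therefore
\[
\bar{w}(\ALG) - \bar{w}(\ALG \cup T) \;=\; \sum_{j \in V[T]} w(A_j) \;-\; \max_{j \in V[T]} w(A_j).
\]
Letting $j^\star(T) := \arg\max_{j \in V[T]} j$ be the largest index in $V[T]$ (which by our ordering is also the component with maximum width), the right-hand side equals $\sum_{j \in V[T], j \neq j^\star(T)} w(A_j)$, which is exactly $\sum_{\xi \in P(T)} w(A_\xi)$ with $P(T)$ the set of $\xi_\ell$'s that $T$ pays for. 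By $c$-approximate connecting-move optimality applied to $T$ (which is a valid tree in $G_\ALG^{\text{all}}$),
\[
c \cdot d(T) \;\geq\; \sum_{\xi \in P(T)} w(A_\xi).
\]

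Next I would sum this inequality over all $T \in \moveSet$ and swap the order of summation on the right:
\[
c \cdot \sum_{T \in \moveSet} d(T) \;\geq\; \sum_{T \in \moveSet} \sum_{\xi \in P(T)} w(A_\xi) \;=\; \sum_{\ell=2}^{s} \bigl|\{T \in \moveSet : T \text{ pays for } \xi_\ell\}\bigr| \cdot w(A_{\xi_\ell}).
\]
Condition~2 on $\moveSet$ says the inner cardinality is at least $n_\ell$ for every $\ell \in \{2,\ldots,s\}$, so the right-hand side is bounded below by $\sum_{\ell=2}^{s} n_\ell \, w(A_{\xi_\ell})$. Condition~1 on $\moveSet$ (edge-disjoint, edges taken from $C$) gives $\sum_{T \in \moveSet} d(T) \leq d(C) = \sum_{i=1}^{|C|} d_{e_i}$. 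Putting the two bounds together yields
\[
\sum_{\ell=2}^{s} n_\ell \, w(A_{\xi_\ell}) \;\leq\; c \cdot d(C).
\]

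Finally I would verify the equality $\sum_{i=2}^{|C|} w(A_{v_i}) = \sum_{\ell=2}^{s} n_\ell \, w(A_{\xi_\ell})$. Since $C$ is guarded, $\xi_1 = v_1$ appears only at positions $i=1$ and $i=|C|+1$, so the positions $i \in \{2,\ldots,|C|\}$ visit nodes from $\{\xi_2,\ldots,\xi_s\}$ only; grouping positions by the node they visit, $\xi_\ell$ (for $\ell \geq 2$) occurs exactly $n_\ell$ times. This rewriting gives the claimed inequality. I do not anticipate any serious technical obstacle here; the one point requiring care is the width-calculation in the first step, which depends crucially on $\ALG$ being feasible so that no terminal pair bridges two components and the merged width is simply the maximum of the merged widths.
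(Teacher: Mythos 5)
Your proposal is correct and follows essentially the same route as the paper's proof: apply $c$-approximate connecting-move optimality to each tree $T\in\moveSet$ (noting that the non-maximal vertices of $T$ are exactly the components it pays for), sum over trees, use condition~2 to lower-bound the multiplicity of each $w(A_{\xi_\ell})$ by $n_\ell$, and use condition~1 to bound $\sum_T d(T)$ by $d(C)$. The only cosmetic difference is that you explicitly unpack $\bar{w}(\ALG)-\bar{w}(\ALG\cup T)$ as $\sum_{j\in V[T]}w(A_j)-\max_{j\in V[T]}w(A_j)$, which the paper takes directly from the definition of approximate connecting-move optimality.
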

\begin{proof}
By the first precondition we know that
  \begin{equation*}
	\sum_{T \in \moveSet} \srsum{e \in T} d_e \le \sum_{i=1}^{|C|} d_{e_i}.
  \end{equation*}
	Now notice that every tree in $G_\ALG$ and thus every tree in $\moveSet$ defines a connecting move.
	Since $\ALG$ is $c$-approximate connecting move optimal, it holds that
  \begin{equation*}
	\srsum{v \in V[T]} w(A_v) -\max_{v\in V[T]} w(A_v) \le c \cdot \slrsum{e \in T} d_e
  \end{equation*}
	for every tree $T$ in $\moveSet$. 
  Let $low(T) = V[T] \setminus\{\max_{\xi_i \in T} \xi_i\}$.
  Then, we have
  \begin{align*}
        \sum_{i=2}^{|C|} w(A_{v_i})
      &= \sum_{\ell=2}^s n_\ell w(A_{\xi_\ell})\\
      &\stackrel{2.}{\leq} \sum_{\ell=2}^s \sum_{T \in \moveSet} \indicator_{low(T)}(\xi_\ell) w(A_{\xi_\ell})\\
      &= \sum_{T \in \moveSet} \sum_{\ell=2}^s \indicator_{low(T)}(\xi_\ell) w(A_{\xi_\ell})\\
      &= \sum_{T \in \moveSet} \srsum{v \in low(T)} w(A_{v})\\
      &= \sum_{T \in \moveSet} \Bigl(\srsum{v \in V[T]} w(A_v) - \max_{v \in V[T]} w(A_v)\Bigr)\\
      &\leq c \cdot \sum_{T \in \moveSet} \ \ \slrsum{e \in T} d_e
   \end{align*}
  and this proves the lemma. \end{proof}


\begin{figure}
\begin{center}
\begin{tikzpicture}[scale=0.6,thick, mynode/.style={draw, circle,inner sep=0cm, minimum size=0.1cm}]
\draw[rounded corners=5mm] (0,0) rectangle (3,3); 

\draw[rounded corners=4mm] (3,4) rectangle (5.5,6.5); 
\draw[rounded corners=3mm] (3.5,1.5) rectangle (5.5,3.5); 

\draw[rounded corners=2mm] (6,0.5) rectangle (7.5,2); 
\node [mynode] (n1) at (2,2) {};
\node [mynode] (n2) at (4,5) {};
\node [mynode] (n3) at (4,2.5) {};
\node [mynode] (n4) at (5,2.5) {};
\node [mynode] (n5) at (5,5) {};
\node [mynode] (n6) at (6.75,1.5) {};
\node [mynode] (n7) at (6.75,1) {};
\node [mynode] (n8) at (2,1) {};
\draw (n1) -- (n3) -- (n2);
\node at (6.75,2.5) {$A_1$};
\node at (4.5,1) {$A_4$};
\node at (2.5,5) {$A_5$};
\node at (1.5,3.5) {$A_7$};
\node at (3.75,-1) {$T_1$};
\end{tikzpicture}\qquad\quad
\begin{tikzpicture}[scale=0.6,thick, mynode/.style={draw, circle,inner sep=0cm, minimum size=0.1cm}]
\draw[rounded corners=5mm] (0,0) rectangle (3,3); 

\draw[rounded corners=4mm] (3,4) rectangle (5.5,6.5); 
\draw[rounded corners=3mm] (3.5,1.5) rectangle (5.5,3.5); 

\draw[rounded corners=2mm] (6,0.5) rectangle (7.5,2); 
\node [mynode] (n1) at (2,2) {};
\node [mynode] (n2) at (4,5) {};
\node [mynode] (n3) at (4,2.5) {};
\node [mynode] (n4) at (5,2.5) {};
\node [mynode] (n5) at (5,5) {};
\node [mynode] (n6) at (6.75,1.5) {};
\node [mynode] (n7) at (6.75,1) {};
\node [mynode] (n8) at (2,1) {};
\draw (n4) -- (n5);
\draw (n4) -- (n6);
\node at (7,2.5) {$A_1$};
\node at (3.25,3.5) {$A_4$};
\node at (2.5,5) {$A_5$};
\node at (1.5,3.5) {$A_7$};
\node at (3.75,-1) {$T_2$};
\end{tikzpicture}\qquad\quad
\begin{tikzpicture}[scale=0.55,thick, mynode/.style={draw, circle,inner sep=0cm, minimum size=0.1cm}]
\draw[rounded corners=5mm] (0,0) rectangle (3,3); 

\draw[rounded corners=4mm] (3,4) rectangle (5.5,6.5); 
\draw[rounded corners=3mm] (3.5,1.5) rectangle (5.5,3.5); 

\draw[rounded corners=2mm] (6,0.5) rectangle (7.5,2); 
\node [mynode] (n1) at (2,2) {};
\node [mynode] (n2) at (4,5) {};
\node [mynode] (n3) at (4,2.5) {};
\node [mynode] (n4) at (5,2.5) {};
\node [mynode] (n5) at (5,5) {};
\node [mynode] (n6) at (6.75,1.5) {};
\node [mynode] (n7) at (6.75,1) {};
\node [mynode] (n8) at (2,1) {};
\draw (n7) -- (n8);
\node at (7,2.5) {$A_1$};
\node at (6,2.75) {$A_4$};
\node at (2.5,5) {$A_5$};
\node at (1.5,3.5) {$A_7$};
\node at (3.75,-1) {$T_3$};
\end{tikzpicture}
\end{center}
\caption{A different partitioning of the blue circuit in~\ref{ex:cyclesinG-GALG} into trees $T_1,T_2,T_3$ in $G_\ALG$. If none of the three induced connecting moves is improving, then $d(T_1) \ge w(A_5)+w(A_4)$, $d(T_2) \ge w(A_4)+w(A_1)$ and $d(T_3) \ge w(A_1)$. Thus, we get $d(C_i) \ge w(A_5)+2w(A_4)+2w(A_1) \ge n_2 w(A_5) + n_3 w(A_4) +n_4 w(A_1)$.\label{fig:c:gooddecomp}}
\end{figure}
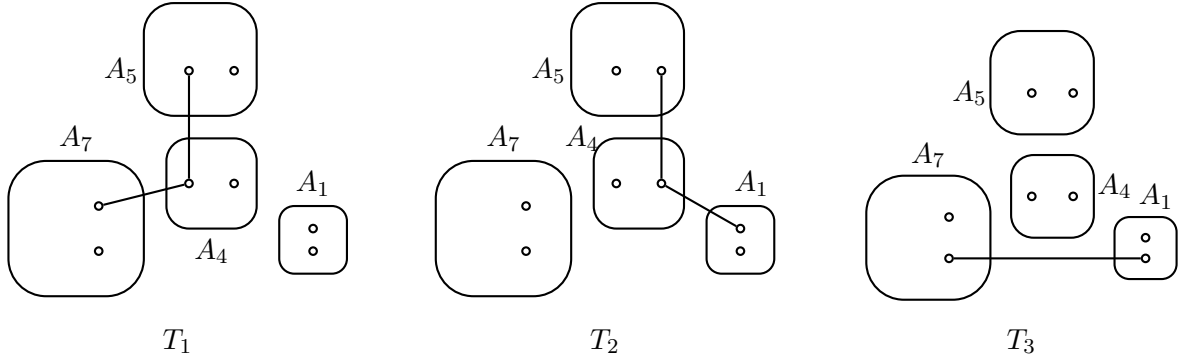

\begin{algorithm}[p]
    \SetKwInOut{Input}{input}
    \SetKwInOut{Output}{output}
    \SetKw{KwTo}{In}
    \SetKw{Set}{set}
    \SetKw{Let}{let}
    \SetKw{Break}{break}
    \SetKw{Forall}{for all}
    \SetKw{With}{with}
    \SetKw{Select}{select}
    \SetCommentSty{mycommfont}    
    
    \DontPrintSemicolon
    \SetNlSkip{2em}
    \SetAlgoSkip{bigskip}
    \SetKwComment{tcc}{}{}
    \SetKwComment{tcp}{}{}
    
    \LinesNumbered
    \Input{A minimally guarded circuit $C=(v_1,\dots,v_{|C|+1})$ in $G_\ALG$ with $v_1=v_{|C|+1}$.\\
		Let $\{\xi_1,\dots,\xi_s\}$ be the set of disjoint vertices on $C$, w.l.o.g. $\xi_1 > \dots > \xi_s$.
		}
    \Output{A set $\moveSet$ of edge disjoint trees in ${G}_\ALG$ consisting of edges from $C$
    }
    \BlankLine
    \tcc{%
    Initialization:
    Observe that $\xi_2$ can only occur once on $C$.
    If $v_{i_1},v_{i_2} \in C$ with $i_1\not=i_2$, but $v_{i_1}=v_{i_2}=\xi_2$, then the subcircuit $(v_{i_1},\dots,v_{i_2})$ certifies that $C$ is not minimally guarded. }
    \Let $v$ be the unique node in $C$ with $v = \xi_2$.\;
    \Let $T=\{ \{v_1,v\} \}$ and \Let $\moveSet_2 = \{(T,v_1)\}$\tcc*[f]{$T$ is stored with root $v_1$}\;
    \Let $\mathfrak{P}_2 = \{ (v_1,\dots, v), (v,\dots,v_{|C|-1})\}$.\tcc*[f]{the second part of $C$ is unclaimed}\;
      \tcc{Main loop: Iteration $k$ computes $\moveSet_k$ and $\mathfrak{P}_k$}
    \ForEach{$k=3,\dots,s$}{
      \Let $\moveSet_k = \moveSet_{k-1}$ and \Let $\mathfrak{P}_k = \mathfrak{P}_{k-1}$\;
			\tcc{We need to process all occurrences of $\xi_k$ on $C$, so we store their indices in $I$}			
      \Let $I = \cset{j \in \{2,\dots,|C|-1\}}{v_j = \xi_{k}}$.\;
			\tcc{Find the path $P_j$ that $j$ lies on. We assume that $v_{P_j}$ occurs before $w_{P_j}$ on $C$}
      \Let $P_j=(v_{P_j},\dots,w_{P_j})$ be the path in $\mathfrak{P}_{k-1}$ with $j$ as inner node, for all $j\in I$.\;\label{alg:defpj}
			\tcc{First case: Treats all occurences of $\xi_k$ in unclaimed parts of $C$ by creating new trees.}
      \ForEach{$j \in I$ with $\pi_{k-1}(P_j)=\bot$\label{alg:jinfreepartcase}}{
        \Let $T = \{\{v_{P_j}, v_j\}\}$ and \Let $\moveSet_k = \moveSet_k \cup \{(T,v_{P_j})\}$\tcc*[f]{new tree claims left part}\;\label{alg:newtree:treedef}
        \Let $\mathfrak{P}_k = \mathfrak{P}_k \setminus \{P_j\} \cup \{ (v_{P_j},\dots,v_j), (v_j,\dots, w_{P_j}) \}$\tcc*[f]{path $P_j$ is split at $v_j$}\;\label{alg:newtree:pathsplit}
      }
			\tcc{Second case: Treats all occurrences of $\xi_k$ that fall on edges of trees in $\moveSet_k$. This is done by iterating through all trees and processing all occurrences in the same tree together.}
      \ForEach{$(T,r) \in \moveSet_{k-1}$\label{alg:jonTcase}}{
        \Let $I_T = \cset{j \in I}{\pi_{k-1}(P_j) \in T}$\; 
				\lIf(\tcc*[f]{If $I_T$ is empty, then $T$ remains unchanged}){$I_T = \emptyset$ }{continue}
        \Select $j^\ast \in I_T$ such that the path from $j^\ast$ to $r$ in $T$ contains no $j \in I_T\setminus\{j^\ast\}$\;
				\tcc{$T$ is modified to include $j^\ast$. Notice that $\pi_{k-1}(P_{j^\ast})=\{v_{P_j^\ast},w_{P_j^\ast}\}$.}
        \Let $T=T \setminus\{\{v_{P_j^\ast},w_{P_j^\ast}\}\} \cup \{\{v_{P_j^\ast},v_{j^\ast}\}, \{v_{j^\ast},w_{P_j^\ast}\}\}$\;\label{alg:adapttree:jast}
        \Let $\mathfrak{P}_k = \mathfrak{P}_k \setminus\{P_{j^\ast}\} \cup \{(v_{P_j^\ast},\dots,v_{j^\ast}), (v_{j^\ast},\dots,w_{P_j^\ast})\}$\;\label{alg:adapttree:pathsplitjast}
        \ForEach{$j \in I_T\backslash\{j^\ast\}$}{
					\tcc{Any edge containing a $j\neq j^\ast$ is split: Half of the edge becomes a new tree, and the other half is used to keep the tree connected. Notice that $\pi_{k-1}(P_{j})=\{v_{P_j},w_{P_j}\}$.}
            \Let $T=T\setminus\{\{v_{P_j},w_{P_j}\}\} \cup \{\{v_j,w_{P_j}\}\}$\;\label{alg:adapttree:splitedge}
						\Let $T'=\{\{v_{P_j},v_j\}\}$ and \Let $\moveSet_k = \moveSet_k \cup \{(T',v_{P_j})\}$\;\label{alg:adapttree:newtree}
          \Let $\mathfrak{P}_k = \mathfrak{P}_k \setminus\{P_j\} \cup \{(v_{P_j},\dots,v_j), (v_j,\dots,w_{P_j})\}$\;\label{alg:adapttree:pathsplitj}
        }
      }
    }
    \Return $\moveSet_{s}$\;
    \caption{A charging algorithm~\label{alg:charging}}
  \end{algorithm}
\begin{figure}[p]
\begin{center}
\begin{tikzpicture}[thick, scale=0.7, treenode/.style={draw, circle,inner sep=0cm, minimum size=0.4cm},streenode/.style={draw, circle,inner sep=0cm, minimum size=0.2cm}]
\foreach \x/\xi in {1/7,2/1,3/2,4/1,5/4,6/1,7/2,8/5,9/1,10/3,11/2,12/7}{
 \draw (\x,-0.1) -- (\x,0.1);
 \node at (\x,0.5) {\xi};
 \node at (\x,1.2) {$v_{\x}$};
}
\draw (1,0) -- (12,0);
\node at (16,1) {Trees in $\hat G_\ALG$};

\node at (0,-1) {$\moveSet_2$};
\node [streenode,fill=black!20] (nt11) at (1,-1) {}; \node [streenode] (nt12) at (8,-1) {};
\draw [-] (nt11) -- (nt12);
\node at (0,-2) {$\mathfrak{P}_2$};
\draw [>=|,<->] (1,-2) -- (7.95,-2); \draw [>=|,<->] (8.05,-2) -- (11,-2);
\node [treenode,fill=black!20] (n11) at (15,-1) {{\small $7$}};
\node [treenode] (n12) at (16,-1) {{\small $5$}};
\draw [-] (n11)--(n12);

\begin{scope}[yshift=-2.5cm]
\node at (0,-1) {$\moveSet_3$};
\node [streenode,fill=black!20] (nt21) at (1,-1) {}; \node [streenode] (nt22) at (5,-1) {}; \node [streenode] (nt23) at (8,-1) {};
\draw [-] (nt21) -- (nt22); \draw [-] (nt22) -- (nt23);
\node at (0,-2) {$\mathfrak{P}_3$};
\draw [>=|,<->] (1,-2) -- (4.95,-2); \draw [>=|,<->] (5.05,-2) -- (7.95,-2); \draw [>=|,<->] (8.05,-2) -- (11,-2);
\node [treenode,fill=black!20] (n21) at (15,-1) {{\small $7$}};
\node [treenode] (n22) at (16,-1) {{\small $4$}};
\node [treenode] (n23) at (17,-1) {{\small $5$}};
\draw [-] (n21)  --(n22); \draw [-] (n22) --(n23);
\end{scope}

\begin{scope}[yshift=-5cm]
\node at (0,-1) {$\moveSet_4$};
\node [streenode,fill=black!20] (nt31) at (1,-1) {}; \node [streenode] (nt32) at (5,-1) {}; \node [streenode] (nt33) at (8,-1) {};
\node [streenode,fill=black!20] (nt34) at (8,-1.5) {}; \node [streenode] (nt35) at (10,-1.5) {};
\draw [-] (nt31) -- (nt32); \draw [-] (nt32) -- (nt33); \draw [-] (nt34) -- (nt35);
\node at (0,-2.5) {$\mathfrak{P}_4$};
\draw [>=|,<->] (1,-2.5) -- (4.95,-2.5); \draw [>=|,<->] (5.05,-2.5) -- (7.95,-2.5); \draw [>=|,<->] (8.05,-2.5) -- (9.95,-2.5); \draw [>=|,<->] (10.05,-2.5)-- (11,-2.5);
\node [treenode,fill=black!20] (n31) at (15,-1) {{\small $7$}};
\node [treenode] (n32) at (16,-1) {{\small $4$}};
\node [treenode] (n33) at (17,-1) {{\small $5$}};
\node [treenode,fill=black!20] (n34) at (18,-1.5) {{\small $5$}};
\node [treenode] (n35) at (19,-1.5) {{\small $3$}};
\draw [-] (n31)--(n32); \draw [-] (n32) --(n33); \draw [-] (n34) -- (n35);
\end{scope}

\begin{scope}[yshift=-8cm]
\node at (0,-1) {$\moveSet_5$};
\node [streenode,fill=black!20] (nt41) at (1,-1) {}; \node [streenode] (nt41b) at (3,-1) {}; 
\node [streenode] (nt42) at (5,-1) {}; \node [streenode] (nt42b) at (7,-1) {};  \node [streenode] (nt43) at (8,-1) {};
\node [streenode,fill=black!20] (nt44) at (8,-1.5) {}; \node [streenode] (nt45) at (10,-1.5) {};
\node [streenode,fill=black!20] (nt46) at (10,-2) {}; \node [streenode] (nt47) at (11,-2) {};
\node [streenode,fill=black!20] (nt48) at (5,-2.5) {}; \node [streenode] (nt49) at (7,-2.5) {}; 
\draw [-] (nt41) -- (nt41b); \draw [-] (nt41b) -- (nt42); \draw [-] (nt42b) -- (nt43); \draw [-] (nt44) -- (nt45); \draw [-] (nt46)--(nt47);
\draw [-] (nt48) -- (nt49); 
\node at (0,-3.5) {$\mathfrak{P}_5$};
\draw [>=|,<->] (1,-3.5) -- (2.95,-3.5); \draw [>=|,<->] (3.05,-3.5) -- (4.95,-3.5); \draw [>=|,<->] (5.05,-3.5) -- (7.95,-3.5); 
\draw [>=|,<->] (8.05,-3.5) -- (9.95,-3.5); \draw [>=|,<->] (10.05,-3.5) -- (10.95,-3.5); 
\node [treenode,fill=black!20] (n41) at (15,-1) {{\small $7$}};
\node [treenode] (n42) at (16,-1) {{\small $2$}};
\node [treenode] (n43) at (17,-1) {{\small $4$}};
\node [treenode] (n43b) at (16,-1.75) {{\small $5$}};
\node [treenode,fill=black!20] (n44) at (18,-1.5) {{\small $5$}};
\node [treenode] (n45) at (19,-1.5) {{\small $3$}};
\node [treenode,fill=black!20] (n46) at (20,-2) {{\small $3$}};
\node [treenode] (n47) at (21,-2) {{\small $2$}};
\node [treenode,fill=black!20] (n48) at (15,-2.5) {{\small $4$}};
\node [treenode] (n49) at (16,-2.5) {{\small $2$}};
\draw [-] (n41)--(n42); \draw [-] (n42) --(n43); \draw [-] (n42) --(n43b); 
\draw [-] (n44) -- (n45); \draw [-](n46)--(n47); \draw [-] (n48) -- (n49);
\end{scope}

\begin{scope}[yshift=-12cm]
\node at (0,-1) {$\moveSet_6$};
\node [streenode,fill=black!20] (nt51) at (1,-1) {}; \node [streenode] (nt51a) at (2,-1) {};  \node [streenode] (nt51b) at (3,-1) {}; 
\node [streenode] (nt51c) at (4,-1) {}; \node [streenode] (nt52) at (5,-1) {};  \node [streenode] (nt52b) at (7,-1) {};  \node [streenode] (nt53) at (8,-1) {};
\node [streenode,fill=black!20] (nt54) at (8,-1.5) {}; \node [streenode] (nt54b) at (9,-1.5) {}; \node [streenode] (nt55) at (10,-1.5) {};
\node [streenode,fill=black!20] (nt56) at (10,-2) {}; \node [streenode] (nt57) at (11,-2) {};
\node [streenode,fill=black!20] (nt58) at (5,-2.5) {}; \node [streenode] (nt58b) at (6,-2.5) {};  \node [streenode] (nt59) at (7,-2.5) {}; 
\node [streenode,fill=black!20] (nt59a) at (3,-3) {};  \node [streenode] (nt59b) at (4,-3) {};
\draw [-] (nt51) -- (nt51a); \draw [-] (nt51a) -- (nt51b); 
\draw [-] (nt51c) -- (nt52); \draw [-] (nt52b) -- (nt53); \draw [-] (nt54) -- (nt54b); 
\draw [-] (nt54b) -- (nt55); \draw [-] (nt56)--(nt57);
\draw [-] (nt58) -- (nt58b);  \draw [-] (nt58b) -- (nt59); 
\draw [-] (nt59a) -- (nt59b);
\node at (0,-4) {$\mathfrak{P}_6$};
\draw [>=|,<->] (1,-4) -- (1.95,-4); \draw [>=|,<->] (2.05,-4) -- (2.95,-4); \draw [>=|,<->] (3.05,-4) -- (3.95,-4); 
\draw [>=|,<->] (4.05,-4) -- (4.95,-4); \draw [>=|,<->] (5.05,-4) -- (5.95,-4); \draw [>=|,<->] (6.05,-4) -- (6.95,-4); 
\draw [>=|,<->] (7.05,-4) -- (7.95,-4); \draw [>=|,<->] (8.05,-4) -- (8.95,-4); \draw [>=|,<->] (9.05,-4) -- (9.95,-4); 
\draw [>=|,<->] (10.05,-4) -- (10.95,-4); 
\node [treenode,fill=black!20] (n50) at (15,-1) {{\small $7$}};
\node [treenode] (n51) at (16,-1) {{\small $1$}};
\node [treenode] (n52) at (17,-1) {{\small $2$}};
\node [treenode] (n53) at (16,-1.75) {{\small $4$}};
\node [treenode] (n53b) at (17,-1.75) {{\small $5$}};
\node [treenode,fill=black!20] (n54) at (18,-1.3) {{\small $5$}};
\node [treenode] (n55) at (19,-1.3) {{\small $1$}};
\node [treenode] (n55b) at (20,-1.3) {{\small $3$}};
\node [treenode,fill=black!20] (n56) at (20,-2) {{\small $3$}};
\node [treenode] (n57) at (21,-2) {{\small $2$}};
\node [treenode,fill=black!20] (n58) at (15,-2.5) {{\small $4$}};
\node [treenode] (n58b) at (16,-2.5) {{\small $1$}};
\node [treenode] (n59) at (17,-2.5) {{\small $2$}};
\node [treenode,fill=black!20] (n59b) at (18,-3) {{\small $2$}};
\node [treenode] (n59c) at (19,-3) {{\small $1$}};
\draw [-] (n50)--(n51); \draw [-] (n51)--(n52); \draw [-] (n51) --(n53); \draw [-] (n52) --(n53b); 
\draw [-] (n54) -- (n55); \draw [-] (n55) -- (n55b); \draw [-](n56)--(n57); \draw [-] (n58) -- (n58b); \draw [-] (n58b) -- (n59);
\draw [-] (n59b) -- (n59c);
\end{scope}

\end{tikzpicture}
\end{center}
\caption{An example for Algorithm~\ref{alg:charging}. On top, we see a circuit $v_1,\ldots,v_s$ drawn in a path form with the only occurrences of $\xi_1=7$ at the endpoints. Below that, we see how $\moveSet_k$ and $\mathfrak{P}_k$ develop through the iterations $k=2,\ldots,6$, after which the algorithm stops. Iteration $k=5$ is the first where two occurrences of $\xi_k$ fall into the same tree, which changes the structure of the tree, because the edge between $v_5$ and $v_8$ is split and distributed between two trees: the edge $v_7$ and $v_8$ stays in the tree, and the edge between $v_5$ and $v_7$ forms a new tree. Notice that connectivity is maintained by this operation.\label{fig:charging-algo-example}}
\end{figure}

\subsubsection{The Partitioning Algorithm}\label{sec:partitioningalgo}

We show how to partition minimally guarded circuits by providing Algorithm~\ref{alg:charging}. 
It computes a sequence of sets $\moveSet_k$ of trees. 
For $k\in \{2,\ldots,s\}$, $\moveSet_k$ contains a partitioning with $n_i$ tree that each pay for $\xi_i$ once, for all $i \in \{1,\ldots,k\}$. 
The output of the algorithm is $\moveSet_s$.

The algorithm maintains a partitioning of $C$ into a set of sub-paths $\mathfrak{P}_k$. While these paths are not necessarily simple, they are at all times edge-disjoint. The algorithm iteratively splits non-simple subpaths into simple subpaths. At the same time, it needs to make sure that the subpaths can be combined to trees that satisfy the conditions of Lemma~\ref{lem:treedecomp-to-lowerbound}. This is accomplished by building the trees of $\moveSet_k$ in the transitive closure $\hat{G}_\ALG$ of $G_\ALG$: In this way, we can ensure that any edge of each tree in $\moveSet_k$ corresponds to a path in the current partitioning $\mathfrak{P}_k$. More precisely, if a tree in $\moveSet_k$ contains an edge $(v,w)$, then the partitioning contains a subpath $(v,\dots,w)$. This is why we say that a tree $T \in \moveSet_k$ \emph{claims} a sub-path $p$ of $C$ if one of the edges in $T$ corresponds to $p$. Each time the algorithm splits a sub-path, it also splits the corresponding edge of a tree in $\moveSet_k$. To represent the correspondence of trees and subpaths, we define the mapping $\pi_k : \mathfrak{P}_k \to \cup_{T \in \moveSet_k} T \times \{\bot\}$ that maps a path $p=(v,\dots,w) \in \mathfrak{P}_k$ to an edge $e \in \cup_{T \in \moveSet_k} T$ if and only if $e \cap p = \{v,w\}$. If no such edge exists, then $\pi_k(P) = \bot$. This mapping is well-defined. The trees in the final set $\moveSet_s$ do not contain transitive edges, i.e., they are subgraphs of $G_\ALG$. They also leave no part of $C$ unclaimed.


We now describe the algorithm in more detail and simultaneously observe its main property:
\begin{invariant}\label{obs:mainalginvariant}
For all $k=2,\ldots,s$, it holds after iteration $k$ that for all $i\in \{2,\ldots,k\}$ there are at least $n_i$ trees in $\moveSet_k$ that pay for $\xi_i$.
\end{invariant}

For presentation purposes, we assume that we already know that the following invariants are true and prove them later in Lemma~\ref{lem:invariantsproof}.
An example run of the algorithm to accompany the explanation can be found in Figure~\ref{fig:charging-algo-example}. Notice that the trees in $\moveSet_k$ are rooted, \ie we store each tree as a tuple consisting of the actual tree plus a root. The trees in connecting moves are unrooted, the roots in $\moveSet_k$ are only needed for the computation.

\begin{invariants}\label{obs:easy-invariants}
For all $k=2,\ldots,s$, the following holds:
\begin{enumerate}[label=\arabic*.,ref={\ref{obs:easy-invariants}-\arabic*}]
  \item The trees in $\moveSet_k$ are edge disjoint. \label{alg-easyproperty-1}
  \item The paths in $\mathfrak{P}_k$ are edge-disjoint and it holds that $\bigcup_{p\in \mathfrak{P}_k} p = C \backslash \{\{v_{|C|},v_{|C|+1}\}\}$. \label{alg-easyproperty-2}
  \item If $v$ is an outer node of some $p \in \mathfrak{P}_k$, then $v \in \{\xi_1,\dots,\xi_k\}$. If $v$ is an inner node, then $v \in \{\xi_{k+1},\dots,\xi_s\}$. \label{alg-easyproperty-3}
  \item For any $e \in T$, $T\in \moveSet_k$, $\pi_k^{-1}(e)$ consists of one path from $\mathfrak{P}_k$\label{alg-easyproperty-3b}.
	\item If $\{v_{j_1},v_{j_2}\}$ with $j_1 < j_2$ is an edge in $T, (T,r) \in \moveSet_k$, then $v_{j_1}$ is closer to $r$ than $v_{j_2}$. \label{alg-easyproperty-4}
\end{enumerate}
\end{invariants}

The initialization consists of setting $\moveSet_2$ and $\mathfrak{P}_2$.
Observe that $\xi_2$ is visited exactly once by $C$: 
If there were $v_i,v_j \in C$ with $i\not=j$, but $v_i=v_j=\xi_2$, the sub-circuit $(v_i,\dots,v_j)$ would be such that $v_i > v_j$ for all $j \in \{i+1,\dots,j-1\}$, because $\xi_1$ only occurs at $v_1$ and $v_{|C|+1}$. This would be a contradiction to the assumption that $C$ is minimally guarded.

The algorithm splits $(v_1,\ldots,v_{|C|})$ at the unique occurrence $v$ of $\xi_2$ on $C$.
This is done by setting $\mathfrak{P}_2$ to consist of the paths $(v_1,\dots,v)$ and $(v,\dots,v_{|C|})$ and by inserting the tree $T=\{\{{v_1,v}\}\}$ with root $v_1$ into $\moveSet_2$. 
Notice that now $\pi_2((v_1,\dots,v))=\{v_1,v\}$ in $T$ and $\pi_2((v,\ldots,v_{|C|}))=\bot$.
Invariant~\ref{obs:mainalginvariant} is true because there is now one tree paying for $\xi_2$.

For $k \ge 3$, we assume that the properties are true for $k-1$ by induction.
The algorithm starts by setting $\moveSet_k = \moveSet_{k-1}$ and $\mathfrak{P}_k = \mathfrak{P}_{k-1}$. 
Then it considers the set $I = \cset{j}{v_j = \xi_{k}}$ of all occurrences of $\xi_{k}$ on $C$. 
For all $j\in I$ it follows from Properties~\ref{alg-easyproperty-2} and ~\ref{alg-easyproperty-3} that there is a unique path $P_j=(v_{P_j},\dots,w_{P_j})$ in $\mathfrak{P}_{k-1}$ with $j \in P_j$. The algorithm defines $P_j$ in Line~\ref{alg:defpj}. 
We know that the paths $P_j$ are different for all $j \in I$: By Property~\ref{alg-easyproperty-3}, all inner nodes are $\xi_{k}$ or a $\xi_\ell$ with higher index. If at least two occurrences of $\xi_{k}$ would fall on the same $P_j$, take the two that are closest together: All nodes between them would be equal to a $\xi_{j'}$ with $j' > k$, which contradicts the assumption that $C$ is minimally guarded (because the $\xi_\ell$ are sorted decreasingly). Thus, all $j\in I$ have a distinct $P_j$ that they lie on.
During the whole algorithm, the endpoints of $P_j$ are always named $v_{P_j}$ and $w_{P_j}$, where $v_{P_j}$ occurs first on $C$. 

Suppose that $\pi_{k-1}(P_j)\not= \bot$. 
The algorithm deals with all $j$ that satisfy this in the for loop that starts in Line~\ref{alg:jonTcase}. For any $T$ with occurences of $\xi_k$, it considers $I_T = \cset{j \in I}{\pi_{k-1}(P_j) \in T}$, 
the set of all occurrences of $\xi_{k}$ that fall into the same tree $T\in \moveSet_{k-1}$. It select a node $j^\ast \in I_T$ whose unique path to the root of $T$ does not contain any other $j \in I_T$. This node must exist because $T$ is a tree. The algorithm updates $T$, and adds a new tree for any $j \in I_T\backslash\{j^\ast\}$.
The idea is that the edge that $j^\ast$ falls on is divided into two edges that stay in $T$, while all other edges are split into an edge that stays in $T$ and an edge that forms a new tree. Since all $v_j$ with $j\in I_T$ represent the same connected component, $T$ stays connected.

More precisely, for $j^\ast$ the edge $\{v_{p_{j^\ast}}, w_{p_{j^\ast}}\}$ is divided in $T$ and is replaced by two edges $\{v_{p_{j^\ast}}, v_{j^\ast}\}$ and $\{v_{j^\ast}, w_{p_{j^\ast}}\}$. For all $j \in I_T\setminus\{j^\ast\}$, the algorithm replaces $\{v_{p_{j}}, w_{p_{j}}\}$ by $\{v_j, w_{P_j}\}$ in $T$. By Property~\ref{alg-easyproperty-4}, we know that $v_{P_j}$ is closer to the root $r$ of $T$. Thus, removing $\{v_{p_{j}}, w_{p_{j}}\}$ disconnects the subtree at $w_{p_{j}}$ from $r$. However, adding $\{v_j, w_{P_j}\}$ reconnects the tree because $v_j$ and $v_{j^\ast}$ are the same node and the algorithm assured that $v_{j^\ast}\in V[T]$. Thus, $T$ stays connected. The algorithm also adds the new tree $\{\{v_{P_j},v_{j}\}\}$ to $\moveSet_k$ which it can do because this part of $C$ is now free.
 
The algorithm also updates $\mathfrak{P}_k$ by removing $P_{j}$, inserting $\{v_{P_j},\dots,v_j\}$ and $\{v_j,\dots,w_{P_j}\}$ instead, thus splitting $P_{j}$ at $v_{j}$ for all $j \in I_T$ including $j^\ast$.

The algorithm also processes all $j \in I$ where $\pi_{k-1}(P_j)=\bot$. This is done in the for loop in Line~\ref{alg:jinfreepartcase}.
In this case, the path $P_j$ is split at node $j$ by removing $P_j$ from $\mathfrak{P}_k$. Then, the algorithm inserts its two parts $(v_{P_j},\dots,v_j)$ and $(v_j,\dots,w_{P_j})$ into $\mathfrak{P}_k$ to $\mathfrak{P}_k$. It also adds a new tree $\{\{v_{P_j},v_j\}\}$ with root $v_{P_j}$ to $\moveSet_k$. 

Notice that all trees that are created satisfy that there is a vertex with a higher number than $\xi_{k}$: if a vertex is added into a tree, then the other vertices in the tree have higher value, and if a new tree is created, then it consists of an edge to a vertex which previously was an endpoint of a path, and these have numbers in $\{\xi_1,\ldots,\xi_{k-1}\}$. For every $j\in I$, it either happens that a tree is updated or that a new tree is created. Thus, the set $\moveSet_k$ satisfies Invariant~\ref{obs:mainalginvariant} when the iteration is completed. 

\begin{lemma}\label{lem:algoonedoesitsjob}
Invariant~\ref{obs:mainalginvariant} holds.
\end{lemma}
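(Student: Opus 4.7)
The plan is to prove Invariant~\ref{obs:mainalginvariant} by induction on $k$, leaning on the structural properties stated in Invariants~\ref{obs:easy-invariants}.

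For the base case $k=2$, I would first observe that on a minimally guarded circuit, the vertex $\xi_2$ appears exactly once. Indeed, if $\xi_2$ appeared at positions $i_1<i_2$ in $\{2,\dots,|C|\}$, then every vertex strictly between them would have subscript larger than $2$ (since $\xi_1$ only occurs at the boundary of the guarded $C$), so the subcircuit $(v_{i_1},\dots,v_{i_2})$ would be guarded, contradicting minimality. Hence $n_2=1$, and the initialization produces a single tree containing $\xi_1$ and $\xi_2$, which trivially pays for $\xi_2$.

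For the inductive step, assume the invariant holds after iteration $k-1$. I would prove the bound for $i=k$ by showing that each of the $n_k$ occurrences $j\in I$ corresponds to a distinct tree paying for $\xi_k$. In Case~1 the newly created tree $\{\{v_{P_j},v_j\}\}$ contains $v_j=\xi_k$ and an endpoint $v_{P_j}\in\{\xi_1,\dots,\xi_{k-1}\}$ (by Invariant~\ref{alg-easyproperty-3}), which has a strictly larger value than $\xi_k$. In Case~2, subdividing the edge at $j^\ast$ adds $\xi_k$ to $T$ without removing any prior vertex, and $T$ already contains some vertex of $\{\xi_1,\dots,\xi_{k-1}\}$ (either inductively or as an endpoint), so the modified $T$ pays for $\xi_k$; for each remaining $j\in I_T\setminus\{j^\ast\}$, the spawned tree $\{\{v_{P_j},v_j\}\}$ likewise pays for $\xi_k$. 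Distinct $j$ yield distinct such trees, so at least $|I|=n_k$ trees pay for $\xi_k$.

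For $i\in\{2,\dots,k-1\}$ I would argue that no payment accrued in $\moveSet_{k-1}$ is lost. The operations that add fresh trees or subdivide an edge at $j^\ast$ never remove a vertex from any existing tree and never add a vertex of value greater than $\xi_{k-1}$, so existing payments are preserved. The only delicate operation is the edge swap in Lines~\ref{alg:adapttree:splitedge}–\ref{alg:adapttree:pathsplitj} for $j\in I_T\setminus\{j^\ast\}$, which replaces $\{v_{P_j},w_{P_j}\}$ by $\{v_j,w_{P_j}\}$ and could, a priori, remove $v_{P_j}$ from $T$. The structural key is the selection rule for $j^\ast$: combined with Invariant~\ref{alg-easyproperty-4}, it places $e_{j^\ast}$ strictly on the path from $e_j$ to the root $r$ in $T$, so $v_{P_j}$ is a proper descendant of $w_{P_{j^\ast}}$ and therefore has a parent-edge in $T$ different from $e_j$. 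Removing $e_j$ then leaves that parent-edge intact, so $v_{P_j}$ remains a vertex of $T$, and if $v_{P_j}=\xi_i$ the tree $T$ retains its payment for $\xi_i$.

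The main obstacle is this last paragraph: establishing that the chosen rule for $j^\ast$ truly guarantees the parent-edge survival of $v_{P_j}$ for every $j\in I_T\setminus\{j^\ast\}$, while also handling the corner cases where multiple occurrences of $\xi_k$ lie in paths whose edges share an endpoint with $r$ or with $e_{j^\ast}$. These cases are resolved by invoking Invariants~\ref{alg-easyproperty-2}–\ref{alg-easyproperty-4} to pin down the rooted position of $v_{P_j}$ relative to $e_{j^\ast}$, so that $v_{P_j}$ always retains an incident tree-edge after the swap. Combining the two parts of the inductive step yields Invariant~\ref{obs:mainalginvariant} for $k$ and completes the induction.
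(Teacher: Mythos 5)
Your proof follows the same route as the paper's: induction on $k$, with the base case resting on the fact that minimal guardedness forces $\xi_2$ to occur exactly once, and the inductive step producing one distinct paying tree per occurrence of $\xi_k$ (a fresh tree in the unclaimed case, the subdivided tree for $j^\ast$, and a spawned tree for each remaining $j \in I_T$). That part is fine and matches the paper, which in fact only argues these new payments explicitly and leaves the preservation of the payments for $\xi_2,\dots,\xi_{k-1}$ implicit.

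Your explicit preservation argument, however, rests on a false structural claim. The selection rule guarantees that the path from $e_{j^\ast}$ to the root $r$ contains no other $e_j$ with $j \in I_T$; it does \emph{not} imply the converse, that $e_{j^\ast}$ lies on the path from every other $e_j$ to $r$. If $T$ is a star rooted at its center and several leaf edges belong to $I_T$, any one of them qualifies as $j^\ast$, none is an ancestor of the others, and $v_{P_j}=r$ has no parent edge at all, so the assertion that $v_{P_j}$ is a proper descendant of $w_{P_{j^\ast}}$ fails. The conclusion you need — that $v_{P_j}$ retains an incident edge after Line~\ref{alg:adapttree:splitedge} — is nevertheless true, but for the reverse reason: by Invariant~\ref{alg-easyproperty-4}, $v_{P_j}$ is the endpoint of $e_j$ nearer to $r$, so either $v_{P_j}\neq r$ and its parent edge survives (it is distinct from $e_j$, which is a child edge of $v_{P_j}$), or $v_{P_j}=r$, in which case $r$ keeps the edge through which the path from $e_{j^\ast}$ reaches $r$ (or the upper half of the subdivided $e_{j^\ast}$), since the choice of $j^\ast$ forbids any removed $e_j$ from lying on that path. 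With this correction the preservation step, and hence the induction, closes.
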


Verifying the other invariants consists of checking all updates on $\moveSet_k$ and $\mathfrak{P}_k$.

\begin{lemma}
Invariants~\ref{obs:easy-invariants} hold.\label{lem:invariantsproof}
\end{lemma}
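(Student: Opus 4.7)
I prove all five parts of Invariants~\ref{obs:easy-invariants} simultaneously by induction on $k$. The base case $k=2$ is verified by direct inspection of the initialization: the single tree $T = \{\{v_1, v\}\}$ is trivially edge-disjoint (Property~1); the two paths $(v_1,\dots,v)$ and $(v,\dots,v_{|C|})$ share only the endpoint $v$ and cover $C$ except for the closing edge (Property~2); the outer nodes are $v_1 = \xi_1$ and $v = \xi_2$, while every inner node lies strictly between them and hence has a smaller index, i.e., lies in $\{\xi_3,\dots,\xi_s\}$ (Property~3, using that $\xi_1$ appears only at the endpoints by guardedness); only one edge $\{v_1, v\}$ exists and $\pi_2((v_1,\dots,v)) = \{v_1, v\}$, while $\pi_2((v,\dots,v_{|C|})) = \bot$ (Property~4); finally the root of $T$ is $v_1$ and Property~5 is immediate since $T$ has a single edge.

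\textbf{Inductive step.} Assume the invariants hold for $k-1$, and consider the modifications performed in iteration $k$. These fall into two blocks, acting on a path $P_j = (v_{P_j}, \dots, w_{P_j}) \in \mathfrak{P}_{k-1}$ with an interior occurrence of $\xi_k$ at node~$v_j$. A crucial preliminary observation, used repeatedly below, is that each such $P_j$ contains at most one interior occurrence of $\xi_k$: by Property~3 at step $k-1$, all inner nodes of $P_j$ lie in $\{\xi_k,\dots,\xi_s\}$, i.e., they have indices $\le \xi_k$. Thus, if $P_j$ contained two inner copies of $\xi_k$, the sub-circuit between them would consist only of nodes with index $< \xi_k$ (except at the endpoints), contradicting the minimal guardedness of $C$.

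\textbf{Verifying Properties 1--5 under each update.} I would go through the two blocks separately. In the first block (lines~\ref{alg:jinfreepartcase}--\ref{alg:newtree:pathsplit}), $P_j$ is currently unclaimed ($\pi_{k-1}(P_j) = \bot$), and is split at $v_j$ into two sub-paths; a brand-new singleton tree claims the left half. Properties~1 and~2 are preserved since the new edge corresponds to a fresh sub-path edge-disjoint from the rest of $C$ (by inductive Property~2), and the two new sub-paths partition the edges of $P_j$. Property~3 follows because the only new outer node is $v_j = \xi_k$, which is now allowed. Property~4 holds because $\pi_k$ is updated to map the left sub-path to the single new edge, and the right sub-path to $\bot$. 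Property~5 is trivial since the root of the new tree is chosen to be $v_{P_j}$, and $v_{P_j}$ appears before $v_j$ on $C$, hence has the smaller index by the orientation convention of $C$. In the second block (lines~\ref{alg:jonTcase}--\ref{alg:adapttree:pathsplitj}), the edge of $T$ corresponding to $P_{j^\ast}$ is replaced by two edges $\{v_{P_{j^\ast}}, v_{j^\ast}\}$ and $\{v_{j^\ast}, w_{P_{j^\ast}}\}$ (keeping $T$ connected and rooted consistently, by Property~5 at $k-1$), and each other edge $\{v_{P_j}, w_{P_j}\}$ for $j \in I_T \setminus \{j^\ast\}$ is swapped for $\{v_j, w_{P_j}\}$ inside $T$ while a new singleton tree $\{\{v_{P_j}, v_j\}\}$ claims the remaining half. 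Because $v_j$ and $v_{j^\ast}$ correspond to the same component $\xi_k \in V_\ALG$, the reattachment keeps $T$ connected; the splits of $P_j$ mirror the splits of the $T$-edges one-to-one, preserving Property~4; edge-disjointness (Property~1) and the covering identity (Property~2) follow as before; and Property~3 holds since the newly created outer nodes are all equal to $\xi_k$. For Property~5, orientation towards the root propagates from the edge being split since $v_{P_j}, v_j, w_{P_j}$ appear in this order on $C$ and hence in strictly increasing index order, so the closer-to-root endpoint of each new edge continues to be the one with the smaller index.

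\textbf{Main obstacle.} The only non-routine step is the observation about minimal guardedness, which is what rules out pathological configurations in which two copies of $\xi_k$ lie in the interior of the same path and would prevent a consistent split. Everything else is mechanical bookkeeping that tracks the effects of each line of Algorithm~\ref{alg:charging} on the five invariants in tandem.
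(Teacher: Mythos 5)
Your proof is correct and follows essentially the same route as the paper's: induction on $k$, a direct check of the initialization, and a line-by-line verification of how each update to $\moveSet_k$ and $\mathfrak{P}_k$ preserves the five properties (including the same argument via minimal guardedness that each path in $\mathfrak{P}_{k-1}$ contains at most one interior occurrence of $\xi_k$, which the paper states in the surrounding algorithm description rather than inside the lemma's proof). The only cosmetic difference is your renumbering of the invariants as 1--5 and a slightly looser phrasing of the guardedness argument (one should take the two \emph{closest} interior occurrences of $\xi_k$ to conclude that all nodes strictly between them have index $<\xi_k$), but the substance matches.
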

\begin{proof}
It is easy to verify that all properties hold for $k=2$ since the algorithm sets $\moveSet_2=(\{v_1,v\},v_1)$ and $\mathfrak{P}_2=\{ (v_1,\dots, v), (v,\dots,v_{|C|-1})\}$.
Property~\ref{alg-easyproperty-1} is also true for the new tree created in Line~\ref{alg:newtree:treedef} because it is only executed if $P_j$ is unclaimed. Line~\ref{alg:adapttree:jast} subdivides an edge into two. Lines~\ref{alg:adapttree:splitedge} and ~\ref{alg:adapttree:newtree} split an existing edge and distributes it among $T$ and $T'$. Thus, Property~\ref{alg-easyproperty-1} is preserved.
For Property~\ref{alg-easyproperty-2}, consider Lines~\ref{alg:newtree:pathsplit},~\ref{alg:adapttree:pathsplitjast} and~\ref{alg:adapttree:pathsplitj} to verify that paths are only split into subpaths and no edges are lost.
Property~\ref{alg-easyproperty-3} stays true because iteration $k$ processes all occurrences of $\xi_k$ and always executes one of the Lines~\ref{alg:newtree:pathsplit},~\ref{alg:adapttree:pathsplitjast} and~\ref{alg:adapttree:pathsplitj}, thus splitting the corresponding paths such that $\xi_k$ becomes an outer node.
Lines~\ref{alg:newtree:treedef},~\ref{alg:newtree:pathsplit},~\ref{alg:adapttree:jast},~\ref{alg:adapttree:pathsplitjast},~\ref{alg:adapttree:splitedge},~\ref{alg:adapttree:newtree},~\ref{alg:adapttree:pathsplitj} affect Property~\ref{alg-easyproperty-3b}. In all cases, $\moveSet_k$ and $\mathfrak{P}_k$ are adjusted consistently.

Finally, consider Property~\ref{alg-easyproperty-4}. Line~\ref{alg:newtree:treedef} creates a new tree by claiming $(v_{P_j},\ldots,v_j)$ of the unclaimed edges on $(v_{P_j},\ldots,w_{P_j})$. Notice that we assume that $v_{P_j}$ occurs on $C$ before $v_{P_j}$. Thus, assuming that Property~\ref{alg-easyproperty-4} holds for all trees existing before Line~\ref{alg:newtree:treedef}, we see that it also holds for the new tree. Line~\ref{alg:adapttree:jast} modifies a tree $T$ by inserting $v_{j^\ast}$ into an edge. Since $v_{j^\ast}$ is an inner node of $(v_{P_{j^\ast}},\ldots,w_{j^\ast})$, Property~\ref{alg-easyproperty-4} is preserved.
Line~\ref{alg:adapttree:splitedge} splits edge $\{v_{P_j},w_{P_j}\}$. Again, recall that $v_{P_j}$ occurs before $w_{P_j}$ on $C$ and thus inductively is closer to $r$. Further notice that $v_j$ occurs before $w_{P_j}$ and that $w_{P_j}$ gets disconnected from $r$ when $\{v_{P_j},w_{P_j}\}$ is removed. It is then reconnected to $r$ by adding $\{v_{j},w_{P_j}\}$. This means that $v_j$ is closer to $r$ than $w_{P_j}$.
Line~\ref{alg:adapttree:newtree} creates a new tree that satisfies Property~\ref{alg-easyproperty-4} because $v_{P_j}$ occurs on $C$ before $v_j$.
\end{proof}

\begin{corollary}\label{lem:minimallyguarded}
 Assume that $\ALG$ is $c$-approximate connecting move optimal. 
  Let $C=(v_1,\dots,v_l)$ be a circuit in $G_\ALG$ with edges $(e_1,\dots,e_{l})$.
  If $C$ is minimally guarded, then
  \begin{align*}
    \sum_{i=2}^{l-1} w(T_{v_i}) \leq c \cdot \sum_{i=1}^{l} d_{e_i} = c \cdot d(C)
  \end{align*}
\end{corollary}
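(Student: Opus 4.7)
The statement is a direct consequence of Lemma~\ref{lem:treedecomp-to-lowerbound}, provided we can exhibit a set of edge-disjoint trees in $G_\ALG$ using only edges of $C$ that pays for each non-guard vertex the appropriate number of times. Since $C$ is minimally guarded, its largest vertex $\xi_1$ is visited exactly once and therefore plays the role of the ``guard'' $v_1$ in Lemma~\ref{lem:treedecomp-to-lowerbound}; the remaining vertices of $C$ correspond to $\xi_2,\dots,\xi_s$ with multiplicities $n_2,\dots,n_s$, so that $\sum_{i=2}^{l-1} w(A_{v_i}) = \sum_{\ell=2}^s n_\ell\, w(A_{\xi_\ell})$.

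The plan is to invoke Algorithm~\ref{alg:charging} on the minimally guarded circuit $C$, producing the set of trees $\moveSet := \moveSet_s$. I will then check the two hypotheses of Lemma~\ref{lem:treedecomp-to-lowerbound}. First, edge-disjointness of the trees in $\moveSet$ and the fact that together they use only edges of $C$ follow from Invariant~\ref{alg-easyproperty-1} together with the observation (recorded after the algorithm) that $\moveSet_s$ contains no transitive edges and leaves no part of $C$ unclaimed. I would justify the latter by noting that, by Invariant~\ref{alg-easyproperty-3} applied at $k=s$, every inner node of a path in $\mathfrak{P}_s$ would have to belong to $\{\xi_{s+1},\dots\}=\emptyset$, so every path in $\mathfrak{P}_s$ is a single edge of $C$; combined with Invariant~\ref{alg-easyproperty-3b}, each edge of $C$ is claimed by exactly one edge of some tree in $\moveSet_s$, and each such tree-edge therefore coincides with an actual edge of $G_\ALG$. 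Second, the property that for each $\ell\in\{2,\dots,s\}$ there are at least $n_\ell$ trees in $\moveSet$ paying for $\xi_\ell$ is precisely Invariant~\ref{obs:mainalginvariant} at $k=s$.

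With both hypotheses verified, Lemma~\ref{lem:treedecomp-to-lowerbound} yields
\[
  \sum_{i=2}^{l-1} w(A_{v_i}) \;=\; \sum_{\ell=2}^s n_\ell\, w(A_{\xi_\ell}) \;\le\; c\cdot\sum_{i=1}^l d_{e_i} \;=\; c\cdot d(C),
\]
which is the claim. The main (and only) obstacle is making sure the Algorithm's output is correctly interpretable as a valid input for Lemma~\ref{lem:treedecomp-to-lowerbound}; this is essentially bookkeeping that has already been discharged by Lemmas~\ref{lem:algoonedoesitsjob} and~\ref{lem:invariantsproof}, so the corollary is a short wrap-up rather than a new argument.
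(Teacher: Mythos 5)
Your proposal is correct and follows essentially the same route as the paper's own proof: run Algorithm~\ref{alg:charging}, use Invariants~\ref{alg-easyproperty-1}, \ref{alg-easyproperty-3}, and \ref{alg-easyproperty-3b} at $k=s$ to see that the trees in $\moveSet_s$ are edge-disjoint subgraphs of $G_\ALG$ contained in $C$, use Invariant~\ref{obs:mainalginvariant} for the payment condition, and conclude via Lemma~\ref{lem:treedecomp-to-lowerbound}. The only cosmetic imprecision is the claim that no part of $C$ is left unclaimed (the edge $\{v_{|C|},v_{|C|+1}\}$ and possibly other paths with $\pi_s(P)=\bot$ are never claimed), but this is immaterial since Lemma~\ref{lem:treedecomp-to-lowerbound} only requires containment in $C$, not coverage.
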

\begin{proof}
For all $T\in \moveSet$ and all edges $e \in T \subseteq \moveSet_s$, Property~\ref{alg-easyproperty-3b}  says that there is a unique path $\pi^{-1}(e) \in \mathfrak{P}_s$. Property~\ref{alg-easyproperty-3} for $k=s$ means that paths can no longer have inner nodes. Thus, $\pi^{-1}(e)$ is a single edge, and therefore, $e$ also exist in $G_\ALG$. Thus, all trees in $\moveSet_s$ are trees in $G_\ALG$.
By Property~\ref{alg-easyproperty-3b}, the trees are edge disjoint. Lemma~\ref{lem:algoonedoesitsjob} ensures that they satisfy the precondition of Lemma~\ref{lem:treedecomp-to-lowerbound}. The corollary then follows.
\end{proof}

As before, we set $\xi(v)=j$ for the unique $j\in\{1,\dots,p\}$ with $v \in A_j$, for all $v \in V$.
For any edge set $F$ in $G$, we define $\Fin:=\cset{e=\{u,v\} \in  F}{\xi(u)=\xi(v)}$ and $\Fbetw:=\cset{e=\{u,v\} \in F}{\xi(u)\neq\xi(v)}$ as the subset of edges of $F$ within components of $\ALG$ or between them, respectively. 
Furthermore, if an edge set $F'$ in $G$ satisfies $V[F'] \subseteq V[A_j]$ for a $j\in\{1,\ldots,p\}$, then we set $\xi(F') = j$. Notice that in this case, $F'=\Fin'$.


\newcommand{\replace}{R}

\begin{lemma}\label{lem:guardedcycles}
Let $\bar F$ be a simple path in $G$ that starts and ends in the same connected component $T_{j^\ast}$ of $\ALG$ and satisfies that $\xi(v) \le j^\ast$ for all $v \in V[\bar F]$. Assume that $\bar F \neq \bFin$.
%
Assume that $\ALG$ is \edgeset and \pathset swap-optimal with respect to $\bFbetw$ and that $\ALG$ is $c$-approximate connecting move optimal. 

Then, there exists a set $\replace$ of edges on the vertices $V[\bFbetw]$ with $(\bFin \cup \replace)_{\leftrightarrow}=(\bFin \cup \replace)$ that satisfies the properties listed below.
Let $F_{1}',\ldots,F_{x}'$ be the connected components of $\bFin \cup \replace$ in $(V[F'],E[F'])$.



%
%
\begin{enumerate}[label=\arabic*.,ref={\ref{lem:guardedcycles}-\arabic*}]
\item $\ALG$ is \edgesetswap-optimal with respect to $\replace$.\label{Rproperties-1}
\item It holds that $d(\replace) \le d(\bFbetw)$ and $\sum_{\ell=2}^{x} w(T_{\xi(F_{\ell}')}) \le c \cdot d(\bFbetw)$.\label{Rproperties-4}\item For all $F_\ell'$, there exists an index $j$ such that $V[F_\ell']\subseteq V[A_j]$ (thus, $\xi(F_{\ell}') = j$).\label{Rproperties-2}
\item 
There is only one $F_{\ell}'$ with $\xi(F_{\ell}') = j^\ast$, assume w.l.o.g. that  $\xi(F_{1}')=j^\ast$.\label{Rproperties-3}
\end{enumerate}
\end{lemma}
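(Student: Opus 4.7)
My plan is to prove this by strong induction on the length of the circuit $C = (v_1, \dots, v_{|C|+1})$ in $G_\ALG$ induced by $\bar F$. Since $\bar F$ starts and ends in $T_{j^\ast}$, both endpoints $v_1$ and $v_{|C|+1}$ equal $j^\ast$, and the assumption $\xi(v) \le j^\ast$ for all $v \in V[\bar F]$ guarantees $v_i \le j^\ast$ throughout, so $C$ is guarded. The proof then splits on whether $C$ is minimally guarded or not.

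In the base case, when $C$ is minimally guarded, I first appeal to Corollary~\ref{lem:minimallyguarded} to obtain the width bound $\sum_{i=2}^{|C|} w(T_{v_i}) \le c \cdot d(\bFbetw)$, which yields the second part of property~2, since the components $F'_\ell$ of $\bFin \cup \replace$ with $\xi(F'_\ell) \neq j^\ast$ correspond to inner visits of the circuit. Because $C$ is minimally guarded, $j^\ast$ appears in $C$ only at the endpoints, so $\bar F$ visits $T_{j^\ast}$ exactly twice: at the start (ending at some vertex $u$) and at the end (beginning at some vertex $v$). I construct $\replace = \{\{u,v\}\}$, a single edge lying within $T_{j^\ast}$ that merges the two $T_{j^\ast}$ portions of $\bFin$ into a single component while leaving all other visits as individual components in their respective $A_j$'s. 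This immediately yields properties~3 and~4. For the cost bound $d(\{u,v\}) \le d(\bFbetw)$, I invoke \pathsetswap optimality of $\ALG$ at the pair $(u,v) \in T_{j^\ast}$: the shortest $u$-$v$ path in the graph where components $A_j$ with $j \neq j^\ast$ are contracted to single points has cost at most $d(\bFbetw)$, since $\bar F$ exhibits such a walk using only edges of $\bFbetw$ after contraction, and the non-improving swap condition transfers this bound to the metric edge $d(u,v)$. Property~1 then follows because $\replace$ consists of a single within-tree edge.

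In the inductive case, when $C$ is guarded but not minimally guarded, there exists a proper sub-circuit $(v_{i_1},\dots,v_{i_2})$ with $v_{i_1} = v_{i_2}$ and $i_1 < i_2 \in \{2,\dots,|C|\}$ that is itself guarded. I select a minimal such sub-circuit $C'$, which is then minimally guarded, and let $\bar F'$ denote the corresponding sub-path of $\bar F$ starting and ending in $T_{v_{i_1}}$ with $v_{i_1} < j^\ast$. Applying the inductive hypothesis to $\bar F'$ yields a set $\replace'$ satisfying the four properties for $\bar F'$. I then obtain $\bar F''$ by excising $\bar F'$ from $\bar F$ and bridging the excision point in $T_{v_{i_1}}$; this produces a shorter path with endpoints at $T_{j^\ast}$ whose corresponding circuit is still guarded. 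Applying induction to $\bar F''$ yields $\replace''$, and I set $\replace := \replace' \cup \replace''$. The cost bound follows by additivity, $d(\replace) = d(\replace') + d(\replace'') \le d(\bFbetw') + d(\bFbetw'') = d(\bFbetw)$, since $\bFbetw' \cup \bFbetw'' = \bFbetw$. The width bound combines analogously, as the inner visits partition cleanly between $\bar F'$ and $\bar F''$. Property~4 holds because only $\bar F''$ has $j^\ast$ as its top-level component, so only it produces an $F'_\ell$ with $\xi(F'_\ell) = j^\ast$; properties~1 and~3 then combine straightforwardly.

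The main obstacle will be proving the cost bound $d(\replace) \le d(\bFbetw)$ in the minimally guarded case, because the direct triangle-inequality estimate $d(\{u,v\}) \le d(\bFbetw) + d(\bFin\text{ restricted to inner components})$ is too weak. The remedy is to use \pathsetswap optimality rather than the metric alone: in the shrunk graph, inner-component edges of $\bFin$ become loops with zero cost, so the shortest $u$-$v$ path has cost at most $d(\bFbetw)$, and the non-improving swap condition then constrains $d(\{u,v\})$ appropriately. A secondary subtlety lies in the inductive bookkeeping, specifically ensuring that excising the sub-path $\bar F'$ does not disrupt property~4 (uniqueness of the $j^\ast$-component) nor double-count any width contribution.
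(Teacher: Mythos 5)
Your overall architecture (induction over the circuit structure, base case for minimally guarded circuits via Corollary~\ref{lem:minimallyguarded} plus \pathsetswap optimality, recursion by excising sub-circuits) matches the paper's proof, but there are two concrete gaps. First, your opening claim that $\xi(v)\le j^\ast$ for all $v\in V[\bar F]$ makes $C$ guarded is false: Definition~\ref{def:minguarded} requires the \emph{strict} inequality $v_i<v_1$ for all inner positions, so a path $\bar F$ that revisits $T_{j^\ast}$ at an intermediate point induces a circuit that is \emph{not} guarded, and your case analysis (minimally guarded / guarded-but-not-minimal) does not cover it. This case cannot be folded into your "excise a minimal guarded sub-circuit" step without extra care, because each intermediate visit to $T_{j^\ast}$ would otherwise spawn an additional component $F'_\ell$ with $\xi(F'_\ell)=j^\ast$, breaking Property~4. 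The paper handles it separately by splitting $\bar F$ at an intermediate visit to $T_{j^\ast}$ into two sub-paths $P_1,P_2$, both anchored in $T_{j^\ast}$, recursing on each, and using that the splitting vertices are already connected in $\bFin$ so the two $j^\ast$-components merge.

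Second, your cost bound in the base case does not go through as stated. You correctly identify that the triangle inequality only gives $d(u,v)\le d(\bFbetw)+d(\text{inner }\bFin\text{-segments})$, but your proposed remedy --- that "the non-improving swap condition constrains $d(u,v)$" --- is not a valid inference: \pathsetswap optimality is a statement comparing the potential change of adding a shortest path $X$ in the shrunk graph against the gain of a removal set; it says nothing about the metric distance between $u$ and $v$, which can genuinely exceed $d(\bFbetw)$. The paper's resolution is different in a small but essential way: it puts into $\replace$ a \emph{virtual} edge $\bar e=\{v_1,v_2\}$ whose cost is \emph{defined} to be $d(\bar e):=d(\bFbetw)$ (so the first half of Property~2 holds by fiat), and then proves Property~1 by observing that any improving \edgesetswap $(\bar e,S)$ would yield an improving \pathsetswap $(X,S)$, since $loss(X)\le d(X)\le d(\bFbetw)=d(\bar e)$. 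If you insist on using the actual metric edge with its metric cost, Property~2 can fail; if you adopt the assigned-cost device, your base case becomes correct, but you should state it explicitly rather than asserting a bound on the metric distance.
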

\begin{proof}
Let $\bar{F} = (s,\dots,v_1,w_1,\dots,w_2,v_2,\dots,t)$ where $(s,\dots,v_1)$ and $(v_2,\dots,t)$ are the prefix and suffix of $\bar{F}$ lying in $T_{j^\ast}$, i.e., we assume that $s,\dots,v_1 \in T_{j^\ast}$, $v_2,\dots,t \in T_{j^\ast}$ and $w_1,w_2 \not\in T_{j^\ast}$. 
The nodes $s$ and $v_1$ may coincide as well as $v_2$ and $t$.
Let $\bar C$ be the circuit that $\bar F$ induces in $G_{\ALG}$.

We do induction on the inclusion-wise hierarchy of guarded circuits.
Thus, our base case is that $\bar C$ is minimally guarded. 
In this case, we know that all vertices $v$ from $(w_1,\dots,w_2)$ satisfy $\xi(v) < j^\ast$.
We set $\replace=\{\bar e\}$ with $\bar e := \{v_1,v_2\}$ and $d(\bar e) := d(\bFbetw)$ and show that $\replace$ satisfies Properties~\ref{Rproperties-1}--\ref{Rproperties-3}.
For Property~\ref{Rproperties-1}, we need \pathsetswap optimality. Picking $v_1$ and $v_2$ uniquely defines a set of edges $X$ which form a shortest path from $v_1$ to $v_2$ in the contracted graph and which every $v_1$-$v_2$-based path move adds. Let $(\bar e,S)$ be a \edgesetswap that adds $\bar e$. We argue that this move cannot be improving because otherwise, the \pathsetswap $(X,S)$ was improving. Let $loss(X)$ be the increase that adding $X$ to $\ALG$ incurs in $\phi$, and let $gain(S)$ be the amount by which deleting $S$ decreases $\phi$. Assume that $(\bar e,S)$ is improving, \ie $d(\bar e) = d(\bFbetw) < gain(S)$.
Notice that $\bFbetw$ is a path from $v_1$ to $v_2$ in the contracted graph. Thus, $d(\bFbetw) \ge d(X)$. Furthermore, notice that $loss(X) \le d(X)$. Thus, $loss(X) \le d(\bFbetw)$, such that our assumption implies $loss(X) < gain (S)$. That is a contradiction to \pathsetswap optimality. Property~\ref{Rproperties-1} holds.


Property~\ref{Rproperties-4} is true because $d(F')=d(\bFbetw)$ and by Corollary~\ref{lem:minimallyguarded} since $d(\bFbetw)=d(\bar C)$.
Now we look at the connected components of $\bFin\cup\replace$. They are equal to the connected components of $\bFin$ except that we add the edge $e$. Notice that $\bFin$ only contains edges that go within the same component, and that that $e$ connects two vertices from the same component. Thus, Property~\ref{Rproperties-3} holds. Furthermore, notice that $\bFin$ has exactly two connected components consisting of vertices from $T_{j^\ast}$: The vertices on the prefix and suffix of $\bar{F}$. These components are connected by $e=\{v_1,v_2\}$. Thus, Property~\ref{Rproperties-4} holds.
%

Now assume that $\bar C$ is not minimally guarded. First assume that $\bar C$ is not guarded.
Define $v_1$ and $v_2$ as before. Since $\bar C$ is not guarded, $\bar F$ has to visit $T_{j^\ast}$ again between $v_1$ and $v_2$. Let $v_3$ and $v_4$ be the first and last vertex of one arbitrary visit to $T_{j^\ast}$ between $v_1$ and $v_2$. It is possible that $v_3 = v_4$, otherwise, notice that $v_3$ and $v_4$ are connected in $\bFin$.
%
We split $\bar F$ into two paths $P_1:=(v_1,\ldots,v_3)$ and $P_2:=(v_4,\ldots,v_2)$ and obtain two sets ${F^1}'$ and ${F^2}'$ by using the induction hypothesis on the two inclusionwise smaller paths. 
By induction hypothesis, $v_1$, $v_3$ and all other occurrences of vertices from $T_{j^\ast}$ in $P_1$ have to be connected in $\bFin \cup {F^1}'$. Also, $v_4$, $v_2$ and all other occurrences of vertices from $T_{j^\ast}$ in $P_2$ have to be connected in $\bFin \cup{F^2}'$. Thus, all occurrences of vertices from $T_{j^\ast}$ on $\bar F$ are connected in $\bFin \cup {F^1}' \cup {F^2}'$ because $v_3$ is connected to $v_4$ in $\bFin$. So, Property~\ref{Rproperties-3} holds. 
Furthermore, since $\bar F$ is a simple path, no other components of ${F_1}'$ and ${F_2}'$ can contain the same vertex because only $v_3$ is in both $P_1$ and $P_2$.
Thus, Property~\ref{Rproperties-2} holds for ${F^1}' \cup {F^2}'$ since it holds for ${F^1}'$ and ${F^2}'$ individually. If $\ALG$ is \edgesetswap-optimal with respect to a set $A$ and also with respect to a set $B$, then it is \edgesetswap-optimal with respect to $A\cup B$, thus Property~\ref{Rproperties-1} holds for ${F^1}' \cup {F^2}'$. Similarly, Property~\ref{Rproperties-4} holds because it holds for ${{F^1_1}'}$ and ${{F^2_1}'}$ individually with respect to disjoint parts of $\bFbetw$.

Finally, assume that $\bar F$ is guarded, but not minimally guarded. 
Define $v_1$ and $v_2$ as before. Since $\bar F$ is guarded, but not minimally guarded, it visits a connected component $T_j$ with $j < j^\ast$ twice between $v_1$ and $v_2$, and between these two visits, it never visits a component $T_{j'}$ with $j' > j$. Pick a $j$ and two visits of $T_j$ with this property, and let $v_3$ be the last vertex in the first of these visits of $T_j$ and let $v_4$ be the first vertex of the second visit of $T_j$. Again, $v_3=v_4$ is possible and otherwise, $v_3$ and $v_4$ are connected in $\bFin$.
We apply the induction  to the path $\bar F'$ which is the subpath $(v_3,\ldots,v_4)$ and obtain a set $F'$ by the induction hypothesis. Additionally, we create $\bar F''$ from $\bar F$ by replacing the subpath $(v_3,\ldots,v_4)$ by the edge $(v_3,v_4)$. Since this path is shorter, we can apply the induction hypothesis to $\bar F''$ to obtain a set $F''$. We claim that $F'\cup F''$ satisfies all properties. Notice that $\bFbetw = \bar F_{\leftrightarrow}' \cup F_{\leftrightarrow}''$. Again, Property~\ref{Rproperties-1} is true because $F' \cup F''$ is the union of two sets that satisfy Property~\ref{Rproperties-1}. 

What are the connected components of $\bFin \cup F' \cup F''$?
Let $\mathcal{CC}$ be the set of connected components of $\bFin$, and define $\mathcal{CC'}$ and $\mathcal{CC}''$ to be the connected components of $\bFin \cup F'$ and $\bFin \cup F''$, respectively. All edges in $F' \cup F''$ go between different components in $\mathcal{CC}$, and $F'$ and $F''$ are defined on vertex sets that are disjoint with the exception of $v_3$ and $v_4$. 
Both $\mathcal{C}'$ and $\mathcal{C}''$ contain exactly one connected component which contains both $v_3$ and $v_4$ (notice that $v_3$ and $v_4$ are connected in $\bar F''$ and $\bFin$). In $\mathcal{C}'$, it is the connected component with the highest width (\ie, $F_1'$), but in $\mathcal{CC}''$, the component with the highest width is the component that contains $v_1$ and $v_2$. Thus, $v_1,v_2 \in F_1''$ and $v_3,v_4 \in F_{j''}''$ with $j'' \neq 1$. In $\bFin \cup F' \cup F''$, $F_1'$ and $F_{j'''}$ are merged into one connected component because they both contain $v_3$ and $v_4$, meaning that $w(F_1')=w(F_{j'''})$ is counted twice when applying Property~\ref{Rproperties-4} for $F'$ and $F''$ compared to applying Property~\ref{Rproperties-4} to $F' \cup F''$. All other components in $\mathcal{C}'$ and $\mathcal{C}''$ are defined on disjoint vertex sets and thus $\bFin \cup F' \cup F''$ is the disjoint union of $\mathcal{CC}'\backslash \{ F_1'\}$, $\mathcal{CC}'' \backslash \{F_{j'''}'\}$ and $\{F_1'\cup F_{j''}'\}$. We see that Property~\ref{Rproperties-2} and~\ref{Rproperties-3} carry over to $F' \cup F''$ from holding for $F'$ and $F''$.
Name the connected components of $\bFin \cup F' \cup F''$ as $F_1''',\ldots,F_y'''$. 
The component with the highest width among these, $F_1'''$, is the one that contains $v_1$ and $v_2$, \ie $F_1''' = F_1''$. We thus have
\[
\sum_{\ell=2}^{y} w(T_{\xi(F_{\ell}''')}) = \sum_{\ell=2}^{x'} w(T_{\xi(F_{\ell}')})  + \sum_{\ell=2}^{x''} w(T_{\xi(F_{\ell}'')}) 
\le c\cdot d(F_{\leftrightarrow}')+c\cdot d(F_{\leftrightarrow}'') = c\cdot d(\bFbetw).
\]
Since $\bFbetw = \bar F_{\leftrightarrow}' \cup F_{\leftrightarrow}''$ and $d(F') \le d(\bar F_{\leftrightarrow}')$ and $d(F'') \le d(\bar F_{\leftrightarrow}'')$ by the induction hypothesis, we have $d(F'\cup F'') \le d(\bFbetw)$ and thus $F' \cup F''$ satisfies Property~\ref{Rproperties-4}.
%
\end{proof}

Figure~\ref{fig:ex:1}-\ref{fig:ex:last} illustrate how $F'$ is constructed recursively.

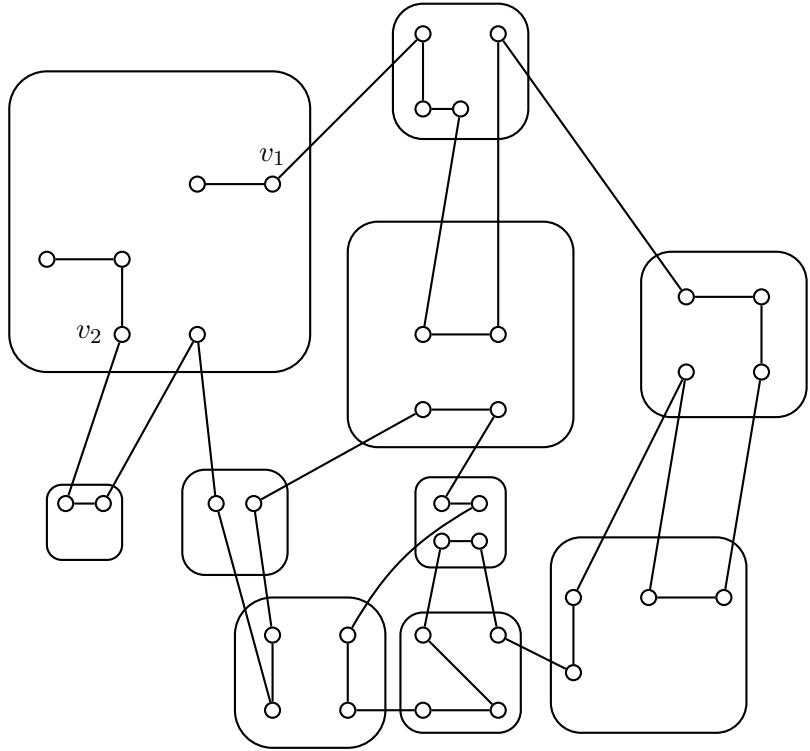
\begin{figure}[ht]
\begin{center}
\begin{tikzpicture}[thick, mynode/.style={draw, circle,inner sep=0cm, minimum size=0.2cm}]
\draw[rounded corners=5mm] (-0.5,5.5) rectangle (3.5,9.5); 

\draw[rounded corners=4mm] (4,4.5) rectangle (7,7.5); 
\draw[rounded corners=4mm] (6.7,0.7) rectangle (9.3,3.3); 
\draw[rounded corners=4mm] (7.9,4.9) rectangle (10.1,7.1); 

\draw[rounded corners=5mm] (2.5,0.5) rectangle (4.5,2.5); 
\draw[rounded corners=4mm] (4.6,8.6) rectangle (6.4,10.4); 
\draw[rounded corners=3mm] (4.7,0.7) rectangle (6.3,2.3); 
\draw[rounded corners=3mm] (1.8,2.8) rectangle (3.2,4.2); 
\draw[rounded corners=2mm] (4.9,2.9) rectangle (6.1,4.1); 
\draw[rounded corners=2mm] (-0.0,3) rectangle (1.0,4.0); 
\node [mynode] (n1) at (2,8) {};
\node [mynode, label=above:{$v_1$}] (n2) at (3,8) {};
\node [mynode] (n3) at (5,10) {};
\node [mynode] (n4) at (5,9) {};
\node [mynode] (n5) at (5.5,9) {};
\node [mynode] (n6) at (5,6) {};
\node [mynode] (n7) at (6,6) {};
\node [mynode] (n8) at (6,10) {};
\node [mynode] (n9) at (8.5,6.5) {};
\node [mynode] (n10) at (9.5,6.5) {};
\node [mynode] (n11) at (9.5,5.5) {};
\node [mynode] (n12) at (9,2.5) {};
\node [mynode] (n13) at (8,2.5) {};
\node [mynode] (n14) at (8.5,5.5) {};
\node [mynode] (n15) at (7,2.5) {};
\node [mynode] (n16) at (7,1.5) {};
\node [mynode] (n17) at (6,2) {};
\node [mynode] (n18) at (5.75,3.25) {};
\node [mynode] (n19) at (5.25,3.25) {};
\node [mynode] (n20) at (5,2) {};
\node [mynode] (n21) at (6,1) {};
\node [mynode] (n22) at (5,1) {};
\node [mynode] (n23) at (4,1) {};
\node [mynode] (n24) at (4,2) {};
\node [mynode] (n25) at (5.75,3.75) {};
\node [mynode] (n26) at (5.25,3.75) {};
\node [mynode] (n27) at (6,5) {};
\node [mynode] (n28) at (5,5) {};
\node [mynode] (n29) at (2.75,3.75) {};
\node [mynode] (n30) at (3,2) {};
\node [mynode] (n31) at (3,1) {};
\node [mynode] (n32) at (2.25,3.75) {};
\node [mynode] (n33) at (2,6) {};
\node [mynode] (n34) at (0.75,3.75) {};
\node [mynode] (n35) at (0.25,3.75) {};
\node [mynode] (n36) [label=left:{$v_2$}] at (1,6) {};
\node [mynode] (n37) at (1,7) {};
\node [mynode] (n38) at (0,7) {};
\draw (n1) -- (n2) -- (n3) -- (n4) -- (n5) -- (n6) -- (n7) -- (n8) -- (n9) -- (n10) -- (n11) -- (n12) -- (n13) -- (n14) -- (n15) -- (n16) -- (n17) -- (n18) -- (n19) -- (n20) -- (n21) -- (n22) -- (n23) -- (n24); \draw (n24) [bend left=15] to (n25); \draw (n25) -- (n26) -- (n27) -- (n28) -- (n29) -- (n30) -- (n31) -- (n32) -- (n33) -- (n34) -- (n35) -- (n36) -- (n37) -- (n38);
\end{tikzpicture}
\caption{The rounded rectangles visualize the connected components of $\ALG$, their size indicates their width. We see a path $\bar F$ that corresponds to a circuit in $G_{\ALG}$ which is guarded, but not minimally guarded. We pick $v_1$ and $v_2$ as described.\label{fig:ex:1}}
\end{center}
\end{figure}

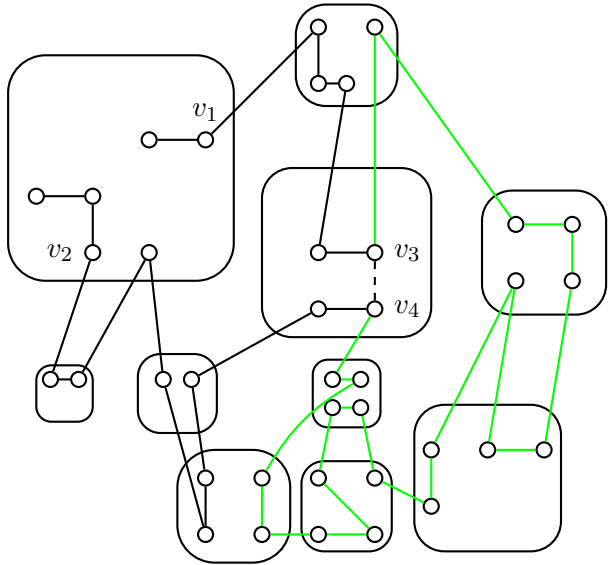
\begin{figure}
\begin{center}
\begin{tikzpicture}[scale=0.75,thick, mynode/.style={draw, circle,inner sep=0cm, minimum size=0.2cm}]
\draw[rounded corners=5mm] (-0.5,5.5) rectangle (3.5,9.5); 

\draw[rounded corners=4mm] (4,4.5) rectangle (7,7.5); 
\draw[rounded corners=4mm] (6.7,0.7) rectangle (9.3,3.3); 
\draw[rounded corners=4mm] (7.9,4.9) rectangle (10.1,7.1); 

\draw[rounded corners=5mm] (2.5,0.5) rectangle (4.5,2.5); 
\draw[rounded corners=4mm] (4.6,8.6) rectangle (6.4,10.4); 
\draw[rounded corners=3mm] (4.7,0.7) rectangle (6.3,2.3); 
\draw[rounded corners=3mm] (1.8,2.8) rectangle (3.2,4.2); 
\draw[rounded corners=2mm] (4.9,2.9) rectangle (6.1,4.1); 
\draw[rounded corners=2mm] (-0.0,3) rectangle (1.0,4.0); 
\node [mynode] (n1) at (2,8) {};
\node [mynode, label=above:{$v_1$}] (n2) at (3,8) {};
\node [mynode] (n3) at (5,10) {};
\node [mynode] (n4) at (5,9) {};
\node [mynode] (n5) at (5.5,9) {};
\node [mynode] (n6) at (5,6) {};
\node [mynode,label=right:{$v_3$}] (n7) at (6,6) {};
\node [mynode] (n8) at (6,10) {};
\node [mynode] (n9) at (8.5,6.5) {};
\node [mynode] (n10) at (9.5,6.5) {};
\node [mynode] (n11) at (9.5,5.5) {};
\node [mynode] (n12) at (9,2.5) {};
\node [mynode] (n13) at (8,2.5) {};
\node [mynode] (n14) at (8.5,5.5) {};
\node [mynode] (n15) at (7,2.5) {};
\node [mynode] (n16) at (7,1.5) {};
\node [mynode] (n17) at (6,2) {};
\node [mynode] (n18) at (5.75,3.25) {};
\node [mynode] (n19) at (5.25,3.25) {};
\node [mynode] (n20) at (5,2) {};
\node [mynode] (n21) at (6,1) {};
\node [mynode] (n22) at (5,1) {};
\node [mynode] (n23) at (4,1) {};
\node [mynode] (n24) at (4,2) {};
\node [mynode] (n25) at (5.75,3.75) {};
\node [mynode] (n26) at (5.25,3.75) {};
\node [mynode,label=right:{$v_4$}] (n27) at (6,5) {};
\node [mynode] (n28) at (5,5) {};
\node [mynode] (n29) at (2.75,3.75) {};
\node [mynode] (n30) at (3,2) {};
\node [mynode] (n31) at (3,1) {};
\node [mynode] (n32) at (2.25,3.75) {};
\node [mynode] (n33) at (2,6) {};
\node [mynode] (n34) at (0.75,3.75) {};
\node [mynode] (n35) at (0.25,3.75) {};
\node [mynode] (n36) [label=left:{$v_2$}] at (1,6) {};
\node [mynode] (n37) at (1,7) {};
\node [mynode] (n38) at (0,7) {};
\draw (n1) -- (n2) -- (n3) -- (n4) -- (n5) -- (n6) -- (n7); \draw [dashed] (n7) -- (n27); \draw (n27) -- (n28) -- (n29) -- (n30) -- (n31) -- (n32) -- (n33) -- (n34) -- (n35) -- (n36) -- (n37) -- (n38);

\draw [green] (n7) -- (n8) -- (n9) -- (n10) -- (n11) -- (n12) -- (n13) -- (n14) -- (n15) -- (n16) -- (n17) -- (n18) -- (n19) -- (n20) -- (n21) -- (n22) -- (n23) -- (n24); \draw [green] (n24) [bend left=15] to (n25); \draw [green] (n25) -- (n26) -- (n27);
\end{tikzpicture}
\caption{We identify suitable $v_3$ and $v_4$ and split the path into two paths: The green path, and the  black path that now contains the dashed edge.}
\end{center}
\end{figure}

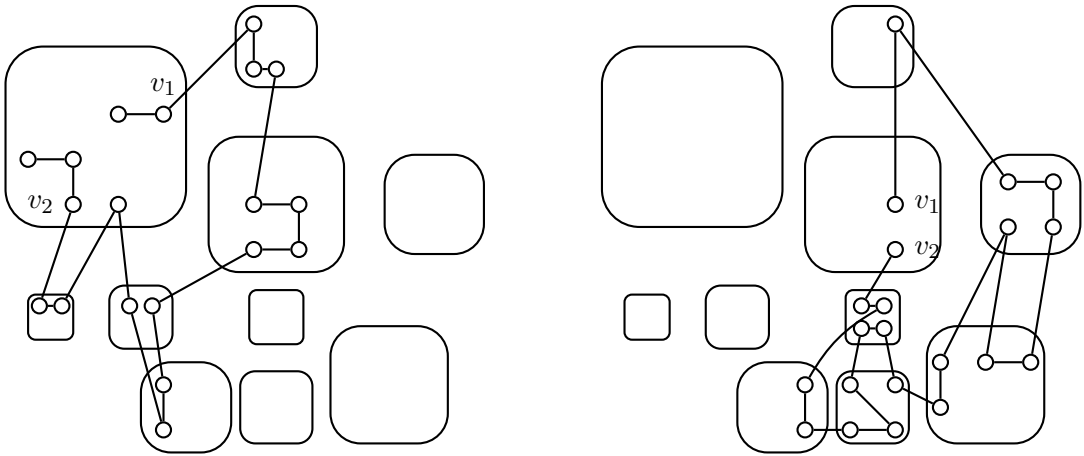
\begin{figure}
\begin{center}
\begin{tikzpicture}[scale=0.6,thick, mynode/.style={draw, circle,inner sep=0cm, minimum size=0.2cm}]
\draw[rounded corners=5mm] (-0.5,5.5) rectangle (3.5,9.5); 

\draw[rounded corners=4mm] (4,4.5) rectangle (7,7.5); 
\draw[rounded corners=4mm] (6.7,0.7) rectangle (9.3,3.3); 
\draw[rounded corners=4mm] (7.9,4.9) rectangle (10.1,7.1); 

\draw[rounded corners=4mm] (2.5,0.5) rectangle (4.5,2.5); 
\draw[rounded corners=3mm] (4.6,8.6) rectangle (6.4,10.4); 
\draw[rounded corners=2mm] (4.7,0.7) rectangle (6.3,2.3); 
\draw[rounded corners=2mm] (1.8,2.8) rectangle (3.2,4.2); 
\draw[rounded corners=1mm] (4.9,2.9) rectangle (6.1,4.1); 
\draw[rounded corners=1mm] (-0.0,3) rectangle (1.0,4.0); 
\node [mynode] (n1) at (2,8) {};
\node [mynode, label=above:{$v_1$}] (n2) at (3,8) {};
\node [mynode] (n3) at (5,10) {};
\node [mynode] (n4) at (5,9) {};
\node [mynode] (n5) at (5.5,9) {};
\node [mynode] (n6) at (5,6) {};
\node [mynode] (n7) at (6,6) {};
\node [mynode] (n27) at (6,5) {};
\node [mynode] (n28) at (5,5) {};
\node [mynode] (n29) at (2.75,3.75) {};
\node [mynode] (n30) at (3,2) {};
\node [mynode] (n31) at (3,1) {};
\node [mynode] (n32) at (2.25,3.75) {};
\node [mynode] (n33) at (2,6) {};
\node [mynode] (n34) at (0.75,3.75) {};
\node [mynode] (n35) at (0.25,3.75) {};
\node [mynode] (n36) [label=left:{$v_2$}] at (1,6) {};
\node [mynode] (n37) at (1,7) {};
\node [mynode] (n38) at (0,7) {};
\draw (n1) -- (n2) -- (n3) -- (n4) -- (n5) -- (n6) -- (n7); \draw  (n7) -- (n27); \draw (n27) -- (n28) -- (n29) -- (n30) -- (n31) -- (n32) -- (n33) -- (n34) -- (n35) -- (n36) -- (n37) -- (n38);
\end{tikzpicture}\qquad\qquad
\begin{tikzpicture}[scale=0.6,thick, mynode/.style={draw, circle,inner sep=0cm, minimum size=0.2cm}]
\draw[rounded corners=5mm] (-0.5,5.5) rectangle (3.5,9.5); 

\draw[rounded corners=4mm] (4,4.5) rectangle (7,7.5); 
\draw[rounded corners=4mm] (6.7,0.7) rectangle (9.3,3.3); 
\draw[rounded corners=4mm] (7.9,4.9) rectangle (10.1,7.1); 

\draw[rounded corners=4mm] (2.5,0.5) rectangle (4.5,2.5); 
\draw[rounded corners=3mm] (4.6,8.6) rectangle (6.4,10.4); 
\draw[rounded corners=2mm] (4.7,0.7) rectangle (6.3,2.3); 
\draw[rounded corners=2mm] (1.8,2.8) rectangle (3.2,4.2); 
\draw[rounded corners=1mm] (4.9,2.9) rectangle (6.1,4.1); 
\draw[rounded corners=1mm] (-0.0,3) rectangle (1.0,4.0); 
\node [mynode,label=right:{$v_1$}] (n7) at (6,6) {};
\node [mynode] (n8) at (6,10) {};
\node [mynode] (n9) at (8.5,6.5) {};
\node [mynode] (n10) at (9.5,6.5) {};
\node [mynode] (n11) at (9.5,5.5) {};
\node [mynode] (n12) at (9,2.5) {};
\node [mynode] (n13) at (8,2.5) {};
\node [mynode] (n14) at (8.5,5.5) {};
\node [mynode] (n15) at (7,2.5) {};
\node [mynode] (n16) at (7,1.5) {};
\node [mynode] (n17) at (6,2) {};
\node [mynode] (n18) at (5.75,3.25) {};
\node [mynode] (n19) at (5.25,3.25) {};
\node [mynode] (n20) at (5,2) {};
\node [mynode] (n21) at (6,1) {};
\node [mynode] (n22) at (5,1) {};
\node [mynode] (n23) at (4,1) {};
\node [mynode] (n24) at (4,2) {};
\node [mynode] (n25) at (5.75,3.75) {};
\node [mynode] (n26) at (5.25,3.75) {};
\node [mynode,label=right:{$v_2$}] (n27) at (6,5) {};

\draw (n7) -- (n8) -- (n9) -- (n10) -- (n11) -- (n12) -- (n13) -- (n14) -- (n15) -- (n16) -- (n17) -- (n18) -- (n19) -- (n20) -- (n21) -- (n22) -- (n23) -- (n24); \draw  (n24) [bend left=15] to (n25); \draw  (n25) -- (n26) -- (n27);
\end{tikzpicture}
\caption{These are the two paths for which we use the induction hypothesis. Both are guarded, but not minimally guarded.\label{fig:ex:3}}
\end{center}
\end{figure}

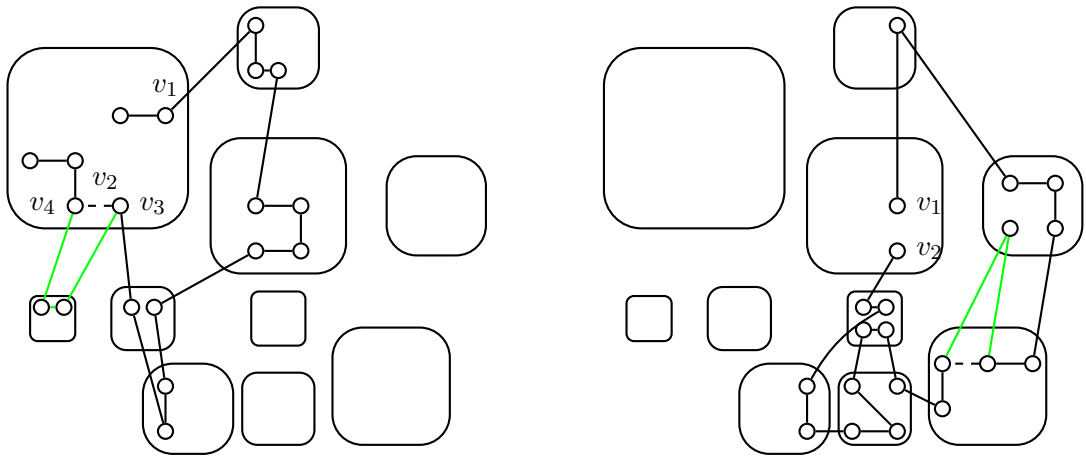
\begin{figure}
\begin{center}
\begin{tikzpicture}[scale=0.6,thick, mynode/.style={draw, circle,inner sep=0cm, minimum size=0.2cm}]
\draw[rounded corners=5mm] (-0.5,5.5) rectangle (3.5,9.5); 

\draw[rounded corners=4mm] (4,4.5) rectangle (7,7.5); 
\draw[rounded corners=4mm] (6.7,0.7) rectangle (9.3,3.3); 
\draw[rounded corners=4mm] (7.9,4.9) rectangle (10.1,7.1); 

\draw[rounded corners=4mm] (2.5,0.5) rectangle (4.5,2.5); 
\draw[rounded corners=3mm] (4.6,8.6) rectangle (6.4,10.4); 
\draw[rounded corners=2mm] (4.7,0.7) rectangle (6.3,2.3); 
\draw[rounded corners=2mm] (1.8,2.8) rectangle (3.2,4.2); 
\draw[rounded corners=1mm] (4.9,2.9) rectangle (6.1,4.1); 
\draw[rounded corners=1mm] (-0.0,3) rectangle (1.0,4.0); 
\node [mynode] (n1) at (2,8) {};
\node [mynode, label=above:{$v_1$}] (n2) at (3,8) {};
\node [mynode] (n3) at (5,10) {};
\node [mynode] (n4) at (5,9) {};
\node [mynode] (n5) at (5.5,9) {};
\node [mynode] (n6) at (5,6) {};
\node [mynode] (n7) at (6,6) {};
\node [mynode] (n27) at (6,5) {};
\node [mynode] (n28) at (5,5) {};
\node [mynode] (n29) at (2.75,3.75) {};
\node [mynode] (n30) at (3,2) {};
\node [mynode] (n31) at (3,1) {};
\node [mynode] (n32) at (2.25,3.75) {};
\node [mynode] (n33) [label=right:{$v_3$}]at (2,6) {};
\node [mynode] (n34) at (0.75,3.75) {};
\node [mynode] (n35) at (0.25,3.75) {};
\node [mynode] (n36) [label=left:{$v_4$},label=north east:{$v_2$}] at (1,6) {};
\node [mynode] (n37) at (1,7) {};
\node [mynode] (n38) at (0,7) {};
\draw (n1) -- (n2) -- (n3) -- (n4) -- (n5) -- (n6) -- (n7); \draw  (n7) -- (n27); \draw (n27) -- (n28) -- (n29) -- (n30) -- (n31) -- (n32) -- (n33);
\draw (n36) -- (n37) -- (n38);
\draw [dashed] (n33) -- (n36);

\draw [green] (n33) -- (n34) -- (n35) -- (n36);
\end{tikzpicture}\qquad\qquad
\begin{tikzpicture}[scale=0.6,thick, mynode/.style={draw, circle,inner sep=0cm, minimum size=0.2cm}]
\draw[rounded corners=5mm] (-0.5,5.5) rectangle (3.5,9.5); 

\draw[rounded corners=4mm] (4,4.5) rectangle (7,7.5); 
\draw[rounded corners=4mm] (6.7,0.7) rectangle (9.3,3.3); 
\draw[rounded corners=4mm] (7.9,4.9) rectangle (10.1,7.1); 

\draw[rounded corners=4mm] (2.5,0.5) rectangle (4.5,2.5); 
\draw[rounded corners=3mm] (4.6,8.6) rectangle (6.4,10.4); 
\draw[rounded corners=2mm] (4.7,0.7) rectangle (6.3,2.3); 
\draw[rounded corners=2mm] (1.8,2.8) rectangle (3.2,4.2); 
\draw[rounded corners=1mm] (4.9,2.9) rectangle (6.1,4.1); 
\draw[rounded corners=1mm] (-0.0,3) rectangle (1.0,4.0); 
\node [mynode,label=right:{$v_1$}] (n7) at (6,6) {};
\node [mynode] (n8) at (6,10) {};
\node [mynode] (n9) at (8.5,6.5) {};
\node [mynode] (n10) at (9.5,6.5) {};
\node [mynode] (n11) at (9.5,5.5) {};
\node [mynode] (n12) at (9,2.5) {};
\node [mynode] (n13) at (8,2.5) {};
\node [mynode] (n14) at (8.5,5.5) {};
\node [mynode] (n15) at (7,2.5) {};
\node [mynode] (n16) at (7,1.5) {};
\node [mynode] (n17) at (6,2) {};
\node [mynode] (n18) at (5.75,3.25) {};
\node [mynode] (n19) at (5.25,3.25) {};
\node [mynode] (n20) at (5,2) {};
\node [mynode] (n21) at (6,1) {};
\node [mynode] (n22) at (5,1) {};
\node [mynode] (n23) at (4,1) {};
\node [mynode] (n24) at (4,2) {};
\node [mynode] (n25) at (5.75,3.75) {};
\node [mynode] (n26) at (5.25,3.75) {};
\node [mynode,label=right:{$v_2$}] (n27) at (6,5) {};

\draw (n7) -- (n8) -- (n9) -- (n10) -- (n11) -- (n12) -- (n13);
\draw (n15) -- (n16) -- (n17) -- (n18) -- (n19) -- (n20) -- (n21) -- (n22) -- (n23) -- (n24); \draw  (n24) [bend left=15] to (n25); \draw  (n25) -- (n26) -- (n27);
\draw [dashed] (n13) -- (n15);
\draw [green] (n13) -- (n14) -- (n15);
\end{tikzpicture}
\end{center}
\caption{We identify $v_3$ and $v_4$ for the two paths in Figure~\ref{fig:ex:3} and split the path into the green part and the black part with the dashed edge.}
\end{figure}

\begin{figure}
\begin{center}
\begin{tikzpicture}[scale=0.6,thick, mynode/.style={draw, circle,inner sep=0cm, minimum size=0.2cm}]
\draw[rounded corners=5mm] (-0.5,5.5) rectangle (3.5,9.5); 

\draw[rounded corners=4mm] (4,4.5) rectangle (7,7.5); 
\draw[rounded corners=4mm] (6.7,0.7) rectangle (9.3,3.3); 
\draw[rounded corners=4mm] (7.9,4.9) rectangle (10.1,7.1); 

\draw[rounded corners=4mm] (2.5,0.5) rectangle (4.5,2.5); 
\draw[rounded corners=3mm] (4.6,8.6) rectangle (6.4,10.4); 
\draw[rounded corners=2mm] (4.7,0.7) rectangle (6.3,2.3); 
\draw[rounded corners=2mm] (1.8,2.8) rectangle (3.2,4.2); 
\draw[rounded corners=1mm] (4.9,2.9) rectangle (6.1,4.1); 
\draw[rounded corners=1mm] (-0.0,3) rectangle (1.0,4.0); 
\node [mynode] (n1) at (2,8) {};
\node [mynode, label=above:{$v_1$}] (n2) at (3,8) {};
\node [mynode] (n3) at (5,10) {};
\node [mynode] (n4) at (5,9) {};
\node [mynode] (n5) at (5.5,9) {};
\node [mynode] (n6) at (5,6) {};
\node [mynode] (n7) at (6,6) {};
\node [mynode] (n27) at (6,5) {};
\node [mynode] (n28) at (5,5) {};
\node [mynode] (n29) at (2.75,3.75) {};
\node [mynode] (n30) at (3,2) {};
\node [mynode] (n31) at (3,1) {};
\node [mynode] (n32) at (2.25,3.75) {};
\node [mynode] (n33) at (2,6) {};
\node [mynode] (n36) [label=left:{$v_2$}] at (1,6) {};
\node [mynode] (n37) at (1,7) {};
\node [mynode] (n38) at (0,7) {};
\draw (n1) -- (n2) -- (n3) -- (n4) -- (n5) -- (n6) -- (n7); \draw  (n7) -- (n27); \draw (n27) -- (n28) -- (n29) -- (n30) -- (n31) -- (n32) -- (n33);
\draw (n36) -- (n37) -- (n38);
\draw (n33) -- (n36);
\end{tikzpicture}\qquad\qquad
\begin{tikzpicture}[scale=0.6,thick, mynode/.style={draw, circle,inner sep=0cm, minimum size=0.2cm}]
\draw[rounded corners=5mm] (-0.5,5.5) rectangle (3.5,9.5); 

\draw[rounded corners=4mm] (4,4.5) rectangle (7,7.5); 
\draw[rounded corners=4mm] (6.7,0.7) rectangle (9.3,3.3); 
\draw[rounded corners=4mm] (7.9,4.9) rectangle (10.1,7.1); 

\draw[rounded corners=4mm] (2.5,0.5) rectangle (4.5,2.5); 
\draw[rounded corners=3mm] (4.6,8.6) rectangle (6.4,10.4); 
\draw[rounded corners=2mm] (4.7,0.7) rectangle (6.3,2.3); 
\draw[rounded corners=2mm] (1.8,2.8) rectangle (3.2,4.2); 
\draw[rounded corners=1mm] (4.9,2.9) rectangle (6.1,4.1); 
\draw[rounded corners=1mm] (-0.0,3) rectangle (1.0,4.0); 
\node [mynode,label=right:{$v_1$}] (n7) at (6,6) {};
\node [mynode] (n8) at (6,10) {};
\node [mynode] (n9) at (8.5,6.5) {};
\node [mynode] (n10) at (9.5,6.5) {};
\node [mynode] (n11) at (9.5,5.5) {};
\node [mynode] (n12) at (9,2.5) {};
\node [mynode] (n13) at (8,2.5) {};
\node [mynode] (n15) at (7,2.5) {};
\node [mynode] (n16) at (7,1.5) {};
\node [mynode] (n17) at (6,2) {};
\node [mynode] (n18) at (5.75,3.25) {};
\node [mynode] (n19) at (5.25,3.25) {};
\node [mynode] (n20) at (5,2) {};
\node [mynode] (n21) at (6,1) {};
\node [mynode] (n22) at (5,1) {};
\node [mynode] (n23) at (4,1) {};
\node [mynode] (n24) at (4,2) {};
\node [mynode] (n25) at (5.75,3.75) {};
\node [mynode] (n26) at (5.25,3.75) {};
\node [mynode,label=right:{$v_2$}] (n27) at (6,5) {};

\draw (n7) -- (n8) -- (n9) -- (n10) -- (n11) -- (n12) -- (n13);
\draw (n15) -- (n16) -- (n17);
\draw [green] (n17) -- (n18) -- (n19) -- (n20);
\draw (n20) -- (n21) -- (n22) -- (n23) -- (n24);
\draw  (n24) [bend left=15] to (n25); \draw  (n25) -- (n26) -- (n27);
\draw (n13) -- (n15);
\draw [dashed] (n17) -- (n20);
\end{tikzpicture}
\caption{These are the two black paths. The left one is now minimally guarded, but the right one is split again.}
\end{center}
\end{figure}
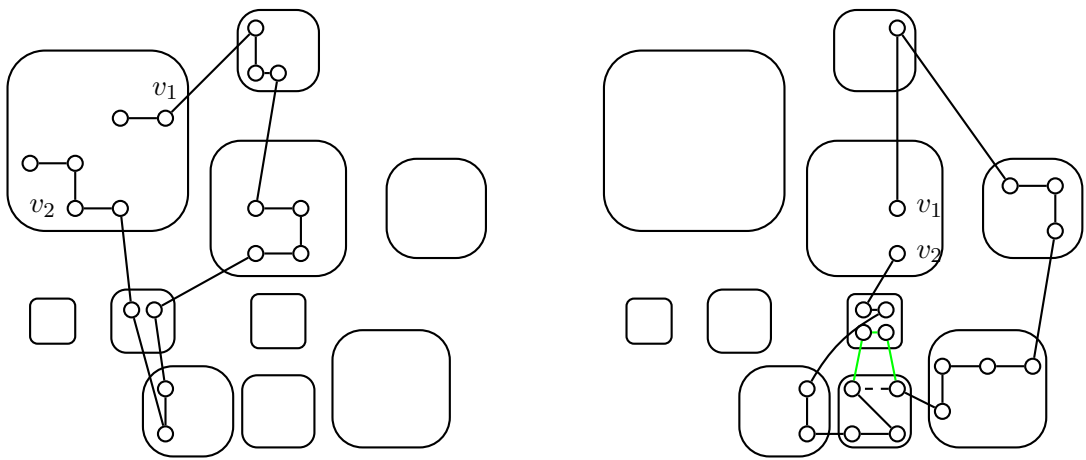

\begin{figure}
\begin{center}
\begin{tikzpicture}[scale=0.4,thick, mynode/.style={draw, circle,inner sep=0cm, minimum size=0.1cm}]
\draw[rounded corners=3mm] (-0.5,5.5) rectangle (3.5,9.5); 

\draw[rounded corners=3mm] (4,4.5) rectangle (7,7.5); 
\draw[rounded corners=3mm] (6.7,0.7) rectangle (9.3,3.3); 
\draw[rounded corners=3mm] (7.9,4.9) rectangle (10.1,7.1); 

\draw[rounded corners=2mm] (2.5,0.5) rectangle (4.5,2.5); 
\draw[rounded corners=2mm] (4.6,8.6) rectangle (6.4,10.4); 
\draw[rounded corners=1mm] (4.7,0.7) rectangle (6.3,2.3); 
\draw[rounded corners=1mm] (1.8,2.8) rectangle (3.2,4.2); 
\draw[rounded corners=1mm] (4.9,2.9) rectangle (6.1,4.1); 
\draw[rounded corners=1mm] (-0.0,3) rectangle (1.0,4.0); 
\node [mynode] (n1) at (2,8) {};
\node [mynode, label=above:{\footnotesize$v_1$}] (n2) at (3,8) {};
\node [mynode] (n3) at (5,10) {};
\node [mynode] (n4) at (5,9) {};
\node [mynode] (n5) at (5.5,9) {};
\node [mynode] (n6) at (5,6) {};
\node [mynode] (n7) at (6,6) {};
\node [mynode] (n27) at (6,5) {};
\node [mynode] (n28) at (5,5) {};
\node [mynode] (n29) at (2.75,3.75) {};
\node [mynode] (n30) at (3,2) {};
\node [mynode] (n31) at (3,1) {};
\node [mynode] (n32) at (2.25,3.75) {};
\node [mynode] (n33) [label=above:{\footnotesize$v_2$}] at (2,6) {};
\node [mynode] (n36) at (1,6) {};
\node [mynode] (n37) at (1,7) {};
\node [mynode] (n38) at (0,7) {};
\draw (n1) -- (n2) -- (n3) -- (n4) -- (n5) -- (n6) -- (n7); \draw  (n7) -- (n27); \draw (n27) -- (n28) -- (n29) -- (n30) -- (n31) -- (n32) -- (n33);
\draw (n36) -- (n37) -- (n38);
\draw (n33) -- (n36);
\end{tikzpicture}\qquad\qquad
\begin{tikzpicture}[scale=0.4,thick, mynode/.style={draw, circle,inner sep=0cm, minimum size=0.1cm}]
\draw[rounded corners=3mm] (-0.5,5.5) rectangle (3.5,9.5); 

\draw[rounded corners=3mm] (4,4.5) rectangle (7,7.5); 
\draw[rounded corners=3mm] (6.7,0.7) rectangle (9.3,3.3); 
\draw[rounded corners=3mm] (7.9,4.9) rectangle (10.1,7.1); 

\draw[rounded corners=2mm] (2.5,0.5) rectangle (4.5,2.5); 
\draw[rounded corners=2mm] (4.6,8.6) rectangle (6.4,10.4); 
\draw[rounded corners=1mm] (4.7,0.7) rectangle (6.3,2.3); 
\draw[rounded corners=1mm] (1.8,2.8) rectangle (3.2,4.2); 
\draw[rounded corners=1mm] (4.9,2.9) rectangle (6.1,4.1); 
\draw[rounded corners=1mm] (-0.0,3) rectangle (1.0,4.0); 
\node [mynode] (n33) [label=above:{\footnotesize $v_1$}] at (2,6) {};
\node [mynode] (n34) at (0.75,3.75) {};
\node [mynode] (n35) at (0.25,3.75) {};
\node [mynode] (n36) [label=above:{\footnotesize$v_2$}] at (1,6) {};
\draw (n33) -- (n34) -- (n35) -- (n36);
\end{tikzpicture}\vspace{\baselineskip}

\begin{tikzpicture}[scale=0.4,thick, mynode/.style={draw, circle,inner sep=0cm, minimum size=0.1cm}]
\draw[rounded corners=3mm] (-0.5,5.5) rectangle (3.5,9.5); 

\draw[rounded corners=3mm] (4,4.5) rectangle (7,7.5); 
\draw[rounded corners=3mm] (6.7,0.7) rectangle (9.3,3.3); 
\draw[rounded corners=3mm] (7.9,4.9) rectangle (10.1,7.1); 

\draw[rounded corners=2mm] (2.5,0.5) rectangle (4.5,2.5); 
\draw[rounded corners=2mm] (4.6,8.6) rectangle (6.4,10.4); 
\draw[rounded corners=1mm] (4.7,0.7) rectangle (6.3,2.3); 
\draw[rounded corners=1mm] (1.8,2.8) rectangle (3.2,4.2); 
\draw[rounded corners=1mm] (4.9,2.9) rectangle (6.1,4.1); 
\draw[rounded corners=1mm] (-0.0,3) rectangle (1.0,4.0); 
\node [mynode,label=right:{\footnotesize  $v_1$}] (n7) at (6,6) {};
\node [mynode] (n8) at (6,10) {};
\node [mynode] (n9) at (8.5,6.5) {};
\node [mynode] (n10) at (9.5,6.5) {};
\node [mynode] (n11) at (9.5,5.5) {};
\node [mynode] (n12) at (9,2.5) {};
\node [mynode] (n13) at (8,2.5) {};
\node [mynode] (n15) at (7,2.5) {};
\node [mynode] (n16) at (7,1.5) {};
\node [mynode] (n17) at (6,2) {};
\node [mynode] (n20) at (5,2) {};
\node [mynode] (n21) at (6,1) {};
\node [mynode] (n22) at (5,1) {};
\node [mynode] (n23) at (4,1) {};
\node [mynode] (n24) at (4,2) {};
\node [mynode] (n25) at (5.75,3.75) {};
\node [mynode] (n26) at (5.25,3.75) {};
\node [mynode,label=right:{\footnotesize$v_2$}] (n27) at (6,5) {};

\draw (n7) -- (n8) -- (n9) -- (n10) -- (n11) -- (n12) -- (n13);
\draw (n15) -- (n16) -- (n17);
\draw (n20) -- (n21) -- (n22) -- (n23) -- (n24);
\draw  (n24) [bend left=15] to (n25); \draw  (n25) -- (n26) -- (n27);
\draw (n13) -- (n15);
\draw (n17) -- (n20);
\end{tikzpicture}\qquad\qquad
\begin{tikzpicture}[scale=0.4,thick, mynode/.style={draw, circle,inner sep=0cm, minimum size=0.1cm}]
\draw[rounded corners=3mm] (-0.5,5.5) rectangle (3.5,9.5); 

\draw[rounded corners=3mm] (4,4.5) rectangle (7,7.5); 
\draw[rounded corners=3mm] (6.7,0.7) rectangle (9.3,3.3); 
\draw[rounded corners=3mm] (7.9,4.9) rectangle (10.1,7.1); 

\draw[rounded corners=2mm] (2.5,0.5) rectangle (4.5,2.5); 
\draw[rounded corners=2mm] (4.6,8.6) rectangle (6.4,10.4); 
\draw[rounded corners=1mm] (4.7,0.7) rectangle (6.3,2.3); 
\draw[rounded corners=1mm] (1.8,2.8) rectangle (3.2,4.2); 
\draw[rounded corners=1mm] (4.9,2.9) rectangle (6.1,4.1); 
\draw[rounded corners=1mm] (-0.0,3) rectangle (1.0,4.0); 
\node [mynode,label=below:{\footnotesize $v_1$}] (n13) at (8,2.5) {};
\node [mynode] (n14) at (8.5,5.5) {};
\node [mynode,label=below:{\footnotesize $v_2$}] (n15) at (7,2.5) {};
\draw (n13) -- (n14) -- (n15);
\end{tikzpicture}\qquad\qquad
\begin{tikzpicture}[scale=0.4,thick, mynode/.style={draw, circle,inner sep=0cm, minimum size=0.1cm}]
\draw[rounded corners=3mm] (-0.5,5.5) rectangle (3.5,9.5); 

\draw[rounded corners=3mm] (4,4.5) rectangle (7,7.5); 
\draw[rounded corners=3mm] (6.7,0.7) rectangle (9.3,3.3); 
\draw[rounded corners=3mm] (7.9,4.9) rectangle (10.1,7.1); 

\draw[rounded corners=2mm] (2.5,0.5) rectangle (4.5,2.5); 
\draw[rounded corners=2mm] (4.6,8.6) rectangle (6.4,10.4); 
\draw[rounded corners=1mm] (4.7,0.7) rectangle (6.3,2.3); 
\draw[rounded corners=1mm] (1.8,2.8) rectangle (3.2,4.2); 
\draw[rounded corners=1mm] (4.9,2.9) rectangle (6.1,4.1); 
\draw[rounded corners=1mm] (-0.0,3) rectangle (1.0,4.0); 
\node [mynode,label=below:{\footnotesize $v_1$}] (n17) at (6,2) {};
\node [mynode] (n18) at (5.75,3.25) {};
\node [mynode] (n19) at (5.25,3.25) {};
\node [mynode,label=below:{\footnotesize$v_2$}] (n20) at (5,2) {};

\draw (n17) -- (n18) -- (n19) -- (n20);
\end{tikzpicture}
\caption{The five base cases that occur.\label{fig:ex:basecases}}
\end{center}
\end{figure}
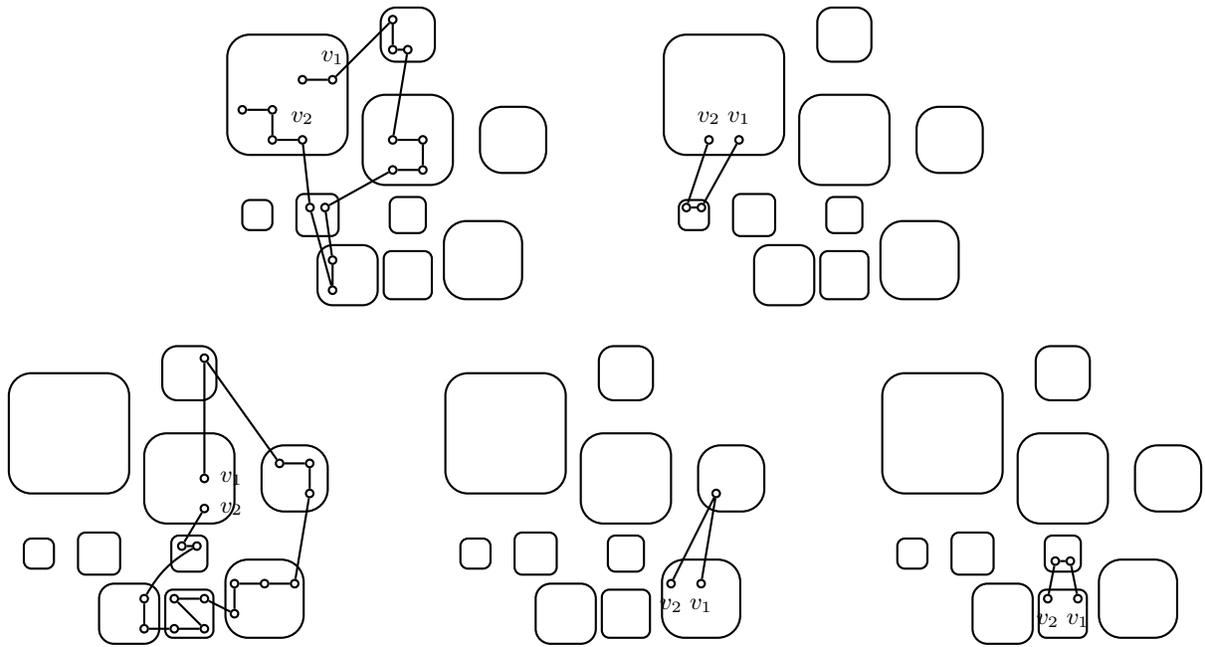

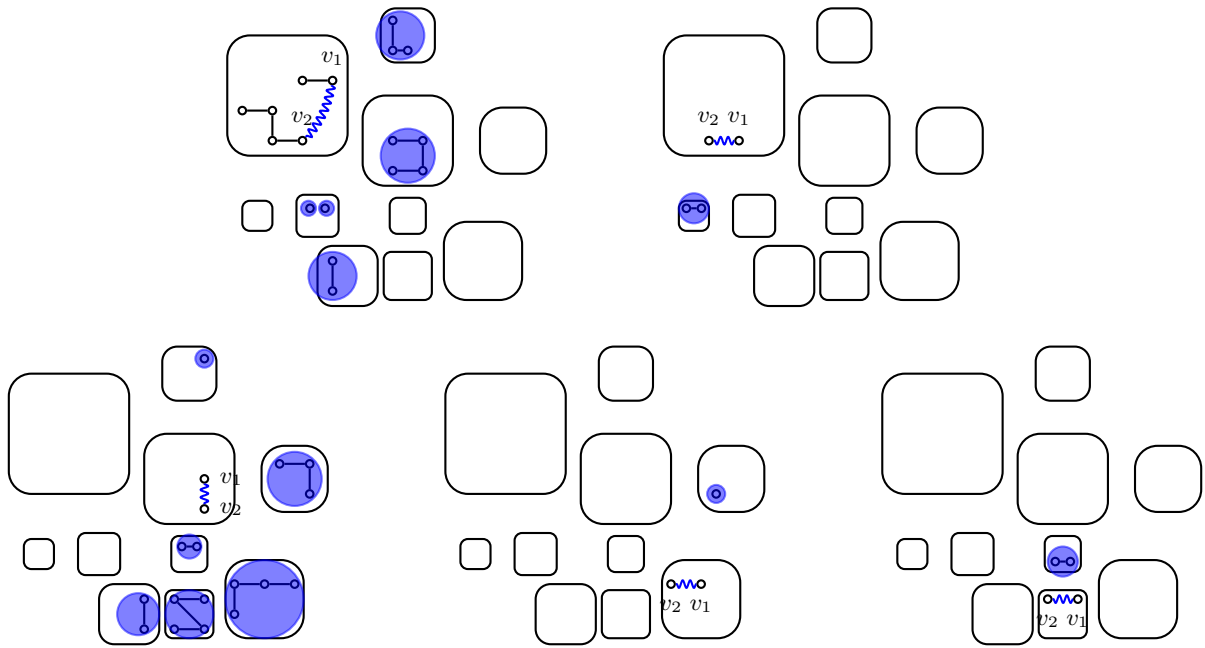
\begin{figure}
\begin{center}
\begin{tikzpicture}[scale=0.4,thick, mynode/.style={draw, circle,inner sep=0cm, minimum size=0.1cm}]
\draw[rounded corners=3mm] (-0.5,5.5) rectangle (3.5,9.5); 

\draw[rounded corners=3mm] (4,4.5) rectangle (7,7.5); 
\draw[rounded corners=3mm] (6.7,0.7) rectangle (9.3,3.3); 
\draw[rounded corners=3mm] (7.9,4.9) rectangle (10.1,7.1); 

\draw[rounded corners=2mm] (2.5,0.5) rectangle (4.5,2.5); 
\draw[rounded corners=2mm] (4.6,8.6) rectangle (6.4,10.4); 
\draw[rounded corners=1mm] (4.7,0.7) rectangle (6.3,2.3); 
\draw[rounded corners=1mm] (1.8,2.8) rectangle (3.2,4.2); 
\draw[rounded corners=1mm] (4.9,2.9) rectangle (6.1,4.1); 
\draw[rounded corners=1mm] (-0.0,3) rectangle (1.0,4.0); 
\node [mynode] (n1) at (2,8) {};
\node [mynode, label=above:{\footnotesize$v_1$}] (n2) at (3,8) {};
\node [mynode] (n3) at (5,10) {};
\node [mynode] (n4) at (5,9) {};
\node [mynode] (n5) at (5.5,9) {};
\node [mynode] (n6) at (5,6) {};
\node [mynode] (n7) at (6,6) {};
\node [mynode] (n27) at (6,5) {};
\node [mynode] (n28) at (5,5) {};
\node [mynode] (n29) at (2.75,3.75) {};
\node [mynode] (n30) at (3,2) {};
\node [mynode] (n31) at (3,1) {};
\node [mynode] (n32) at (2.25,3.75) {};
\node [mynode] (n33) [label=above:{\footnotesize$v_2$}] at (2,6) {};
\node [mynode] (n36) at (1,6) {};
\node [mynode] (n37) at (1,7) {};
\node [mynode] (n38) at (0,7) {};
\draw (n1) -- (n2); \draw (n3) -- (n4) -- (n5);
\draw (n6) -- (n7); \draw  (n7) -- (n27) -- (n28);
\draw (n30) -- (n31);
\draw (n36) -- (n37) -- (n38);
\draw (n33) -- (n36);
\draw [blue,blue,decorate,decoration={snake,segment length=1mm, amplitude=0.5mm},bend left=15] (n2) to (n33);
\draw [blue, fill=blue, fill opacity=0.25,opacity=0.5] (5.25,9.5) circle (0.8cm);
\draw [blue, fill=blue, fill opacity=0.25,opacity=0.5] (5.5,5.5) circle (0.9cm);
\draw [blue, fill=blue, fill opacity=0.25,opacity=0.5] (3,1.5) circle (0.8cm);
\draw [blue, fill=blue, fill opacity=0.25,opacity=0.5] (2.8,3.75) circle (0.25cm);
\draw [blue, fill=blue, fill opacity=0.25,opacity=0.5] (2.2,3.75) circle (0.25cm);
\end{tikzpicture}\qquad\qquad
\begin{tikzpicture}[scale=0.4,thick, mynode/.style={draw, circle,inner sep=0cm, minimum size=0.1cm}]
\draw[rounded corners=3mm] (-0.5,5.5) rectangle (3.5,9.5); 

\draw[rounded corners=3mm] (4,4.5) rectangle (7,7.5); 
\draw[rounded corners=3mm] (6.7,0.7) rectangle (9.3,3.3); 
\draw[rounded corners=3mm] (7.9,4.9) rectangle (10.1,7.1); 

\draw[rounded corners=2mm] (2.5,0.5) rectangle (4.5,2.5); 
\draw[rounded corners=2mm] (4.6,8.6) rectangle (6.4,10.4); 
\draw[rounded corners=1mm] (4.7,0.7) rectangle (6.3,2.3); 
\draw[rounded corners=1mm] (1.8,2.8) rectangle (3.2,4.2); 
\draw[rounded corners=1mm] (4.9,2.9) rectangle (6.1,4.1); 
\draw[rounded corners=1mm] (-0.0,3) rectangle (1.0,4.0); 
\node [mynode] (n33) [label=above:{\footnotesize $v_1$}] at (2,6) {};
\node [mynode] (n34) at (0.75,3.75) {};
\node [mynode] (n35) at (0.25,3.75) {};
\node [mynode] (n36) [label=above:{\footnotesize$v_2$}] at (1,6) {};
\draw (n34) -- (n35);
\draw [blue, fill=blue, fill opacity=0.25,opacity=0.5] (0.5,3.75) circle (0.5cm);
\draw [blue, blue,decorate,decoration={snake,segment length=1mm, amplitude=0.5mm}] (n33) -- (n36);
\end{tikzpicture}\vspace{\baselineskip}

\begin{tikzpicture}[scale=0.4,thick, mynode/.style={draw, circle,inner sep=0cm, minimum size=0.1cm}]
\draw[rounded corners=3mm] (-0.5,5.5) rectangle (3.5,9.5); 

\draw[rounded corners=3mm] (4,4.5) rectangle (7,7.5); 
\draw[rounded corners=3mm] (6.7,0.7) rectangle (9.3,3.3); 
\draw[rounded corners=3mm] (7.9,4.9) rectangle (10.1,7.1); 

\draw[rounded corners=2mm] (2.5,0.5) rectangle (4.5,2.5); 
\draw[rounded corners=2mm] (4.6,8.6) rectangle (6.4,10.4); 
\draw[rounded corners=1mm] (4.7,0.7) rectangle (6.3,2.3); 
\draw[rounded corners=1mm] (1.8,2.8) rectangle (3.2,4.2); 
\draw[rounded corners=1mm] (4.9,2.9) rectangle (6.1,4.1); 
\draw[rounded corners=1mm] (-0.0,3) rectangle (1.0,4.0); 
\node [mynode,label=right:{\footnotesize  $v_1$}] (n7) at (6,6) {};
\node [mynode] (n8) at (6,10) {};
\node [mynode] (n9) at (8.5,6.5) {};
\node [mynode] (n10) at (9.5,6.5) {};
\node [mynode] (n11) at (9.5,5.5) {};
\node [mynode] (n12) at (9,2.5) {};
\node [mynode] (n13) at (8,2.5) {};
\node [mynode] (n15) at (7,2.5) {};
\node [mynode] (n16) at (7,1.5) {};
\node [mynode] (n17) at (6,2) {};
\node [mynode] (n20) at (5,2) {};
\node [mynode] (n21) at (6,1) {};
\node [mynode] (n22) at (5,1) {};
\node [mynode] (n23) at (4,1) {};
\node [mynode] (n24) at (4,2) {};
\node [mynode] (n25) at (5.75,3.75) {};
\node [mynode] (n26) at (5.25,3.75) {};
\node [mynode,label=right:{\footnotesize$v_2$}] (n27) at (6,5) {};

\draw [blue,decorate,decoration={snake,segment length=1mm, amplitude=0.5mm}] (n7) -- (n27);
\draw (n9) -- (n10) -- (n11);
\draw  (n12) -- (n13)-- (n15) -- (n16);
\draw (n17) -- (n20) -- (n21) -- (n22);
\draw (n23) -- (n24);
\draw  (n25) -- (n26); 
\draw [blue, fill=blue, fill opacity=0.25,opacity=0.5] (6,10) circle (0.3cm);
\draw [blue, fill=blue, fill opacity=0.25,opacity=0.5] (9,6) circle (0.9cm);
\draw [blue, fill=blue, fill opacity=0.25,opacity=0.5] (8,2) circle (1.3cm);
\draw [blue, fill=blue, fill opacity=0.25,opacity=0.5] (5.5,1.5) circle (0.8cm);
\draw [blue, fill=blue, fill opacity=0.25,opacity=0.5] (3.8,1.5) circle (0.7cm);
\draw [blue, fill=blue, fill opacity=0.25,opacity=0.5] (5.5,3.75) circle (0.4cm);
\end{tikzpicture}\qquad\qquad
\begin{tikzpicture}[scale=0.4,thick, mynode/.style={draw, circle,inner sep=0cm, minimum size=0.1cm}]
\draw[rounded corners=3mm] (-0.5,5.5) rectangle (3.5,9.5); 

\draw[rounded corners=3mm] (4,4.5) rectangle (7,7.5); 
\draw[rounded corners=3mm] (6.7,0.7) rectangle (9.3,3.3); 
\draw[rounded corners=3mm] (7.9,4.9) rectangle (10.1,7.1); 

\draw[rounded corners=2mm] (2.5,0.5) rectangle (4.5,2.5); 
\draw[rounded corners=2mm] (4.6,8.6) rectangle (6.4,10.4); 
\draw[rounded corners=1mm] (4.7,0.7) rectangle (6.3,2.3); 
\draw[rounded corners=1mm] (1.8,2.8) rectangle (3.2,4.2); 
\draw[rounded corners=1mm] (4.9,2.9) rectangle (6.1,4.1); 
\draw[rounded corners=1mm] (-0.0,3) rectangle (1.0,4.0); 
\node [mynode,label=below:{\footnotesize $v_1$}] (n13) at (8,2.5) {};
\node [mynode] (n14) at (8.5,5.5) {};
\node [mynode,label=below:{\footnotesize $v_2$}] (n15) at (7,2.5) {};
\draw [blue,blue,decorate,decoration={snake,segment length=1mm, amplitude=0.5mm}] (n13)  -- (n15);
\draw [blue, fill=blue, fill opacity=0.25,opacity=0.5] (8.5,5.5) circle (0.3cm);
\end{tikzpicture}\qquad\qquad
\begin{tikzpicture}[scale=0.4,thick, mynode/.style={draw, circle,inner sep=0cm, minimum size=0.1cm}]
\draw[rounded corners=3mm] (-0.5,5.5) rectangle (3.5,9.5); 

\draw[rounded corners=3mm] (4,4.5) rectangle (7,7.5); 
\draw[rounded corners=3mm] (6.7,0.7) rectangle (9.3,3.3); 
\draw[rounded corners=3mm] (7.9,4.9) rectangle (10.1,7.1); 

\draw[rounded corners=2mm] (2.5,0.5) rectangle (4.5,2.5); 
\draw[rounded corners=2mm] (4.6,8.6) rectangle (6.4,10.4); 
\draw[rounded corners=1mm] (4.7,0.7) rectangle (6.3,2.3); 
\draw[rounded corners=1mm] (1.8,2.8) rectangle (3.2,4.2); 
\draw[rounded corners=1mm] (4.9,2.9) rectangle (6.1,4.1); 
\draw[rounded corners=1mm] (-0.0,3) rectangle (1.0,4.0); 
\node [mynode,label=below:{\footnotesize $v_1$}] (n17) at (6,2) {};
\node [mynode] (n18) at (5.75,3.25) {};
\node [mynode] (n19) at (5.25,3.25) {};
\node [mynode,label=below:{\footnotesize$v_2$}] (n20) at (5,2) {};
\draw (n18) -- (n19);
\draw [blue,blue,decorate,decoration={snake,segment length=1mm, amplitude=0.5mm}] (n17) -- (n20);
\draw [blue, fill=blue, fill opacity=0.25,opacity=0.5] (5.5,3.25) circle (0.5cm);

\end{tikzpicture}
\caption{We get that the blue edges cost at most the same as the black edges in Figure~\ref{fig:ex:basecases}, and that summing the widths for all blue components costs at most the cost of the black edges in Figure~\ref{fig:ex:basecases} as well.}
\end{center}
\end{figure}

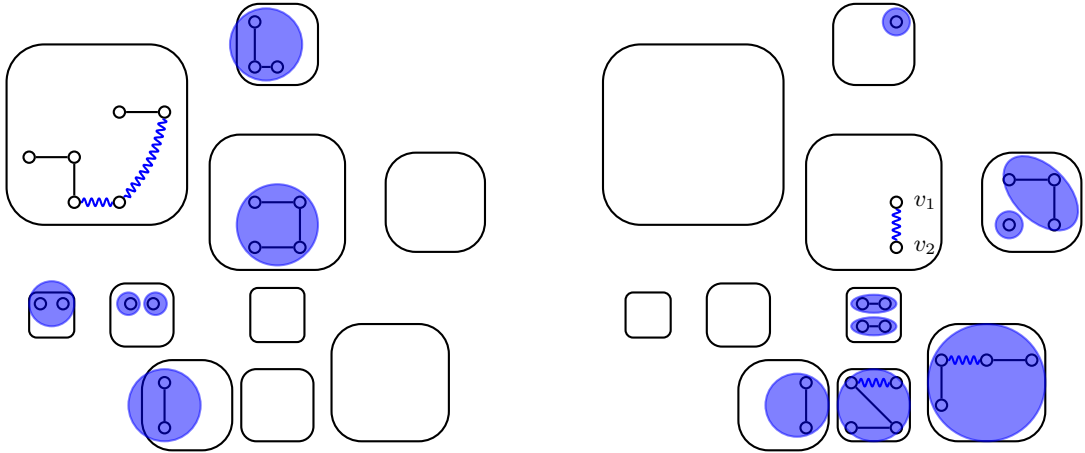
\begin{figure}
\begin{center}
\begin{tikzpicture}[scale=0.6,thick, mynode/.style={draw, circle,inner sep=0cm, minimum size=0.15cm}]
\draw[rounded corners=5mm] (-0.5,5.5) rectangle (3.5,9.5); 

\draw[rounded corners=4mm] (4,4.5) rectangle (7,7.5); 
\draw[rounded corners=4mm] (6.7,0.7) rectangle (9.3,3.3); 
\draw[rounded corners=4mm] (7.9,4.9) rectangle (10.1,7.1); 

\draw[rounded corners=4mm] (2.5,0.5) rectangle (4.5,2.5); 
\draw[rounded corners=3mm] (4.6,8.6) rectangle (6.4,10.4); 
\draw[rounded corners=2mm] (4.7,0.7) rectangle (6.3,2.3); 
\draw[rounded corners=2mm] (1.8,2.8) rectangle (3.2,4.2); 
\draw[rounded corners=1mm] (4.9,2.9) rectangle (6.1,4.1); 
\draw[rounded corners=1mm] (-0.0,3) rectangle (1.0,4.0); 
\node [mynode] (n1) at (2,8) {};
\node [mynode] (n2) at (3,8) {};
\node [mynode] (n3) at (5,10) {};
\node [mynode] (n4) at (5,9) {};
\node [mynode] (n5) at (5.5,9) {};
\node [mynode] (n6) at (5,6) {};
\node [mynode] (n7) at (6,6) {};
\node [mynode] (n27) at (6,5) {};
\node [mynode] (n28) at (5,5) {};
\node [mynode] (n29) at (2.75,3.75) {};
\node [mynode] (n30) at (3,2) {};
\node [mynode] (n31) at (3,1) {};
\node [mynode] (n32) at (2.25,3.75) {};
\node [mynode] (n33) at (2,6) {};
\node [mynode] (n34) at (0.75,3.75) {};
\node [mynode] (n35) at (0.25,3.75) {};
\node [mynode] (n36) at (1,6) {};
\node [mynode] (n37) at (1,7) {};
\node [mynode] (n38) at (0,7) {};
\draw (n1) -- (n2); \draw (n3) -- (n4) -- (n5);
\draw (n6) -- (n7); \draw  (n7) -- (n27) -- (n28);
\draw (n30) -- (n31);
\draw (n36) -- (n37) -- (n38);
\draw [blue,blue,decorate,decoration={snake,segment length=1mm, amplitude=0.5mm},bend left=15] (n33) -- (n36);
\draw [blue,blue,decorate,decoration={snake,segment length=1mm, amplitude=0.5mm},bend left=15] (n2) to (n33);
\draw [blue, fill=blue, fill opacity=0.25,opacity=0.5] (5.25,9.5) circle (0.8cm);
\draw [blue, fill=blue, fill opacity=0.25,opacity=0.5] (5.5,5.5) circle (0.9cm);
\draw [blue, fill=blue, fill opacity=0.25,opacity=0.5] (3,1.5) circle (0.8cm);
\draw [blue, fill=blue, fill opacity=0.25,opacity=0.5] (2.8,3.75) circle (0.25cm);
\draw [blue, fill=blue, fill opacity=0.25,opacity=0.5] (2.2,3.75) circle (0.25cm);
\draw [blue, fill=blue, fill opacity=0.25,opacity=0.5] (0.5,3.75) circle (0.5cm);
\end{tikzpicture}\qquad\qquad
\begin{tikzpicture}[scale=0.6,thick, mynode/.style={draw, circle,inner sep=0cm, minimum size=0.15cm}]
\draw[rounded corners=5mm] (-0.5,5.5) rectangle (3.5,9.5); 

\draw[rounded corners=4mm] (4,4.5) rectangle (7,7.5); 
\draw[rounded corners=4mm] (6.7,0.7) rectangle (9.3,3.3); 
\draw[rounded corners=4mm] (7.9,4.9) rectangle (10.1,7.1); 

\draw[rounded corners=4mm] (2.5,0.5) rectangle (4.5,2.5); 
\draw[rounded corners=3mm] (4.6,8.6) rectangle (6.4,10.4); 
\draw[rounded corners=2mm] (4.7,0.7) rectangle (6.3,2.3); 
\draw[rounded corners=2mm] (1.8,2.8) rectangle (3.2,4.2); 
\draw[rounded corners=1mm] (4.9,2.9) rectangle (6.1,4.1); 
\draw[rounded corners=1mm] (-0.0,3) rectangle (1.0,4.0); 
\node [mynode,label=right:{\footnotesize  $v_1$}] (n7) at (6,6) {};
\node [mynode] (n8) at (6,10) {};
\node [mynode] (n9) at (8.5,6.5) {};
\node [mynode] (n10) at (9.5,6.5) {};
\node [mynode] (n11) at (9.5,5.5) {};
\node [mynode] (n12) at (9,2.5) {};
\node [mynode] (n13) at (8,2.5) {};
\node [mynode] (n14) at (8.5,5.5) {};
\node [mynode] (n15) at (7,2.5) {};
\node [mynode] (n16) at (7,1.5) {};
\node [mynode] (n17) at (6,2) {};
\node [mynode] (n18) at (5.75,3.25) {};
\node [mynode] (n19) at (5.25,3.25) {};
\node [mynode] (n20) at (5,2) {};
\node [mynode] (n21) at (6,1) {};
\node [mynode] (n22) at (5,1) {};
\node [mynode] (n23) at (4,1) {};
\node [mynode] (n24) at (4,2) {};
\node [mynode] (n25) at (5.75,3.75) {};
\node [mynode] (n26) at (5.25,3.75) {};
\node [mynode,label=right:{\footnotesize$v_2$}] (n27) at (6,5) {};

\draw [blue,decorate,decoration={snake,segment length=1mm, amplitude=0.5mm}] (n7) -- (n27);
\draw (n9) -- (n10) -- (n11);
\draw  (n12) -- (n13);
\draw [blue,decorate,decoration={snake,segment length=1mm, amplitude=0.5mm}] (n13) -- (n15);
\draw (n15) -- (n16);
\draw [blue,blue,decorate,decoration={snake,segment length=1mm, amplitude=0.5mm}] (n17) -- (n20);
\draw (n20) -- (n21) -- (n22);
\draw (n23) -- (n24);
\draw (n18) -- (n19);
\draw  (n25) -- (n26); 
\draw [blue, fill=blue, fill opacity=0.25,opacity=0.5] (6,10) circle (0.3cm);
\draw[rotate around={45:(9.2,6.2)},blue, fill=blue, fill opacity=0.25,opacity=0.5] (9.2,6.2) ellipse (0.6cm and 1.0cm);
\draw [blue, fill=blue, fill opacity=0.25,opacity=0.5] (8,2) circle (1.3cm);
\draw [blue, fill=blue, fill opacity=0.25,opacity=0.5] (5.5,1.5) circle (0.8cm);
\draw [blue, fill=blue, fill opacity=0.25,opacity=0.5] (3.8,1.5) circle (0.7cm);
\draw [blue, fill=blue, fill opacity=0.25,opacity=0.5] (5.5,3.75) ellipse (0.5cm and 0.2cm);
\draw [blue, fill=blue, fill opacity=0.25,opacity=0.5] (5.5,3.25) ellipse (0.5cm and 0.2cm);
\draw [blue, fill=blue, fill opacity=0.25,opacity=0.5] (8.5,5.5) circle (0.3cm);
\end{tikzpicture}
\caption{The result of the induction for the paths in Figure~\ref{fig:ex:3}}.
\end{center}
\end{figure}

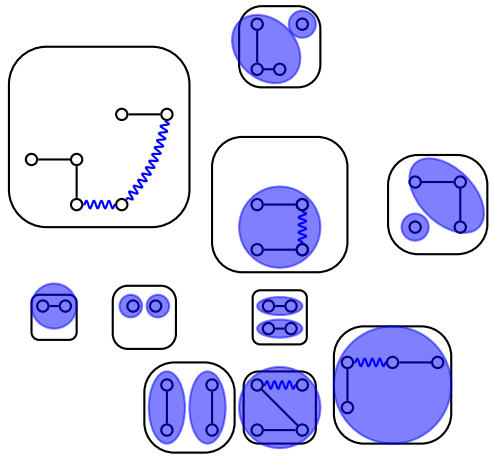
\begin{figure}
\begin{center}
\begin{tikzpicture}[scale=0.6,thick, mynode/.style={draw, circle,inner sep=0cm, minimum size=0.15cm}]
\draw[rounded corners=5mm] (-0.5,5.5) rectangle (3.5,9.5); 

\draw[rounded corners=4mm] (4,4.5) rectangle (7,7.5); 
\draw[rounded corners=4mm] (6.7,0.7) rectangle (9.3,3.3); 
\draw[rounded corners=4mm] (7.9,4.9) rectangle (10.1,7.1); 

\draw[rounded corners=4mm] (2.5,0.5) rectangle (4.5,2.5); 
\draw[rounded corners=3mm] (4.6,8.6) rectangle (6.4,10.4); 
\draw[rounded corners=2mm] (4.7,0.7) rectangle (6.3,2.3); 
\draw[rounded corners=2mm] (1.8,2.8) rectangle (3.2,4.2); 
\draw[rounded corners=1mm] (4.9,2.9) rectangle (6.1,4.1); 
\draw[rounded corners=1mm] (-0.0,3) rectangle (1.0,4.0); 
\node [mynode] (n7) at (6,6) {};
\node [mynode] (n1) at (2,8) {};
\node [mynode] (n2) at (3,8) {};
\node [mynode] (n3) at (5,10) {};
\node [mynode] (n4) at (5,9) {};
\node [mynode] (n5) at (5.5,9) {};
\node [mynode] (n6) at (5,6) {};
\node [mynode] (n7) at (6,6) {};
\node [mynode] (n8) at (6,10) {};
\node [mynode] (n9) at (8.5,6.5) {};
\node [mynode] (n10) at (9.5,6.5) {};
\node [mynode] (n11) at (9.5,5.5) {};
\node [mynode] (n12) at (9,2.5) {};
\node [mynode] (n13) at (8,2.5) {};
\node [mynode] (n14) at (8.5,5.5) {};
\node [mynode] (n15) at (7,2.5) {};
\node [mynode] (n16) at (7,1.5) {};
\node [mynode] (n17) at (6,2) {};
\node [mynode] (n18) at (5.75,3.25) {};
\node [mynode] (n19) at (5.25,3.25) {};
\node [mynode] (n20) at (5,2) {};
\node [mynode] (n21) at (6,1) {};
\node [mynode] (n22) at (5,1) {};
\node [mynode] (n23) at (4,1) {};
\node [mynode] (n24) at (4,2) {};
\node [mynode] (n25) at (5.75,3.75) {};
\node [mynode] (n26) at (5.25,3.75) {};
\node [mynode] (n27) at (6,5) {};
\node [mynode] (n28) at (5,5) {};
\node [mynode] (n29) at (2.75,3.75) {};
\node [mynode] (n30) at (3,2) {};
\node [mynode] (n31) at (3,1) {};
\node [mynode] (n32) at (2.25,3.75) {};
\node [mynode] (n33) at (2,6) {};
\node [mynode] (n34) at (0.75,3.75) {};
\node [mynode] (n35) at (0.25,3.75) {};
\node [mynode] (n36) at (1,6) {};
\node [mynode] (n37) at (1,7) {};
\node [mynode] (n38) at (0,7) {};
\node [mynode] (n27) at (6,5) {};

\draw (n1) -- (n2); \draw (n36) -- (n37) -- (n38); \draw (n6) -- (n7); \draw (n27)--(n28);
\draw (n3) -- (n4) -- (n5); \draw (n30) -- (n31);
\draw [blue,decorate,decoration={snake,segment length=1mm, amplitude=0.5mm}] (n7) -- (n27);
\draw (n9) -- (n10) -- (n11);
\draw  (n12) -- (n13);
\draw [blue,decorate,decoration={snake,segment length=1mm, amplitude=0.5mm}] (n13) -- (n15);
\draw (n15) -- (n16);
\draw [blue,blue,decorate,decoration={snake,segment length=1mm, amplitude=0.5mm}] (n17) -- (n20);
\draw (n20) -- (n21) -- (n22);
\draw (n23) -- (n24);
\draw (n18) -- (n19);
\draw (n34) -- (n35);
\draw  (n25) -- (n26); 
\draw [blue, fill=blue, fill opacity=0.25,opacity=0.5] (6,10) circle (0.3cm);
\draw[rotate around={45:(9.2,6.2)},blue, fill=blue, fill opacity=0.25,opacity=0.5] (9.2,6.2) ellipse (0.6cm and 1.0cm);
\draw [blue, fill=blue, fill opacity=0.25,opacity=0.5] (8,2) circle (1.3cm);
\draw [blue, fill=blue, fill opacity=0.25,opacity=0.5] (5.5,1.5) circle (0.9cm);
\draw [blue, fill=blue, fill opacity=0.25,opacity=0.5] (5.5,3.75) ellipse (0.5cm and 0.2cm);
\draw [blue, fill=blue, fill opacity=0.25,opacity=0.5] (5.5,3.25) ellipse (0.5cm and 0.2cm);
\draw [blue, fill=blue, fill opacity=0.25,opacity=0.5] (8.5,5.5) circle (0.3cm);

\draw [blue,blue,decorate,decoration={snake,segment length=1mm, amplitude=0.5mm},bend left=15] (n33) -- (n36);
\draw [blue,blue,decorate,decoration={snake,segment length=1mm, amplitude=0.5mm},bend left=15] (n2) to (n33);
\draw [rotate around={45:(5.2,9.45)},blue, fill=blue, fill opacity=0.25,opacity=0.5] (5.2,9.45) ellipse (0.65cm and 0.85cm);
\draw [blue, fill=blue, fill opacity=0.25,opacity=0.5] (5.5,5.5) circle (0.9cm);
\draw [blue, fill=blue, fill opacity=0.25,opacity=0.5] (3,1.5) ellipse (0.4cm and 0.8cm);
\draw [blue, fill=blue, fill opacity=0.25,opacity=0.5] (3.9,1.5) ellipse (0.4cm and 0.8cm);
\draw [blue, fill=blue, fill opacity=0.25,opacity=0.5] (2.8,3.75) circle (0.25cm);
\draw [blue, fill=blue, fill opacity=0.25,opacity=0.5] (2.2,3.75) circle (0.25cm);
\draw [blue, fill=blue, fill opacity=0.25,opacity=0.5] (0.5,3.75) circle (0.5cm);
\end{tikzpicture}
\caption{Final set $F'$ and connected components of $F' \cup \bFin$.\label{fig:ex:last}}
\end{center}
\end{figure}

\clearpage

\subsubsection{Wrapping Up}

\reductionforesttotree
\begin{proof}
By Observation~\ref{obs:f-assumptions}, we know that by accepting a factor of $2$ in the cost, we can assume that the connected components $F_1,\ldots,F_q$ of $\REFF$ are node disjoint cycles and that $V[\REFF]=V[\ALG]$ equals the set of all terminals. The connected components of $\ALG$ are $T_1,\ldots,T_p$. Recall that we use $\xi(v)$ for the index of the component $T_j$ that $v$ lies in, and we use $\Fin:=\cset{e=\{u,v\} \in  F}{\xi(u)=\xi(v)}$ and $\Fbetw:=\cset{e=\{u,v\} \in F}{\xi(u)\neq\xi(v)}$ for any edge set $F$ for the set of edges within components of $\ALG$ or between them, respectively. Also, if an edge set $F'$ in $G$ satisfies $V[F'] \subseteq V[T_j]$ for a $j\in\{1,\ldots,p\}$, then we use $\xi(F') = j$.
We want to replace all cycles $F_i$ by sets $F_i'$ which satisfy $F_i' = (F_i')_{\circlearrowright}$ while keeping the solution feasible and within a constant factor of $\phi(\REFF)$. Let $F=F_i$ be one of the cycles.

Let $j^\ast := \max_{v \in V[F]} \xi(v)$ be the index of the component with the highest width among the components that $F$ visits. There have to be at least two vertices on $F$ from $T_{j^\ast}$ (every vertex is a terminal by our assumption, and since the cycles are disjoint, a lone vertex would not be connected to its mate, but $\REFF$ is a feasible solution). If the two vertices are adjacent in $F$, then we can delete the edge that connects them and obtain a path that satisfies the preconditions of Lemma~\ref{lem:guardedcycles}. We get an edge set $F'$. Otherwise, let $v_1$ and $v_2$ be two vertices from $T_{j^\ast}$ that are not connected by an edge in $F$. 
Then the cycle is partitioned into two paths, both with endpoints $v_1$ and $v_2$, that both satisfy the preconditions of Lemma~\ref{lem:guardedcycles}. 
In this case, we get two solutions $F_l'$ and $F_r'$ and set $F' := F_l' \cup F_r'$.
Notice that, either way, we get a set of edges $F'$ on the vertices $V[F]$ inducing connected components $F_1',\ldots,F_x'$ of $\Fin \cup F'$ with the following properties:
\begin{enumerate}
\item $\ALG$ is \edgesetswap-optimal with respect to $F'$
\item For all $F_\ell'$, there exists an index $j$ such that $V[F_\ell']\subseteq V[T_j]$ (thus, $\xi(F_{\ell}') = j$). 
When $F' = F_l' \cup F_r'$, then notice that the connected components of $F_l'$ and $F_r'$ are disjoint with the exception of those containing $v_1$ and $v_2$. Thus, no components with vertices from different $T_j$ will get connected.
\item There is only one $F_{\ell}'$ with $\xi(F_{\ell}') = j^\ast$, assume w.l.o.g. that  $\xi(F_{1}')=j^\ast$.
When $F' = F_l' \cup F_r'$, then notice that all occurrences of vertices from $T_{j^\ast}$ are connected to $v_1$ and $v_2$ in either $F_l'$ or $F_r'$, thus, they are all in the same connected component of $F'$.
\item It holds that $d(F') \le d(\Fbetw)$ and $\sum_{\ell=2}^{x} w(T_{\xi(F_{\ell}')}) \le c\cdot d(\Fbetw)$.
When $F' = F_l' \cup F_r'$, notice that $d(F')=d(F_l')+d(F_r')\le d(\Fbetw)$, and that $\sum_{\ell=2}^{x} w(T_{\xi(F_{\ell}')})$, which does not include the components with $v_1$ and $v_2$, can be split according to the \lq side\rq\ of the cycle that the components belong to.
\end{enumerate}
The solution $\REFF'$ that arises from substituting $\Fbetw$ by $F'$ is not necessarily feasible because $\Fin \cup F'$ can consist of multiple connected components. We need to transform $F'$ such that all terminal pairs in $\Fin \cup F'$ are connected. Notice that a terminal pair $u, \bar u$ always satisfies $\xi(u)=\xi(\bar u)$ because $\ALG$ is feasible, \ie we do not need to connect connected components with vertices from different $T_j$.
Furthermore, all vertices in $V[F]$ from $T_{j^\ast}$ are already connected because of 3.
Fix a $j < j^\ast$ and consider all connected components $F_\ell'$ with $\xi(F_\ell')=j$. Notice that $j < j^\ast$ implies that the widths of these components are part of $\sum_{\ell=2}^{x} w(T_{\xi(F_{\ell}')})$. Start with an arbitrary $F_\ell'$. If there is a terminal $u \in F_\ell'$ whose partner $\bar u$ is in $F_{\ell'}'$, $\ell'\neq \ell$, then connect $u$ to $\bar u$. Since $u, \bar u \in T_j$, their distance is at most $w(T_j)$. Since $w(T_{\xi(F_{\ell'})}) = w(T_j)$, the contribution of $F_{\ell'}$ to the width sum is large enough to cover the connection cost. Now, $F_\ell'$ and $F_{\ell'}'$ are merged into one component, we keep calling it $F_\ell'$. Repeat the process until all terminals in $F_\ell'$ are connected to their partner, while always spending a connection cost that is bounded by the contribution of the component that gets merged into $F_\ell'$. When $F_\ell'$ is done, pick a component that was not merged and continue in the same fashion. Repeat until all components are merged or processed. In the end, $F'$ is a feasible solution, and the money spent for the additional edges is bounded by $\sum_{\ell=2}^{x} w(T_{\xi(F_{\ell}')}) \le c\cdot d(\Fbetw)$. Thus, the cost of the new solution is at most $(1+c)\cdot d(\Fbetw)$. 


We process all $F_i$ with $F_i \neq (F_i)_{\circlearrowright}$ in this manner to obtain a solution $\REFF'$ with $\REFF' = \REFF_{\circlearrowright}'$.
Notice that $\ALG$ is \edgeedge and \edgeset swap-optimal with respect to $\REFF'$. This holds for the new edges because they are from $G$ and $\ALG$ is swap-optimal with respect to $G$, and it holds for the edges that we get from Lemma~\ref{lem:guardedcycles} by property 1.

Thus, we have found $\REFF'$ with the necessary properties.
\end{proof}

We can now apply Corollary~\ref{treecaseresult} to bound the cost of $\ALG$.

\begin{corollary}\label{cor:locality-gap}
Let $G=(V,E)$ be a complete graph, let $d: E \to \Rp$ be a metric that assigns a cost $d_e$ to every edge $e \in E$ and   let $\terms \subseteq V\times V$ be a terminal set. 
  Let $\ALG, \REFF \subseteq E$ be two feasible Steiner Forest solutions for $(G,\terms)$.
  Furthermore, suppose that $\ALG$ is \edgeedge, \edgeset and \pathset swap-optimal with respect to $E$ and $\phi$, that $\ALG$ is $c$-approximate connecting move optimal and that $\ALG$ only uses edges between terminals. Then $d(\ALG') \le 23(1+c) \cdot d(\REFF)$.
\end{corollary}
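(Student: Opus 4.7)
The plan is to chain together Lemma~\ref{lem:main-forest:reduction-to-tree} with the tree-case bound of Corollary~\ref{treecaseresult}, applied component by component. First I would invoke Lemma~\ref{lem:main-forest:reduction-to-tree} to obtain an auxiliary feasible solution $\REFF'$ with three key properties: $d(\REFF')\le 2(1+c)\cdot d(\REFF)$; every connected component of $\REFF'$ sits inside a single connected component of $\ALG$ (i.e.\ $\REFF'=\REFF'_{\circlearrowright}$); and $\ALG$ remains \edgeedge and \edgeset swap-optimal relative to $\REFF'$. The second property is what unlocks a component-by-component analysis, while the third is precisely the hypothesis needed by Corollary~\ref{treecaseresult}.

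Next, let $A_1,\ldots,A_p$ denote the connected components of $\ALG$, and for each $j$ set $\REFF'_j := \{e\in\REFF' : V[e]\subseteq V[A_j]\}$. Since $\REFF'=\REFF'_{\circlearrowright}$, these sets partition $\REFF'$. Before applying the tree-case bound I would verify that $(A_j,\REFF'_j)$ meets all hypotheses of Corollary~\ref{treecaseresult}: $A_j$ is a tree; $\REFF'_j$ is a feasible Steiner forest for the terminal pairs whose endpoints lie in $V[A_j]$ (feasibility of $\ALG$ forces the partner of every terminal in $A_j$ to also lie in $A_j$, and $\REFF'=\REFF'_{\circlearrowright}$ then forces its $\REFF'$-connection to stay inside $A_j$); $V[A_j]=V[\REFF'_j]$ follows because the assumption that $\ALG$ only uses edges between terminals makes every vertex of $A_j$ a terminal, which must therefore be incident to some edge of $\REFF'_j$; finally, the swap-optimality of $\ALG$ with respect to $\REFF'$ restricts cleanly to $\REFF'_j$, since an $\REFF'_j$-edge has both endpoints inside the tree $A_j$ and the induced cycle therefore lives entirely within $A_j$.

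Applying Corollary~\ref{treecaseresult} to each component then gives $d(A_j')\le 11.5\cdot d(\REFF'_j)$, where $A_j'$ denotes $A_j$ after dropping its inessential edges. Since inessentiality is a local property within each tree, the global clean-up of $\ALG$ is the disjoint union $\bigcup_j A_j'$, and summing over~$j$ yields
\[
d(\ALG') \;=\; \sum_{j=1}^{p} d(A_j') \;\le\; 11.5 \sum_{j=1}^p d(\REFF'_j) \;=\; 11.5\cdot d(\REFF') \;\le\; 11.5\cdot 2(1+c)\cdot d(\REFF) \;=\; 23(1+c)\cdot d(\REFF),
\]
which is the claimed bound. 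All of the substantive technical work has already been done inside Lemma~\ref{lem:main-forest:reduction-to-tree} (the circuit decomposition via Algorithm~\ref{alg:charging}, the guarded-circuit recursion, and the \pathsetswap charging) and inside Corollary~\ref{treecaseresult} (the Hall-type argument and the $\phi$-based accounting). The only step requiring genuine care here is confirming that the preconditions of Corollary~\ref{treecaseresult} indeed survive the restriction from $\REFF'$ to $\REFF'_j$, as sketched above; this is the main thing to be explicit about, but it is otherwise routine.
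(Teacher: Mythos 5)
Your proposal is correct and follows exactly the paper's own proof: apply Lemma~\ref{lem:main-forest:reduction-to-tree} to obtain $\REFF'$ with $d(\REFF')\le 2(1+c)\,d(\REFF)$ and $\REFF'=\REFF'_{\circlearrowright}$, then apply Corollary~\ref{treecaseresult} to each component of $\ALG$ with the portion of $\REFF'$ inside it, and sum. Your additional verification that the hypotheses of Corollary~\ref{treecaseresult} survive the restriction to each component is a welcome bit of explicitness that the paper leaves implicit.
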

\begin{proof}
Lemma~\ref{lem:main-forest:reduction-to-tree} ensures that there is a solution $\REFF'$ with $d(\REFF') \le 2(1+c)\cdot d(\REFF)$ that satisfies $\REFF'_{\circlearrowright}=\REFF'$.
Every connected component $A_j$ of $\ALG$ can now be treated separately by using Corollary~\ref{treecaseresult} on $A_j$ and the part of $\REFF'$ that falls into $A_j$. By combining the conclusions for all connected components, we get that
\[
d(\ALG') \le 11.5 \phi(\REFF') \le 23 (1+c) \cdot d(\REFF)
\]
for any feasible solution $\REFF$. 
\end{proof}

Theorem~\ref{thm:main} follows directly. Since the $2$-approximation for $k$-MST~\cite{G05} can be adapted to the weighted case~\cite{G16}, $c=2$ is possible and we can achieve an approximation guarantee of $69$.

{\small 
\bibliography{referencesSFL,bibonline}
\bibliographystyle{amsalpha}
}

\newpage
\appendix

\appendix
\section{Notes on simpler local search algorithms}

\subsection{Adding an edge and removing a constant number of edges}\label{appendix:introexample}
Let $\ell$ and $k < \ell$ be integers and consider Figure~\ref{fig:no-imp-edge-edge}. Notice that adding a single edge and removing $k$ edges does not improve the solution. However, the current solution costs more than $\ell^2/k$ and the optimal solution costs less than $2\ell$, which is a factor of $\ell/(2k)$ better.
		
\begin{figure}[htbp]
\centering
\begin{tikzpicture}
  \tikzstyle{node}=[circle,shade,top color=gray!30,bottom color=gray!70,draw=gray]
  \tikzstyle{optedge}=[very thick,blue,-,dashed]
  \tikzstyle{algedge}=[very thick,black,-,solid]
	\tikzstyle{both}=[postaction={draw,blue!80,dash pattern= on 3pt off 5pt,dash phase=4pt,very thick},black,dash pattern= on 3pt off 5pt,very thick]
  \begin{scope}[yshift=2.5mm,scale=0.6] 
   \matrix (v) [matrix of math nodes, nodes={circle,shade,top color=gray!30,bottom color=gray!70,draw=gray,inner sep=1pt,anchor=base,text depth=.5ex,text height=2ex,minimum width=1.2em,text centered},column sep=5mm,row sep=4.5mm] {
    s_1 & s_2 & t_2 & s_3 & t_3 & |[draw=none,fill=none,shade=none,scale=2.0]| \dots & s_\ell & t_\ell & t_1 \\
   };
   \draw[algedge] (v-1-1) edge node[auto,black] {$\frac{\ell}{k}$} (v-1-2);
   \draw[algedge,transform canvas={yshift=0.6pt}] (v-1-2) -- node[auto] {$1$} (v-1-3);
	 \draw[optedge,transform canvas={yshift=-0.6pt}] (v-1-2) -- (v-1-3);	
	
   \draw[algedge] (v-1-3) edge node[auto,black] {$\frac{\ell}{k}$} (v-1-4);
   \draw[algedge,transform canvas={yshift=0.6pt}] (v-1-4) -- node[auto] {$1$} (v-1-5);
	 \draw[optedge,transform canvas={yshift=-0.6pt}] (v-1-4) -- (v-1-5);	
   \draw[algedge,path fading=east] (v-1-5) -- ++(0.6,0);
   \draw[algedge,path fading=west] (v-1-7) -- ++(-0.6,0);
   \draw[algedge,transform canvas={yshift=0.6pt}] (v-1-7) -- node[auto] {$1$} (v-1-8);
	 \draw[optedge,transform canvas={yshift=-0.6pt}] (v-1-7) -- (v-1-8);	
   \draw[algedge] (v-1-8) edge node[auto,black] {$\frac{\ell}{k}$} (v-1-9);
   \draw[optedge] (v-1-1) -- ++(0,-1) -- node[auto,swap,black,above,pos=0.55] {$\ell$} ($(v-1-9)+(0,-1)$) -- (v-1-9);
  \end{scope}
 \end{tikzpicture}
\caption{A bad example for \edgesetswap{}s that remove a constant number of edges.\label{fig:no-imp-edge-edge}}
\end{figure}
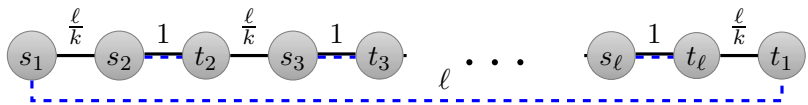


\subsection{Regular graphs with high girth and low degree}\label{sec:bad-gap}

Assume that $G$ is a degree-$3$ graph with girth $g = c \log n$ like the graph used in Chen \etal~\cite{CRV10}. Such graphs can be constructed, see~\cite{B98}. Select a spanning tree $\REFF$ in $G$ which will be the optimal solution. Let $E'$ be the non-tree edges, notice that $|E'| \ge n/2$, and let $M$ be a maximum matching in $E'$. Because of the degrees, we know that $|M| \ge n/10$. The endpoints of the edges in $M$ form the terminal pairs $\terms$. Set the length of all edges in $\REFF$ to $1$ and the length of the remaining edges to $g/4$. 
The solution $\REFF$ is feasible and costs $n-1$. The solution $M$ costs $\Omega(\log n)$. 

\msnote{Can there be a true \pathsetswap, adding and deleting multiple edges at the same time?}
Assume we want to remove an edge $e = \{v,w\} \in M$ and our swap even allows us to add a path to reconnect $v$ and $w$ (in the graph where $M\backslash \{e\}$ is contracted). Let $P$ be such a path. Since $M$ is a matching, at most every alternating edge on $P$ is in $M$. Thus, we have to add $|P|/2-1 \ge g/2 -1$ edges of length one at a total cost that is larger than the cost $g/4$ of $e$. Thus, no $d$-improving swap of this type exists (note that, in particular, \pathsetswap{}s are not $d$-improving for $M$). As a consequence, any oblivious local search with constant locality gap needs to sport a move that removes edges from multiple components of the current solution.
In order to restrict to local moves that only remove edges from a single component, we therefore introduced the potential $\phi$.

\section{Making the Algorithm Polynomial}
\label{sec:polytime}

So far, we have shown that any locally optimal solution is also within a
constant factor of the global optimum. In order to ensure that the local
search can find such a local optimum in polynomial time, two issues have
to be addressed. 

First, we need to show that each improving move can be carried out in polynomial time (or it can be decided in polynomial time that no such move exists). While it is easy to see that improving \edgeedge, \edgeset, and \pathset swaps can be found in polynomial time, finding an improving connecting move is NP-hard in general. However, as we saw in Section~\ref{sec:happytreepacking}, 
it is sufficient to restrict the neighborhood of the local search to $c$-approximate connecting moves. In Section~\ref{sec:howto-treemove}, we show that the task of finding an approximate connecting move reduces to approximating the weighted $k$-MST problem. In Section~\ref{sec:wkmst} we discuss constant factor approximation algorithms for this problem.
In particular, we get the following theorem:

\begin{theorem}\label{thm:alg-approx-tree-opt}
 Let $\varepsilon>0$. There exists a polynomial time algorithm, called \emph{Improving-Connecting-Move}, such that given a Steiner forest $\ALG$ and a metric distance $d:E\rightarrow \Rp$ on the edges then either: (i) the algorithm finds an improving connecting move with respect to $\phi$, or (ii) it guarantees that there is no $c(1+\varepsilon)$-approximate connecting move, that is, for every tree $T$ of $G_{\ALG}$ it holds that\msnote{Here, $5$ is $c$, too}
\[
 \sum_{e\in \ALG)} d(e) \ge c(1+\varepsilon)\cdot \left(\sum_{i\in V_{\ALG}} w(A_i) - \max_{i\in V_{\ALG}} w(A_i) \right), 
\]
 where $\{A_1,\ldots,A_p\}$ is the set of connected components of $\ALG$.
\end{theorem}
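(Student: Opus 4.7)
\noindent\textbf{Proof proposal for Theorem~\ref{thm:alg-approx-tree-opt}.}

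The plan is to reduce the task of finding an (approximately) improving connecting move to the weighted $k$-MST problem, for which a $c$-approximation is available. First observe how $\phi$ changes when we add a tree $T$ of $G_{\ALG}^{\text{all}}$ to $\ALG$: the edge cost increases by $d(T)$, whereas the total width changes from $\sum_{i \in V[T]} w(A_i)$ to $\max_{i \in V[T]} w(A_i)$, since merging components of $\ALG$ only combines already-covered terminal pairs without creating new ones (recall that $\ALG$ is feasible). Hence $T$ is an improving connecting move iff
\[
d(T) \;<\; \sum_{i \in V[T]} w(A_i) - \max_{i \in V[T]} w(A_i),
\]
and it is a $c(1+\varepsilon)$-approximate connecting move iff the same inequality holds with $c(1+\varepsilon)\,d(T)$ on the left-hand side.

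The algorithm proceeds by enumeration. First, I guess the component $i^\ast \in V_{\ALG}$ that attains the maximum width in the optimal tree (try each of the $p$ candidates). Having fixed $i^\ast$, restrict the node set of $G_{\ALG}^{\text{all}}$ to $V^{\ast} := \{i : w(A_i) \le w(A_{i^\ast})\}$ (with a consistent tie-breaking rule so that $i^\ast$ remains the unique max in any admissible tree). The problem becomes: find a tree $T$ in $G_{\ALG}^{\text{all}}[V^{\ast}]$ containing $i^\ast$ that minimizes $d(T)$ subject to $\sum_{i \in V[T] \setminus \{i^\ast\}} w(A_i) \ge W$, for some target $W$. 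This is exactly an instance of rooted weighted $k$-MST on $G_{\ALG}^{\text{all}}[V^{\ast}]$ with node weights $w(A_i)$ and root $i^\ast$. Next, guess $W$ by trying all values in $\{(1+\varepsilon)^{j} \cdot w_{\min} : j = 0, 1, \ldots, \lceil \log_{1+\varepsilon} (n \cdot w_{\max}/w_{\min})\rceil\}$, which is a polynomial-sized set. For each triple $(i^\ast, W)$, invoke the $c$-approximate weighted $k$-MST algorithm to obtain a tree $T$; check whether $d(T) < \sum_{i \in V[T] \setminus \{i^\ast\}} w(A_i)$, and if so return $T$ as an improving move. If no triple yields an improving move, report that no $c(1+\varepsilon)$-approximate connecting move exists.

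For correctness, suppose a $c(1+\varepsilon)$-approximate connecting move $T^\ast$ exists and let $i^\ast$ be its max-width component. Choose the guess $W$ in the geometric grid satisfying $W \le W^\ast := \sum_{i \in V[T^\ast] \setminus \{i^\ast\}} w(A_i) \le (1+\varepsilon) W$. In the iteration that enumerates this $(i^\ast, W)$, the $c$-approximate $k$-MST returns a tree $T$ with $\sum_{i \in V[T] \setminus \{i^\ast\}} w(A_i) \ge W$ and $d(T) \le c\, d(T^\ast)$. Combining with $c(1+\varepsilon) d(T^\ast) \le W^\ast \le (1+\varepsilon) W$ gives
\[
d(T) \;\le\; c\, d(T^\ast) \;\le\; \tfrac{1}{1+\varepsilon} W^\ast \cdot \tfrac{1+\varepsilon}{1} \cdot \tfrac{1}{1+\varepsilon} \;\le\; W \;\le\; \smashoperator{\sum_{i \in V[T] \setminus \{i^\ast\}}} w(A_i),
\]
which—after a small buffer adjustment of the target ratio from $c(1+\varepsilon)$ to $c(1+\varepsilon/2)$ to accommodate the geometric-grid slack—yields a strict improving move. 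Because all components added to $T$ lie in $V^\ast$, $i^\ast$ remains the max-width component of $T$ and the width computation is as claimed. Running time is polynomial: we loop over $p$ choices for $i^\ast$, $O(\varepsilon^{-1} \log(n\, w_{\max}/w_{\min}))$ choices for $W$, and invoke the $c$-approximate $k$-MST once per combination.

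The main obstacle is orchestrating the guessing cleanly so that the $(1+\varepsilon)$ slack from discretizing $W^\ast$ composes correctly with the $c$-approximation and still yields a strict improvement in $\phi$. This is handled by the standard trick of slightly tightening the internal approximation target; the details are routine. Separately, Section~\ref{sec:wkmst} supplies a weighted $k$-MST algorithm with $c = 2$ (adapting Garg's algorithm~\cite{G05,G16}), which plugs directly into the above scheme.
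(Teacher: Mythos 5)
Your proposal matches the paper's own proof essentially step for step: guess the maximum-width component, restrict the contracted graph to the lower-width components, set node weights $\gamma(j) = w(A_j)$ with the root weighted zero, enumerate the width target over a geometric grid, and invoke the $c$-approximate weighted $k$-MST oracle, absorbing the grid's discretization slack into the $(1+\varepsilon)$ factor of the approximate-optimality guarantee. The only difference is bookkeeping: the paper bakes the "buffer adjustment" you mention directly into the construction by using grid ratio $1+\varepsilon/2$ against an approximation target of $c(1+\varepsilon)$, which yields the strict improvement your chain of inequalities only gets up to the non-strict $d(T)\le W$.
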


The second thing we need to show for guaranteeing polynomiality of the local search is that the total number of improving moves is bounded by a polynomial in the input size. This can easily be achieved done via a standard rounding technique incurring only a loss of a factor of $1 + \varepsilon$ over the original guarantee for local optima, for an arbitrarily small $\varepsilon > 0$; see Section~\ref{sec:convergence} for details.

We finally get the following theorem:

\begin{theorem}
For every $\varepsilon > 0$, there is a local search algorithm that computes in polynomial time a solution $\ALG$ to Steiner Forest such that $d(\ALG) \leq (1+\varepsilon) 69 \cdot \OPT$.\jmcom{Replace by 69 if $c = 2$.}
\end{theorem}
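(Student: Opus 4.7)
The plan is to prove this theorem by combining three ingredients that the paper has already set up: (i) the constant locality gap of $23(1+c)=69$ for $c=2$ from Corollary~\ref{cor:locality-gap}, (ii) the polynomial-time approximate connecting-move oracle of Theorem~\ref{thm:alg-approx-tree-opt}, and (iii) a standard potential-scaling/threshold-improvement argument to bound the number of local-search iterations by a polynomial. Concretely, I would run the local search with the following modifications: start from a polynomial-time computable feasible solution $\ALG_0$ (e.g.\ the minimum spanning tree on the metric closure of the terminals, which trivially satisfies $\phi(\ALG_0)\le \poly(n)\cdot d(\OPT)$); in each iteration try all exact \edgeedgeswap{}s, \edgesetswap{}s, and \pathsetswap{}s (polynomially many, as discussed in Section~\ref{sec:LS-moves}) and call Improving-Connecting-Move with slack parameter $\varepsilon_1$ to produce a $c(1+\varepsilon_1)$-approximate connecting move; accept a move only if it decreases $\phi$ by a factor of at least $(1+\varepsilon_2/m)$, where $m=|E|$; upon termination apply the clean-up step.

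For the approximation ratio, upon termination $\ALG$ is exactly \edgeedgeswap/\edgesetswap/\pathsetswap-optimal up to the slack $(1+\varepsilon_2/m)$ and is $c(1+\varepsilon_1)$-approximate connecting-move optimal up to the same slack. Threading these slacks through the inequality $d(\ALG') \le 23(1+c)\, d(\REFF)$ of Corollary~\ref{cor:locality-gap}, using that every inequality used in Lemma~\ref{lem:main-forest:reduction-to-tree} and in the tree-case analysis is multiplicative/additive in the slack, replaces the constant $69$ by $69 + O(\varepsilon_1 + \varepsilon_2)$. Choosing $\varepsilon_1,\varepsilon_2$ as sufficiently small functions of the desired $\varepsilon$ absorbs the perturbation into a single factor $(1+\varepsilon)\cdot 69$.

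For polynomial runtime, I would first round edge lengths to integer multiples of a scale $\poly(n)/\varepsilon$ (a standard scaling argument that loses at most a $(1+\varepsilon)$ factor and can be folded into the final $(1+\varepsilon)$ slack), so that during the local search $\phi(\ALG)$ takes integer values in $[1,2^{\poly(n,1/\varepsilon)}]$. Since each accepted move decreases $\phi$ by a factor of at least $(1+\varepsilon_2/m)$, the number of iterations is at most $O((m/\varepsilon_2)\log\phi(\ALG_0)) = \poly(n,1/\varepsilon)$. Each iteration is polynomial: swap enumeration is polynomial by Section~\ref{sec:LS-moves}, and the connecting-move subroutine is polynomial by Theorem~\ref{thm:alg-approx-tree-opt} combined with the weighted $k$-MST $2$-approximation of~\cite{G16}.

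The main obstacle I foresee is bookkeeping the propagation of two different sources of slack (the approximate connecting-move oracle and the threshold on per-step improvement) through the charging argument of Lemma~\ref{lem:main-forest:reduction-to-tree}, which uses connecting-move optimality multiplicatively in a recursive decomposition of non-simple cycles. The cleanest way to handle this is to redo each lemma statement in the forest-case analysis with a generic $(1+\eta)$-slack on connecting-move optimality, verify that the extra factor propagates linearly (rather than, say, multiplicatively per level of recursion), and only then instantiate $\eta$ as a function of $\varepsilon$; this should confirm that the constant $69$ is only perturbed additively by $O(\varepsilon)$ and thus yields the claimed $(1+\varepsilon)\cdot 69$ guarantee.
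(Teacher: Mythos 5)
Your overall architecture matches the paper's: combine the locality gap $23(1+c)$ of Corollary~\ref{cor:locality-gap} with $c=2$ from the weighted $k$-MST subroutine, use Theorem~\ref{thm:alg-approx-tree-opt} for the connecting moves, and use scaling to bound the number of iterations. The $\varepsilon_1$-slack from the connecting-move oracle is unproblematic, exactly as you say: the locality gap is stated for a generic $c$, so one simply instantiates it with $c(1+\varepsilon_1)$ and absorbs the perturbation. The gap in your proposal is the \emph{other} source of slack. By accepting a move only if it improves $\phi$ by a multiplicative factor $(1+\varepsilon_2/m)$, your terminal solution is merely \emph{approximately} swap-optimal, and you then have to re-derive every swap-based inequality in the analysis (the $\Delta(S)\le d_f$ bounds feeding Theorem~\ref{thm:four-approx-for-trees-hall}, the three non-improving swaps in the middle-term bound, the \pathsetswap{} inequality inside the recursive decomposition of Lemma~\ref{lem:main-forest:reduction-to-tree}) with an extra $(1+\varepsilon_2/m)$ factor and verify that these factors only accumulate linearly. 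You correctly flag this as the main obstacle, but you do not resolve it; the claim that ``every inequality used \ldots is multiplicative/additive in the slack'' is exactly what would need to be proved, and it is the hardest part of your route.

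The paper avoids this entirely, using the rounding ingredient you already have but deploying it differently. Set $\beta := \varepsilon\max_{\{v,\bar v\}\in\terms} d(v,\bar v)/|E|$ and round every length \emph{up} to a multiple of $\beta$, obtaining $d_\beta$ and the potential $\phi_\beta$. Then accept \emph{any} strictly improving move with respect to $\phi_\beta$. Since $\phi_\beta$ only takes values that are multiples of $\beta$, each accepted move decreases it by at least $\beta$, and starting from $\ALG_0=\terms$ one has $\phi_\beta(\ALG_0)\le 2n_t|E|\beta/\varepsilon$, so the algorithm terminates after $O(n_t|E|/\varepsilon)$ iterations. Crucially, the terminal solution is an \emph{exact} local optimum of the rounded instance, so Corollary~\ref{cor:locality-gap} applies verbatim to $d_\beta$ with no slack propagation; the only loss is $d_\beta(\REFF)\le d(\REFF)+|E|\beta\le(1+\varepsilon)d(\REFF)$. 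I recommend dropping the threshold acceptance rule and adopting this version: it makes the entire ``main obstacle'' paragraph of your proposal unnecessary.
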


\subsection{How to ensure approximate connecting move optimality}\label{sec:howto-treemove}

Assume that we are given an algorithm \emph{Tree-Approx} that computes a $c$-approximation for the following minimization problem. We call the problem \emph{weighted (rooted) $k$-MST problem}, and approximating it is further discussed in Section~\ref{sec:wkmst}.

\begin{quotation}
Given $G=(V,E)$ with a root $r$, a metric $d: V\times V \to \mathbb{R}^+$, a function $\gamma : V\to \mathbb{R}^+$ with $\gamma(r)=0$, and a lower bound $\Gamma$, find a tree $T$ in $G$ with $r \in V[T]$ and $\sum_{v \in V[T]} \gamma(v)\ge \Gamma$ that minimizes $\sum_{e \in T} d(e)$.
\end{quotation}

We see how to use \emph{Tree-Approx} to ensure $((1+\varepsilon)\cdot c)$-approximate connecting move optimality.
We apply \emph{Tree-Approx} to $G_\ALG^{\text{all}}$. Recall that the vertices in $G_\ALG^{\text{all}}$ are $\{1,\ldots,p\}$, 
 corresponding to the components $A_1,\ldots,A_p$ of our solution. We try $|V(G_\ALG^{\text{all}})|$ possibilities for the component with the largest width in the connecting move.
After choosing the largest component to be the one with index $i$, we delete all vertices from $G_\ALG^{\text{all}}$ with indices larger than $i$.
Then we set $\gamma(i) := 0$ and $\gamma(j) := w(A_j)$ for $j < i$.
We can collect prices between $w_{\min} := \min \{w(A_i) \mid i \in \{1, \dots, p\}, w(A_i) > 0\}$, the smallest strictly positive width of any component, and $\sum_{j=1}^{i-1} w(A_j) < p \cdot w(A_p)$. 
We call \emph{Tree-Approx} for $\Gamma = (1+\varepsilon/2)^\ell w_{\min}$ for all $\ell\ge 1$ until $(1+\varepsilon/2)^\ell \ge p w(A_p)$.
The largest $\ell$ that we have to test is at most $\log_{1+\varepsilon/2} p \frac{ w(A_p)}{w_{\min}}$. 
Thus, our total number of calls of \emph{Tree-Approx} is bounded by $p \cdot \log_{1+\varepsilon/2} p \frac{ w(A_p)}{w_{\min}} \le n \cdot \log_{1+\varepsilon/2} n \Delta$, where $\Delta$ is the largest distance between a terminal and its partner divided by the smallest such distance that is non-zero.

Assume that \emph{Tree-Approx} returns a solution $T$ with $\sum_{e \in E[T]} d(e) > \sum_{v \in T} \gamma(v)$ for all calls, which means that it does not find an improving connecting move.
Furthermore, assume that there exists a $((1+\varepsilon)\cdot c)$-approximate connecting move $T^\ast$ that we should have found, \ie which satisfies that $\sum_{e \in E[T^\ast]} d(e) \le \frac{1}{(1+\varepsilon)c} \sum_{v \in T^\ast} \gamma(v)$. Set $\Gamma^\ast := \sum_{v \in T^\ast} \gamma(v)$. Let $\ell$ be the index that satisfies $(1+\varepsilon/2)^\ell w_{\min} \le \Gamma^\ast < (1+\varepsilon/2)^{\ell+1} w_{\min}$ and consider the run with the correct choice of the largest width and the lower bound $\Gamma' := (1+\varepsilon/2)^\ell w_{\min}$.

Notice that $T^\ast$ is a feasible solution for this run: $w(T^\ast) = \Gamma^\ast \ge \Gamma'$ satisfies the lower bound. Thus, the optimal solution to the input has a cost of at most $\sum_{e \in E[T^\ast]} d(e)$. \emph{Tree-Approx} computes a $c$-approximation, \ie a solution $\hat T$ with $\sum_{v \in V[\hat T]} \gamma(v) \ge \Gamma'$ and 
\begin{align*}
\sum_{e \in \hat T} d(e) \le c \sum_{e \in E[T^\ast]} d(e) 
& \le \frac{c}{(1+\varepsilon)c} \sum_{v \in V[T^\ast]} \gamma(v)\\
& \le (1+\varepsilon/2) \frac{c}{(1+\varepsilon)c} \Gamma' < \Gamma' \le \sum_{v \in V[\hat T]} \gamma(v),
\end{align*}
which means that \emph{Tree-Approx} computes an improving connecting move.

\subsection{Weighted \texorpdfstring{$k$-MST}{k-MST}}\label{sec:wkmst}

This section is about ways to provide the algorithm \emph{Tree-Approx}. The problem we want to solve is a weighted version of the \emph{rooted $k$-MST problem}. Given $G=(V,E)$ with a root $r$, a metric $d: V\times V \to \mathbb{R}^+$ and a lower bound $k \in \mathbb{N}$, the rooted $k$-MST problem is to compute a tree $T$ in $G$ with $r \in V[T]$ and $|V[T]|\ge k$ that minimizes $\sum_{e \in E[T]} d(e)$. The unrooted $k$-MST problem is defined verbatim except that no distinguished root has to be part of the tree.

\paragraph{Work on $k$-MST.}
Fischetti \etal~\cite{FHJM94} show that the unrooted $k$-MST problem is NP-hard. 
Any algorithm for the rooted $k$-MST problem transfers to an algorithm for the unrooted case with the same approximation guarantee by testing all possible nodes and returning the best solution that was found. This in particular holds for optimal algorithms, so the rooted $k$-MST problem is also NP-hard.

As for example observed by Garg~\cite{G05}, we can also use algorithms for the unrooted $k$-MST problem to compute solutions for the rooted $k$-MST problem with the same approximation guarantee. To do so, create $n$ vertices with zero distance to the designated root vertex and search for a tree with $n+k$ vertices. Any such tree has to include at least one copy of the root, and at least $k-1$ other vertices. Thus, any solution for the unrooted $k$-MST problem is a feasible solution for the rooted $k$-MST problem, and the cost is the same.
Thus, the rooted and unrooted version of the $k$-MST problem are equivalent.

Blum, Ravi and Vempala~\cite{BRV96} develop the first constant-factor approximation for the $k$-MST problem, the factor is $17$. Subsequently, Garg~\cite{G96} gave a $3$-approximation, Arya and Ramesh~\cite{AR98} developed a $2.5$-approximation, Arora and Karakostas~\cite{AK00} proposed a $(2+\varepsilon)$-approximation, and, finally, Garg~\cite{G05} published a $2$-approximation algorithm for the $k$-MST problem. Chudak, Roughgarden and Williamson~\cite{CRW04} show that an easier $5$-approximation also proposed by Garg~\cite{G96} bears resemblances to the primal dual algorithm by Jain and Vazirani~\cite{JV01} for the $k$-median problem, in particular to the utilization of Lagrangean relaxation.

\paragraph{Connection to weighted $k$-MST.}
Johnson, Minkoff and Phillips~\cite{JMP00} observe the following reduction from the weighted $k$-MST problem to the unweighted $k$-MST problem, assuming all $\gamma(v)$ are integers. To create the unweighted instance $G'=(V',E')$, start with $V'=V$. Then, for any vertex $v$, add $2 \gamma(v)n-1$ vertices at distance zero of $v$ (thus, there are $2 \gamma(n)n$ vertices \lq at\rq\ $v$), and set $k$ to $2n \Gamma$. Any solution $T$ for the weighted $k$-MST problem can be interpreted as a solution $T'$ for the modified unweighted instance with $\sum_{v \in V[T]} 2n\gamma(v)=2n\Gamma$ vertices. Given a solution for the unweighted input, we can first change the solution thus that for any $v \in V$, either $v' \in V'$ is not picked or $v'$ is picked and its $2 \gamma(v) n-1$ copies are picked as well. This is possible since picking more vertices at the same location incurs no additional cost. After this step, the solution can be transformed into a weighted solution with enough weight by picking the corresponding vertices in $V$. 

This reduction constructs an input for the unweighted $k$-MST problem that is of pseudo-polynomial size. Johnson \etal~\cite{JMP00} note that algorithms for the unweighted $k$-MST problem can typically be adapted to handle the clouds of vertices at the same location implicitly without incurring a super-polynomial running time. They specifically state that this is true for the $3$-approximation algorithm by Garg~\cite{G96} for the rooted $k$-MST problem. 
The more recent $2$-approximation algorithm by Garg~\cite{G05} can also be adapted for the weighted case such that the running time remains independent of the weights~\cite{G16}. This yields a polynomial $2$-approximation algorithm for weighted $k$-MST.

\subsection{Convergence} 
\label{sec:convergence}

It is easy to see that a straightforward execution of the local search algorithm discussed in this paper runs in pseudo-polynomial time, as each improving move decreases the potential at least by the length of the shortest edge. We apply a standard rounding technique to make the algorithm polynomial.

\textbf{Algorithm X}
\begin{enumerate}
 \item Set $i := 0$ and let $\ALG_0 = \terms$.
 \item $\beta := \frac{\varepsilon \cdot \max_{\{v, \bar{v}\} \in \terms} d(v, \bar{v})}{|E|}, \quad d_{\beta}(e) :=  \quad  \ceil{\frac{d(e)}{\beta}}\beta.$
 \item While $\ALG_i$ admits an improving \pathset swap w.r.t to $\phi_{\beta}$, or \emph{Improving-Connecting-Move} finds an improving connecting move w.r.t. $\phi_{\beta}$, set $\ALG_{i+1}$ to be the resulting solution after applying the move, and $i:= i+1$.
 \item Return the solution $\ALG'$ obtained by dropping all inessential edges of $\ALG_i$.
\end{enumerate}

\begin{lemma}\label{ref:poly-time}
 Assuming that the locality gap for swap-optimal and $c$-approximate connecting move optimal solutions is $C$, Algorithm X computes in polynomial time a $(1 + \varepsilon)C$-approximation to Steiner Forest.
\end{lemma}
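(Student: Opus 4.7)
The plan is to prove two things in sequence: polynomial-time convergence and a $(1+\varepsilon)C$ bound on the cost of the returned solution.

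\textbf{Bounding the number of iterations.} First I would observe that every rounded edge length $d_\beta(e) = \lceil d(e)/\beta\rceil\beta$ is a nonnegative integer multiple of $\beta$. Consequently every shortest-path distance in $(G,d_\beta)$, every width $w(\cdot)$, and hence the potential $\phi_\beta$ itself is a multiple of $\beta$, so any improving move must decrease $\phi_\beta$ by at least $\beta$. Next I would upper-bound $\phi_\beta(\ALG_0)$: taking $\ALG_0$ to be the trivial feasible solution consisting of one direct edge per terminal pair in the metric completion gives $\phi_\beta(\ALG_0) \le 2 n_t \cdot (d_{\max} + \beta)$, where $d_{\max} = \max_{\{v,\bar v\}\in\terms} d(v,\bar v)$. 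Since $\beta = \varepsilon \cdot d_{\max}/|E|$, the ratio $\phi_\beta(\ALG_0)/\beta$ is polynomial in $|V|$, $|E|$, $n_t$, and $1/\varepsilon$, which bounds the number of iterations. Each iteration enumerates polynomially many \pathset and \edgeset swaps and invokes Improving-Connecting-Move, which runs in polynomial time by Theorem~\ref{thm:alg-approx-tree-opt}.

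\textbf{Bounding the approximation ratio.} The central step is to apply the locality-gap bound on the rounded instance $(G,d_\beta)$. At termination $\ALG_i$ is \pathset swap-optimal and $c$-approximate connecting move optimal with respect to $\phi_\beta$; since \pathset swaps subsume \edgeset and \edgeedge swaps, $\ALG_i$ satisfies every optimality hypothesis needed for the hypothesized locality gap $C$. Applying it with the reference solution being an optimum $\OPT$ of the original instance yields $d_\beta(\ALG') \le C \cdot d_\beta(\OPT)$. I would then use the pointwise inequality $d(e) \le d_\beta(e) \le d(e)+\beta$ to translate back to the original cost:
\[
d_\beta(\OPT) \;\le\; d(\OPT) + |E|\beta \;=\; d(\OPT) + \varepsilon\cdot d_{\max} \;\le\; (1+\varepsilon)\,d(\OPT),
\]
where the last inequality uses $d_{\max} \le d(\OPT)$ (any feasible Steiner forest must contain a path of length at least $d_{\max}$ between the widest terminal pair). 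Combined with $d(\ALG') \le d_\beta(\ALG')$, this delivers $d(\ALG') \le C(1+\varepsilon)\,d(\OPT)$, as required.

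\textbf{Expected obstacle.} The one subtlety will be that $d_\beta$ can fail to satisfy the triangle inequality even when $d$ does, which matters inside the proof of Lemma~\ref{lem:main-forest:reduction-to-tree} where $\OPT$ is converted into a disjoint union of cycles by shortcutting non-terminals via triangle inequality. My fix will be to perform this shortcutting step using the original metric $d$ (which is a true metric), producing a cycle family $\REFF^\circ$ with $d(\REFF^\circ) \le 2\,d(\OPT)$, and only then switch to $d_\beta$ for the swap and connecting-move optimality arguments. The factor of $2$ is already absorbed into the constant $C$ stated in Corollary~\ref{cor:locality-gap}, so no additional loss is incurred by this workaround. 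With this care, the pointwise bound $d_\beta \le d + \beta$ together with $|E|\beta \le \varepsilon \cdot d_{\max} \le \varepsilon \cdot d(\OPT)$ gives exactly the $(1+\varepsilon)$ overhead, finishing the proof.
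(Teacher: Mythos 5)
Your proof is correct and takes essentially the same route as the paper's: the same $\beta$-granularity argument bounds the number of iterations by a polynomial, and the same chain $d(\ALG')\le d_\beta(\ALG')\le C\,d_\beta(\OPT)\le C\,(d(\OPT)+|E|\beta)\le (1+\varepsilon)C\,d(\OPT)$ (using $d_{\max}\le d(\OPT)$) gives the approximation bound, with your direct application of the locality gap to $\OPT$ being a harmless shortcut past the paper's intermediate optimum $\REFF_\beta$ of the rounded instance. The one obstacle you anticipate does not actually arise: since $\lceil x+y\rceil\le\lceil x\rceil+\lceil y\rceil$ for all $x,y\ge 0$, rounding up to multiples of $\beta$ preserves the triangle inequality, so $d_\beta$ is itself a metric and no separate shortcutting workaround is needed.
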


\begin{proof}
  We first observe that the algorithms runs in polynomial time. To see this, first note that $d_{\beta}(e) \geq \beta$ for all $e \in E$. Therefore, every improving {\pathsetswap} and every successful run of \emph{Improving-Connecting-Move} decreases the potential by at least $\beta$, \ie $\phi_{\beta}(\ALG_{i+1}) \leq \phi_{\beta}(\ALG_{i}) - \beta$. As $\phi_\beta(\ALG_0) = 2 \sum_{\{v, \bar{v}\} \in \terms} d_{\beta}(v, \bar{v}) \leq \frac{2 n_t |E|}{\varepsilon} \beta$, we conclude that the algorithm terminates after at most $2 \frac{n_t |E|}{\varepsilon}$ iterations, each of which can be executed in polynomial time.
  
  Now consider the output $\ALG'$ of Algorithm X. As it is {\pathset} swap optimal and $c$-approximately connecting move optimal w.r.t.~$d_{\beta}$, our assumption on the locality gap implies that $d_{\beta}(\ALG') \leq C d_{\beta}(\REFF_{\beta})$, where $\REFF_{\beta}$ is the optimal solution of the Steiner Forest instance defined by the metric $d_{\beta}$. Furthermore $d_{\beta}(\REFF_{\beta}) \leq d_{\beta}(\REFF)$, where $\REFF$ is an optimal solution to the original Steiner Forest instance defined by the metric $d$. We hence observe that
 \ama
   d(\ALG') & \leq d_{\beta}(\ALG') \leq C d_{\beta}(\REFF_{\beta}) \leq C d_{\beta}(\REFF) = C \cdot \sum_{e \in \REFF} \ceil{\frac{d(e)}{\beta}}\beta\\
   & \leq C \cdot \sum_{e \in \REFF} \left(\frac{d(e)}{\beta} + 1\right)\beta \leq C \cdot (d(\REFF) + |\REFF|\beta) \leq (1 + \varepsilon) C \cdot d(\REFF),\\
 \ema 
 where the last inequality follows from $|\REFF| \leq |E|$ and $d(\REFF) \geq \max_{\{v, \bar{v}\} \in \terms} d(v, \bar{v})$.
\end{proof}

Note that Corollary~\ref{cor:locality-gap} asserts that we can choose $C = 23(1 + c)$, and 
\cite{G05,G16} yields that $c=2$ is a feasible choice.
Thus Lemma~\ref{ref:poly-time} yields a polynomial-time $(1 + \varepsilon)\cdot 69$-approximation local search algorithm, proving Theorem~\ref{thm:main}.\jmcom{final occurrences of $c=5$}

\listoftodos

\end{document}